\documentclass[12pt]{article}
\usepackage{amsmath} 
\usepackage{amsthm}
\usepackage{amstext}
\usepackage{amssymb}
\usepackage{enumerate}
\usepackage{array}
\usepackage{color}
\usepackage{bbm}
\usepackage{natbib}
\usepackage{subcaption}
\usepackage{float}
\usepackage{graphicx}
\usepackage{caption}
\usepackage{booktabs}
\usepackage[title]{appendix}
\usepackage{url} % not crucial - just used below for the URL 
\usepackage{fancyhdr}
\usepackage{authblk}% not crucial - just used below for the URL 
\usepackage{xr}
\usepackage{placeins}
\newcommand{\blind}{1}
\renewcommand{\floatpagefraction}{0.85}

\newtheorem{remark}{Remark}
\newtheorem{assumption}{Assumption}
\newtheorem{theorem}{Theorem}
\newtheorem{lemma}{Lemma}

\usepackage{algorithm, algorithmicx, algpseudocode}

\floatname{algorithm}{Algorithm}

\usepackage{xr}
\makeatletter
\newcommand*{\addFileDependency}[1]{% argument=file name and extension
	\typeout{(#1)}
	\@addtofilelist{#1}
	\IfFileExists{#1}{}{\typeout{No file #1.}}
}
\makeatother

\begin{document}

\def\spacingset#1{\renewcommand{\baselinestretch}%
{#1}\small\normalsize} \spacingset{1}

%%%%%%%%%%%%%%%%%%%%%%%%%%%%%%%%%%%%%%%%%%%%%%%%%%%%%%%%%%%%%%%%%%%%%%%%%%%%%%
 
\if1\blind
{
  \title{\bf Enhanced localized conformal prediction with imperfect auxiliary information}
  \author{%\thanks{The authors gratefully acknowledge \textit{please remember to list all relevant funding sources in the unblinded version}}\hspace{.2cm}
    Yinjie Min\thanks{The authors contribute equally and are listed in alphabetical order.} \\
    School of Statistics and Data Science, Nankai University\\
    Liuhua Peng\thanks{Corresponding author.}\\ 
    School of Mathematics \& Statistics, the University of Melbourne\\
    and \\
    Changliang Zou\\
    School of Statistics and Data Science, LEBPS, LPMC and KLMDASR, Nankai University}
  \maketitle
} \fi

\if0\blind
{
  \bigskip
  \bigskip
  \bigskip
  \begin{center}
    {\LARGE\bf Enhanced localized conformal prediction with imperfect auxiliary information}
\end{center}
  \medskip
} \fi

\bigskip
\begin{abstract}
There is growing interest in constructing conformal prediction sets that provide approximate or asymptotic conditional coverage guarantees, capturing local data heterogeneity. However, methods like localized conformal prediction (LCP) may face challenges in ensuring reliable prediction sets in regions with sparse calibration data. This paper introduces Enhanced Localized Conformal Prediction (ELCP), a novel approach that incorporates auxiliary data to refine localized prediction sets while preserving finite-sample marginal coverage guarantees. By utilizing a density-ratio-weighted kernel estimator, ELCP seamlessly integrates auxiliary and calibration data, accommodating potential distributional shifts and improving the local reliability of prediction sets. Theoretical analysis confirms that ELCP maintains marginal coverage and enhances asymptotic test-conditional coverage. Simulation results demonstrate its superior local coverage and smaller prediction sets compared to standard LCP, highlighting its effectiveness in settings with limited calibration data but available auxiliary information from related tasks.
\end{abstract}

\noindent%
{\it Keywords:}  Exchangeability; Fused estimates; Kernel methods;  Non-asymptotic bound; Test-conditional coverage; Weakly(semi)-supervised learning.
\vfill

\clearpage
\spacingset{1.9} % DON'T change the spacing!
\section{Introduction}
    Conformal prediction is an elegant yet powerful framework for quantifying prediction uncertainty, applicable to any machine learning model. 
    It generates prediction sets that are guaranteed to contain the unknown outcome with a specified level of confidence.
    Let $Z_i=(X_i,Y_i)$, $i=1,\ldots,n$, be a sequence of independent and identically distributed (i.i.d.)~\textit{calibration} data from some joint distribution $P$, where $X_i \in\mathcal{X}\subset\mathbb{R}^d$ represents the features, and $Y_i \in\mathcal{Y}\subset\mathbb{R}$ is the response.
    Given a test point $Z_{n+1}=(X_{n+1},Y_{n+1})$ independently drawn from $P$, with $X_{n+1}$ observed but $Y_{n+1}$ unobserved, conformal prediction constructs a prediction set $\widehat{C}_{\alpha}(X_{n+1})$ such that for a specified confidence level $1-\alpha\in(0,1)$
    \begin{equation}\label{eq:CP_marginal_coverage}
        \mathrm{pr}\big( Y_{n+1}\in\widehat{C}_\alpha(X_{n+1})\big) \geq 1-\alpha\,,
    \end{equation}
    without making any distributional assumptions about $P$ \citep{shafer2008tutorial}.

    Split conformal prediction (SCP)  is a common variant of conformal prediction \citep{vovk2005algorithmic}. 
    In SCP, a fitted model $\hat{\mu}(\cdot): \mathcal{X}\rightarrow\mathcal{Y}$ is pretrained using a {\it training} dataset $\mathcal{D}_{\rm tr}$ independent of $Z_1,\ldots, Z_{n+1}$. 
    Let $S(\cdot,\cdot): \mathcal{X}\times\mathcal{Y}\rightarrow\mathbb{R}$ be a conformity score function that quantifies the discrepancy between a hypothetical value $y\in\mathcal{Y}$ and the model's prediction $\hat{\mu}(x)$.
    For example, a common choice is the absolute residuals, defined as $S(x,y)=|y-\hat{\mu}(x)|$.
    Denote $S_i=S(X_i, Y_i)$ for $i\in [n+1]$, and let $(n+1)^{-1}(\sum_{i=1}^{n}\delta_{S_i}+\delta_{\infty})$ be the empirical distribution of $S_1,\ldots,S_{n}$ and $\infty$, where $\delta_{s}$ denotes the point mass at $s$.
    The level $(1-\alpha)$ SCP set for $Y_{n+1}$ is defined as
    \begin{equation}\label{eq:SCP}
        \widehat{C}_\alpha^{\mathrm{SCP}}(X_{n+1}) = \left\{ y : S(X_{n+1}, y) \leq Q\left(1-\alpha; (n+1)^{-1}\left(\sum_{i=1}^{n}\delta_{S_i}+\delta_{\infty}\right)\right) \right\}\,,
    \end{equation}
    where $Q(1-\alpha; \cdot)$ denotes the $(1-\alpha)$-quantile of the distribution in the second argument.
    The finite-sample marginal coverage, as defined in \eqref{eq:CP_marginal_coverage}, is guaranteed by $\widehat{C}_\alpha^{\mathrm{SCP}}(X_{n+1})$ as long as $Z_1,\ldots, Z_{n+1}$ are exchangeable \citep{vovk2005algorithmic,lei2018distribution}.

    However, marginal coverage alone is not sufficient for an efficient prediction set since a marginally valid prediction set can exhibit a local miscoverage rate significantly higher than $\alpha$ in certain local regions. Therefore, the conditional coverage is also important:
   \begin{align}\label{eq:conformal_cond}
        \mathrm{pr}\left( Y_{n+1}\in\widehat{C}_\alpha(X_{n+1})\mid X_{n+1}=x  \right) \geq 1-\alpha\,.
    \end{align}
    Although appealing, achieving \eqref{eq:conformal_cond} in a finite-sample and distribution-free context is impossible \citep{lei2013}. Recent works have proposed methods to construct prediction sets with approximate or asymptotic conditional coverage guarantee by either modifying the calibration step \citep{papadopoulos2011regression, lei2014distribution,guan2023localized,gibbs2023conformal} or using different score functions \citep{papadopoulos2008normalized, 
    romano2019conformalized, chernozhukov2021distributional, gupta2022nested, ding2024class}.

    Specifically, motivated by the fact that the prediction set $\widehat{C}_\alpha^{\mathrm{SCP}}(X_{n+1})$ treats all conformity scores equally regardless of whether the corresponding $X_i$ values are close to $X_{n+1}$,
    \citet{guan2023localized} proposed localized conformal prediction (LCP), addressing heterogeneity by assigning more weight to scores $S_i$ for which $X_i$ is closer to $X_{n+1}$, focusing on the local behavior of the data around the test point.
    Let $K(\cdot,\cdot;h): \mathcal{X}\times\mathcal{X}\rightarrow [0,\infty)$ be a bivariate localizer function that depends on a parameter $h$, typically serving as the bandwidth. 
    For $i\in[n+1]$, define the weighted distributions $\hat{F}_i^{y} = \sum_{j=1}^{n}\omega_{i,j}\delta_{S_j}+\omega_{i,n+1}\delta_{S(X_{n+1},y)}$, where the weights
    $\omega_{i,j}=K(X_i,X_j;h)/\{\sum_{\ell=1}^{n+1}K(X_i,X_{\ell};h)\}$ for $j\in[n+1]$.
    Then the $(1-\alpha)$ LCP set for $Y_{n+1}$ is
    \begin{gather}\label{eq:LCP_01}
        \widehat{C}_\alpha^{\mathrm{LCP}}(X_{n+1})=\left\{y:S(X_{n+1},y)\leq Q\left(1-{\alpha}(y); \hat{F}_{n+1}^{y}\right)\right\}\,,
    \end{gather}
    Here, $\alpha(y)$ is the adjusted level so that $\widehat{C}_\alpha^{\mathrm{LCP}}(X_{n+1})$ achieves the marginal coverage (\ref{eq:CP_marginal_coverage}).   
    
    The LCP demonstrates improved local coverage empirically, which better reflects the underlying heterogeneity of the data.
    Furthermore, building on LCP and the weighted conformal prediction framework of \citet{tibshirani2019conformal}, \citet{hore2023conformal} introduced randomly-localized conformal prediction (RLCP), which circumvents level adjustment and achieves certain test-conditional coverage under covariate shift.
    
    While LCP can provide better empirical local coverage, it comes at the cost of a reduced effective sample size, particularly when few points are close to a given $X_{n+1}$.
    In such cases, most of the weights $\omega_{n+1,j}$ for $j\in[n]$ approach zero, leaving only a few significant contributions, along with $\omega_{n+1,n+1}\delta_{S(X_{n+1},y)}$, to the weighted distribution $\hat{F}_{n+1}^{y}$.
    Consequently, the constructed prediction set may become unreliable, resulting in poor local performance, which contradicts the fundamental purpose of LCP.
    This issue is intensified when the calibration dataset is small.
    As an illustrative example, Fig.~\ref{fig:motivation_example} displays the prediction bands for $Y_{n+1}$ given $X_{n+1} = x$, where $X_i\sim N(0,1.5)$ and $Y_i=\{|\cos(X_i)|+0.1\}\varepsilon_i$ with $\varepsilon_i\sim N(0,1)$.
    It can be observed that using a smaller bandwidth makes the LCP set more adaptive to heterogeneity but leads to unreliable prediction sets in regions with limited calibration data.
    In contrast, increasing the bandwidth can alleviate this problem but the LCP set becomes less sensitive to heterogeneity.
    
    \begin{figure}[htbp]
        \centering
        \begin{minipage}{0.29\linewidth}
            \centering
            \includegraphics[scale=0.3]{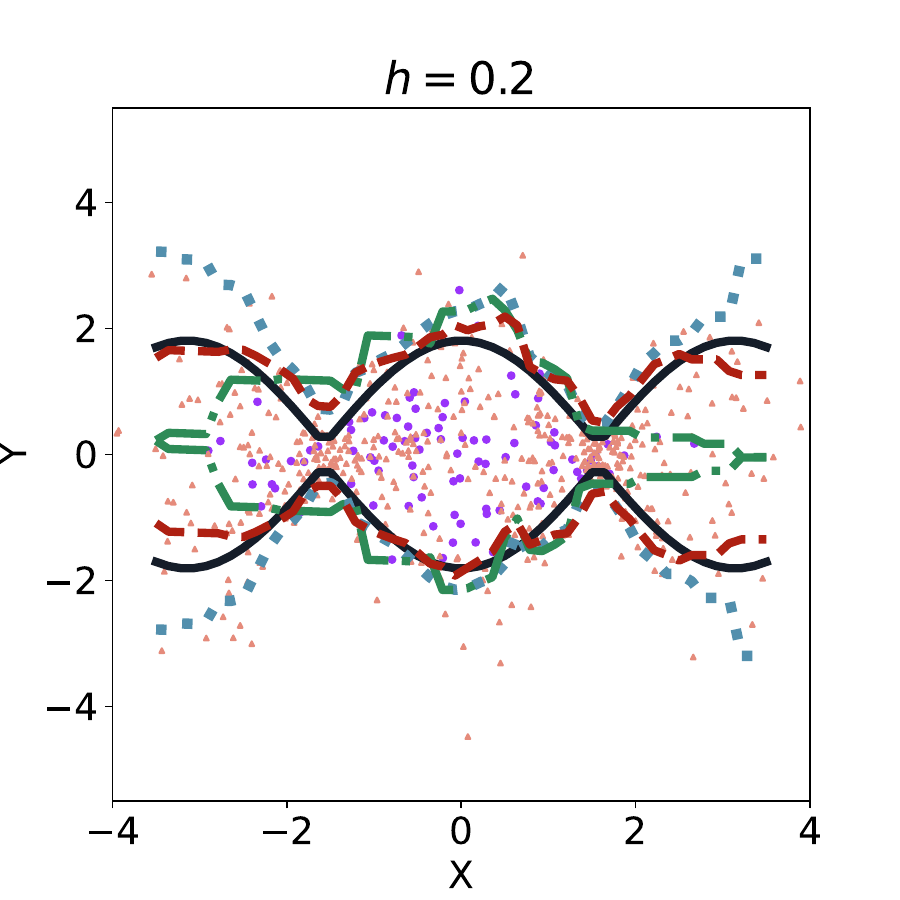}
        \end{minipage}
        \begin{minipage}{0.29\linewidth}
            \centering
            \includegraphics[scale=0.3]{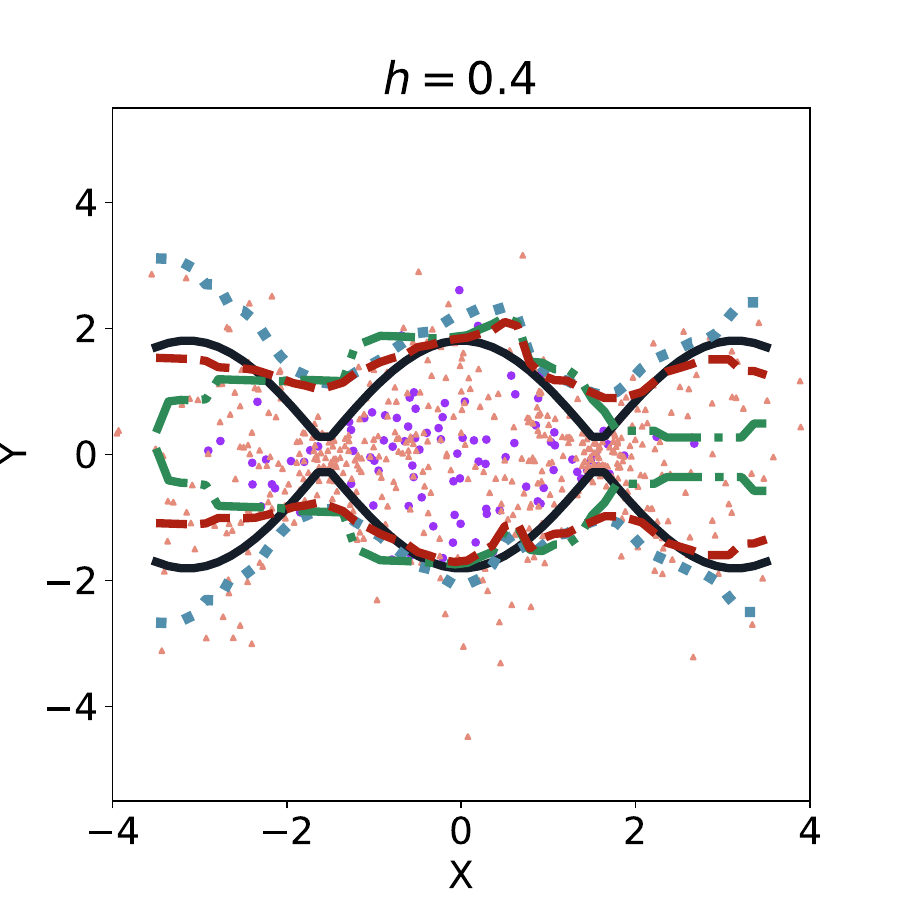}
        \end{minipage}
        \begin{minipage}{0.39\linewidth}
            \centering
            \includegraphics[scale=0.3]{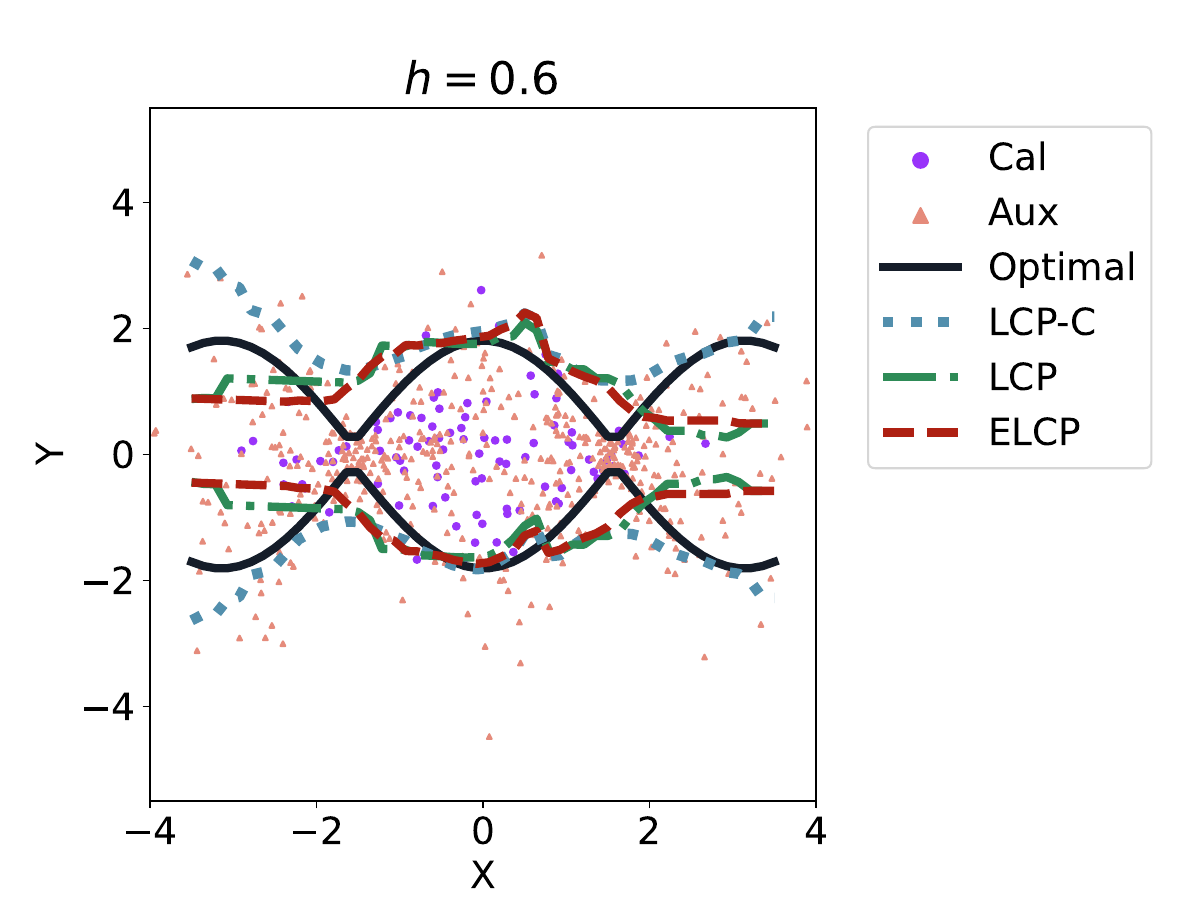}
        \end{minipage}
        \caption{\small Prediction bands by LCP (dash-dotted), LCP by direct combination (dotted), ELCP (dash) and optimal band (solid) with $h\in\{0.2,0.4,0.6\}$. The calibration data is of size $n=100$ and marked with label `Cal'. The auxiliary data is of size $m=500$ and marked with label `Aux'.}\label{fig:motivation_example}
    \end{figure}

    An intuitive solution to alleviate this issue is to increase the calibration data size, which is often impractical due to constraints such as cost, time, or feasibility. 
    However, in practical, auxiliary data from similar sources or previous studies may already be available.
    For example, data from different but related tasks as in transfer learning and meta-learning \citep{pan2009survey, hospedales2021meta}, data from other agents as in federated learning \citep{mammen2021federated}, or a large amount of unlabeled data as in semi-supervised learning \citep{van2020survey, Candes2024CrossPPI}.
    Numerous prior studies have demonstrated that properly incorporating useful information from auxiliary data can significantly enhance predictive performance \citep{angelopoulos2023prediction}.
    This motivates us to explore incorporating auxiliary data to enhance the performance of localized conformal prediction.
     
    In this paper, we consider the setting that a set of auxiliary data is available, which is drawn independently from a distribution $P^{\prime}$.
    It is highlighted that the distribution $P^{\prime}$ is allowed to differ from $P$, since auxiliary data may come from different populations with varying characteristics compared to the calibration data. We propose {\it Enhanced Localized Conformal Prediction (ELCP)}, which addresses the challenge of constructing localized conformal prediction set that incorporates auxiliary data and is able to account for potential data imperfections while ensuring marginal coverage validity.

    ELCP entails obtaining a fused estimate of the conditional distribution of the conformity score $S_i$ given features $X_i$ by leveraging auxiliary data in conjunction with calibration data, and then using only the conformity scores corresponding to $\mathcal{Z}_n\cup\{(X_{n+1},y)\}$ for calibration that guarantees marginal coverage due to exchangeability regardless of $P^{\prime}$. Compared to LCP, which uses a kernel-regression estimator solely based on $\mathcal{Z}_n\cup\{(X_{n+1},y)\}$, ELCP incorporates a density-ratio-weighted kernel estimator, combining both auxiliary and calibration data. This approach accounts for potential distributional shifts between the two data sources. The density ratio, known {\it a priori} or estimated from the data, adjusts the weighting of auxiliary data, resulting in more robust estimation of conditional distributions.    
    As shown in Fig.~\ref{fig:motivation_example},  the integration of auxiliary data significantly enhances local coverage, particularly in regions with sparse calibration data.
    
    The main contributions of this paper are: 

    (i) We propose a general framework for incorporating auxiliary data into conformal prediction set construction. By elaborately constructing fused estimates, the ELCP enjoys both finite-sample marginal coverage and empirically improved local coverage in the presence of auxiliary data. 
    
    (ii) From a theoretical standpoint, we derive a non-asymptotic bound for the test-conditional miscoverage error of ELCP which allows us to conduct a direct comparison with LCP, demonstrating that ELCP could offer improved local coverage under mild conditions. In addition, we establish asymptotic weak test-conditional coverage for ELCP.

    (iii) We provide detailed ``end-to-end'' implementations of ELCP, including data-driven parameter selection and computationally efficient deployments, both supported by theoretical justifications.

    (iv) We illustrate the easy coupling of our method with common scenarios involving auxiliary data, including multiple auxiliary datasets, and weakly or semi-supervised learning. Numerical experiments show that our method exhibits more accurate local coverage compared to existing methods, while offering a narrower prediction interval.

    The remainder of our article is structured as follows. In Section~\ref{sec:02_ELCP}, we present the basic procedure of ELCP and its application in two scenarios. Theoretical justification of ELCP is provided in Section~\ref{subsec::Local Enhancement}.
    Section~\ref{sec:implement} provides implementation details of ELCP, including parameter selection and computationally efficient deployments.
    Numerical studies are conducted in Section~\ref{sec:04_num}. Concluding remarks are provided in Section~\ref{sec:05_conclusion}. Theoretical proofs, technical details, and additional numerical results are provided in the supplementary material.
    
\section{Localized conformal prediction with auxiliary data}\label{sec:02_ELCP}

    We assume that the i.i.d.~data from distribution $P$ are split into a training set $\mathcal{D}_{\rm tr}$ and a calibration set $\mathcal{Z}_{n}=\{Z_i\}_{i\in[n]}$. A predictor $\hat{\mu}(\cdot)$ is pretrained using $\mathcal{D}_{\rm tr}$. Based on $\hat{\mu}(\cdot)$, we compute $S_i=S(X_i,Y_i):=S_i^{y}$ for $i\in[n]$ and $S_{n+1}^{y}=S(X_{n+1},y)$ for the calibration and test data.
    In addition, the i.i.d.~auxiliary data from $P^{\prime}$, independent of $\mathcal{D}_{\rm tr}$ and $\mathcal{Z}_{n}\cup\{Z_{n+1}\}$, are divided into $\mathcal{D}_{\rm tr}^{\prime}$ and $\mathcal{Z}_{m}^{\prime}=\{Z_j^{\prime}\}_{j\in[m]}$, where $Z_j^{\prime}=(X_j^{\prime},Y_j^{\prime})$ with $X_j^{\prime}\in\mathcal{X}$ and $Y_j^{\prime}\in\mathcal{Y}$ for $j\in[m]$.
    The corresponding conformity scores are $S_{j}^{\prime}=S^{\prime}(X_{j}^{\prime}, Y_{j}^{\prime})$ for $j\in [m]$.
    Here, $S^{\prime}(\cdot,\cdot)$ is defined based on a model $\hat{\mu}^{\prime}(\cdot)$, pretrained from $\mathcal{D}_{\rm tr}^{\prime}$.

\subsection{A recap of LCP}\label{subsec:recap_LCP}
    
    To motivate our method, we first provide a more intuitive explanation of how LCP ensures marginal coverage as the building block of ELCP for incorporating auxiliary data.
    Define \begin{equation}\label{eq:modified_LCP_beta}
        \hat{\beta}_i^{\mathrm{LCP}}(y) = \sum_{j=1}^{n+1}\omega_{i,j}\mathbbm{1}(S_j^{y}\leq S_i^{y}), \ i\in[n+1]\,.
    \end{equation}
    The $\hat{\beta}_{i}^{\mathrm{LCP}}(Y_{n+1})$ can be interpreted as an estimator of $F_{S\mid X}(S_i\mid X_i)$, the conditional cumulative distribution function (cdf) of $S_i$ given $X_{i}$ evaluated at $S_i$, based on $\mathcal{Z}_n\cup\{Z_{n+1}\}$.

    By the definition of $\widehat{C}_\alpha^{\mathrm{LCP}}(X_{n+1})$ in \eqref{eq:LCP_01} and the selection of $\alpha(y)$ detailed in Section~\ref{sec:LCP_detail} of the supplementary material, we have
    \begin{eqnarray}\label{eq:LCP}
        \widehat{C}_\alpha^{\mathrm{LCP}}(X_{n+1}) = \left\{y:  \hat{\beta}_{n+1}^{\mathrm{LCP}}(y) \leq Q\left( 1-\alpha; (n+1)^{-1}\overset{n+1}{\underset{i=1}\sum}\delta_{\hat{\beta}_i^{\mathrm{LCP}}(y)} \right) \right\}\,,
    \end{eqnarray}
    which takes the same form as the SCP set defined in \eqref{eq:SCP}, with the scores $S_{i}^{y}$ replaced by $\hat{\beta}_i^{\mathrm{LCP}}(y)$.
    The construction of $\widehat{C}_\alpha^{\rm LCP}(X_{n+1})$ aligns with the framework of the distributional conformal prediction set proposed by \citet{chernozhukov2021distributional} for addressing distributional heterogeneity via conditional distribution estimation. 
    However, LCP relies on $\hat{\beta}_{i}^{\mathrm{LCP}}(y)$ to estimate the conditional distribution of $S_i$ given $X_i$, whereas \citet{chernozhukov2021distributional} utilizes an estimate of the conditional distribution of $Y_i$ given $X_i$.

    It is clear that the marginal coverage of LCP is ensured when $\{\hat{\beta}_{i}^{\mathrm{LCP}}(Y_{n+1})\}_{i\in[n+1]}$ are exchangeable, which holds under the exchangeability of $\mathcal{Z}_n\cup\{Z_{n+1}\}$. It should be noted that $\widehat{C}_{\alpha}^{\mathrm{LCP}}(X_{n+1})$ is a slightly modified but conceptually similar version of the original prediction set defined by \citet{guan2023localized}, with the modifications in the definition of the adjusted level $\alpha(y)$ in \eqref{eq:LCP_01}. For a detailed discussion, see Section~\ref{sec:LCP_compare} of the supplementary material.

    When the number of data points near $X_i$ is limited, $\hat{\beta}_{i}^{\mathrm{LCP}}(Y_{n+1})$ may be inefficient and can significantly deviate from $F_{S\mid X}(S_i\mid X_i)$.
    This motivates our approach of incorporating auxiliary data $\mathcal{Z}_{m}^{\prime}$ to enhance LCP by improving the estimation of $F_{S\mid X}(S_i\mid X_i)$.

\subsection{Fused estimates of conditional distribution $F_{S\mid X}(s\mid x)$}\label{subsec:fused_estimate}
    
    Our approach to incorporating auxiliary data is guided by a key principle: accounting for the imperfect nature of the auxiliary data. Let $f_X(x)$ and $g_X(x)$ denote the probability density functions (pdfs) of $X$ and $X^\prime$, respectively.
    Define $f(x,s)$ and $g(x,s)$ as the joint pdfs of $(X_i,S_i)$ and $(X_j^{\prime},S_j^{\prime})$, respectively.
    The density ratio between these two joint distributions is denoted as $r(x,s)=f(x,s)/g(x,s)$.
    The conditional distribution of $S_i$ given $X_i=x_0$ can be reformulated as
    \begin{align}
        \mathrm{pr}\left( S_i\leq s_0\mid X_i=x_0 \right)
        & =  \int\mathbbm{1}\left( s\leq s_0 \right)g_{S\mid X}(s\mid x_0)\dfrac{f_{S\mid X}(s\mid x_0)}{g_{S\mid X}(s\mid x_0)}ds \notag\\
        & =  E\left\{ \mathbbm{1}\left( S_j^\prime\leq s_0 \right)\dfrac{f_{S\mid X}(S_j^\prime\mid X_j^\prime)}{g_{S\mid X}(S_j^\prime\mid X_j^\prime)}\mid X_j^\prime=x_0 \right\}\label{eq:reformulateauxiliary1}\\
        & =  \frac{g_{X}(x_0)}{f_{X}(x_0)}E\left\{\mathbbm{1}\left( S_j^{\prime}\leq s_0 \right)r(X_j^{\prime},S_j^{\prime})\mid X_j^{\prime}=x_0\right\},\notag
    \end{align}
    which gives the following equivalent formulations
    \begin{align}
       f_{X}(x_0)\mathrm{pr}\left( S_i\leq s_0\mid X_i=x_0 \right) %& = f_{X}(x_0)E\left\{\mathbbm{1}\left( S_i\leq s_0 \right) \mid X_j^{\prime}=x_0\right\} \\
         = g_{X}(x_0)E\left\{\mathbbm{1}\left( S_j^{\prime}\leq s_0 \right)r(X_j^{\prime},S_j^{\prime})\mid X_j^{\prime}=x_0\right\}. \label{eq:F_S_X} 
    \end{align}
    Here, $f_{S\mid X}(s\mid x)$ and $g_{S\mid X}(s\mid x)$ represent the conditional pdfs of $S_i$ and $S_j^{\prime}$ given $X_i=x$ and $X_j^{\prime}=x$, respectively.
    Similarly, $f_{X}(x_0)$ can be expressed as
    \begin{align}\label{eq:f_X}
        f_{X}(x_0) = g_{X}(x_0)E\left\{r(X_j^{\prime},S_j^{\prime})\mid X_j^{\prime}=x_0\right\}. 
    \end{align}

    The formulations in \eqref{eq:F_S_X} and \eqref{eq:f_X}, together with \eqref{eq:modified_LCP_beta}, suggest a fused estimator of $F_{S\mid X}(s\mid x)$ using both calibration and auxiliary data, defined as 
    \begin{eqnarray}\label{eq:beta}
        \hat{\beta}_{\omega,\hat{r}}^{y}(x,s) = \dfrac{\overset{n+1}{\underset{j=1}\sum}K(x, X_j;h)\mathbbm{1}(S_j^{y}\leq s)+\omega\overset{m}{\underset{j=1}\sum}K(x, X_j^\prime;h)\hat{r}(X_j^\prime,S_j^\prime)\mathbbm{1}(S_j^{\prime}\leq s)}{\overset{n+1}{\underset{j=1}\sum}K(x, X_j;h)+\omega\overset{m}{\underset{j=1}\sum}K(x, X_j^\prime;h)\hat{r}(X_j^\prime,S_j^\prime)}\,,
    \end{eqnarray}
    where $\hat{r}(x,s)$ is an estimator of $r(x,s)$ and $\omega\in[0,1]$ is a hyperparameter that controls the level of incorporation. 
    By incorporating the information of $\mathcal{Z}_{m}^{\prime}$ while adjusting potential distributional shifts, $\hat{\beta}_{\omega,\hat{r}}^{y}(X_i,S_i^{y})$ is expected to provide more efficient estimation of $F_{S\mid X}(S_i\mid X_i)$, especially in regions with sparse calibration data.
    
    It seems more intuitive to derive an estimator via \eqref{eq:reformulateauxiliary1} which will involve the estimate of the density ratio of two conditional distributions, however, we choose to formulate the estimator with \eqref{eq:F_S_X} and \eqref{eq:f_X} since estimating the density ratio $r(x, s)$ is more straightforward. There is a rich literature on density ratio estimation, with many powerful algorithms even in high-dimensional settings. For example, $r(x, s)$ can be estimated effectively using classification-based methods. Further discussion is provided in Sections~\ref{sec:method} and~\ref{subsec::Local Enhancement}. 

\subsection{Enhanced localized conformal prediction}\label{sec:method}

    Given that only the exchangeability of $\mathcal{Z}_n\cup\{Z_{n+1}\}$ is guaranteed in our setup, the key idea is to use the fused estimators $\{\hat{\beta}_{\omega,\hat{r}}^{y}(X_i,S_i^{y})\}_{i\in[n+1]}$ for calibration.
    The ELCP set, incorporating the auxiliary data $\mathcal{Z}_{m}^{\prime}$, is accordingly defined as
    \begin{equation}\label{eq:ELCP}
        \widehat{C}_{\alpha}^{\mathrm{ELCP}}(X_{n+1}) = \left\{ y: \hat{\beta}_{\omega,\hat{r}}^{y}(X_{n+1},S_{n+1}^{y})\leq Q\left(1-\alpha; (n+1)^{-1}\overset{n+1}{\underset{i=1}\sum}\delta_{\hat{\beta}_{\omega,\hat{r}}^{y}(X_{i},S_{i}^{y})}\right) \right\}\,.
    \end{equation}

    From the proof of Theorem~\ref{theo:marginal_coverage}, we see that the marginal coverage of $\widehat{C}_{\alpha}^{\mathrm{ELCP}}(X_{n+1})$ is ensured by the exchangeability of $\hat{\beta}_{\omega,\hat{r}}(X_i,S_i),i\in[n+1]$, where $\hat{\beta}_{\omega,\hat{r}}(x,s)$ is defined as $\hat{\beta}^{y}_{\omega,\hat{r}}(x,s)$ evaluated at $y=Y_{n+1}$.
    To achieve this, we make the following assumption. 
    \begin{assumption}\label{assump:permutable}
       The $\hat{r}(\cdot,\cdot)$ is invariant to permutations within $\mathcal{Z}_n \cup \{Z_{n+1}\}$ and $\mathcal{Z}^\prime_m$.
    \end{assumption}

    If additional data from $P$ and $P^{\prime}$, independent of both $\mathcal{D}_{\rm tr}\cup\mathcal{Z}_n \cup \{Z_{n+1}\}$ and $\mathcal{D}_{\rm tr}^{\prime}\cup\mathcal{Z}^\prime_m$, are available and used to estimate $r(x,s)$, then Assumption~\ref{assump:permutable} holds naturally. However, such data are often unavailable, and splitting existing data for this purpose reduces the size of the calibration set.
    Henceforth, the $\hat{r}(x,s)$ represents the estimator of $r(x,s)$ using $\mathcal{Z}_n$ and $\mathcal{Z}^\prime_m$, with data point $(X_{n+1}, y)$ included to ensure Assumption~\ref{assump:permutable} is satisfied.
    The end-to-end procedure for constructing the ELCP set is summarized in Algorithm~\ref{twoagentsProcedure}.  

    \renewcommand{\floatpagefraction}{.95}
    \begin{algorithm}[tp]
    \caption{Enhanced Localized Conformal Prediction (ELCP)}\label{twoagentsProcedure}
    {\bf Input:} Calibration and auxiliary data $\mathcal{Z}_n$, $\mathcal{Z}_m^\prime$, test point $X_{n+1}$, training data $\mathcal{D}_{\mathrm{tr}}$ and $\mathcal{D}_{\mathrm{tr}}^{\prime}$, score functions $S(\cdot,\cdot)$ and $S^{\prime}(\cdot,\cdot)$, function $K(\cdot,\cdot;\cdot)$, parameters $h$ and $\omega$, level $1-\alpha$
    \begin{algorithmic}[1]
    \State Pretrain model $\hat{\mu}(\cdot)$ from $\mathcal{D}_{\mathrm{tr}}$ and $\hat{\mu}^{\prime}(\cdot)$ from $\mathcal{D}_{\mathrm{tr}}^{\prime}$
    \State Calculate $S_i^{y}=S(X_i,Y_i)$, $i\in[n]$ using $\hat{\mu}(\cdot)$, and $S_j^\prime=S^\prime(X_j^\prime,Y_j^\prime)$, $j\in[m]$ using $\hat{\mu}^{\prime}(\cdot)$
    \For{$y\in\mathcal{Y}$}
        \State\qquad Calculate $S_{n+1}^y=S(X_{n+1},y)$;
        \State\qquad Obtain density ratio estimator $\hat{r}(\cdot,\cdot)$ using $\mathcal{Z}_n\cup\{(X_{n+1}, y)\}$ and $\mathcal{Z}^\prime_m$;
        \State\qquad Calculate $\hat{\beta}_{\omega,\hat{r}}^{y}(X_i,S_i^{y})$ for $i\in[n+1]$;
        \State\qquad Calculate $\hat{q}=Q\Big(1-\alpha; (n+1)^{-1}\sum_{i=1}^{n+1}\delta_{\hat{\beta}_{\omega,\hat{r}}^{y}(X_i,S_i^{y})}\Big)$;
        \State\qquad $y$ is included in set $\widehat{C}_\alpha^{\mathrm{ELCP}}(X_{n+1})$ as long as $\hat{\beta}_{\omega,\hat{r}}^{y}(X_{n+1},S_{n+1}^{y})\leq\hat{q}$.
    \EndFor
    \State \Return $\widehat{C}_\alpha^{\mathrm{ELCP}}(X_{n+1})$
    \end{algorithmic}
    \end{algorithm}

    \begin{remark}\label{remark:efficient_r}
        Including $(X_{n+1},y)$ into the estimation of $r(x,s)$ guarantees marginal coverage but requires updating $\hat{r}(x,s)$ for each $y$, which can be computationally intensive. A more efficient implementation is to obtain $\hat{r}(x,s)$ without using $(X_{n+1},y)$. Although this may theoretically compromise marginal coverage, the resulting change in the prediction set is often negligible under certain conditions. 
        See Section~\ref{sec:compute} for theoretical justification.
    \end{remark}
    
    The following theorem establishes the marginal coverage guarantee for %the ECLP set 
    $\widehat{C}_{\alpha}^{\mathrm{ELCP}}(X_{n+1})$.
    
    \begin{theorem}[Marginal Coverage]\label{theo:marginal_coverage}
        Suppose $\mathcal{Z}_n\cup\{Z_{n+1}\}$ are exchangeable and Assumption~\ref{assump:permutable} holds. Then for a given $\alpha\in(0,1)$, $1-\alpha\leq \mathrm{pr}\left( Y_{n+1}\in \widehat{C}_{\alpha}^{\mathrm{ELCP}}(X_{n+1}) \right)<1-\alpha+(n+1)^{-1}$.
    \end{theorem}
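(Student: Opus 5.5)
The plan is the standard split-conformal rank argument, applied to the fused scores $\hat{\beta}_i:=\hat{\beta}_{\omega,\hat{r}}(X_i,S_i)$, $i\in[n+1]$, which are $\hat{\beta}^{y}_{\omega,\hat{r}}(X_i,S_i^{y})$ evaluated at $y=Y_{n+1}$. First, by the definition \eqref{eq:ELCP}, the event $Y_{n+1}\in\widehat{C}_{\alpha}^{\mathrm{ELCP}}(X_{n+1})$ coincides with
\[
\hat{\beta}_{n+1}\leq Q\Bigl(1-\alpha;(n+1)^{-1}\textstyle\sum_{i=1}^{n+1}\delta_{\hat{\beta}_i}\Bigr).
\]
This is the event that $\hat{\beta}_{n+1}$ is among the $\lceil(1-\alpha)(n+1)\rceil$ smallest of $\hat{\beta}_1,\ldots,\hat{\beta}_{n+1}$, where ties are counted in favour of inclusion. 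So it suffices to show that $(\hat{\beta}_1,\ldots,\hat{\beta}_{n+1})$ is exchangeable.

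\emph{Exchangeability.} Condition on $\mathcal{D}_{\rm tr}$, $\mathcal{D}_{\rm tr}^{\prime}$ and $\mathcal{Z}_m^{\prime}$. These are independent of $\mathcal{Z}_n\cup\{Z_{n+1}\}$, so the calibration-plus-test points remain exchangeable conditionally, and $\hat{\mu}$, $\hat{\mu}^{\prime}$ and the scores $S_j^{\prime}$ are fixed. Write $\hat{\beta}_i=G(Z_i;\mathcal{M})$, where $\mathcal{M}$ is the multiset $\{Z_1,\ldots,Z_{n+1}\}$. This representation holds for three reasons:
\begin{enumerate}[(a)]
\item the score map $S(\cdot,\cdot)$ is fixed, so $S_j^{Y_{n+1}}=S(X_j,Y_j)$ for every $j\in[n+1]$;
\item the numerator and denominator of \eqref{eq:beta} contain sums over $j\in[n+1]$ that are symmetric in the calibration and test points;
\item by Assumption~\ref{assump:permutable}, $\hat{r}$ depends on $\mathcal{Z}_n\cup\{Z_{n+1}\}$ only through $\mathcal{M}$, with $\mathcal{Z}_m^{\prime}$ held fixed.
\end{enumerate}
Hence, for any permutation $\pi$ of $[n+1]$, $(\hat{\beta}_{\pi(1)},\ldots,\hat{\beta}_{\pi(n+1)})=\bigl(G(Z_{\pi(i)};\mathcal{M})\bigr)_i$, and this has the same law as $(\hat{\beta}_1,\ldots,\hat{\beta}_{n+1})$ because $(Z_{\pi(i)})_i\overset{d}{=}(Z_i)_i$. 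Integrating out the conditioning gives exchangeability unconditionally.

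\emph{Lower bound.} By exchangeability, $\mathrm{pr}\bigl(\hat{\beta}_{n+1}\leq \hat{\beta}_{(k)}\bigr)\geq k/(n+1)$ with $k=\lceil(1-\alpha)(n+1)\rceil$, since $n+1$ is equally likely to be any index in the sorted order. Ties only help inclusion, so the coverage is at least $k/(n+1)\geq1-\alpha$. If $k>n+1$, the quantile is $+\infty$ and coverage is trivial.

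\emph{Upper bound.} When the $\hat{\beta}_i$ are almost surely distinct, the rank of $\hat{\beta}_{n+1}$ is uniform on $[n+1]$. Coverage then equals exactly $k/(n+1)<1-\alpha+(n+1)^{-1}$.

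\emph{Main obstacle: ties.} This is the one genuinely delicate step. The $\hat{\beta}_i$ are ratios of kernel-weighted counts and can coincide; for example, with a compactly supported kernel two isolated points can both receive the value $1$. I would first argue that ties have probability zero under the continuity assumptions implicit in the paper: $(X,S)$ has a joint density and $K$, $\hat{r}$ are continuous and positive, so the event $\hat{\beta}_i=\hat{\beta}_j$ is a null set. If this cannot be secured in general, I would state the upper bound under an explicit no-ties condition, or restore it by a standard randomized tie-breaking, i.e.\ adding independent $U_i\sim\mathrm{Unif}$ perturbations to the ranks. The lower bound needs no such condition.
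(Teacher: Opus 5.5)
Your proposal is correct and is essentially the paper's argument. The paper writes $\hat{\beta}_{\omega,\hat{r}}(X_i,S_i)=\varphi_1(Z_i;\mathcal{Z}_n\cup\{Z_{n+1}\},\mathcal{Z}_m^{\prime})$ with $\varphi_1$ permutation-invariant, conditions on the unordered set $\{Z_1,\ldots,Z_{n+1}\}$ so that $\hat{\beta}_{\omega,\hat{r}}(X_{n+1},S_{n+1})$ is uniform over the observed values, and then invokes the standard quantile bound of \citet{vovk2005algorithmic}; this is your exchangeability-plus-rank step.

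Your caveat on ties is well founded. The paper's appeal to \citet{vovk2005algorithmic} for the strict upper bound silently assumes the $\hat{\beta}_i$ are almost surely distinct. The theorem imposes no continuity on the scores or on $\hat{r}$: for example, with constant scores and $\omega=0$, every $\hat{\beta}_i$ equals $1$ and coverage is $1$. So the no-ties condition has to be assumed rather than derived. Randomized tie-breaking would not rescue the argument, because it changes the set instead of proving the stated bound.
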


    When the auxiliary information is perfect, i.e., $f(x,s)=g(x,s)$ for all $x$ and $s$, the optimal approach is to combine $\mathcal{Z}_n$ and $\mathcal{Z}_m^\prime$ into a single dataset and apply LCP to the combined sample.
    Choosing $\hat{r}(x,s)\equiv1$ and $\omega=1$, we obtain $\hat{\beta}_{1,1}^{y}(X_i,S_i^{y}), i\in[n+1]$ and $\hat{\beta}_{1,1}^{y}(X_j^{\prime},S_j^{\prime}), j\in[m]$.  
    The LCP set based on the combined data is $\widehat{C}_{\alpha}^{\mathrm{LCP-C}}(X_{n+1})$ as follows
    \begin{eqnarray}\label{eq:comb_LCP}
        \left\{ y:\hat{\beta}_{1,1}^{y}(X_{n+1},S_{n+1}^{y})\leq Q\left(1-\alpha; (n+m+1)^{-1}\left\{\overset{n+1}{\underset{i=1}\sum}\delta_{\hat{\beta}_{1,1}^{y}(X_i,S_i^{y})}+\sum_{j=1}^{m}\delta_{\hat{\beta}_{1,1}^{y}(X_j^{\prime},S_j^{\prime})}\right\}\right) \right\}\,.
    \end{eqnarray}
    However, if the auxiliary dataset is imperfect, such a naive combination may invalidate the exchangeability $\{\hat{\beta}_{1,1}(X_i,S_i),i\in[n+1]\}\cup\{\hat{\beta}_{1,1}(X_j^{\prime},S_j^{\prime}),j\in[m]\}$ and would result in a failure of marginal coverage. 
    See the blue curves in Fig.~\ref{fig:motivation_example}.
    %for example. 
    In contrast, the calibration step of $\widehat{C}_{\alpha}^{\mathrm{ELCP}}(X_{n+1})$ relies {\it exclusively} on $\{\hat{\beta}_{\omega,\hat{r}}^{y}(X_i,S_i^{y})\}_{i\in[n+1]}$.
    This key difference guarantees that $\widehat{C}_{\alpha}^{\mathrm{ELCP}}(X_{n+1})$ ensures marginal coverage regardless of whether the information of $\mathcal{Z}_{m}^{\prime}$ is imperfect, yielding a safe approach.
    \iffalse
    \begin{remark}
        When the calibration dataset is too small, ELCP essentially inherits the same limitations as classical conformal methods at high coverage levels: it can only produce trivial prediction sets. To alleviate this limitation, in Section~\ref{sec: calibration augmentation} in the supplementary material, we propose combining auxiliary data with calibration data through weighting by $\hat{r}(x,s)$. However, because the estimated weights lack strict theoretical guarantees, we employ this approach only when $n$ is extremely small and ELCP fails to produce meaningful results.
    \end{remark}
    \fi
        
    \begin{remark}
        During score pre-training, incorporating the auxiliary training data $\mathcal{D}_{\rm tr}^{\prime}$ can potentially improve the pre-trained model and, in turn, the performance of the resulting prediction set.
        The marginal coverage guarantee in Theorem~\ref{theo:marginal_coverage} continues to hold. 
        Technical details, along with synthetic and real data experiments comparisons of different pre-training schemes, are provided in Sections~\ref{sec:supp_simu_score},~\ref{sec:supp_housing} and~\ref{sec:real_data} of the supplementary material.
    \end{remark}

\subsection{Examples/application scenarios}

    Section~\ref{sec:method} establishes the foundation for achieving enhanced prediction sets through the integration of auxiliary information, thereby raising questions about its practical utility: in which types of applications can the proposed ELCP be particularly beneficial? This subsection explores two scenarios that frequently arise in practice, demonstrating how ELCP can be seamlessly integrated into real-world applications.
    
    \subsubsection{ELCP with multiple auxiliary datasets}

    We explore the ELCP framework in the context of multiple auxiliary datasets. This scenario is particularly relevant in real-world applications such as federated learning \citep{peng2019federated} and multi-task learning \citep{caruana1997multitask, zhang2021survey}, where multiple auxiliary datasets are commonly available.
    These datasets typically exhibit significant distributional shifts, which may stem from differences in population demographics, measurement instruments, or data collection protocols.

    Assume there are $K$ auxiliary datasets, where the $k$-th dataset is denoted as $\mathcal{Z}^{(k)} = \big\{ Z_{i}^{(k)} = (X_{i}^{(k)}, Y_{i}^{(k)}) \big\}_{i\in[n_k]}$, consisting of $n_k$ samples drawn from the distribution $P^{(k)}$ for each $k \in [K]$. For each $\mathcal{Z}^{(k)}$, let $\{S_{i}^{(k)}\}_{i\in[n_k]}$ represent the conformity scores, computed using a model $\hat{\mu}^{(k)}(\cdot)$ pretrained on a separate training data $\mathcal{D}_{\rm tr}^{(k)}$.
    In addition, let $\hat{r}^{(k)}(\cdot, \cdot)$ be the estimator of the density ratio $r^{(k)}(x,s) = f(x,s)/g^{(k)}(x,s)$ , where $g^{(k)}(x,s)$ is the joint pdf of $(S_{i}^{(k)}, X_{i}^{(k)})$. 
    Then the ELCP set with multiple auxiliary datasets is $\widehat{C}_{\alpha}^{\mathrm{ELCP}}(X_{n+1})$ in \eqref{eq:ELCP} with $\hat{\beta}_{\omega,\hat{r}}^{y}(X_i,S_i^{y})$ replaced by
    \begin{equation}
        \notag \dfrac{\overset{n+1}{\underset{j=1}\sum}K(X_i,X_j;h)\mathbbm{1}\left( S_j^y\leq S_i^y \right)+\omega\overset{K}{\underset{k=1}\sum}\overset{n_k}{\underset{j=1}\sum}K(X_i,X_{k,j}^{\prime};h)\hat{r}^{(k)}(X_{j}^{(k)},S_{j}^{(k)})\mathbbm{1}\left( S_{k,j}^{\prime}\leq S_i^y \right)}{\overset{n+1}{\underset{j=1}\sum}K(X_i,X_j;h)+\omega\overset{K}{\underset{k=1}\sum}\overset{n_k}{\underset{j=1}\sum}K(X_i,X_{k,j}^{\prime};h)\hat{r}^{(k)}(X_{j}^{(k)},S_{j}^{(k)})}\,.
    \end{equation}
    Marginal coverage of the ELCP in this case is maintained as long as $\hat{r}^{(k)}(\cdot,\cdot)$ is invariant under permutations within $\mathcal{Z}_{n}\cup\{Z_{n+1}\}$ and $\mathcal{Z}^{(k)}$ for $k\in[K]$.

    {\subsubsection{Weakly(Semi)-supervised setting}\label{sec:semi}}
    
    In the weakly-supervised setting \citep{bilen2016weakly, zhou2018brief}, where auxiliary data is provided with only coarse-grained labels or labels that may not always represent the ground truth, our proposed ELCP can handle this scenario directly by leveraging imperfect information as auxiliary data while maintaining marginal coverage.
    
    Now we consider extending our approach to a more challenging setting, semi-supervised learning, where auxiliary data include only covariates.  This setting is common since labeled data are often expensive to obtain, while large amounts of unlabeled data are readily accessible \citep{van2020survey}. For recent development, we refer to \citet{zhang2022high}, \citet{Candes2024CrossPPI} and \citet{Wen2024Semi} therein. Suppose that $\{X_j^{\prime}\}_{j\in[m]}$ are observed while the corresponding responses are unobserved.
    A natural idea is to obtain the predicted values $\widehat{Y}_j$ for $X_j'$ from a pre-trained model $\nu(\cdot):\mathcal{X}\mapsto\mathcal{Y}$ built on the data  independent of $\{X_j^{\prime}\}_{j\in[m]}$ and $\mathcal{Z}_{n}\cup\{Z_{n+1}\}$. The $\nu(\cdot)$ is often taken as the estimate of expectation of $Y$ given $X$.
    Within our ELCP framework, we can use $\mathcal{Z}_m'=\{(X_j^{\prime},\widehat{Y}_j)\}_{j\in[m]}$ as the auxiliary data.
    However, given $\nu(\cdot)$, the distribution of the conformity score $S_{j}^{\prime}$ conditional on $X_j^{\prime}$ would become a degenerate distribution, which is substantially different from that of $S_i\mid X_i$. Although marginal coverage can still be ensured in such cases, the benefit of incorporating the auxiliary data would be limited.

    Therefore, we seek the predictions to ensure that the distribution of $S_{j}^{\prime}$ conditional on $X_j^{\prime}$ is as close to that of $S_i$ given $X_i$ as possible to make the fused estimate efficient.   
    This aim distinguishes our approach from conventional prediction-powered inference in semi-supervised settings  \citep{angelopoulos2023prediction}. 
    Assume that $X_{j}^{\prime}$ shares the same distribution as $X_i$, that is, $g_X(x)=f_X(x)$ for all $x\in\mathcal{X}$, such that the auxiliary data is informative.
    To this end, we begin with the case where an estimator of the conditional distribution of $Y_i\mid X_i$ is provided. In our framework, such a model can be trained on the dataset $\mathcal{D}_{\rm tr}$.  A rich body of literature exists on conditional distribution estimation, %including semiparametric methods employing linear projection \citep{Hall2005}, single-index projection \citep{Henzi2021}, 
    see, for example, the distributional random forests \citep{Cevid2022} and the neural network-based method \citep{shen2024engression}. Then, for each $X_{j}^{\prime}$, we randomly generate a sample from this estimated conditional distribution, denoted as ${Y}_{j}^{\prime}$, and accordingly the ELCP can be conducted with the {\it artificial} calibration data $\mathcal{Z}_m'=\{(X_j',Y_j')\}_{j\in[m]}$. 
    The maintenance of marginal coverage for the ELCP in this semi-parametric setting can readily be verified by confirming Assumption~\ref{assump:permutable} holds.
    
    In practice, a model for the conditional distribution of $Y_i\mid X_i$ is not always necessary.
    For example, consider the case of having two pre-trained models: $\nu(x)$ and $\sigma(x)$, which predict the conditional mean and standard deviation of $Y$ given $X$, respectively.
    For each $X_{j}^{\prime}$, we can generate ${Y}_{j}^{\prime}$ by the following mean-variance model:
    \begin{align*}
        \notag Y_j^{\prime} = \nu(X_j^{\prime}) +  \sigma(X_j^{\prime})\cdot\varepsilon_j^{\prime},
    \end{align*}
    where $\varepsilon_j^{\prime}$ is a random variable with zero mean and unit variance.
    In the absence of any prior knowledge, a simple choice
    is to use standard normal distribution for generating $\varepsilon_j^{\prime}$.
    The ELCP set can be constructed using the same procedures with the synthesized dataset $\mathcal{Z}_m'=\{(X_j',Y_j^{\prime})\}_{j\in[m]}$.
    This semi-supervised ELCP, as indicated by numerical results in Section~\ref{sec:simu_semi} of the supplementary material, is effective in many settings with improvements over LCP, but remains an important area for future work on theoretical justifications.

\section{Theoretical results on local performance of ELCP}
\label{subsec::Local Enhancement}
    
    We provide theoretical justification for how ELCP, which incorporates auxiliary data, improves the local performance compared to LCP which only utilizes the calibration data.
    To achieve this, we derive a non-asymptotic bound on the test-conditional miscoverage error:
    \begin{equation}\label{eq:conditional_coverage_diff}
        \left| \mathrm{pr}\left( Y_{n+1}\in\widehat{C}_{\alpha}^{\mathrm{ELCP}}(X_{n+1}) \mid X_{n+1}=x_0 \right) - (1-\alpha)\right|,
    \end{equation}
    and compare it to that of LCP.
    Here, $x_0$ is a fixed, given value in the feature space $\mathcal{X}$.
    
    As discussed in Sections~\ref{subsec:recap_LCP} and~\ref{subsec:fused_estimate}, the key benefit of incorporating auxiliary data is the improvement in estimating $F_{S\mid X}(S_i\mid X_i)$ for $i\in [n+1]$. 
    Theoretically, $\{F_{S\mid X}(S_i\mid X_i)\}_{i\in[n+1]}$ are i.i.d.~$\mathrm{Uniform}[0,1]$ random variables.
    In ELCP, $\hat{\beta}_{\omega,\hat{r}}(X_i,S_i)$ is used as an estimator of $F_{S\mid X}(S_i\mid X_i)$, whereas in the LCP, the corresponding estimator is $\hat{\beta}_{i}^{\mathrm{LCP}}(Y_{n+1})$, which is just $\hat{\beta}_{0,\hat{r}}(X_i,S_i)$ by our definition. 
    Thus, the quantities
    \begin{equation}
        \notag \Delta_i(\omega)=\left|\hat{\beta}_{\omega,\hat{r}}(X_i,S_i)-F_{S\mid X}(S_i\mid X_i)\right|, \ i\in[n+1], 
    \end{equation}
    are crucial in quantifying the test-conditional miscoverage error \eqref{eq:conditional_coverage_diff}.
    The following lemma formalizes this relationship and provides a foundation for the subsequent theorems.
    \begin{lemma}\label{lemma:conditional_coverage_bound}
        Suppose $\sup_{1\leq i\leq n+1}\Delta_i(\omega) \leq \varepsilon$ holds with probability at least $1-\delta$, where $\varepsilon$ is a nonrandom quantity which may depend on $\delta$, $n$, $m$, $\omega$ and $h$.
        Then 
        \begin{align}
            \notag & \mathrm{pr} \left( Y_{n+1}\in\widehat{C}_{\alpha}^{\mathrm{ELCP}}(X_{n+1}) \mid X_{n+1}=x_0 \right) \\
            \notag \in & \left[ \dfrac{\lceil (n+1)(1-\alpha) \rceil-1}{n+1}-2\varepsilon-\delta, \dfrac{\lceil (n+1)(1-\alpha) \rceil}{n+1}+2\varepsilon+\delta \right]\,,
        \end{align}
        where $\lceil\cdot\rceil$ is the ceiling function.
    \end{lemma}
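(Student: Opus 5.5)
The plan is to sandwich the ELCP event between two events defined through the oracle quantities $U_i := F_{S\mid X}(S_i\mid X_i)$, $i\in[n+1]$. Conditionally on $X_{n+1}=x_0$, the $U_i$ are i.i.d.\ $\mathrm{Uniform}[0,1]$, and $U_{n+1}$ is independent of $(U_1,\dots,U_n)$. Write $\hat\beta_i:=\hat{\beta}_{\omega,\hat r}(X_i,S_i)$ and let $k:=\lceil (n+1)(1-\alpha)\rceil$. By definition \eqref{eq:ELCP}, $Y_{n+1}\in\widehat C^{\mathrm{ELCP}}_\alpha(X_{n+1})$ if and only if $\hat\beta_{n+1}\le \hat\beta_{(k)}$, the $k$-th order statistic of $\hat\beta_1,\dots,\hat\beta_{n+1}$. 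On the event $E=\{\sup_i\Delta_i(\omega)\le\varepsilon\}$, which has probability at least $1-\delta$, every order statistic moves by at most $\varepsilon$: $|\hat\beta_{(k)}-U_{(k)}|\le\varepsilon$, where $U_{(k)}$ is the $k$-th order statistic of $U_1,\dots,U_{n+1}$. Hence on $E$,
\[
\{U_{n+1}\le U_{(k)}-2\varepsilon\}\subseteq\{\hat\beta_{n+1}\le\hat\beta_{(k)}\}\subseteq\{U_{n+1}\le U_{(k)}+2\varepsilon\}.
\]
Adding or subtracting $\mathrm{pr}(E^c)\le\delta$ then yields
\[
\mathrm{pr}(U_{n+1}\le U_{(k)}-2\varepsilon\mid x_0)-\delta\;\le\;\text{coverage}\;\le\;\mathrm{pr}(U_{n+1}\le U_{(k)}+2\varepsilon\mid x_0)+\delta .
\]
The hypothesis is used in its conditional form given $X_{n+1}=x_0$; if it is only available unconditionally, that point would need comment.

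It remains to bound the two oracle probabilities, and I would avoid the $n+1$ order statistic altogether. Let $V_{(j)}$ denote the $j$-th order statistic of $U_1,\dots,U_n$ only. Then $U_{n+1}\le U_{(k)}$ is equivalent to $U_{n+1}\le V_{(k)}$ when $k\le n$; when $k=n+1$ the inequality holds trivially with $V_{(n+1)}:=1$. Shifting by $c\varepsilon$ with $|c|\le 2$ requires care, because $U_{(k)}$ involves $U_{n+1}$ itself. I would handle it as follows: $U_{n+1}\le U_{(k)}+2\varepsilon$ implies $U_{n+1}\le V_{(k)}+2\varepsilon$, since if $U_{n+1}$ is among the top ranks then $U_{(k)}\le V_{(k)}$. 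In the other direction, $U_{n+1}\le V_{(k-1)}-2\varepsilon$ implies $U_{n+1}<U_{(k)}-2\varepsilon$ as well, because then $U_{(k)}\ge V_{(k-1)}$.

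Next, by independence of $U_{n+1}$ from $V$ and the uniformity of $U_{n+1}$, we get
\[
\mathrm{pr}(U_{n+1}\le V_{(k)}+2\varepsilon)\le E[V_{(k)}]+2\varepsilon=\frac{k}{n+1}+2\varepsilon,
\qquad
\mathrm{pr}(U_{n+1}\le V_{(k-1)}-2\varepsilon)\ge \frac{k-1}{n+1}-2\varepsilon,
\]
using $E[V_{(j)}]=j/(n+1)$ and $\mathrm{pr}(U\le t)\ge t-0$ for $t$ truncated at $[0,1]$. These are exactly the endpoints claimed in Lemma~\ref{lemma:conditional_coverage_bound}.

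The step I expect to be most delicate is the order-statistic bookkeeping. This includes the comparison between the rank-$k$ statistic of all $n+1$ values and that of the first $n$, the edge case $k=n+1$, and possible ties among the $\hat\beta_i$. Ties do not matter for the $U_i$ since they are continuous, and for $\hat\beta$ only the perturbation inequality $|\hat\beta_{(k)}-U_{(k)}|\le\max_i|\hat\beta_i-U_i|$ is needed. That inequality is a standard fact: it follows by counting how many indices satisfy $\hat\beta_i\le t$ versus $U_i\le t\pm\varepsilon$. I would state it as a one-line auxiliary claim.
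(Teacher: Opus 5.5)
Your proposal is correct and follows essentially the same route as the paper. Both arguments use the quantile-perturbation bound $|\hat\beta_{(k)}-U_{(k)}|\le\sup_i\Delta_i$ and the $\pm2\varepsilon$ sandwich on $\{\sup_i\Delta_i\le\varepsilon\}$ at a cost of $\delta$. Both then replace $U_{n+1}$ inside the oracle quantile by $0$ or $+\infty$ (your $V_{(k-1)}$ and $V_{(k)}$), so that uniformity and independence of $U_{n+1}$ give the endpoints via $E[V_{(j)}]=j/(n+1)$. Like you, the paper applies the high-probability hypothesis conditionally on $X_{n+1}=x_0$, though it does so without comment.
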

    
    Lemma~\ref{lemma:conditional_coverage_bound} indicates that the test-conditional miscoverage error of ELCP depends on the concentration property of $\sup_{1\leq i\leq n+1}\Delta_i(\omega)$,
    assumed as a high-level condition in Lemma~\ref{lemma:conditional_coverage_bound} and will be established in Lemma~\ref{lemma:delta_concentration2}.
    Notably, Lemma~\ref{lemma:conditional_coverage_bound} also applies to the LCP set $\widehat{C}_{\alpha}^{\mathrm{LCP}}(X_{n+1})$ by setting $\omega=0$.
    Let $\|\cdot\|$ be the Euclidean norm.
    
    \begin{assumption}\label{ass2}
    \begin{enumerate}
        \item[(i)] Assume $\mathcal{X}=[0,1]^d$ without loss of generality. There exist positive constants $\underline{L}_{1}, \underline{L}_{2}, \overline{L}_{1}$ and $\overline{L}_{2}$ such that for all $x\in [0,1]^{d}$ and $s\in \mathbb{R}$,
        \begin{equation*}
            \underline{L}_{1}\leq f(x,s),g(x,s)\leq \overline{L}_{1},\ \underline{L}_{2}\leq r(x,s),\hat{r}(x,s)\leq \overline{L}_2\,.
        \end{equation*}
        
        \item[(ii)] For any $x_1,x_2\in[0,1]^d$, there exists a positive constant $L$ such that $$ \underset{s\in\mathbb{R}}{\sup}\left|F_{S\mid X}(s\mid x_1)-F_{S\mid X}(s\mid x_2)\right|\leq L\|x_1-x_2\|\,. $$ 
        
        \item[(iii)] $K(\cdot,\cdot;h)$ is in the form of $K(x_1,x_2;h)=K_0\left( \|x_1-x_2\|/h \right)$, where $K_0(\cdot)$ is a bounded univariate kernel function that is symmetric around $0$, and satisfies: (\textit{a}) $K_0(u)$ decreases when $u$ increases for $u\geq 0$; (\textit{b}) $uK_0(u)$ decreases when $u$ increases for $u>1$; (\textit{c}) $\int_{0}^{\infty} u^{d-1}K_0(u)du<\infty$. 
    \end{enumerate}
    \end{assumption}

    Assumption~\ref{ass2}-(i) ensures that $f(x,s)$ and $g(x,s)$, along with their ratio, are bounded, a standard requirement in density ratio estimation \citep{sugiyama2008direct}.
    Assumption~\ref{ass2}-(ii) imposes a smoothness condition on the conditional distribution of $S_i$ given $X_i$.
    Assumption~\ref{ass2}-(iii) holds for commonly-used kernel functions such as the Gaussian and Laplacian kernels.

    %Let $K_0^{-1}(\cdot)$ be the inverse function of $K_0(\cdot)$ on the positive real line.
    For $k\geq 2$, define
    \begin{gather}
        \notag D_k(r,\hat{r})=\left\{\int\left|\hat{r}(x,s)-r(x,s)\right|^kg(x,s)dxds\right\}^{1/k},
    \end{gather}
    which quantifies the estimation accuracy of $\hat{r}(\cdot,\cdot)$ with respect to $r(\cdot,\cdot)$. 
    Recall that $\hat{r}(x,s)$ is obtained using $\mathcal{Z}_n \cup \{Z_{n+1}\}$ and $\mathcal{Z}_{m}^{\prime}$.
    Define $\hat{r}_j^\prime(x,s)$ as the estimator of $r(x,s)$ based on $\mathcal{Z}_n\cup\{Z_{n+1}\}$ and $\mathcal{Z}_m^\prime\setminus\{Z_j^\prime\}$ for each $j\in[m]$.
    
    \begin{assumption}\label{ass3}
        There exists a positive constant $C_r$ such that $\underset{x,s}{\sup}|\hat{r}(x,s)-\hat{r}_j^\prime(x,s)|\leq C_r m^{-1}$ for all $j\in[m]$. Moreover, $\underline{L}_2\leq \hat{r}_j^\prime(x,s)\leq\overline{L}_2$ for all $x\in [0,1]^{d}$ and $s\in \mathbb{R}$.
    \end{assumption}

    \begin{remark}\label{remark:ass3}
        When viewing the density ratio estimator as a functional of the sample distributions, appropriate smoothness conditions on this functional imply that modifying a single data point alters the empirical distribution by at most $O(m^{-1})$. Consequently, the density ratio estimate changes by at most $O(m^{-1})$. This ensures the uniform stability condition \citep{feldman2018generalization} of the density ratio estimator that $\sup_{x,s}|\hat{r}(x,s)-\hat{r}_j^\prime(x,s)|\leq C_r m^{-1}$. A detailed explanation is provided in Section~\ref{sec:verify_ass3} in the supplementary material, and similar assumptions are made in \citet{lei2018distribution}.
    \end{remark}
    \begin{lemma}\label{lemma:delta_concentration2}
        Suppose Assumptions~\ref{assump:permutable}--\ref{ass3} hold. Assume that there exists some $k\geq 2$, along with nonrandom quantity $\epsilon_k(\gamma;r)$ such that $D_k(r,\hat{r})\leq\epsilon_k(\gamma;r)$ with probability at least $1-\gamma$.
        Then there exists a positive constant $C_0$ such that 
        \begin{eqnarray}
            \notag \underset{1\leq i\leq n+1}{\sup}\Delta_{i}(\omega) & \leq & C_0\left\{\dfrac{\omega m\epsilon_k(\gamma,r)}{(n+\omega m)h^{d/k}} + K_0^{-1}(h^{d})h + \dfrac{1}{(n+\omega m)^{1/2}h^{d/2}}\log^{1/2}\left(\dfrac{n}{\delta}\right)+\gamma^{1/2} \right\}\,
        \end{eqnarray}
        with probability over $1-\delta-\gamma^{1/2}$.
        %, where $\varphi=V_r\vee C_r \vee\left[\overline{L}_2\wedge\left\{  C_r m^{-1}h^{-d/k}+\epsilon_k(\gamma,r)h^{-d/k} \right\}\right]$.
    \end{lemma}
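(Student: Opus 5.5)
We bound $\Delta_i(\omega)$ uniformly in $i$ by decomposing $\hat{\beta}_{\omega,\hat{r}}(X_i,S_i)-F_{S\mid X}(S_i\mid X_i)$ into four pieces:
\begin{itemize}
\item[(a)] a \emph{density-ratio error} term, from replacing $\hat{r}$ by $r$;
\item[(b)] a \emph{leave-one-out/stability} term, which makes the auxiliary summands conditionally independent;
\item[(c)] a \emph{stochastic} term, a centered kernel sum;
\item[(d)] a \emph{bias} term, from the smoothness in Assumption~\ref{ass2}-(ii).
\end{itemize}
Fix $i$ and write $x=X_i$, $s=S_i$. Let $A$ be the numerator and $B$ the denominator of \eqref{eq:beta}, and let $A_r,B_r$ denote the same quantities with $\hat{r}$ replaced by $r$. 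Using \eqref{eq:F_S_X} and \eqref{eq:f_X}, we have $E\{K(x,X')r(X',S')\mathbbm{1}(S'\le s)\}=E\{K(x,X)\mathbbm{1}(S\le s)\}$. Hence the calibration and auxiliary summands in $A_r$ and $B_r$ have matching means, and $A_r/B_r$ is a kernel-regression estimator with an effective sample of size about $n+\omega m$.

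\textbf{Denominator.} First we show $B\gtrsim (n+\omega m)h^d$ with high probability. By Assumption~\ref{ass2}-(i),(iii), $E\,K(x,X)\asymp h^d$ and $\hat{r}\ge \underline{L}_2$. A Bernstein bound together with a union over $i\in[n+1]$ then gives the claim; this is where the $\log(n/\delta)$ factor arises.

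\textbf{Term (a).} We bound $|A-A_r|+|B-B_r|\le 2\omega\sum_j K(x,X_j')|\hat{r}-r|(X_j',S_j')$. To decouple $\hat{r}$ from the auxiliary point, we replace $\hat{r}$ by $\hat{r}_j'$ at cost $C_r m^{-1}$ per summand (Assumption~\ref{ass3}); this cost is absorbed after dividing by $B$. Conditionally on the remaining data, $Z_j'$ is independent of $\hat{r}_j'$. Hölder's inequality gives
\[
E\{K|\hat{r}_j'-r|\}\le \{E K^{k/(k-1)}\}^{(k-1)/k}D_k(r,\hat{r}_j')\lesssim h^{d(k-1)/k}\,D_k ,
\]
and $D_k(r,\hat{r}_j')\le D_k(r,\hat{r})+C_r/m$. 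Dividing by $B\asymp(n+\omega m)h^d$ yields the term $\omega m\epsilon_k/\{(n+\omega m)h^{d/k}\}$ on the event $D_k\le\epsilon_k$. That event has probability $1-\gamma$; on its complement we use the bound $\overline{L}_2$. A Markov step converts the expectation bound into a high-probability one at the price $\gamma^{1/2}$ in both the bound and the failure probability. Fluctuations of the sum around its conditional mean are controlled together with (c).

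\textbf{Terms (c) and (d).} With $r$ fixed, $A_r-F(s\mid x)B_r$ is a sum of independent bounded terms, where $X_i$ itself is handled by conditioning. Each term has variance $\lesssim h^d$ and is bounded by $\sup K_0$. Bernstein's inequality plus a union bound over $i$ gives a deviation of order $\{(n+\omega m)h^d\log(n/\delta)\}^{1/2}$, and dividing by $B$ gives the third term. Since the threshold $s=S_i$ is random, we handle it by conditioning on $(X_i,S_i)$ and treating the remaining $n+m$ points as independent.

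For the bias, its mean equals $\int K(x,u)\{F(s\mid u)-F(s\mid x)\}f_X(u)du$ divided by the corresponding mass. We split this integral at $\|u-x\|\le K_0^{-1}(h^d)h$. Inside the ball, Lipschitz continuity gives the contribution $LK_0^{-1}(h^d)h$. Outside, $K_0\le h^d$ and the monotonicity of $uK_0(u)$ (Assumption~\ref{ass2}-(iii)(b)) bound the tail mass relative to $h^d$ by a term of the same order.

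\textbf{Main obstacle.} The delicate step is (a). There, $\hat{r}$ depends on all the data, including the point $(X_i,S_i)$ being evaluated and each auxiliary $Z_j'$. Making the stability swap, the Hölder step with $D_k$, and the conversion from a probability-$\gamma$ event to the $\gamma^{1/2}$ term all fit together is where the care is needed. The other terms are standard kernel-regression arguments.
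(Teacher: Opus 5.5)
Your handling of the denominator, the centred kernel sum (c) via Bernstein conditional on $(X_i,S_i)$, the bias (d) split at radius $K_0^{-1}(h^d)h$, and the \emph{mean} of term (a) is sound. The mean argument (stability swap, H\"older with $D_k$, $\overline{L}_2$ off the good event, Markov for $\gamma^{1/2}$) essentially matches the paper, and your $A/B$ versus $A_r/B_r$ split is an equivalent, slightly cleaner version of the paper's $J^{(1)}(J^{(2)}+J^{(3)}H^{(1)})$ decomposition.

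The gap is the claim that the fluctuations of the term-(a) sum ``are controlled together with (c)''. Bullet (b) says the leave-one-out swap makes the auxiliary summands conditionally independent, but it does not. Let $\xi_j=\omega h^{-d}K(X_j',X_i;h)\,|\hat r_j'(X_j',S_j')-r(X_j',S_j')|$. Each $\hat r_j'$ depends on every $Z_\ell'$ with $\ell\neq j$, so the $\xi_j$ are mutually dependent. The swap only makes $Z_j'$ independent of its \emph{own} estimator. That suffices for $E\xi_j$, but it gives no tail bound, and the Bernstein inequality you use in (c) does not apply.

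You genuinely need an exponential tail of order $\exp\{-c\tau^2(n+\omega m)h^d\}$ for each $i$, because you must union-bound over $i\in[n+1]$ to obtain the $\log^{1/2}(n/\delta)$ term. The obvious substitutes fail. Markov's inequality gives only polynomial tails. A bounded-differences (McDiarmid) argument uses the $O(\omega h^{-d})$ range of each coordinate change rather than the $O(\omega^2h^{-d})$ variance, so it only gives the rate $\{(n+\omega m)h^{2d}\}^{-1/2}$ instead of $\{(n+\omega m)h^{d}\}^{-1/2}$.

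The missing idea, which is the heart of the paper's Step II, is a variance-sensitive martingale argument that uses Assumption~\ref{ass3} a second time.
\begin{itemize}
\item Let $\mathcal{F}_j'$ be generated by $\mathcal{Z}_n\cup\{Z_{n+1}\}$ and $Z_1',\ldots,Z_j'$. Write the sum centred at its expectation given the calibration data as the Doob martingale $\sum_j\tilde\xi_j$, with $\tilde\xi_j=E(\sum_\ell\xi_\ell\mid\mathcal{F}_j')-E(\sum_\ell\xi_\ell\mid\mathcal{F}_{j-1}')$.
\item Bound each increment by replacing $Z_j'$ with an independent copy. For $\ell\neq j$ this changes $\hat r_\ell'$ by at most $4C_r/m$ uniformly.
\item Hence the off-diagonal part of $\tilde\xi_j$ is $O(\omega C_r)\{1+m^{-1}h^{-d}\sum_\ell K(X_\ell',X_i;h)\}$. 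This is $O(1)$ on a Bernstein event for $\sum_\ell K(X_\ell',X_i;h)$.
\item The diagonal part has conditional variance $O(\omega^2h^{-d})$, since $|\hat r_j'-r|\le\overline{L}_2$ and $X_j'$ is independent of $\mathcal{F}_{j-1}'$.
\item The predictable quadratic variation is therefore $O\{(n+\omega m)h^{-d}\}$, and Freedman's inequality gives the required tail.
\end{itemize}
This choice of centring, at the expectation given the calibration data only, also explains why $\{D_k(r,\hat r)\le\epsilon_k(\gamma;r)\}$ must be handled in conditional expectation via Markov and cannot be imposed pointwise. Centring each $\xi_j$ at its own leave-one-out conditional mean does not produce a martingale and gives no usable concentration inequality.
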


    The bound on $\sup_{1\leq i\leq n+1}\Delta_i(\omega)$ in Lemma~\ref{lemma:delta_concentration2} depends on two key components: (1) the accuracy of $\hat{r}(x, s)$ as an estimator of $r(x,s)$, and (2) the kernel estimation of $F_{S \mid X}(S_i \mid X_i)$. The first component is captured by the first term associated with $\epsilon_k(\gamma, r)$. The second component is reflected in $K_0^{-1}(h^d)h$ and $(n+\omega m)^{-1/2}h^{-d/2}$. For Gaussian kernel, $K_0^{-1}(h^{d})=O(\log^{1/2}(h^{-1}))$, while $K_0^{-1}(h^{d})=O(\log(h^{-1}))$ for Laplace kernel.
    
    We also note that all terms in the bound on $\sup_{1\leq i\leq n+1}\Delta_i(\omega)$ that involve auxiliary data are modulated by $\omega$. In practice, $\omega$ can be treated as a tuning parameter to control the level of incorporating auxiliary information. This flexibility ensures that the incorporation of auxiliary data is, at the very least, non-detrimental to the performance of ELCP. Further discussions on the choice and impact of $\omega$ are provided in subsequent sections.

    \begin{remark}\label{remark:epsilon}
        Assume $m$ is at least of order $n$ as is common in practice. The convergence rate of $\epsilon_k(\gamma; r)$ or $D_k(r, \hat{r})$ has been extensively studied for various density ratio estimators. 
        For instance, by extending the results of \citet{filipovic2025kernel}, as elaborated in Section~\ref{sec:ext_dk} of the supplementary material, we show that the kernel density machines (KDM) estimator satisfies $\epsilon_k(\gamma; r) = O\big({\log^{1/6}(2/\gamma)} n^{-1/6} \big)$ for any $k \geq 2$ under certain conditions.
        \citet{gizewski2022regularization} show that the kernelized unconstrained least-squares importance fitting (KuLSIF) approach attains $\epsilon_\infty(\gamma; r)=O(\log(1/\gamma)n^{-(2u-1)/(4u+2)})$ under certain smoothness conditions on $r(\cdot,\cdot)$, where $u>1/2$ quantifies its smoothness in reproducing kernel Hilbert space (RKHS).
        Moreover, under stronger conditions, some estimators can achieve $D_\infty(r, \hat{r}) = \sup_{x,s} \left| \hat{r}(x,s) - r(x,s) \right| = O_p(n^{-1/2})$ \citep{matsushita2023estimating}.
    \end{remark}

    The following theorem provides a non-asymptotic bound for the test-conditional miscoverage error in \eqref{eq:conditional_coverage_diff},
    which is a direct consequence of Lemma~\ref{lemma:conditional_coverage_bound} and Lemma~\ref{lemma:delta_concentration2}.
    
    \begin{theorem}[Test-conditional miscoverage error bound]\label{theo:conditional_coverage_error_bound}
       Suppose the conditions of Lemma~\ref{lemma:delta_concentration2} hold. Then there exists a positive constant $\widetilde{C}_0$ such that
        \begin{align}
            \notag&\left|\mathrm{pr}\left( Y_{n+1}\in\widehat{C}_{\alpha}^{\mathrm{ELCP}}(X_{n+1})\mid X_{n+1}=x_0 \right)-(1-\alpha)\right| - (n+1)^{-1} \\
            \leq &\widetilde{C}_0\left[\dfrac{\omega m\epsilon_k(\gamma,r)}{(n+\omega m)h^{d/k}} + K_0^{-1}(h^{d})h + \dfrac{1}{(n+\omega m)^{1/2}h^{d/2}}\left\{\log^{1/2}(n)+\log^{1/2}((n+\omega m)h^{d})\right\}+ \gamma^{1/2}\right]\,.\label{eq:test_conditional_error}
        \end{align}
    \end{theorem}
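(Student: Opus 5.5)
The theorem follows by combining Lemma~\ref{lemma:conditional_coverage_bound} with Lemma~\ref{lemma:delta_concentration2} and then choosing the free confidence parameter $\delta$. Lemma~\ref{lemma:delta_concentration2} gives, for any $\delta\in(0,1)$, that $\sup_{1\leq i\leq n+1}\Delta_i(\omega)\leq\varepsilon(\delta)$ with probability at least $1-\delta-\gamma^{1/2}$, where
\[
\varepsilon(\delta)=C_0\Big\{\tfrac{\omega m\epsilon_k(\gamma,r)}{(n+\omega m)h^{d/k}}+K_0^{-1}(h^d)h+\tfrac{\log^{1/2}(n/\delta)}{(n+\omega m)^{1/2}h^{d/2}}+\gamma^{1/2}\Big\}.
\]
This quantity is nonrandom, so Lemma~\ref{lemma:conditional_coverage_bound} applies with $\varepsilon=\varepsilon(\delta)$ and failure probability $\delta+\gamma^{1/2}$ in place of $\delta$. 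It places the conditional coverage in the interval
\[
\Big[\tfrac{\lceil(n+1)(1-\alpha)\rceil-1}{n+1}-2\varepsilon(\delta)-\delta-\gamma^{1/2},\ \tfrac{\lceil(n+1)(1-\alpha)\rceil}{n+1}+2\varepsilon(\delta)+\delta+\gamma^{1/2}\Big].
\]

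Next, remove the ceiling. Since $(n+1)(1-\alpha)\leq\lceil(n+1)(1-\alpha)\rceil<(n+1)(1-\alpha)+1$, both endpoints of the interval lie within $(n+1)^{-1}$ of $1-\alpha$, apart from the error terms. Hence
\[
\big|\mathrm{pr}(Y_{n+1}\in\widehat C^{\rm ELCP}_\alpha(X_{n+1})\mid X_{n+1}=x_0)-(1-\alpha)\big|-(n+1)^{-1}\leq 2\varepsilon(\delta)+\delta+\gamma^{1/2}.
\]

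The main step is to choose $\delta$ so that the additive $\delta$ is absorbed into the stochastic term. I would take $\delta=(n+\omega m)^{-1/2}h^{-d/2}$, assuming this is at most $1$; if it exceeds $1$, the right-hand side of \eqref{eq:test_conditional_error} is at least a constant and the bound holds trivially once $\widetilde C_0\geq 1$. With this choice,
\[
\log(n/\delta)=\log n+\tfrac12\log\{(n+\omega m)h^d\}.
\]
Using $\sqrt{a+b}\leq\sqrt a+\sqrt b$, the stochastic term becomes at most $(n+\omega m)^{-1/2}h^{-d/2}\{\log^{1/2}(n)+\log^{1/2}((n+\omega m)h^d)\}$. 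The leftover $\delta$ equals $(n+\omega m)^{-1/2}h^{-d/2}$, which is dominated by that same expression times $\log^{1/2}(n)$ for $n\geq 3$; the remaining small $n$ only change constants.

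Finally, collect constants. The $\gamma^{1/2}$ terms, namely the $2C_0\gamma^{1/2}$ from $\varepsilon$ and the extra $\gamma^{1/2}$ from the failure probability, merge into a single $\gamma^{1/2}$ term. Setting $\widetilde C_0=\max(2C_0+1,\,2C_0+2,\,1)$, or a similar constant, yields \eqref{eq:test_conditional_error}.

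The only delicate points are the admissibility of $\delta$ when $(n+\omega m)h^d<1$, where the logarithm is negative, and the edge case $n<3$. In both situations the right-hand side of \eqref{eq:test_conditional_error} is large or the claim is trivial, so enlarging $\widetilde C_0$ handles them.
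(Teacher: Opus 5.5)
Your proposal is correct and is essentially the paper's proof: combine Lemma~\ref{lemma:conditional_coverage_bound} and Lemma~\ref{lemma:delta_concentration2} with failure probability $\delta+\gamma^{1/2}$, absorb the ceiling into the $(n+1)^{-1}$ term, and choose $\delta$ to balance the additive $\delta$ against the $\log^{1/2}(n/\delta)$ term. The only difference is that the paper takes $\delta$ to be the minimizer of $\delta+2C_0(n+\omega m)^{-1/2}h^{-d/2}\log^{1/2}(n/\delta)$ and bounds that minimizer via its first-order condition, whereas you plug in $\delta=(n+\omega m)^{-1/2}h^{-d/2}$ directly, which gives the same rate; both arguments tacitly need $n\geq 3$ and $(n+\omega m)h^{d}\geq 1$ for the stated bound to be meaningful.
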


    Theorem~\ref{theo:conditional_coverage_error_bound} implies asymptotic test-conditional coverage of ELCP under reasonable conditions on $D_k(r,\hat{r})$ (or $\epsilon_k(\gamma,r)$), $h$ and $n+\omega m$.
    Setting $\omega=0$, the test-conditional miscoverage error for the LCP set satisfies the following bound:
    \begin{align*}
    & \left| \mathrm{pr}\left( Y_{n+1}\in\widehat{C}_{\alpha}^{\mathrm{LCP}}(X_{n+1}) \mid X_{n+1}=x_0 \right) - (1-\alpha)\right| \\
    \leq & (n+1)^{-1}+\widetilde{C}_0\left[ K_0^{-1}(h^d)h+(nh^d)^{-1/2}\left\{ \log^{1/2}(n)+\log^{1/2}(nh^{d}) \right\} \right]\,.
    \end{align*}
    In the ideal scenario where $r(x,s)$ is perfectly estimated such that $\epsilon_k(\gamma,r)=0$, ELCP significantly improves the test-conditional miscoverage error bound of LCP. Specifically, the term $(nh^d)^{-1/2}$ for LCP is improved to $\{(n+\omega m)h^d\}^{-1/2}$ for ELCP, reflecting an effective sample size increase to $(n+\omega m)h^d$ in estimating $F_{S\mid X}(S_i\mid X_i)$ due to the use of $\mathcal{Z}_m'$.  
    This also highlights the advantage of using a smaller $h$ for ELCP compared to LCP. Specifically, a smaller $h$ balances $K_0^{-1}(h^{-d})h$ with the improved error bound $\{(n+\omega m)h^d\}^{-1/2}$. 

    In practical situations where the estimation error of $\hat{r}(\cdot, \cdot)$ comes into play, achieving a sufficiently small order for $\epsilon_k(\gamma, r)$ or $D_k(r, \hat{r})$ can lead to an improved error bound for ELCP compared to LCP. 
    For LCP with $\omega=0$, the optimal error rate is $O(n^{-1/(d+2)})$ with optimal bandwidth $h\sim n^{-1/(d+2)}$, ignoring logarithmic factors.
    For ELCP with $\omega>0$ and $m$ at least of order $n$ as is common in practice: (1) when $r(\cdot,\cdot)$ is estimated with sufficient accuracy such that $h^{-d/k}\epsilon_k(\gamma;r)=o\big(\{(n+\omega m)h^d\}^{-1/2}\big)$, the optimal error bound of ELCP is $O((n+\omega m)^{-1/(d+2})$, outperforms LCP even for moderate $m$ thanks to the larger effective sample size. (2) when $h^{-d/k}\epsilon_k(\gamma;r)$ is not at a smaller order of $\{(n+\omega m)h^d\}^{-1/2}$, the optimal error rate is
    $O(\{\epsilon_k(\gamma;r)\}^{k/(d+k)})$ with $h\sim \{\epsilon_k(\gamma;r)\}^{k/(d+k)}$.
    If $\epsilon_k(\gamma;r)=O(n^{-c_{k}})$ for some $0<c_k<1/2$, the error bound of ELCP becomes $O(n^{-c_{k}k/(d+k)})$. 
    Hence, ELCP outperforms LCP whenever $c_{k}>(d+k)/\{k(d+2)\}$. 
    In particular, when $k=\infty$, the condition reduces to $c_{\infty}>1/(d+2)$, which guarantees a strictly smaller ELCP error bound.

    \begin{remark}
        As discussed in Remark~\ref{remark:epsilon}, the KDM estimator \citep{filipovic2025kernel} satisfies $c_{k}=1/6$ for all $k\geq 2$. Consequently, for sufficiently large $k$ and $d > 4$, the condition $c_{k}>(d+k)/\{k(d+2)\}$ is satisfied.
        For the KuLSIF approach \citep{gizewski2022regularization} with $c_{\infty}=(2u-1)/(4u+2)$, $c_{\infty}>1/(d+2)$ holds as long as $d>4$ and $u>(d+4)/(2d-8)$. 
        Furthermore, under the optimal rate $O_p(n^{-1/2})$ \citep{matsushita2023estimating}, $c_{\infty}=1/2>1/(d+2)$ always holds, implying that ELCP achieves a strictly smaller error bound than LCP.
    \end{remark}
    
    \begin{remark}
        When estimating $r(\cdot, \cdot)$ is particularly challenging, it is advisable to choose a relatively small value of $\omega$ to decrease the level of incorporating auxiliary information. Although this conservative choice may reduce the potential benefits of auxiliary data, it helps ensure that ELCP does not degrade performance. For practical guidance on selecting the parameters $\omega$ and $h$, please refer to Section~\ref{sec:implement}.
    \end{remark}

    \begin{remark}
        It is worth highlighting that our theoretical results also hold when $\omega > 1$.
        The condition $\omega \in [0,1]$ is imposed primarily for empirical and interpretational reasons. Since $\omega$ controls the relative contribution of auxiliary data to the target data, and auxiliary datasets are typically of comparable or lower relevance to the target population, this constraint reflects a realistic choice and prevents over-weighting auxiliary information.
    \end{remark}

    Now we consider weak test-conditional coverage, which extends the test-condition coverage from a specific test point to a broader subset of the feature space.

    \begin{theorem}[Weak test-conditional miscoverage error bound]\label{theo:weak_test}
    Consider a fixed set $\mathcal{B}\subset \mathcal{X}$ with $\mathrm{pr}(X_{n+1} \in \mathcal{B})=p_0$.
    Under the conditions of Lemma~\ref{lemma:delta_concentration2}, there exists a positive constant $\widetilde{C}_0$ such that
        \begin{align}
            \notag&\left|\mathrm{pr}\left( Y_{n+1}\in\widehat{C}_{\alpha}^{\mathrm{ELCP}}(X_{n+1})\mid X_{n+1}\in \mathcal{B} \right) - (1-\alpha)\right|\\
            \notag\leq& (n+1)^{-1}+\widetilde{C}_0p_0^{-1}(1-p_0)\left\{ 1-(1-p_0)^n \right\}\\
            \notag &\cdot\Big[\dfrac{\omega m\epsilon_k(\gamma,r)}{(n+\omega m)h^{d/k}}+ K_0^{-1}(h^{d})h + \dfrac{1}{(n+\omega m)^{1/2}h^{d/2}}\left\{\log^{1/2}(n)+\log^{1/2}((n+\omega m)h^{d})\right\}+\gamma^{1/2}\Big].%\label{eq:weak_test_bound}
        \end{align}
    \end{theorem}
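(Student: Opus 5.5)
The plan is to reduce the statement to the finite-sample marginal guarantee of Theorem~\ref{theo:marginal_coverage} and the pointwise bound of Theorem~\ref{theo:conditional_coverage_error_bound}, via the law of total probability. Write $A=\{Y_{n+1}\in\widehat{C}_{\alpha}^{\mathrm{ELCP}}(X_{n+1})\}$ and $p_B=\mathrm{pr}(A\mid X_{n+1}\in\mathcal{B})$, $p_{B^c}=\mathrm{pr}(A\mid X_{n+1}\notin\mathcal{B})$. Then $\mathrm{pr}(A)=p_0p_B+(1-p_0)p_{B^c}$, so
\begin{equation*}
p_B-(1-\alpha)=p_0^{-1}\left[\{\mathrm{pr}(A)-(1-\alpha)\}-(1-p_0)\{p_{B^c}-(1-\alpha)\}\right].
\end{equation*}
By Theorem~\ref{theo:marginal_coverage}, the first bracket lies in $[0,(n+1)^{-1})$. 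This produces the structural factor $p_0^{-1}(1-p_0)$ multiplying the conditional error on the complement.

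Next I would bound $|p_{B^c}-(1-\alpha)|$ by integrating the pointwise bound over $x_0\in\mathcal{B}^c$:
\begin{equation*}
p_{B^c}=\int_{\mathcal{B}^c}\mathrm{pr}(A\mid X_{n+1}=x_0)f_X(x_0)\,dx_0/(1-p_0).
\end{equation*}
The constant $\widetilde{C}_0$ in Theorem~\ref{theo:conditional_coverage_error_bound} comes from Lemma~\ref{lemma:delta_concentration2}, whose constants depend only on the quantities in Assumptions~\ref{ass2}--\ref{ass3}. So the bracketed rate holds uniformly in $x_0$ and survives the averaging.

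To obtain the sharper factor $\{1-(1-p_0)^n\}$, I would refine the bound on the complement-conditional coverage. Let $E_0$ be the event that no calibration point $X_1,\ldots,X_n$ falls in $\mathcal{B}$, which has probability $(1-p_0)^n$. I would show that the conditional-coverage deviation restricted to $E_0$ contributes only a term absorbable into the $(n+1)^{-1}$ term, or cancels against the marginal identity. The intuition is that on $E_0$, given $X_{n+1}\notin\mathcal{B}$, all $n+1$ points lie in $\mathcal{B}^c$ and are exchangeable conditionally on that event. Hence the quantile argument of Theorem~\ref{theo:marginal_coverage} applies conditionally, giving coverage in $[1-\alpha,1-\alpha+(n+1)^{-1})$ without any estimation error. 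On $E_0^c$, I would apply Lemma~\ref{lemma:conditional_coverage_bound} together with Lemma~\ref{lemma:delta_concentration2}; the $\delta$ is chosen as in Theorem~\ref{theo:conditional_coverage_error_bound}, giving the $\log^{1/2}((n+\omega m)h^d)$ term. Weighting the two pieces by their probabilities yields the factor $1-(1-p_0)^n$ in front of the rate.

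The main obstacle is this last step. I must justify that on $E_0$ the conditional coverage is exactly controlled by exchangeability, since conditioning on the locations of all points breaks the plain i.i.d. structure but preserves exchangeability of the $(n+1)$ points in $\mathcal{B}^c$. I also need the density-ratio estimator's permutation invariance (Assumption~\ref{assump:permutable}) to remain valid under this conditioning. I would check this first. If it fails, the fallback is to keep the cruder factor $p_0^{-1}(1-p_0)$, which already gives a valid (slightly weaker) bound.
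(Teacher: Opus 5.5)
There is a genuine but repairable gap in how you track the $(n+1)^{-1}$ slack. Your identity multiplies the marginal error by $p_0^{-1}$, so Theorem~\ref{theo:marginal_coverage} leaves a free-standing term $p_0^{-1}(n+1)^{-1}$, not $(n+1)^{-1}$. The refinement does not remove it. On $E_0\cap\{X_{n+1}\notin\mathcal{B}\}$ you center at $1-\alpha$ and only know that the coverage error lies in $[0,(n+1)^{-1})$. After multiplication by $p_0^{-1}(1-p_0)(1-p_0)^n$ this is not ``absorbable into the $(n+1)^{-1}$ term'' either. So both your refined bound and your fallback contain a term of order $p_0^{-1}(n+1)^{-1}$, which is unbounded as $p_0\to 0$ with $n$ fixed. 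Every coefficient in the statement stays bounded, since $p_0^{-1}(1-p_0)\{1-(1-p_0)^n\}\le n$.

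The step you single out as the main obstacle is actually harmless. The event $\{X_1,\ldots,X_{n+1}\in\mathcal{B}^c\}$ is permutation invariant, so conditioning on it preserves exchangeability. Assumption~\ref{assump:permutable} is a deterministic property of $\hat r$, so conditioning cannot affect it.

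Your parenthetical ``or cancels against the marginal identity'' is the right instinct, but it needs two changes: center at $c=\lceil (n+1)(1-\alpha)\rceil/(n+1)$, and use exactness rather than one-sided intervals. Assume no ties among the $\hat\beta_{\omega,\hat r}(X_i,S_i)$; this is already implicit in the strict upper bound of Theorem~\ref{theo:marginal_coverage} and is used in the paper's proof. Then exactly $\lceil (n+1)(1-\alpha)\rceil$ of them fall below the quantile, so $\mathrm{pr}(A)=c$ and $\mathrm{pr}(A\mid E_0,X_{n+1}\notin\mathcal{B})=c$. Set $q=\mathrm{pr}(A\mid E_0^c,X_{n+1}\notin\mathcal{B})$ and use $\mathrm{pr}(E_0\mid X_{n+1}\notin\mathcal{B})=(1-p_0)^n$. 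Your two decompositions then collapse to the exact identity
\[
p_B-c=-\,p_0^{-1}(1-p_0)\{1-(1-p_0)^n\}\,(q-c),
\]
so the only term outside that factor is $|c-(1-\alpha)|\le (n+1)^{-1}$. Next, run the argument of Lemma~\ref{lemma:conditional_coverage_bound} conditionally on all covariates; the $U_i=F_{S\mid X}(S_i\mid X_i)$ remain i.i.d.\ uniform given the $X$'s. This gives $|q-c|\le (n+1)^{-1}+2\varepsilon+\delta$, and $\delta$ is then optimized as in Theorem~\ref{theo:conditional_coverage_error_bound}.

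Two loose ends remain, and the paper shares both:
\begin{itemize}
\item The leftover $p_0^{-1}(1-p_0)\{1-(1-p_0)^n\}(n+1)^{-1}$ must be absorbed into $\widetilde{C}_0[\cdots]$. The paper's binomial computation produces the same term; its final upper-bound display loses it through a slip.
\item $\delta$ must bound the failure probability of the event in Lemma~\ref{lemma:delta_concentration2} conditionally. The paper also glosses this by plugging in the unconditional $\delta$.
\end{itemize}

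Once repaired, your route genuinely differs from the paper's. The paper conditions on the unordered sample and on $n_{\mathcal{B}}=k+1$. It writes coverage on $\mathcal{B}$ as $\{\lceil (n+1)(1-\alpha)\rceil-\#\{\text{covered calibration points outside }\mathcal{B}\}\}/(k+1)$, bounds each outside point by Lemma~\ref{lemma:conditional_coverage_bound}, and sums the binomial weights $\sum_k\binom{n}{k+1}p_0^k(1-p_0)^{n-k}=p_0^{-1}(1-p_0)\{1-(1-p_0)^n\}$. You use the same conservation law only in aggregate and get the factor $1-(1-p_0)^n$ from a single split on $E_0$. That is shorter, avoids the multiset conditioning, and ends at the same bound.
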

    
    This result degenerates to marginal coverage when $\mathcal{B}$ equals the entire sample space $\mathcal{X}$.
    Furthermore, we observe that the weak test-conditional miscoverage error depends closely on the probability $p_0$ of $\mathcal{B}$.
    As $p_0$ increases such that $1-p_0$ approaches $0$, the order of the weak test-conditional miscoverage error decreases.
    Finally, ELCP achieves asymptotic weak test-conditional coverage when the estimation error $D_k(r,\hat{r})$ or $\epsilon_k(\gamma,r)$ is sufficiently small, and $n+\omega m$ is sufficiently large.

\section{Implementation details of ELCP}\label{sec:implement}

\subsection{Parameter selection}\label{sec:paramter_select}

    The parameter $\omega$ controls the level of incorporating auxiliary information. A larger $\omega$ increases reliance on the auxiliary data and reflects greater trust in the information provided by the auxiliary data. 
    The bandwidth $h$ plays a crucial role in determining the localization of the ELCP set. A smaller $h$ improves localization, allowing predictions to better adapt to distributional heterogeneity.
    Therefore, in practical applications, a criterion is needed to guide the selection of appropriate values for $h$ and $\omega$.

    Let $\hat{\beta}_{\omega,h}^y(x,s)$ and $\widehat{C}_{\alpha,\omega,h}^{\mathrm{ELCP}}(\cdot)$ be the counterparts of $\hat{\beta}_{\omega,\hat{r}}^y(x,s)$ and $\widehat{C}_\alpha^{\mathrm{ELCP}}(\cdot)$, respectively, to highlight their dependence on $h$ and $\omega$. Let $\mathcal{L}\left(\omega,h;\mathcal{Z}_n\cup\{(X_{n+1},y)\},\mathcal{Z}_m^\prime\right)$ be a general loss function which will be detailed in Section~\ref{sec:construct loss fun}. 
    For each $y\in\mathcal{Y}$, we find $\omega$ and $h$ that minimizes this loss function, i.e.,
    \begin{gather*}
        (\hat{\omega}(y),\hat{h}(y)) = \underset{(\omega,h)\in\mathcal{G}}{\arg\min}~\mathcal{L}\left(\omega,h;\mathcal{Z}_n\cup\{(X_{n+1},y)\},\mathcal{Z}_m^\prime\right)\,,
    \end{gather*}
    where $\mathcal{G}\subset [0,1]\times(0,\infty)$ is a candidate set.
    Consequently, the ELCP set with parameter selection is defined as
    \begin{align}
        \notag & \widehat{C}_\alpha^{\mathrm{ELCP-PS}}(X_{n+1}) = \left\{y:\hat{\beta}_{\hat{\omega}(y),\hat{h}(y)}^y(X_{n+1},S_{n+1}^y)\leq Q\left(1-\alpha;\dfrac{1}{n+1}\sum_{i=1}^{n+1}\delta_{\hat{\beta}_{\hat{\omega}(y),\hat{h}(y)}^y(X_i,S_i^y)}\right)\right\}\,.
    \end{align}
    Algorithm~\ref{twoagentsProcedure-parasel} in Section~\ref{sec:SM_detailed L2} of the supplementary material provides an end-to-end implementation of ELCP with parameter selection.
    
    The loss function  $\mathcal{L}\left(\omega,h;\mathcal{Z}_n\cup\{(X_{n+1},y)\},\mathcal{Z}_m^\prime\right)$ is implicitly required to be invariant under permutations of both $\mathcal{Z}_n\cup\{(X_{n+1},y)\}$ and $\mathcal{Z}_m^\prime$. This invariance ensures that $\hat{\omega}(y)$ and $\hat{h}(y)$ are likewise permutation-invariant with respect to these samples, thereby guaranteeing marginal coverage of the ELCP set with parameter selection \citep{liang2024conformal}, as established in the next theorem. The condition holds for many commonly used loss functions, including those introduced in the next section, provided that the test point $(X_{n+1},y)$ is included in the loss.
    
    \begin{theorem}\label{theo:modsel elcp}
        Suppose $\mathcal{Z}_n\cup\{Z_{n+1}\}$ are exchangeable and Assumption~\ref{assump:permutable} holds. 
        %{\color{blue} In addition, assume that $\mathcal{L}\left(\omega,h;\mathcal{Z}_n\cup\{(X_{n+1},y)\},\mathcal{Z}_m^\prime\right)$ is invariant to permutations within $\mathcal{Z}_n\cup\{(X_{n+1},y)\}$ and $\mathcal{Z}_m^\prime$.}
        Then for any given $\alpha\in(0,1)$,
        $1-\alpha\leq \mathrm{pr}\left( Y_{n+1}\in \widehat{C}_{\alpha}^{\mathrm{ELCP-PS}}(X_{n+1}) \right)<1-\alpha+(n+1)^{-1}$.
    \end{theorem}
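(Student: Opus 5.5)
The plan is to reduce Theorem~\ref{theo:modsel elcp} to the standard exchangeability argument behind Theorem~\ref{theo:marginal_coverage}. Evaluate everything at the true response $y=Y_{n+1}$. Write $\hat\omega=\hat\omega(Y_{n+1})$, $\hat h=\hat h(Y_{n+1})$, and set $B_i=\hat{\beta}_{\hat\omega,\hat h}^{Y_{n+1}}(X_i,S_i)$ for $i\in[n+1]$. By construction, $Y_{n+1}\in\widehat{C}_\alpha^{\mathrm{ELCP-PS}}(X_{n+1})$ if and only if $B_{n+1}\le Q\bigl(1-\alpha;(n+1)^{-1}\sum_{i=1}^{n+1}\delta_{B_i}\bigr)$. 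It therefore suffices to show that $(B_1,\dots,B_{n+1})$ is exchangeable, conditional on $\mathcal{D}_{\rm tr},\mathcal{D}_{\rm tr}^\prime$ and $\mathcal{Z}_m^\prime$. The usual quantile lemma then gives both the lower bound $1-\alpha$ and the upper bound $<1-\alpha+(n+1)^{-1}$.

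For exchangeability, I would represent $B_i=\Phi(Z_i;\mathcal{Z}_{n+1},\mathcal{Z}_m^\prime)$, where $\mathcal{Z}_{n+1}=\mathcal{Z}_n\cup\{Z_{n+1}\}$ is treated as a multiset. The map $\Phi$ is a single deterministic function once the pretrained models are fixed, and it depends on the data only through two things. The first is the tuning pair $(\hat\omega,\hat h)$. It minimizes $\mathcal{L}(\omega,h;\mathcal{Z}_{n+1},\mathcal{Z}_m^\prime)$ over the fixed grid $\mathcal{G}$, and $\mathcal{L}$ is permutation-invariant, so $(\hat\omega,\hat h)$ is a symmetric function of $\mathcal{Z}_{n+1}$. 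Ties in the arg min should be broken by a fixed rule over $\mathcal{G}$, or by randomization independent of the data, so that the selection stays symmetric. The second is the density ratio estimate $\hat r$, which is symmetric by Assumption~\ref{assump:permutable}. One should also check that $\hat r$ is not itself tuned by $h$ in a way that breaks symmetry; if it is, it is still a symmetric function of the data for each grid value, which is enough.

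Given these two facts, the numerator and denominator of \eqref{eq:beta} at $(x,s)=(X_i,S_i)$ are sums over the whole multiset $\mathcal{Z}_{n+1}$ plus sums over $\mathcal{Z}_m^\prime$. Neither depends on the ordering of the calibration points. Hence for any permutation $\pi$ of $[n+1]$, $(B_{\pi(1)},\dots,B_{\pi(n+1)})=\Psi(Z_{\pi(1)},\dots,Z_{\pi(n+1)})$ for the same map $\Psi$. Exchangeability of $\mathcal{Z}_{n+1}$, together with independence of $\mathcal{Z}_m^\prime$ and the training sets, then gives $(B_{\pi(i)})_i\overset{d}{=}(B_i)_i$.

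The final step is the quantile argument, in the same form as in Theorem~\ref{theo:marginal_coverage}. Let $k=\lceil (n+1)(1-\alpha)\rceil$. Then $B_{n+1}\le Q(\cdot)$ if and only if $B_{n+1}$ is among the $k$ smallest of $B_1,\dots,B_{n+1}$, with ties handled appropriately. By exchangeability this event has probability at least $k/(n+1)\ge 1-\alpha$. Ties require care: the $B_i$ take values in a finite lattice-like set, so ties can occur with positive probability. For the upper bound I would either assume ties occur with probability zero, which is what the paper's bound in Theorem~\ref{theo:marginal_coverage} implicitly uses, or add a tie-breaking uniform. I expect the only real obstacle is stating the symmetry of $(\hat\omega,\hat h)$ carefully, namely the arg-min tie-breaking and the fact that $y$ is replaced by the true $Y_{n+1}$ inside $\mathcal{L}$. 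The rest transfers verbatim from the proof of Theorem~\ref{theo:marginal_coverage}.
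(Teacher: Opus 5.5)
Your proposal is correct and follows essentially the same route as the paper. The paper likewise uses the permutation invariance of $(\hat{\omega}(Y_{n+1}),\hat{h}(Y_{n+1}))$ and of $\hat{r}$ to get $\mathrm{pr}(B_{n+1}\le Q)=\mathrm{pr}(B_j\le Q)$ for every $j\in[n+1]$, then averages over $j$ and invokes the definition of the empirical quantile. Your explicit tie-breaking rule for the arg min, and your observation that the strict upper bound $<1-\alpha+(n+1)^{-1}$ requires the $B_i$ to be almost surely distinct (or a randomized tie-break), fill in points the paper's proof leaves implicit.
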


\subsection{Construction of loss function $\mathcal{L}$}\label{sec:construct loss fun}

    A natural and widely used choice of the loss function is the average prediction set length, which directly reflects prediction efficiency \citep{liang2024conformal}. This defines the following loss function:
    \begin{gather*}
        \mathcal{L}_1\left(\omega,h;\mathcal{Z}_n\cup\{(X_{n+1},y)\},\mathcal{Z}_m^\prime\right)= (n+1)^{-1}\sum_{i=1}^{n+1}\left|\widehat{C}_{\alpha,\omega,h}^{\mathrm{ELCP}}(X_i)\right|\,.
    \end{gather*}
    However, the average length does not capture the test-conditional coverage properties of the prediction sets. To address this, we introduce a loss function that targets the test-conditional coverage. As established in Lemma~\ref{lemma:conditional_coverage_bound}, a smaller value of $\sup_{1\leq i\leq n+1}\Delta_i(\omega)$ corresponds to more precise test-conditional coverage, where $\sup_{1\leq i\leq n+1}\Delta_i(\omega)$ quantifies the accuracy of $\hat{\beta}_{\omega,h}(X_i,S_i)$ as an estimator of $F_{S\mid X}(S_i\mid X_i)$, which is $\mathrm{Uniform}[0,1]$ theoretically, across all $X_i,~i\in[n+1]$. Consequently, we design a novel loss function to explicitly measure this discrepancy between the estimated conditional distribution and the standard uniform distribution.

    We adopt the integrated conditional maximum mean discrepancy (ICMMD) proposed by \citet{yan2022distance} to construct the loss function, which quantifies the discrepancy between the conditional distribution of $\hat{\beta}_{\omega,h}(X_i,S_i^y)$ given $X_i$ and $\mathrm{Uniform}[0,1]$, as:
    \begin{align*}
        & \mathcal{L}_{2}\left(\omega,h;\mathcal{Z}_n\cup\{(X_{n+1},y)\},\mathcal{Z}_m^\prime\right) \\
        = &
        \dfrac{1}{n(n+1)}\sum_{1\leq i\neq j\leq n+1}\Big\{K_1(\hat{\beta}_{\omega,h}(X_i,S_i^y),\hat{\beta}_{\omega,h}(X_j,S_j^y)) -2\phi(\hat{\beta}_{\omega,h}(X_i,S_i^y))\Big\}K_2(X_i,X_j)\,,
    \end{align*}
    where $K_1(\cdot,\cdot)$ and $K_2(\cdot,\cdot)$ are kernel functions supported on $[0,1]$ and $\mathcal{X}$, respectively. In addition, $\phi(u)=E\left\{ K_1(u,U_1) \right\}$ with $U_1\sim\mathrm{Uniform}[0,1]$. See Section~\ref{sec:SM_detailed L2} of the supplementary material for detailed formulation.
    
    In practice, we optimize the loss function over $\omega$ and $h$ through a grid search. Let $(\omega^*,h^*) = \underset{(\omega,h)\in \mathcal{G}}{\arg\min}~\mathcal{R}_{\omega,h}^{(n,m)}$ be the optimal parameters in $\mathcal{G}$, where 
    \begin{align*}
        \mathcal{R}_{\omega,h}^{(n,m)}=E\left[ \left\{ K_1(\hat{\beta}_{\omega,h}(X_1,S_1),\hat{\beta}_{\omega,h}(X_2,S_2))-2\phi(\hat{\beta}_{\omega,h}(X_1,S_1)) \right\}K_2(X_1,X_2) \right]
    \end{align*}
    and $\mathcal{G}=\{(w,h): w\in\{w_\ell\}_{\ell\in[L]}, h\in\{h_\ell\}_{\ell\in[L]}\}$ is the candidate search set with a positive integer $L$.
    Similarly, let $(\hat{\omega},\hat{h})=\underset{(\omega,h)\in \mathcal{G}}{\arg\min}~\mathcal{L}_{2}\left(\omega,h;\mathcal{Z}_n\cup\{Z_{n+1}\},\mathcal{Z}_m^\prime\right)$. 
    Given the concentration properties of $\mathcal{L}_2$ established by Lemma~\ref{lemma: para_select_conv} in Section~\ref{proof of theo:parameter_selection_consistent} of the supplementary material, the optimal parameters can be consistently identified under suitable conditions.
    \begin{theorem}\label{theo:parameter_selection_consistent}
        Suppose Assumptions~\ref{assump:permutable}--\ref{ass3} hold. Assume that the kernel functions $K_1(\cdot,\cdot)$ and $K_2(\cdot,\cdot)$ are uniformly bounded by $D_{K,0}>0$ and that their partial derivatives are bounded in absolute value by $D_{K,1}>0$. Assume that $\inf_{(\omega,h)\in\mathcal{G}\setminus\{(\omega^*,h^*)\}}\mathcal{R}_{\omega,h}^{(n,m)}-\mathcal{R}_{\omega^*,h^*}^{(n,m)}>\zeta n^{-1/2}\mathrm{log}^{1/2}n$ for some positive constant $\zeta$. If $(n+\omega m)h_{\rm min}^d\log^{-1}(Ln)\rightarrow\infty$ with $h_{\rm min}=\inf_{\ell\in[L]}h_\ell$ and $L=o(n^{\overline{C}_1\zeta/2})$, we have $\mathrm{pr}((\hat{\omega},\hat{h})=(\omega^*,h^*))\to 1$ as $n\to\infty$.
    \end{theorem}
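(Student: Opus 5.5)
The plan is the standard argmin-consistency argument for a finite grid. Write $\widehat{\mathcal{L}}_{\omega,h}=\mathcal{L}_2(\omega,h;\mathcal{Z}_n\cup\{Z_{n+1}\},\mathcal{Z}_m^\prime)$. The key intermediate goal is a uniform concentration bound
\begin{equation*}
\max_{(\omega,h)\in\mathcal{G}}\bigl|\widehat{\mathcal{L}}_{\omega,h}-\mathcal{R}^{(n,m)}_{\omega,h}\bigr|\leq \tfrac{\zeta}{2}n^{-1/2}\log^{1/2}n
\end{equation*}
holding with probability tending to one. Given this, for every $(\omega,h)\neq(\omega^*,h^*)$ we get
\begin{equation*}
\widehat{\mathcal{L}}_{\omega,h}-\widehat{\mathcal{L}}_{\omega^*,h^*}\geq \mathcal{R}^{(n,m)}_{\omega,h}-\mathcal{R}^{(n,m)}_{\omega^*,h^*}-\zeta n^{-1/2}\log^{1/2}n>0
\end{equation*}
by the separation assumption. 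Hence $(\hat\omega,\hat h)=(\omega^*,h^*)$ on that event.

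The concentration for a fixed $(\omega,h)$ is presumably the content of Lemma~\ref{lemma: para_select_conv}, and I would prove it as follows.
\begin{enumerate}
\item Replace each $\hat\beta_{\omega,h}(X_i,S_i)$ by its population-level target. This target is the leave-structure-free quantity obtained by conditioning on $(X_i,S_i)$; the replacement mirrors the proof of Lemma~\ref{lemma:delta_concentration2}.
\item By the Lipschitz bound $D_{K,1}$ on $K_1$ and $\phi$, and boundedness $D_{K,0}$ of $K_2$, the error this replacement induces in $\widehat{\mathcal{L}}$ is at most a constant times $\sup_i|\hat\beta_{\omega,h}(X_i,S_i)-\bar\beta_{\omega,h}(X_i,S_i)|$. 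Here $\bar\beta$ is the kernel-smoothed deterministic counterpart.
\item The Bernstein/Hoeffding argument of Lemma~\ref{lemma:delta_concentration2} makes this supremum $O\big(\{(n+\omega m)h^d\}^{-1/2}\log^{1/2}(Ln/\delta)\big)$ with probability $1-\delta/L$. Assumption~\ref{ass3} (stability of $\hat r$ under deletion of one auxiliary point) handles the dependence of $\hat r$ on the data.
\item The remaining object is a U-statistic of order two with bounded kernel, $|\cdot|\leq 3D_{K,0}^2$. Hoeffding's inequality for U-statistics gives deviation $C\,n^{-1/2}\log^{1/2}(L/\delta)$ from its mean $\mathcal{R}^{(n,m)}_{\omega,h}$.
\end{enumerate}

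A union bound over the $|\mathcal{G}|\le L^2$ grid points then gives the uniform version. Two conditions make this work. The condition $(n+\omega m)h_{\min}^d\log^{-1}(Ln)\to\infty$ makes the plug-in error $o(n^{-1/2}\log^{1/2}n)$ relative to the U-statistic term; note $n+\omega m\geq n$. The U-statistic tail is $\exp(-\overline{C}_1 t^2 n)$ at $t=\tfrac{\zeta}{2}n^{-1/2}\log^{1/2}n$, giving $L^2 n^{-\overline{C}_1\zeta^2/4}$-type terms. The constant $\overline{C}_1$ is to be matched to the one appearing in Lemma~\ref{lemma: para_select_conv}; the growth condition $L=o(n^{\overline{C}_1\zeta/2})$ makes this vanish.

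The main obstacle is the dependence between $\hat\beta_{\omega,h}(X_i,S_i)$ and the remaining sample. This dependence means $\widehat{\mathcal{L}}$ is not literally a U-statistic, and the mean of the plug-in version is not exactly $\mathcal{R}^{(n,m)}_{\omega,h}$, since $\mathcal{R}$ is defined with the estimated $\hat\beta$ inside the expectation. I would handle it with a bounded-differences (McDiarmid) argument applied directly to $\widehat{\mathcal{L}}$ as a function of all $n+1+m$ observations, rather than through the substitution in step 1.
\begin{itemize}
\item Changing one calibration point changes each $\hat\beta$ by $O(\{(n+\omega m)h^d\}^{-1})$, except its own. The kernel-weight denominators are bounded below with high probability by the density lower bound in Assumption~\ref{ass2}.
\item Changing one auxiliary point alters $\hat r$ by $C_r m^{-1}$ by Assumption~\ref{ass3}, and hence $\hat\beta$ by $O(m^{-1})$.
\item The direct U-statistic effect of changing one point is $O(n^{-1})$.
\end{itemize}
Summing squared differences gives variance proxy $O(n^{-1})$ under the stated bandwidth condition, which yields the needed sub-Gaussian tail around $E\widehat{\mathcal{L}}$. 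Finally, $E\widehat{\mathcal{L}}=\mathcal{R}^{(n,m)}_{\omega,h}$ by exchangeability of the calibration points (Assumption~\ref{assump:permutable}).
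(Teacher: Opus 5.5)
Your skeleton matches the paper's. You prove a per-grid-point concentration bound for $\mathcal{L}_2$ around $\mathcal{R}^{(n,m)}_{\omega,h}$ (the paper's Lemma~\ref{lemma: para_select_conv}), use $E\mathcal{L}_2=\mathcal{R}^{(n,m)}_{\omega,h}$ by exchangeability, and finish with a union bound over $\mathcal{G}$. Your $\tfrac{\zeta}{2}$ split is the right one; the paper pairs a $\zeta$-deviation event with a $\zeta$-gap and so loses a factor $2$. Treating the $m$ auxiliary points as coordinates is also more careful than the paper, whose Doob martingale runs only over $Z_1,\ldots,Z_{n+1}$.

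The gap is in the concentration step. McDiarmid's inequality needs bounded-difference constants that hold for every configuration. The constants you list are justified only on a high-probability event: denominators of $\hat\beta_{\omega,h}$ of order $(n+\omega m)h^d$, and kernel sums of order $nh^d$ and $mh^d$. Deterministically they are weaker.
\begin{itemize}
\item For an auxiliary replacement, your own $O(m^{-1})$ shift in each $\hat\beta_{\omega,h}(X_i,S_i)$ through $\hat r$ already contributes $m\cdot O(m^{-2})=O(m^{-1})$ to the variance proxy. That is not $O(n^{-1})$ unless $m\gtrsim n$.
\item The sharper per-point bound $O(\omega/(n+\omega m))$ needs the auxiliary share of each denominator to be of order $\omega m/(n+\omega m)$. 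That holds only with high probability.
\end{itemize}
So you need a bounded-differences inequality that tolerates an exceptional set. The paper uses a Doob martingale whose increments are bounded by the replace-one estimates on $A_0$ plus a term $E\{\mathbbm{1}(A_0^c)\mid\mathcal{F}_{\ell-1}\}$. That term is controlled uniformly in $\ell$ by Doob's maximal inequality, and Freedman's inequality is applied afterwards. The exceptional event has probability of order $n\exp\{-\overline{C}_0(n+\omega m)h^d\}$ per bandwidth. After a union over the $L$ bandwidths, this is exactly where $(n+\omega m)h_{\min}^d\log^{-1}(Ln)\to\infty$ is used. Your stated use of that condition is false: it does not make a plug-in error of order $\{(n+\omega m)h^d\}^{-1/2}\log^{1/2}(Ln)$ be $o(n^{-1/2}\log^{1/2}n)$, which would require roughly $(n+\omega m)h^d\gg n$.

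Even on the good event, your calibration constant is too crude. A shift $\Delta_i$ in $\hat\beta_{\omega,h}(X_i,S_i)$ moves $\widehat{\mathcal{L}}$ by $O(\Delta_i/n)$. So $\Delta_i=O(\{(n+\omega m)h^d\}^{-1})$ for all $i\neq j$ gives a per-point constant $O(\{(n+\omega m)h^d\}^{-1})$. The resulting variance proxy is of order $n\{(n+\omega m)h^d\}^{-2}$, which the bandwidth condition does not make $O(n^{-1})$.

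You need locality: $\Delta_i\lesssim K(X_i,X_j;h)/\{(n+\omega m)h^d\}$ together with $\sum_iK(X_i,X_j;h)=O(nh^d)$, so that $\sum_i\Delta_i=O(1)$. The paper gets this by integrating out the replaced point, using $E\{K(X_i,X_\ell;h)\mid X_i\}\le\overline{L}_3h^d$ (Lemma~\ref{lemma:KernelFun}). This is why it works with conditional expectations rather than worst-case differences. Alternatively, a Stone-type covering argument gives $\sum_iK(X_i,x;h)/\sum_\ell K(X_i,X_\ell;h)=O(1)$ deterministically, which repairs the calibration kernel part but not the auxiliary part.

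Replacing a calibration point also changes $\hat r$, which is fitted on $\mathcal{Z}_n\cup\{Z_{n+1}\}$. Assumption~\ref{ass3} only controls auxiliary deletions. You therefore need a calibration-side bound such as $\sup_{x,s}|\hat r^{\setminus j}-\hat r|\le C_rn^{-1}$, which the paper's proof invokes without listing it as an assumption.

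Finally, with threshold $\tfrac{\zeta}{2}n^{-1/2}\log^{1/2}n$ the union bound leaves $L^2n^{-\overline{C}_1\zeta^2/4}$. This vanishes when $L=o(n^{\overline{C}_1\zeta^2/8})$, but not in general under the stated $L=o(n^{\overline{C}_1\zeta/2})$. The stated condition does not match the paper's own tail $L^2n^{-\overline{C}_1\zeta^2}$ either, so do not claim it closes the argument as written.
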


\subsection{Computationally efficient implementations of ELCP}\label{sec:compute}
    
    In this section, we present computationally efficient implementations of ELCP along with their theoretical justifications. First, in Algorithm~\ref{twoagentsProcedure}, we consider replacing $\hat{r}(x,s)$, which uses $(X_{n+1}, y)$, with $\tilde{r}(x,s)$, an estimator of $r(x,s)$ based on $\mathcal{Z}_n$ and $\mathcal{Z}^\prime_m$ without using $(X_{n+1}, y)$.
    This can avoid updating $\hat{r}(x,s)$ for each $y\in\mathcal{Y}$, thereby speeding up the computation of ELCP. 
    Algorithm~\ref{algo:computeeffimple ELCP} in Section~\ref{sec:sup computation} of the supplementary material presents the computationally efficient implementation of ELCP. 
    Let $\widetilde{C}_{\alpha}^{\rm ELCP}(X_{n+1})$ be the ELCP set with $\hat{r}(x,s)$ replaced by $\tilde{r}(x,s)$ in $\{\hat{\beta}_{\omega,\hat{r}}^{y}(X_i,S_i^{y})\}_{i\in[n+1]}$.
    \begin{theorem}\label{theo:marginal_coverage_r}
        Let $\tilde{\delta}^y=\sup_{1\leq i\leq n+1}\big|\hat{\beta}_{\omega,\hat{r}}^y(X_i,S_i^y)-\hat{\beta}_{\omega,\tilde{r}}^y(X_i,S_i^y)\big|$. \\
        \noindent(i) The difference between $\widehat{C}_{\alpha}^{\rm ELCP}(X_{n+1})$ and $\widetilde{C}_{\alpha}^{\rm ELCP}(X_{n+1})$ is a subset of
        \begin{align}
            \notag D_{\hat{r},\tilde{r}}(X_{n+1}):=\left\{ y:\hat{\beta}^y_{\omega,\tilde{r}}(X_{n+1},S_{n+1}^y)\in\big[\tilde{q}_\alpha^y-2\tilde{\delta}^y,\tilde{q}_\alpha^y+2\tilde{\delta}^y\big] \right\}\,,
        \end{align}
        where $\tilde{q}_\alpha^y=Q\left( 1-\alpha;(n+1)^{-1}\left\{ \sum_{i=1}^{n+1}\delta_{\hat{\beta}_{\omega,\tilde{r}}^y(X_i,S_i^y)}\right\} \right)$; \\
        \noindent(ii) Moreover, suppose $\mathcal{Z}_n\cup\{Z_{n+1}\}$ are exchangeable and Assumption~\ref{assump:permutable} holds, then
        \begin{align*}
            1-\alpha - \mathrm{pr}(Y_{n+1}\in D_{\hat{r},\tilde{r}}(X_{n+1}))&\leq \mathrm{pr}\left( Y_{n+1}\in \widetilde{C}_{\alpha}^{\mathrm{ELCP}}(X_{n+1}) \right)\\
            &<1-\alpha + (n+1)^{-1} + \mathrm{pr}(Y_{n+1}\in D_{\hat{r},\tilde{r}}(X_{n+1}))\,;
        \end{align*}
        \noindent(iii) Furthermore, assume that $\sup_{x,s}|\hat{r}(x,s)-\tilde{r}(x,s)|\leq C_r n^{-1}$ for some positive constant $ C_r $ and Assumption~\ref{ass2} holds, we have $\tilde{\delta}^y=O(n^{-1})$.
    \end{theorem}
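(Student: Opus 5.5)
For part (i), I would fix $y$ and write $\hat b_i=\hat{\beta}_{\omega,\hat r}^y(X_i,S_i^y)$ and $\tilde b_i=\hat{\beta}_{\omega,\tilde r}^y(X_i,S_i^y)$. By definition $|\hat b_i-\tilde b_i|\le\tilde\delta^y$ for all $i\in[n+1]$. The $(1-\alpha)$-quantile of an empirical distribution is the $\lceil(n+1)(1-\alpha)\rceil$-th order statistic, which is monotone and $1$-Lipschitz in sup norm under coordinatewise perturbations. Hence the two quantiles satisfy $|\hat q_\alpha^y-\tilde q_\alpha^y|\le\tilde\delta^y$. Suppose $y$ lies in exactly one of the two sets. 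Say $\hat b_{n+1}\le\hat q_\alpha^y$ but $\tilde b_{n+1}>\tilde q_\alpha^y$; the other case is symmetric. Then
\[
\tilde q_\alpha^y<\tilde b_{n+1}\le\hat b_{n+1}+\tilde\delta^y\le\hat q_\alpha^y+\tilde\delta^y\le\tilde q_\alpha^y+2\tilde\delta^y.
\]
In the reverse case one gets $\tilde q_\alpha^y-2\tilde\delta^y\le\tilde b_{n+1}\le\tilde q_\alpha^y$. Either way $y\in D_{\hat r,\tilde r}(X_{n+1})$.

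For part (ii), I would use the inclusions $\widehat C_\alpha^{\rm ELCP}\setminus D\subset\widetilde C_\alpha^{\rm ELCP}\subset\widehat C_\alpha^{\rm ELCP}\cup D$, which follow from (i). Evaluating at $Y_{n+1}$ gives
\[
\mathrm{pr}(Y_{n+1}\in\widehat C)-\mathrm{pr}(Y_{n+1}\in D)\le\mathrm{pr}(Y_{n+1}\in\widetilde C)\le\mathrm{pr}(Y_{n+1}\in\widehat C)+\mathrm{pr}(Y_{n+1}\in D).
\]
Inserting the two-sided bound of Theorem~\ref{theo:marginal_coverage}, which requires exchangeability and Assumption~\ref{assump:permutable}, gives the claim.

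For part (iii), which I expect to carry the real work, I would bound $|\hat b_i-\tilde b_i|$ as the difference of two ratios $A(\hat r)/B(\hat r)$ and $A(\tilde r)/B(\tilde r)$. Here
\[
A(r)=\sum_{j}K_{ij}\mathbbm 1(S_j^y\le S_i^y)+\omega\sum_j K'_{ij}r_j\mathbbm 1(S'_j\le S_i^y),\qquad B(r)=\sum_j K_{ij}+\omega\sum_j K'_{ij}r_j,
\]
with $r_j=r(X'_j,S'_j)$. Writing
\[
\frac{A(\hat r)}{B(\hat r)}-\frac{A(\tilde r)}{B(\tilde r)}=\frac{A(\hat r)-A(\tilde r)}{B(\hat r)}+\frac{A(\tilde r)}{B(\tilde r)}\cdot\frac{B(\tilde r)-B(\hat r)}{B(\hat r)},
\]
and using $A/B\le1$, the difference is at most $2\,\omega\sum_jK'_{ij}|\hat r_j-\tilde r_j|/B(\hat r)$. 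Since $|\hat r_j-\tilde r_j|\le C_rn^{-1}$ and $\hat r\ge\underline L_2$ by Assumption~\ref{ass2}(i), we have $B(\hat r)\ge\omega\underline L_2\sum_jK'_{ij}$. This yields
\[
|\hat b_i-\tilde b_i|\le \frac{2C_r}{\underline L_2\,n}.
\]
The bound is uniform in $i$ and $y$, so $\tilde\delta^y=O(n^{-1})$. If $\omega=0$ the difference is identically zero. The only subtle point is that $\tilde r$ must also be positive so that the ratio is well defined. This follows because $\tilde r\ge\hat r-C_rn^{-1}>0$ for large $n$; in fact the bound above only needed the lower bound on $\hat r$.
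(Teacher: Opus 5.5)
Your proposal is correct. Parts (i) and (ii) agree with the paper in substance. Both rest on $|\hat q_\alpha^y-\tilde q_\alpha^y|\le\tilde\delta^y$, which holds because order statistics are $1$-Lipschitz in the sup norm. Both then sandwich $\widetilde C_\alpha^{\rm ELCP}(X_{n+1})$ between $\widehat C_\alpha^{\rm ELCP}(X_{n+1})\setminus D_{\hat r,\tilde r}(X_{n+1})$ and $\widehat C_\alpha^{\rm ELCP}(X_{n+1})\cup D_{\hat r,\tilde r}(X_{n+1})$ before invoking Theorem~\ref{theo:marginal_coverage}. The paper reaches (i) through an intermediate set $\{y:\hat\beta^y_{\omega,\hat r}(X_{n+1},S^y_{n+1})\le\tilde q^y_\alpha\}$. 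Your direct case split is slightly sharper: it places the two one-sided differences in $(\tilde q_\alpha^y,\tilde q_\alpha^y+2\tilde\delta^y]$ and $(\tilde q_\alpha^y-2\tilde\delta^y,\tilde q_\alpha^y]$.

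For (iii) you use a different decomposition. The paper separates the calibration and auxiliary contributions to the numerator, bounds the two resulting ratio differences separately, and arrives at $C_r(2+\underline L_2^{-1})n^{-1}$. Along the way it divides by $\inf_{x,s}\tilde r(x,s)$, so it uses a lower bound on $\tilde r$ as well as on $\hat r$; Assumption~\ref{ass2}(i) is stated only for $r$ and $\hat r$. Your identity $A(\hat r)/B(\hat r)-A(\tilde r)/B(\tilde r)=\{A(\hat r)-A(\tilde r)\}/B(\hat r)+\{A(\tilde r)/B(\tilde r)\}\{B(\tilde r)-B(\hat r)\}/B(\hat r)$ is the standard ratio-perturbation bound. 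It gives the cleaner constant $2C_r/(\underline L_2 n)$ and needs a quantitative lower bound only on $\hat r$. Note that $\tilde r$ still enters once, through $0\le A(\tilde r)/B(\tilde r)\le 1$, which requires $\tilde r\ge 0$ and not merely that the ratio be well defined. Your observation that $\tilde r\ge\underline L_2-C_rn^{-1}>0$ once $n>C_r/\underline L_2$ supplies exactly this, which is all an $O(n^{-1})$ statement needs.
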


    \begin{remark}
        Theorem~\ref{theo:marginal_coverage_r}-(iii) implies that the interval $[\tilde{q}_\alpha^y-2\tilde{\delta}^y,\tilde{q}_\alpha^y+2\tilde{\delta}^y]$ has a length of order $O(n^{-1})$ under certain conditions. Note that $\hat{\beta}^y_{\omega,\tilde{r}}(X_{n+1},S_{n+1}^y)$ converges in distribution to $\mathrm{Uniform}[0,1]$ as indicated by Lemma~\ref{lemma:delta_concentration2}. Thus, when the density of $\hat{\beta}^y_{\omega,\tilde{r}}(X_{n+1},S_{n+1}^y)$ is bounded, we find that $\mathrm{pr}(Y_{n+1}\in D_{\hat{r},\tilde{r}}(X_{n+1}))$ is also of $O(n^{-1})$.
        Consequently, the effect on the marginal coverage of the ELCP set from excluding $(X_{n+1},y)$ in the estimation of $r(x,s)$ would be negligible.
    \end{remark}

    Next, we turn to $\widehat{C}_{\alpha}^{\mathrm{ELCP\text{-}PS}}(X_{n+1})$, the ELCP prediction set with parameter selection, which can be computationally demanding, as it requires re-calculating $\hat{r}(x,s)$ and re-selecting the parameters $\hat{\omega}(y)$ and $\hat{h}(y)$ for each candidate value of $y$, along with an exhaustive search over all candidate values of $y$ to construct the final prediction set. 
    To alleviate this burden, we adopt a similar strategy: replace $\hat{r}(x,s)$ with $\tilde{r}(x,s)$, and select $\hat{\omega}(y)$ and $\hat{h}(y)$ by excluding $(X_{n+1},y)$ in the loss function. 
    This yields a close approximation to $\widehat{C}_{\alpha}^{\mathrm{ELCP\text{-}PS}}$ with substantially reduced computational cost.
    A detailed algorithm for the computationally efficient implementation of ELCP with parameter selection is provided in Section~\ref{sec:sup computation} of the supplementary material.

\section{Numerical experiments}\label{sec:04_num}

    We conduct synthetic experiments along with a real data analysis, to illustrate the marginal coverage validity and local coverage improvement of ELCP, compared to LCP and RLCP \citep{hore2023conformal} without auxiliary data, LCP using combined data (LCP-C) in \eqref{eq:comb_LCP}, and RLCP with combined data (RLCP-C, constructed analogously to LCP-C). Additional results on synthetic data, semi-supervised settings, and another real data analysis are provided in Section~\ref{sec:SM_numerical} of the supplementary material.
    ELCP is implemented using the computationally efficient procedure described in Section~\ref{sec:compute}.
    
    In the simulation study, the regression model for calibration and test data is $Y = \mu(X) + \epsilon(X)$,
    where $\mu(x)=E(Y\mid X=x)$ and the residual $\epsilon(X)$ may depend on $X$. For auxiliary data, the model is $Y^{\prime} = \mu^\prime(X^{\prime}) + \epsilon^{\prime}(X^{\prime})$.
    We evaluate all methods under a nominal coverage level of $1-\alpha=90\%$.
    For a prediction set $\widehat{C}_{\alpha}(X_{n+1})$, the following criteria are used for comparison:
    (1) marginal coverage
    $\mathrm{pr}\big( Y_{n+1}\in\widehat{C}_{\alpha}(X_{n+1})\big)$; (2) test-conditional miscoverage error
    $E\big\{ \big|\mathrm{pr}\big( Y_{n+1}\in\widehat{C}_{\alpha}(X_{n+1})\mid X_{n+1}\big)-(1-\alpha)\big| \big\}$; and (3) average size of the prediction set.

    All simulation results are based on $100$ replications, with training set sizes $n$ for $\mathcal{D}_{\rm tr}$ and $m$ for $\mathcal{D}_{\rm tr}^{\prime}$, matching the sizes of the calibration data $\mathcal{Z}_n$ and auxiliary data $\mathcal{Z}_{m}^{\prime}$, respectively.
    Notably, infinite prediction sets can be occasionally produced, particularly for small bandwidths. 
    To ensure meaningful comparisons, we report results when no more than 5\% of the prediction sets are infinite.

\subsection{Results for synthesized data}\label{sec:simu_synthesized}

    We generate covariates $X$ and $X^\prime$ from $\mathrm{Uniform}([-2,2]^d)$ with dimension $d\in\{5,10,15,20\}$. Writing $X=(X_1,\ldots,X_d)^{\top}$ and $X^\prime=(X_1^\prime,\ldots,X_d^\prime)^{\top}$, we set the regression functions as $\mu(X)=\sum_{i=1}^{d}X_i$ and $\mu^\prime(X^\prime)=2\sum_{i=1}^{d}X_i^\prime$.
    We consider the following data-generating process (DGP) settings:
    
    \noindent\textbf{DGP1:} $\epsilon(X)\sim N\left( 0, \sum_{i=1}^{5}|\arccos(X_i/2)| \right)$, $\epsilon^\prime(X^\prime)\sim N\left( 0, 1.5\sum_{i=1}^{5}|\arccos(X_i^\prime/2)| \right)$. \\
    \noindent\textbf{DGP2:} $\epsilon(X)\sim N\left( 0, \exp\big(\sum_{i=1}^{5}X_i/2\big) \right)$, $\epsilon^\prime(X^\prime)\sim N\left( 0, 1.5\exp\big(\sum_{i=1}^{5}X_i^\prime/2\big) \right)$. \\
    \noindent\textbf{DGP3:} $\epsilon(X)\sim N\left( 0, \big|\sum_{i=1}^{5}X_i/2\big|^2 \right)$, $\epsilon^\prime(X^\prime)\sim N\left( 0, 1.5\big|\sum_{i=1}^{5}X_i^\prime/2\big|^2 \right)$.

    For all methods, the conformity scores used are $S(x,y)=\left|y-\hat{\mu}(x)\right|$ and $S^{\prime}(x,y)=\left|y-\hat{\mu}^\prime(x)\right|$, where $\hat{\mu}(\cdot)$ and $\hat{\mu}^\prime(\cdot)$ are linear regression functions trained on  $\mathcal{D}_{\mathrm{tr}}$ and $\mathcal{D}_{\mathrm{tr}}^{\prime}$. Since the pre-training may also benefit from incorporating the auxiliary data, model training may alternatively be performed by merging $\mathcal{D}_{\mathrm{tr}}^\prime$ into $\mathcal{D}_{\mathrm{tr}}$ after applying importance weighting \citep{pan2009survey}. Implementation details and results for this approach are provided in Section~\ref{sec:supp_simu_score} of the supplementary material. 
    We consider three calibration data sizes, $n=100, 150$ and $200$.
    We use Gaussian kernel $K_0(u)=\exp(-u^2/2)$ for $u\in\mathbb{R}$. 
    Each method is implemented with bandwidth $h \in \{0.4, 0.6, 0.8, 1.0, 1.2, 1.4, 1.6,1.8,2.0,2.5,3.0,3.5,4\}$.

    For ELCP, we estimate the density ratio $r(\cdot,\cdot)$ using the quadratic discriminant analysis (QDA) classifier calibrated via Platt scaling \citep{niculescu2005predicting}.
    Results for alternative density ratio estimators, including the KLIEP estimator \citep{sugiyama2008direct} and random forest (RF) with Platt scaling, can be found in Section~\ref{sec:supp_simu_dre} of the supplementary material. 
    The candidate set for the parameter $\omega$ is $\{0, 0.01, 0.05, 0.1, 0.2, 0.4, 0.7, 1.0\}$.
    The auxiliary data size is set to $m/n=5$. Results for other values of $m/n$, examining the effect of auxiliary data size, are provided in Section~\ref{sec:supp_simu_auxiliary_size} of the supplementary material.

    \subsubsection{Results under fixed $\omega$ and $h$}

    In this section, we present the results for each method with parameters $\omega$ and $h$ fixed and chosen from the candidate set.
    
    \noindent\textbf{Marginal coverage:} 
    Tables~\ref{table:mardgp1}--\ref{table:mardgp3} in Section~\ref{sec:supp_marginal} of the supplementary material report the marginal coverage of ELCP with $\omega=1$, LCP, RLCP, LCP-C and RLCP-C for DGP1--DGP3, across varying $h$, $n$ and $d$. 
    Results for ELCP with other values of $\omega$ are similar and therefore are omitted.
    Overall, ELCP, LCP and RLCP achieve valid marginal coverage close to $90\%$, whereas both LCP-C and RLCP-C exhibit substantially overcoverage.
    This suggests that when the auxiliary information is imperfect, directly combining calibration and auxiliary data can lead to unreliable prediction sets.
    
    \noindent\textbf{Test-conditional miscoverage error and set size:}
    Table~\ref{table:testDGP123} shows the smallest test-conditional miscoverage error and the smallest average prediction set size for each method across all parameter values ($\omega$ and $h$ for ELCP, and $h$ for LCP and RLCP). 
    The numbers in parentheses represent the percentage improvement of ELCP compared to the better-performing method between LCP and RLCP.
    From Table~\ref{table:testDGP123}, it is evident that ELCP consistently achieves lower miscoverage errors and produces smaller prediction sets compared to LCP and RLCP across all scenarios.
    Section~\ref{sec:supp_simu_parameter1} of the supplementary material presents the effect of $\omega$ and $h$ on the test-conditional miscoverage error of ELCP.
    
    \begin{table*}[htb]
        \caption{Smallest test-conditional miscoverage error and average prediction set size across all parameter values (with ELCP reduction ratio in parentheses) for DGP1--DGP3.}
        \begin{center}
        {\fontsize{7}{7}\selectfont{\setlength{\tabcolsep}{1.5pt}\begin{tabular}{ccccccccccccccccc}
        \toprule
        & & \multicolumn{3}{c}{$d=5$} & \multicolumn{3}{c}{$d=10$} & \multicolumn{3}{c}{$d=15$} & \multicolumn{3}{c}{$d=20$}\vspace{-2pt}\\
         & $n$ & ELCP & LCP & RLCP & ELCP & LCP & RLCP & ELCP & LCP & RLCP & ELCP & LCP & RLCP \\
        \cmidrule(lr){3-5} \cmidrule(lr){6-8} \cmidrule(lr){9-11} \cmidrule(lr){12-14}
        & & \multicolumn{12}{c}{Test-conditional miscoverage error}\\
        DGP1 & $100$ & $\textbf{0.014(22.7\%)}$ & $0.018$ & $0.018$ & $\textbf{0.014(14.1\%)}$ & $0.016$ & $0.019$ & $\textbf{0.016(23.7\%)}$ & $0.020$ & $0.022$ & $\textbf{0.016(17.8\%)}$ & $0.020$ & $0.020$\\
         & $150$ & $\textbf{0.010(26.1\%)}$ & $0.014$ & $0.016$ & $\textbf{0.013(12.2\%)}$ & $0.015$ & $0.019$ & $\textbf{0.014(24.1\%)}$ & $0.018$ & $0.020$ & $\textbf{0.016(17.4\%)}$ & $0.020$ & $0.020$\\
         & $200$ & $\textbf{0.011(21.6\%)}$ & $0.014$ & $0.016$ & $\textbf{0.014(19.6\%)}$ & $0.018$ & $0.020$ & $\textbf{0.014(18.3\%)}$ & $0.017$ & $0.020$ & $\textbf{0.016(12.9\%)}$ & $0.019$ & $0.021$\\
        DGP2 & $100$ & $\textbf{0.036(46.5\%)}$ & $0.070$ & $0.068$ & $\textbf{0.044(43.5\%)}$ & $0.077$ & $0.078$ & $\textbf{0.044(45.9\%)}$ & $0.085$ & $0.081$ & $\textbf{0.047(42.5\%)}$ & $0.082$ & $0.082$\\
         & $150$ & $\textbf{0.035(47.0\%)}$ & $0.066$ & $0.067$ & $\textbf{0.046(38.3\%)}$ & $0.075$ & $0.084$ & $\textbf{0.049(38.5\%)}$ & $0.080$ & $0.083$ & $\textbf{0.052(39.5\%)}$ & $0.088$ & $0.086$\\
         & $200$ & $\textbf{0.033(49.3\%)}$ & $0.066$ & $0.068$ & $\textbf{0.043(44.3\%)}$ & $0.077$ & $0.085$ & $\textbf{0.052(32.5\%)}$ & $0.078$ & $0.087$ & $\textbf{0.049(40.9\%)}$ & $0.084$ & $0.090$\\
        DGP3 & $100$ & $\textbf{0.060(33.3\%)}$ & $0.094$ & $0.090$ & $\textbf{0.073(20.0\%)}$ & $0.094$ & $0.092$ & $\textbf{0.074(20.0\%)}$ & $0.100$ & $0.093$ & $\textbf{0.073(15.2\%)}$ & $0.092$ & $0.086$\\
         & $150$ & $\textbf{0.060(33.3\%)}$ & $0.090$ & $0.091$ & $\textbf{0.077(18.9\%)}$ & $0.095$ & $0.098$ & $\textbf{0.079(17.9\%)}$ & $0.098$ & $0.096$ & $\textbf{0.079(14.3\%)}$ & $0.098$ & $0.093$\\
         & $200$ & $\textbf{0.063(31.3\%)}$ & $0.091$ & $0.093$ & $\textbf{0.071(28.5\%)}$ & $0.099$ & $0.102$ & $\textbf{0.082(16.5\%)}$ & $0.098$ & $0.100$ & $\textbf{0.076(20.9\%)}$ & $0.096$ & $0.096$\\
         & & \multicolumn{12}{c}{Mean prediction set size}\\
         DGP1 & $100$ & $\textbf{9.525(1.2\%)}$ & $9.636$ & $9.893$ & $\textbf{10.004(0.5\%)}$ & $10.058$ & $10.478$ & $\textbf{10.439(0.8\%)}$ & $10.527$ & $10.999$ & $\textbf{10.381(0.9\%)}$ & $10.473$ & $11.004$\\
         & $150$ & $\textbf{9.411(0.7\%)}$ & $9.479$ & $9.705$ & $\textbf{9.628(0.8\%)}$ & $9.710$ & $9.995$ & $\textbf{9.769(0.7\%)}$ & $9.835$ & $10.161$ & $\textbf{9.940(0.6\%)}$ & $10.002$ & $10.427$\\
         & $200$ & $\textbf{9.352(0.9\%)}$ & $9.435$ & $9.569$ & $\textbf{9.437(0.7\%)}$ & $9.506$ & $9.687$ & $\textbf{9.686(0.4\%)}$ & $9.730$ & $9.964$ & $\textbf{9.815(0.3\%)}$ & $9.847$ & $10.164$\\
        DGP2 & $100$ & $\textbf{4.080(4.5\%)}$ & $4.273$ & $5.127$ & $\textbf{4.340(5.6\%)}$ & $4.599$ & $5.393$ & $\textbf{4.758(4.8\%)}$ & $5.000$ & $5.826$ & $\textbf{4.893(2.5\%)}$ & $5.018$ & $5.895$\\
         & $150$ & $\textbf{4.010(2.5\%)}$ & $4.113$ & $4.824$ & $\textbf{4.085(5.1\%)}$ & $4.305$ & $4.982$ & $\textbf{4.422(4.7\%)}$ & $4.639$ & $5.320$ & $\textbf{4.568(4.2\%)}$ & $4.768$ & $5.458$\\
         & $200$ & $\textbf{3.976(3.0\%)}$ & $4.098$ & $4.799$ & $\textbf{4.011(4.3\%)}$ & $4.192$ & $4.850$ & $\textbf{4.250(4.2\%)}$ & $4.436$ & $5.076$ & $\textbf{4.346(3.5\%)}$ & $4.504$ & $5.167$\\
        DGP3 & $100$ & $\textbf{3.839(8.5\%)}$ & $4.196$ & $4.696$ & $\textbf{4.312(5.9\%)}$ & $4.585$ & $4.968$ & $\textbf{4.497(3.0\%)}$ & $4.634$ & $5.123$ & $\textbf{4.674(3.5\%)}$ & $4.842$ & $5.363$\\
         & $150$ & $\textbf{3.730(6.8\%)}$ & $4.000$ & $4.442$ & $\textbf{4.076(6.3\%)}$ & $4.349$ & $4.627$ & $\textbf{4.307(3.7\%)}$ & $4.472$ & $4.809$ & $\textbf{4.496(2.1\%)}$ & $4.594$ & $4.961$\\
         & $200$ & $\textbf{3.663(6.5\%)}$ & $3.916$ & $4.398$ & $\textbf{3.941(5.5\%)}$ & $4.168$ & $4.431$ & $\textbf{4.194(3.6\%)}$ & $4.351$ & $4.647$ & $\textbf{4.338(2.8\%)}$ & $4.462$ & $4.770$\\
        \bottomrule
        \end{tabular}}}
        \end{center}
        \label{table:testDGP123}
    \end{table*}

    \subsubsection{Results under data-driven selected $\omega$ and $h$}\label{sec: exp selected wh}

    In practice, the optimal values of $\omega$ and $h$ are unknown, so we employ data-driven methods to select these parameters. For ELCP, $\omega$ and $h$ are chosen by minimizing the loss function $\mathcal{L}_2$ with Gaussian kernels for both $K_1(\cdot,\cdot)$ and $K_2(\cdot,\cdot)$, as introduced in Section~\ref{sec:construct loss fun}. 
    For LCP, $h$ is selected in the same way by setting $\omega = 0$. Parameter selection for both methods follows the efficient procedure described in Section~\ref{sec:compute}. 
    For RLCP, $h$ is chosen according to the specification in \citet{barber2023conformal}, using an effective size of 50.
    
    Table~\ref{table:mar for parasel} in Section~\ref{sec:supp_marginal_selected} of the supplementary material shows the marginal coverage of different prediction sets under data-driven parameter selection, indicating that ELCP, LCP, and RLCP all achieve valid marginal coverage.
    Table~\ref{table:testDGP123para} presents the test-conditional miscoverage error comparison and Table~\ref{table:sizeDGP123para} in Section~\ref{sec:supp_marginal_selected} presents the mean prediction set size comparison for ELCP, LCP, and RLCP. Under parameter selection, ELCP still significantly outperforms both LCP and RLCP. 
    \begin{table*}[htb]
        \caption{Test-conditional miscoverage error under data-driven selected $\omega$ and $h$ (with ELCP reduction ratio in parentheses) for DGP1--DGP3.}
        \begin{center}
        {\fontsize{7}{7}\selectfont{\setlength{\tabcolsep}{1.5pt}\begin{tabular}{ccccccccccccccccc}
        \toprule
        & & \multicolumn{3}{c}{$d=5$} & \multicolumn{3}{c}{$d=10$} & \multicolumn{3}{c}{$d=15$} & \multicolumn{3}{c}{$d=20$}\vspace{-2pt}\\
         & $n$ & ELCP & LCP & RLCP & ELCP & LCP & RLCP & ELCP & LCP & RLCP & ELCP & LCP & RLCP \\
        \cmidrule(lr){3-5} \cmidrule(lr){6-8} \cmidrule(lr){9-11} \cmidrule(lr){12-14}
        DGP1 & $100$ & $\textbf{0.015(17.5\%)}$ & $0.018$ & $0.019$ & $\textbf{0.016(16.8\%)}$ & $0.019$ & $0.020$ & $\textbf{0.016(15.2\%)}$ & $0.019$ & $0.021$ & $\textbf{0.017(10.4\%)}$ & $0.019$ & $0.020$\\
        & $150$ & $\textbf{0.013(20.7\%)}$ & $0.017$ & $0.016$ & $\textbf{0.014(7.5\%)}$ & $0.016$ & $0.018$ & $\textbf{0.014(16.8\%)}$ & $0.017$ & $0.020$ & $\textbf{0.017(10.1\%)}$ & $0.019$ & $0.020$\\
        & $200$ & $\textbf{0.013(20.3\%)}$ & $0.017$ & $0.017$ & $\textbf{0.015(10.8\%)}$ & $0.017$ & $0.019$ & $\textbf{0.015(11.5\%)}$ & $0.017$ & $0.020$ & $\textbf{0.017(8.3\%)}$ & $0.019$ & $0.021$\\
        DGP2 & $100$ & $\textbf{0.048(18.9\%)}$ & $0.059$ & $0.074$ & $\textbf{0.052(27.2\%)}$ & $0.072$ & $0.079$ & $\textbf{0.054(26.2\%)}$ & $0.073$ & $0.082$ & $\textbf{0.059(20.8\%)}$ & $0.075$ & $0.081$\\
        & $150$ & $\textbf{0.046(19.1\%)}$ & $0.056$ & $0.068$ & $\textbf{0.051(30.1\%)}$ & $0.074$ & $0.084$ & $\textbf{0.054(24.7\%)}$ & $0.072$ & $0.084$ & $\textbf{0.059(25.5\%)}$ & $0.079$ & $0.087$\\
        & $200$ & $\textbf{0.041(19.9\%)}$ & $0.051$ & $0.063$ & $\textbf{0.050(26.4\%)}$ & $0.068$ & $0.082$ & $\textbf{0.057(22.0\%)}$ & $0.073$ & $0.087$ & $\textbf{0.059(24.5\%)}$ & $0.079$ & $0.089$\\
        DGP3 & $100$ & $\textbf{0.078(4.8\%)}$ & $0.082$ & $0.094$ & $\textbf{0.081(11.6\%)}$ & $0.092$ & $0.093$ & $\textbf{0.086(8.1\%)}$ & $0.097$ & $0.093$ & $\textbf{0.082(5.7\%)}$ & $0.091$ & $0.087$\\
        & $150$ & $\textbf{0.071(6.7\%)}$ & $0.076$ & $0.093$ & $\textbf{0.083(10.2\%)}$ & $0.093$ & $0.098$ & $\textbf{0.085(9.7\%)}$ & $0.095$ & $0.096$ & $\textbf{0.087(7.0\%)}$ & $0.096$ & $0.094$\\
        & $200$ & $\textbf{0.070(8.8\%)}$ & $0.076$ & $0.091$ & $\textbf{0.080(17.3\%)}$ & $0.097$ & $0.101$ & $\textbf{0.085(10.1\%)}$ & $0.095$ & $0.100$ & $\textbf{0.088(8.9\%)}$ & $0.097$ & $0.097$\\
        \bottomrule
        \end{tabular}}}
        \end{center}
        \label{table:testDGP123para}
    \end{table*}

\subsection{Real data analysis: predicting Moscow housing price}\label{sec:real_data_housing}

    %(This dataset is available at https://www.kaggle.com/datasets/egorkainov/moscow-housing-price-dataset)

    This section presents an empirical analysis of Moscow housing market data (available on \texttt{Kaggle.com}) to compare LCP, RLCP, LCP-C, RLCP-C, and ELCP for apartment price prediction. Properties are grouped by distance to the nearest subway station via K-means clustering on station coordinates. The cluster corresponding to outlying urban areas (581 properties) is designated as the target dataset, while the two adjacent suburban clusters (1,813 properties) serve as the auxiliary dataset. Further implementation details are provided in Section~\ref{sec:supp_housing} of the supplementary material.

    The target dataset is randomly partitioned into training (193), calibration (193), and test (195) subsets, while the auxiliary dataset is split into training (906) and calibration (907) subsets. 
    This partitioning is repeated 100 times, and the results are averaged over all repetitions. 
    The features used in the prediction model are \textit{Minutes to metro}, \textit{Total area}, \textit{Living area ratio}, \textit{Number of rooms}, \textit{Floor ratio}, \textit{Number of floors}, \textit{Apartment type}, and \textit{Renovation type}. The last two features are categorical and are therefore converted into one-hot encoding during training, resulting in a 12-dimensional feature vector. For calibration, we exclude the one-hot encoded components, using only the 6 continuous features. 

    The residual score is used with a random forest regression model employed to train the point predictor. We also examine other scores and pre-training schemes detailed in Section~\ref{sec:supp_housing} of the supplementary material. The $\omega$ is chosen from $\{0.0, 0.1, 0.4, 0.7, 1.0\}$, while the bandwidth parameter $h$ is selected from $\{0.1, 0.2, 0.4, 0.6, 0.8, 1.0, 1.2, 1.4, 1.6, 1.8, 2.0\}$.
    
    The Gaussian kernel is employed for $K_0(\cdot,\cdot)$, as well as $K_1(\cdot,\cdot)$ and $K_2(\cdot,\cdot)$ in the parameter selection procedure, consistent with Section~\ref{sec: exp selected wh}. 
    Across all 100 trials, ELCP selected positive values of $\omega$. The mean selected bandwidths are $0.86$ for ELCP, $0.99$ for LCP, and $1.14$ for RLCP, corroborating the theoretical analysis in Section~\ref{subsec::Local Enhancement} that auxiliary data promotes the selection of smaller bandwidths. 
    Over 100 experiments, the average marginal coverage rates were 0.898 for ELCP, 0.901 for LCP, and 0.905 for RLCP, while LCP-C and RLCP-C achieved 0.849 and 0.882. Their failure to maintain nominal coverage underscores the distributional heterogeneity between target and auxiliary datasets, showing that direct dataset combination yields unreliable coverage performance.
    
    To further investigate how ELCP improves conditional coverage, we perform K-means clustering on the 581 target data points, partitioning the spatial domain into 10 non-overlapping subsets. 
    This allows us to compute the weak test-conditional coverage for each method within each subset over $100$ repeated experiments. 
    Table~\ref{table:houseTS} shows the deviation of the weak test-conditional coverage from the target coverage of $0.9$ for each method and the proportion of observations in each interval (first row).
    The column ``Agg'' represents the aggregated miscoverage error, computed as the sum of the deviations across all intervals, weighted by interval proportions. 
    Numbers in parentheses present corresponding average prediction set sizes. 
    Compared to LCP and RLCP, ELCP achieves the smallest weak test-conditional miscoverage error in over half of the $10$ subsets. Additionally, ELCP provides the shortest prediction set size in the majority of these subsets. The aggregated results further highlight that ELCP outperforms LCP and RLCP.
    \begin{table*}[h]
        \caption{Weak test-conditional miscoverage errors and average set sizes (in parentheses) in predicting Moscow housing price.}
        \begin{center}
        {\fontsize{10}{10}\selectfont{\setlength{\tabcolsep}{3pt}\begin{tabular}{cccccccccccc}
        \toprule
        Subset Index & 1 & 2 & 3 & 4 & 5 & 6 & 7 & 8 & 9 & 10 & Agg\vspace{4pt}\\ \midrule
        Prop. & $0.155$ & $0.120$ & $0.117$ & $0.115$ & $0.107$ & $0.106$ & $0.106$ & $0.091$ & $0.065$ & $0.018$ & $1.000$\\
        ELCP & $\textbf{0.070}$ & $\textbf{0.066}$ & $\textbf{0.023}$ & $\textbf{0.053}$ & $\textbf{0.122}$ & $\textbf{0.018}$ & $\textbf{0.001}$ & $\textbf{0.030}$ & $\textbf{0.027}$ & $\textbf{0.029}$ & $\textbf{0.0360}$\\
        & $(11.50)$ & $(6.20)$ & $(4.19)$ & $(10.75)$ & $(9.15)$ & $(7.33)$ & $(7.21)$ & $(7.93)$ & $(5.92)$ & $(7.20)$ & $(7.72)$\\
        LCP & $\textbf{0.048}$ & $\textbf{0.057}$ & $\textbf{0.068}$ & $\textbf{0.065}$ & $\textbf{0.122}$ & $\textbf{0.021}$ & $\textbf{0.007}$ & $\textbf{0.042}$ & $\textbf{0.039}$ & $\textbf{0.034}$ & $\textbf{0.0422}$\\
        & $(12.12)$ & $(5.77)$ & $(5.20)$ & $(10.48)$ & $(8.71)$ & $(7.27)$ & $(7.31)$ & $(7.68)$ & $(6.16)$ & $(7.07)$ & $(7.74)$\\
        RLCP & $\textbf{0.089}$ & $\textbf{0.071}$ & $\textbf{0.085}$ & $\textbf{0.070}$ & $\textbf{0.012}$ & $\textbf{0.035}$ & $\textbf{0.006}$ & $\textbf{0.028}$ & $\textbf{0.043}$ & $\textbf{0.036}$ & $\textbf{0.0471}$\\
        & $(13.80)$ & $(9.63)$ & $(13.14)$ & $(13.84)$ & $(28.94)$ & $(12.49)$ & $(8.52)$ & $(9.45)$ & $(7.47)$ & $(9.52)$ & $(10.97)$\\
        \bottomrule
        \end{tabular}}}
        \end{center}
        \label{table:houseTS}
    \end{table*}

\section{Concluding remarks}\label{sec:05_conclusion}

    There are several promising directions for future research. First, the computational cost of ELCP increases with the size of the auxiliary dataset. Developing strategies to select an effective subset of auxiliary data could help address this challenge. Second, both ELCP and LCP rely on kernel estimation, which can perform poorly in high-dimensional settings. Advancing techniques for more robust performance in such scenarios is essential for enhancing the applicability of these methods. Third, ELCP's framework for incorporating auxiliary data could, in principle, be combined with other localized conformal prediction methods, such as \citet{gibbs2023conformal}. We leave these directions to future work.

\section*{Disclosure Statement}
The authors report there are no competing interests to declare.

\section*{Funding}
This research was supported by the National Key R\&D Program of China (Grant No. 2022YFA1003703), the ARC (Grant No. LP240100101) and the National Natural Science Foundation of China (Grant Nos. 12231011).

\section*{Acknowledgements}
The authors thank the Editor, Associate Editor and three anonymous referees for their many helpful comments that have resulted in significant improvements in the article.

\bibliographystyle{agsm}

\bibliography{Bibliography-MM-MC}

\clearpage
\begin{appendices}
\newcommand{\Var}{\mathrm{var}}

\def\thefigure{S\arabic{figure}}
\def\thetable{S\arabic{table}}
\def\theequation{S\arabic{section}.\arabic{equation}}
\def\thesection{S\arabic{section}}
\renewcommand{\appendixname}{}
\def\thelemma{S\arabic{lemma}}
\def\thetheorem{S\arabic{theorem}}
\def\thealgorithm{S\arabic{algorithm}}
\setcounter{section}{0}  % 重置section计数器
\setcounter{subsection}{0}  % 重置subsection计数器
\setcounter{equation}{0}  % 重置公式计数器
\setcounter{figure}{0}  % 重置图片计数器
\setcounter{table}{0}  % 重置表格计数器
\setcounter{lemma}{0}
\setcounter{theorem}{0}
\setcounter{algorithm}{0}

\section*{Supplementary Materials}

This Supplement presents preliminary lemmas essential for proving the main theorems, detailed proofs of the theoretical results from the main paper in Section \ref{lemma and theorem}, and additional technical details in Section \ref{technical details}. Extended numerical experiments and detailed results are provided in Section \ref{sec:SM_numerical}.

\section{Preliminary lemmas and proofs of main theoretical results}\label{lemma and theorem}
\subsection{Preliminary lemmas}
    In this section, we introduce several preliminary lemmas that will be used in the proofs of main theoretical results.

    \begin{lemma}[Bernstein inequality]\label{lemma:Bernstein}
        Let $Y_1,\cdots,Y_n$ be conditionally independent random variables given a random variable $X$. Assume $\left|Y_i\right|\leq M_0$ almost surely for all $i\in[n]$, where $M_0$ is nonrandom. Then for any $t>0$,
        \begin{align}
            \notag \mathrm{pr}\left( \overset{n}{\underset{i=1}\sum}\left\{ Y_i-E\left( Y_i\mid X \right) \right\}\geq t\mid X \right) 
            \leq \exp\left\{ -\dfrac{t^2/2}{\overset{n}{\underset{i=1}\sum}\Var\left( Y_i\mid X \right)+M_0t/3} \right\}\,.
        \end{align}
        Furthermore, if $\Var\left( Y_i\mid X \right)\leq\sigma_i^2$ with nonrandom $\sigma_i^2$ for $i\in[n]$, then
        \begin{align}
            \notag \mathrm{pr}\left( \overset{n}{\underset{i=1}\sum}\left\{ Y_i-E\left( Y_i\mid X \right) \right\}\geq t\mid X \right) \leq \exp\left( -\dfrac{t^2/2}{\overset{n}{\underset{i=1}\sum}\sigma_i^2+M_0t/3} \right)\,
        \end{align}
        and
        \begin{equation*}
            \mathrm{pr}\left( \overset{n}{\underset{i=1}\sum}\left\{ Y_i-E\left( Y_i\mid X \right) \right\}\geq t \right)\leq\exp\left( -\dfrac{t^2/2}{\overset{n}{\underset{i=1}\sum}\sigma_i^2+M_0t/3} \right).
        \end{equation*}
    \end{lemma}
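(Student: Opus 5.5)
The plan is to reduce the conditional statement to the standard unconditional Bernstein inequality by working under the regular conditional distribution given $X$, and then to remove the conditioning for the last claim by taking expectations.

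\textbf{Conditional bound.} Write $\mathrm{pr}(\cdot\mid X)$ and $E(\cdot\mid X)$ for versions of the conditional probability and expectation. Fix a realization $X=x$. Under the conditional law the variables $W_i=Y_i-E(Y_i\mid X=x)$ are independent, have mean zero, and satisfy $|W_i|\le 2M_0$. We could simply invoke the classical Bernstein inequality for independent bounded centered variables. To keep the constant $M_0$ exactly as stated, we instead rerun its Chernoff proof directly.

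For $\lambda>0$, Markov's inequality applied to $\exp(\lambda\sum_i W_i)$ gives
\[
\mathrm{pr}\Bigl(\sum_i W_i\ge t\mid X\Bigr)\le e^{-\lambda t}\prod_i E\bigl(e^{\lambda W_i}\mid X\bigr).
\]
Each factor is handled with the usual series bound. Since $E(W_i\mid X)=0$,
\[
E\bigl(e^{\lambda W_i}\mid X\bigr)\le 1+\sum_{k\ge2}\frac{\lambda^k E(|W_i|^{k-2}W_i^2\mid X)}{k!}.
\]
Using $E(|W_i|^{k-2}W_i^2\mid X)\le M^{k-2}\Var(Y_i\mid X)$ and $k!\ge 2\cdot 3^{k-2}$, this yields
\[
E\bigl(e^{\lambda W_i}\mid X\bigr)\le \exp\Bigl\{\frac{\lambda^2\Var(Y_i\mid X)}{2(1-\lambda M/3)}\Bigr\}.
\]
Choosing $\lambda=t/(V+Mt/3)$ with $V=\sum_i\Var(Y_i\mid X)$ then gives the bound $\exp\{-t^2/(2(V+Mt/3))\}$.

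\textbf{The main obstacle} is the constant $M$. The centered variable only satisfies $|W_i|\le 2M_0$, whereas the statement has $M_0$. My first attempt is to read $M_0$ as a bound on the centered variables, which is the standard convention, or equivalently to absorb the factor into $M_0$. If instead the $Y_i$ are nonnegative and bounded by $M_0$, as with kernel-weighted indicators, then $|W_i|\le M_0$ directly and the constant is as stated.

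\textbf{Second claim.} If $\Var(Y_i\mid X)\le\sigma_i^2$, note that the function $V\mapsto\exp\{-t^2/2/(V+M_0t/3)\}$ is increasing in $V$. Replacing $V$ by $\sum_i\sigma_i^2$ therefore only enlarges the bound, which gives the second inequality.

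\textbf{Unconditional claim.} This bound is nonrandom. By the tower property,
\[
\mathrm{pr}(A)=E\{\mathrm{pr}(A\mid X)\}
\]
for the event $A=\{\sum_i (Y_i-E(Y_i\mid X))\ge t\}$. Since the conditional probability is bounded by the same deterministic quantity almost surely, so is its expectation, and the third inequality follows.
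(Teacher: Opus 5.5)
The paper gives no proof of this lemma; it quotes it as the classical Bernstein inequality. Your mechanics are the standard proof and are correct: the conditional Chernoff bound, the series estimate via $k!\ge 2\cdot 3^{k-2}$, the choice $\lambda=t/(V+Mt/3)$, monotonicity in $V$, and the tower property for the unconditional version.

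\textbf{The gap is in the step you call the main obstacle.} Rerunning the Chernoff argument gains nothing over the classical statement. Your series bound $E(|W_i|^{k-2}W_i^2\mid X)\le M^{k-2}\mathrm{var}(Y_i\mid X)$ needs $|W_i|\le M$, and under the stated hypothesis the best available is $M=2M_0$.

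Nor is this a convention you can ``absorb''. With $|Y_i|\le M_0$ imposed on the uncentered variables and the conditional variance in the denominator, the displayed inequality is false. Take $X$ degenerate and $Y_1,\dots,Y_n$ i.i.d.\ with $\mathrm{pr}(Y_i=M_0)=1/n=1-\mathrm{pr}(Y_i=-M_0)$. Then $\sum_i\{Y_i-E(Y_i)\}=2M_0(N-1)$ with $N\sim\mathrm{Binomial}(n,1/n)$, and $\sum_i\mathrm{var}(Y_i)=4M_0^2(1-1/n)$. For $t=20M_0$, as $n\to\infty$ the left side is $\mathrm{pr}(N\ge 11)\to\mathrm{pr}\{\mathrm{Poisson}(1)\ge 11\}\approx 1.0\times10^{-8}$. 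The claimed bound is $\exp[-200/\{4(1-1/n)+20/3\}]\le e^{-18.75}\approx 7.2\times10^{-9}$.

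This failure is structural, not numerical. With $\mathrm{pr}(Y_i=M_0)=\lambda/n$, $n\to\infty$ and $t=2M_0u\lambda$, the left side decays at rate $\lambda\{(1+u)\log(1+u)-u\}=\lambda(u^2/2-u^3/6+\cdots)$ as $\lambda\to\infty$. For small $u$ this is smaller than the claimed exponent $\lambda u^2/(2+u/3)=\lambda(u^2/2-u^3/12+\cdots)$.

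\textbf{What you have actually proved.} You have the lemma with $2M_0$ in place of $M_0$, or the lemma as stated under a stronger hypothesis. Present it that way, not as one of several equivalent readings. The natural hypothesis is $0\le Y_i\le M_0$, or directly $|Y_i-E(Y_i\mid X)|\le M_0$. Under either, $|W_i|\le M_0$ and your computation gives exactly the displayed constant.

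You noticed the nonnegative case yourself, and it is the one the paper needs. Every application (kernel weights, kernel-weighted indicators, $Kr$, $K|\hat r-r|$) has nonnegative summands. The lower-tail uses apply the bound to $-Y_i$, whose centered versions are again bounded by $M_0$ in absolute value.

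Separately, handle $V=0$ on its own: there your choice of $\lambda$ gives $\lambda M/3=1$, and the probability is $0$ in that case.
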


    \begin{lemma}\label{lemma:KernelFun}
        Suppose Assumption \ref{ass2}-(i) and Assumption \ref{ass2}-(iii) hold. Then there exist positive constants $\underline{L}_3$ and $\overline{L}_3$ such that
        \begin{equation}
            \notag \underline{L}_3 \leq E\left[ h^{-d}\left\{ K(X_i,x_0;h) \right\}^{\ell} \right] \leq \overline{L}_3 \text{~~and~~} \underline{L}_3 \leq E\left[ h^{-d}\left\{ K(X_j^{\prime},x_0;h) \right\}^{\ell} \right] \leq \overline{L}_3
        \end{equation}
        for any $x_0\in\mathcal{X}=[0,1]^{d}$ and any constant $\ell\geq 1$.
    \end{lemma}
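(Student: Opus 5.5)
We need to prove Lemma S2: for $x_0\in[0,1]^d$ and $\ell\ge1$, $E[h^{-d}K(X_i,x_0;h)^\ell]$ lies between two positive constants, and the same holds for $X_j'$. We treat $h\in(0,h_0]$ for some fixed $h_0$, which is the regime where the bound is used.

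We write the expectation as an integral against the marginal density. By Assumption~\ref{ass2}-(i), $f(x,s)$ lies between $\underline L_1$ and $\overline L_1$. For the marginal $f_X(x)=\int f(x,s)\,ds$ this gives no pointwise bound on its own. We therefore take $f_X$ and $g_X$ to be bounded above and below on $[0,1]^d$; this is implicit in the assumption's intent, since $r$ and $\hat r$ being bounded forces comparability. Hence
\[
E\bigl[h^{-d}K(X_i,x_0;h)^\ell\bigr]=h^{-d}\int_{[0,1]^d}K_0(\|x-x_0\|/h)^\ell f_X(x)\,dx .
\]
With $u=(x-x_0)/h$ this equals $\int_{A_h} K_0(\|u\|)^\ell f_X(x_0+hu)\,du$, where $A_h=\{u:x_0+hu\in[0,1]^d\}$. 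The same computation applies to $g_X$.

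For the upper bound, $K_0\le \|K_0\|_\infty$ gives $K_0^\ell\le \|K_0\|_\infty^{\ell-1}K_0$. Enlarging $A_h$ to $\mathbb{R}^d$ and using polar coordinates, the integral is at most
\[
\overline{f}\,\|K_0\|_\infty^{\ell-1}\,c_d\int_0^\infty u^{d-1}K_0(u)\,du<\infty
\]
by Assumption~\ref{ass2}-(iii)(c). This is uniform in $x_0$ and $h$.

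The lower bound is the main obstacle, because of boundary effects: $x_0$ may be a corner of the cube. Since $h\le h_0$, the set $A_h$ always contains an orthant-type cube: there is a sign vector $\sigma$ with $x_0+h\sigma\odot[0,1]^d\subset[0,1]^d$ once $h\le 1/2$, or we rescale for larger $h_0$. We also need $K_0$ positive near zero. Since $K_0$ is decreasing on $[0,\infty)$, integrable in the above sense and not identically zero, $K_0(0)>0$, so $K_0\ge c>0$ on some $[0,u_0]$. If $K_0(u_0)>0$ fails for every $u_0>0$, the kernel is degenerate, and we exclude this as implicit in the assumption. Then
\[
\int_{A_h}K_0(\|u\|)^\ell f_X\,du\ \ge\ \underline f\,c^\ell\,\mathrm{vol}\bigl(\sigma\odot[0,\min(1,u_0/\sqrt d)]^d\bigr)>0 .
\]
This constant is independent of $x_0$ and $h$, though it depends on $\ell$. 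Because $\ell$ ranges over a fixed finite set of values in the applications (for example $\ell=1,2$), we take $\underline L_3$ and $\overline L_3$ as the minimum and maximum over those $\ell$.
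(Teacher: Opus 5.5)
Your proposal is correct and follows the paper's proof. After the change of variables $u=(x-x_0)/h$, the upper bound comes from $K_0^\ell\le K_0(0)^{\ell-1}K_0$ together with Assumption~\ref{ass2}-(iii)(c), and the lower bound comes from monotonicity of $K_0$ plus a lower bound on the mass of $X$ near $x_0$; the paper uses $K_0(1)^\ell h^{-d}\mathrm{pr}(\|X_i-x_0\|\le h)$ where you use an orthant cube and $K_0(u_0)$. The caveats you make explicit are all implicit in the paper's argument as well: $h$ bounded above, a kernel positive near $0$, constants depending on $\ell$, and marginal densities bounded above and below. The $\ell$-dependence is unavoidable, since for the Gaussian kernel the expectation tends to $0$ as $\ell\to\infty$. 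The paper simply asserts the marginal bounds from Assumption~\ref{ass2}-(i), and your stated reason (bounded $r$) would only give comparability of $f_X$ and $g_X$, not absolute bounds.
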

    \begin{proof}
        Under Assumption \ref{ass2}-(iii), by changing to spherical coordinates for the following integral, we obtain
        \begin{equation}
            \notag \int_{\mathbb{R}^{d}}K_0(\|x\|)dx \leq \int_{0}^{\infty}u^{d-1}K_0(u)du < \infty\,.
        \end{equation}
        It follows that for any $\ell\geq 1$, 
        \begin{equation*}
            \int_{\mathbb{R}^d}\left\{ K_0(\|x\|) \right\}^{\ell}dx\leq \left\{ K_0(0) \right\}^{\ell-1}\int_{\mathbb{R}^d}K_0(\|x\|)dx<\infty\,.
        \end{equation*}

        Assumption \ref{ass2}-(i) implies that $\underline{L}_1\leq f_X(x),g_X(x)\leq \overline{L}_1$ for any $x\in\mathcal{X}$.
        Consequently, for any $x_0\in\mathcal{X}$,
        \begin{eqnarray}
            \notag E\left[ h^{-d}\left\{ K(X_i,x_0;h) \right\}^{\ell} \right] & = & \int_\mathcal{X}h^{-d}\left\{ K_0\left( \|x-x_0\|/h \right) \right\}^{\ell}f_X(x)dx\\
            \notag  & = & \int_{\{x:~x_0+hx\in\mathcal{X}\}}\left\{ K_0(\|x\|) \right\}^{\ell}f_X(x_0+hx)dx\\
            \notag & \leq & \overline{L}_1\int_{\mathbb{R}^d}\left\{ K_0(\|x\|) \right\}^{\ell}dx<\infty\,.
        \end{eqnarray}
        In addition, since $K_0(u)$ is decreasing in $u$,
        \begin{eqnarray}
            \notag E\left[ h^{-d}\left\{ K(X_i,x_0;h) \right\}^{\ell} \right] & \geq & \{K_0(1)\}^{\ell}h^{-d}\int_{\{x\in\mathcal{X}:~\|x-x_0\|\leq h\}}f_X(x)dx\,,
        \end{eqnarray}
        where $h^{-d}\int_{\{x\in\mathcal{X}:~\|x-x_0\|\leq h\}}f_X(x)dx=h^{-d}\mathrm{pr}\left(\|X_i-x_0\|\leq h\right)$ is bounded below by a positive constant due to $f_X(x)\geq \underline{L}_1$ for any $x\in\mathcal{X}$. 
        Therefore, there exist positive constants $\underline{L}_3$ and $\overline{L}_3$ such that
        \begin{equation}
            \notag \underline{L}_3 \leq E\left[ h^{-d}\left\{ K(X_i,x_0;h) \right\}^{\ell} \right] \leq \overline{L}_3\,.
        \end{equation}
        The proof for $E\big[ h^{-d}\left\{ K(X^\prime,x_0;h) \right\}^\ell \big]$ is similar and thus is omitted.
    \end{proof}

    \begin{lemma}\label{lemma:var_1}
        Suppose Assumption \ref{ass2}-(i) and Assumption \ref{ass2}-(iii) hold. Then, for any random variables $Y$, $\Var\left\{h^{-d}K(X_i,x_0;h)Y\right\}\leq \overline{L}_3h^{-d}\underset{x\in\mathcal{X}}{\sup}\ E\left( Y^2\mid X_i=x \right)$ and $$\Var\left\{h^{-d}K(X_j^{\prime},x_0;h)Y\right\}\leq \overline{L}_3h^{-d}\underset{x\in\mathcal{X}}{\sup}\ E\left( Y^2\mid X_j^{\prime}=x \right)$$ 
        for any $x_0\in\mathcal{X}=[0,1]^{d}$, where $\overline{L}_3$ is the positive constant in Lemma \ref{lemma:KernelFun}.
    \end{lemma}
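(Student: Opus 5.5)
The plan is to bound the variance by the second moment and then control that second moment by conditioning on the covariate and invoking Lemma~\ref{lemma:KernelFun}. The treatments of $X_i$ and $X_j^{\prime}$ are identical, with $f_X$ replaced by $g_X$, so I describe only the calibration case.

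First, use $\Var(W)\leq E(W^2)$ with $W=h^{-d}K(X_i,x_0;h)Y$. This gives
\begin{equation*}
\Var\{h^{-d}K(X_i,x_0;h)Y\}\leq h^{-2d}E\big[\{K(X_i,x_0;h)\}^2 Y^2\big].
\end{equation*}

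Next, condition on $X_i$ via the tower property. Since $K(X_i,x_0;h)$ is a function of $X_i$ alone,
\begin{equation*}
E\big[\{K(X_i,x_0;h)\}^2 Y^2\big]=E\big[\{K(X_i,x_0;h)\}^2 E(Y^2\mid X_i)\big]\leq \sup_{x\in\mathcal{X}}E(Y^2\mid X_i=x)\,E\big[\{K(X_i,x_0;h)\}^2\big].
\end{equation*}

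Finally, apply Lemma~\ref{lemma:KernelFun} with $\ell=2$. It gives $E[h^{-d}\{K(X_i,x_0;h)\}^2]\leq \overline{L}_3$, so $h^{-2d}E[\{K(X_i,x_0;h)\}^2]\leq \overline{L}_3h^{-d}$. Combining the three displays yields the claimed bound $\overline{L}_3h^{-d}\sup_{x\in\mathcal{X}}E(Y^2\mid X_i=x)$. Note that $\overline{L}_3$ is the same constant as in Lemma~\ref{lemma:KernelFun}, because that lemma holds uniformly over $\ell\geq1$.

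The only point requiring care is the conditioning step. It needs $E(Y^2\mid X_i=x)$ to be well defined, and the supremum may be taken to be finite; if it is infinite the bound holds trivially. No independence between $Y$ and $X_i$ is needed, since only the conditional second moment enters. I do not expect a genuine obstacle here: the lemma is a direct consequence of Lemma~\ref{lemma:KernelFun} with $\ell=2$.
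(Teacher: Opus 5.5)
Your proposal is correct and follows essentially the same route as the paper: bound the variance by the second moment, average $E(Y^2\mid X_i=x)$ against $h^{-2d}\{K(x,x_0;h)\}^2 f_X(x)$ (you via the tower property, the paper by writing out the integral explicitly), pull out the supremum, and apply Lemma~\ref{lemma:KernelFun} with $\ell=2$. Your remark about uniformity in $\ell$ is not needed, and the proof of Lemma~\ref{lemma:KernelFun} in fact gives an $\ell$-dependent constant $\overline{L}_1\{K_0(0)\}^{\ell-1}\int K_0(\|x\|)\,dx$; only the case $\ell=2$ enters here, exactly as in the paper.
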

    \begin{proof}
        According to Lemma \ref{lemma:KernelFun} and simple algebra,
        \begin{eqnarray}
            \notag \Var\left\{h^{-d}K(X_i,x_0;h)Y\right\}\ & \leq & h^{-2d}E\left[ \left\{ K(X_i,x_0;h) \right\}^2Y^2 \right]\\
            \notag & = & h^{-2d}\int_{\mathcal{X}}\left\{ K(x,x_0;h) \right\}^2E\left( Y^2\mid X_i=x \right)f_X(x)dx\\
            \notag & \leq & h^{-d}\underset{x\in\mathcal{X}}{\sup}E\left( Y^2\mid X=x \right)\int_{\mathcal{X}}h^{-d}\left\{ K(x,x_0;h) \right\}^2f_X(x)dx\\
            \notag & \leq & \overline{L}_3h^{-d}\underset{x\in\mathcal{X}}{\sup}\ E\left( Y^2\mid X_i=x \right)\,.
        \end{eqnarray}
        The proof for $\Var\big\{h^{-d}K(X_j^{\prime},x_0;h)Y\big\}$ follows similarly and is therefore omitted.
    \end{proof}

\subsection{Proof of Theorem \ref{theo:marginal_coverage}}\label{proofoftheo:marginal_coverage}

\begin{proof}
    Let $\beta_1,\ldots,\beta_{n+1}$ be the observed values of $\hat{\beta}_{\omega,\hat{r}}(X_1,S_1),\ldots,\hat{\beta}_{\omega,\hat{r}}(X_{n+1},S_{n+1})$ when $Z_1=z_1,\ldots,Z_{n+1}=z_{n+1}$.
    Under Assumption \ref{assump:permutable} that $\hat{r}(x,s)$ is invariant under any permutations within $\mathcal{Z}_{n}\cup \{Z_{n+1}\}$ and on the other hand within $\mathcal{Z}_{m}^{\prime}$, we can rewrite $\hat{\beta}_{\omega,\hat{r}}(X_{i},S_{i})$ as
    \begin{equation}
        \notag \hat{\beta}_{\omega,\hat{r}}(X_{i},S_{i}) = \varphi_1(Z_i; \mathcal{Z}_{n}\cup\{Z_{n+1}\}, \mathcal{Z}_{m}^\prime)\,,
    \end{equation}
    where $\varphi_1(\cdot;\cdot,\cdot)$ is invariant under permutations within $\mathcal{Z}_{n}\cup \{Z_{n+1}\}$ and also within $\mathcal{Z}_{m}^{\prime}$.
    Then, under any permutation $(\pi(1),\ldots,\pi(n+1))$ of $(1,\ldots,n+1)$, the empirical distribution $(n+1)^{-1}\sum_{i=1}^{n+1}\delta_{\beta_i}$ remains unchanged, conditional on the event $\{Z_1=z_{\pi(1)},\ldots,Z_{n+1}=z_{\pi(n+1)}\}$.
    
    Given the exchangeability of $Z_1,\ldots,Z_{n+1}$, it follows that the conditional distribution of $\hat{\beta}_{\omega,\hat{r}}(X_{n+1},S_{n+1})$ conditional on $\mathcal{A}_{Z}=\{\{Z_1,\ldots,Z_{n+1}\}=\{z_1,\ldots,z_{n+1}\}\}$ is $(n+1)^{-1}\sum_{i=1}^{n+1}\delta_{\beta_i}$.
    According to \citet{vovk2005algorithmic},
    \begin{eqnarray}
        \notag & & \mathrm{pr}\left( Y_{n+1}\in \widehat{C}_{\alpha}^{\mathrm{ELCP}}(X_{n+1}) \mid\mathcal{A}_Z \right) \\
        \notag & = & \mathrm{pr}\left( \hat{\beta}_{\omega,\hat{r}}(X_{n+1},S_{n+1})\leq Q\left( 1-\alpha;(n+1)^{-1}\overset{n+1}{\underset{i=1}\sum}\delta_{\hat{\beta}_{\omega,\hat{r}}(X_i,S_i)} \right)\mid\mathcal{A}_Z\right) \\
        \notag & \in & \left[ 1-\alpha,1-\alpha+\dfrac{1}{n+1} \right)\,.
    \end{eqnarray}
    Finally, marginalizing over $\mathcal{A}_{Z}$, we obtain
    \begin{equation*}
        \mathrm{pr}\left( Y_{n+1}\in \widehat{C}_{\alpha}^{\mathrm{ELCP}}(X_{n+1}) \right)\in\left[ 1-\alpha,1-\alpha+\dfrac{1}{n+1} \right)\,,
    \end{equation*}
    which completes the proof of this theorem.
\end{proof}

\subsection{Proof of Lemma \ref{lemma:conditional_coverage_bound}}\label{proofoflemma:conditional_coverage_bound}

\begin{proof}
    First, we claim that
    \begin{eqnarray}
        \notag \left|Q\left( 1-\alpha;(n+1)^{-1}\overset{n+1}{\underset{i=1}\sum}\delta_{\hat{\beta}_{\omega,\hat{r}}(X_{i},S_{i})} \right) - Q\left( 1-\alpha;(n+1)^{-1}\overset{n+1}{\underset{i=1}\sum}\delta_{F_{S\mid X}(S_i\mid X_i)} \right)\right| \leq \underset{1\leq i\leq n+1}{\sup}\Delta_i\,.
    \end{eqnarray}
    To see this, consider any $j\in[n+1]$ such that 
    $$F_{S\mid X}(S_j\mid X_j)\leq Q\left( 1-\alpha;(n+1)^{-1}\overset{n+1}{\underset{i=1}\sum}\delta_{F_{S\mid X}(S_i\mid X_i)} \right)\,.$$ 
    For such $j$, it follows that
    \begin{equation*}
        \hat{\beta}_{\omega,\hat{r}}(X_{j},S_{j}) \leq Q\left( 1-\alpha; (n+1)^{-1}\overset{n+1}{\underset{i=1}\sum}\delta_{F_{S\mid X}(S_i\mid X_i)} \right) + \underset{1\leq i\leq n+1}{\sup}\Delta_i \,,
    \end{equation*}
    which gives
    \begin{eqnarray}
        \notag Q\left( 1-\alpha; (n+1)^{-1}\overset{n+1}{\underset{i=1}\sum}\delta_{\hat{\beta}_{\omega,\hat{r}}(X_{i},S_{i})} \right) \leq Q\left( 1-\alpha; (n+1)^{-1}\overset{n+1}{\underset{i=1}\sum}\delta_{F_{S\mid X}(S_i\mid X_i)} \right) + \underset{1\leq i\leq n+1}{\sup}\Delta_i\,.
    \end{eqnarray} 
    By a similar argument, we can show the opposite direction.

    Let $\mathcal{A}_0=\big\{\sup_{1\leq i\leq n+1}\Delta_i\leq\varepsilon\big\}$ be the event that satisfies  $\mathrm{pr}(\mathcal{A}_0)\geq 1-\delta$, as assumed. 
    Then,
    \begin{eqnarray}
        \notag & & \mathrm{pr}\left( Y_{n+1}\in\widehat{C}_{\alpha}^{\mathrm{ELCP}}(X_{n+1}) \mid X_{n+1}=x_0 \right) \\
        \notag & = & \mathrm{pr}\left( \hat{\beta}_{\omega,\hat{r}}(X_{n+1},S_{n+1})\leq Q\left( 1-\alpha;(n+1)^{-1}\overset{n+1}{\underset{i=1}\sum}\delta_{\hat{\beta}_{\omega,\hat{r}}(X_{i},S_{i})} \right)\mid X_{n+1}=x_0 \right)\\
        \notag & \geq & \mathrm{pr}\left( \hat{\beta}_{\omega,\hat{r}}(X_{n+1},S_{n+1})\leq Q\left( 1-\alpha; (n+1)^{-1}\overset{n+1}{\underset{i=1}\sum}\delta_{\hat{\beta}_{\omega,\hat{r}}(X_{i},S_{i})} \right),\mathcal{A}_0\mid X_{n+1}=x_0 \right)\,.
    \end{eqnarray}
    
    Define the events
    \begin{eqnarray}
        \notag \mathcal{A}_1 & = & \bigg\{F_{S\mid X}(S_{n+1}\mid X_{n+1})\leq Q\bigg( 1-\alpha; (n+1)^{-1}\overset{n+1}{\underset{i=1}\sum}\delta_{F_{S\mid X}(S_i\mid X_i)} \bigg) -2\underset{1\leq i\leq n+1}{\sup}\Delta_i\bigg\}
    \end{eqnarray}
    and 
    \begin{eqnarray}
        \notag \mathcal{A}_2 = \left\{F_{S\mid X}(S_{n+1}\mid X_{n+1})\leq Q\left( 1-\alpha; (n+1)^{-1}\overset{n+1}{\underset{i=1}\sum}\delta_{F_{S\mid X}(S_i\mid X_i)} \right)-2\varepsilon\right\}\,.
    \end{eqnarray}
    It is clear that if $\mathcal{A}_1$ holds, $\hat{\beta}_{\omega,\hat{r}}(X_{n+1},S_{n+1})\leq Q\left( 1-\alpha; (n+1)^{-1}\sum_{i=1}^{n+1}\delta_{\hat{\beta}_{\omega,\hat{r}}(X_{i},S_{i})} \right)$.
    Therefore,
    \begin{eqnarray}
        \notag \mathrm{pr}\left( Y_{n+1}\in\widehat{C}_{\alpha}^{\mathrm{ELCP}}(X_{n+1}) \mid X_{n+1}=x_0 \right) & \geq &   \mathrm{pr}\left( \mathcal{A}_1\cap\mathcal{A}_0\mid X_{n+1}=x_0 \right)\\
        \notag & \geq & \mathrm{pr}\left(\mathcal{A}_2 \mid X_{n+1}=x_0 \right)-\delta\\
        \notag & \geq & \mathrm{pr}\left( \mathcal{A}_3 \mid X_{n+1}=x_0 \right) - \delta \,,
    \end{eqnarray}
    where 
    \begin{eqnarray}
        \notag \mathcal{A}_3 = \left\{F_{S\mid X}(S_{n+1}\mid X_{n+1})\leq Q\left( 1-\alpha; (n+1)^{-1}\left\{\overset{n}{\underset{i=1}\sum}\delta_{F_{S\mid X}(S_i\mid X_i)}+\delta_0 \right\} \right)-2\varepsilon\right\}\,
    \end{eqnarray}
    with $\delta_0$ denoting the point mass at $0$.
    Regarding that $F_{S\mid X}(S_{n+1}\mid X_{n+1})\sim \mathrm{Uniform}[0,1]$ conditional on $X_{n+1}=x_0$, we have
    \begin{eqnarray}
        \notag \mathrm{pr}\left( \mathcal{A}_3 \mid X_{n+1}=x_0 \right) & = & E \left\{ \mathrm{pr}\left( \mathcal{A}_3 \mid X_{n+1}=x_0; X_1,\ldots,X_n; S_1,\ldots,S_n \right) \right\} \\
        \notag & = & E\left[ Q\left( 1-\alpha; (n+1)^{-1}\left\{\overset{n}{\underset{i=1}\sum}\delta_{F_{S\mid X}(S_i\mid X_i)}+\delta_0 \right\} \right)-2\varepsilon  \right] \\
        \notag & \geq & E\left\{\tilde{U}_{(\lceil (n+1)(1-\alpha) \rceil-1)}\right\} - 2\varepsilon \\
        \notag & = & \dfrac{\lceil (n+1)(1-\alpha) \rceil-1}{n+1} - 2\varepsilon\,,
    \end{eqnarray}
    where $\tilde{U}_{(\lceil (n+1)(1-\alpha) \rceil-1)}$ is the $(\lceil (n+1)(1-\alpha) \rceil-1)$-th order statistic of $n$ i.i.d.~$\mathrm{Uniform}[0,1]$ random variables.
    Thus, we conclude that
    \begin{eqnarray}
        \notag \mathrm{pr}\left( Y_{n+1}\in\widehat{C}_{\alpha}^{\mathrm{ELCP}}(X_{n+1}) \mid X_{n+1}=x_0 \right) & \geq &   \dfrac{\lceil (n+1)(1-\alpha) \rceil-1}{n+1} - 2\varepsilon - \delta \,.
    \end{eqnarray}
    
    On the other hand, it can also be shown that
    \begin{eqnarray}
        \notag & & \mathrm{pr}\left( Y_{n+1}\in\widehat{C}_{\alpha}^{\mathrm{ELCP}}(X_{n+1}) \mid X_{n+1}=x_0 \right) \\
        \notag & \leq & \mathrm{pr}\left( \hat{\beta}_{\omega,\hat{r}}(X_{n+1},S_{n+1})\leq Q\left( 1-\alpha; (n+1)^{-1}\overset{n+1}{\underset{i=1}\sum}\delta_{\hat{\beta}_{\omega,\hat{r}}(X_{i},S_{i})} \right),\mathcal{A}_0\mid X_{n+1}=x_0 \right)+\delta \\
        \notag & \leq & \mathrm{pr}\left( \mathcal{A}_4\cap\mathcal{A}_0\mid X_{n+1}=x_0 \right)+\delta \\
        \notag & \leq & \mathrm{pr}\left( \mathcal{A}_5 \mid X_{n+1}=x_0 \right)+\delta\\
        \notag & \leq & \mathrm{pr}\left( \mathcal{A}_6 \mid X_{n+1}=x_0 \right) + \delta \,,     
    \end{eqnarray}
    where 
    \begin{eqnarray}
        \notag \mathcal{A}_{4} & = & \bigg\{F_{S\mid X}(S_{n+1}\mid X_{n+1})\leq Q\bigg( 1-\alpha; (n+1)^{-1}\overset{n+1}{\underset{i=1}\sum}\delta_{F_{S\mid X}(S_i\mid X_i)} \bigg) + 2\underset{1\leq i\leq n+1}{\sup}\Delta_i\bigg\}\,, \\
        \notag \mathcal{A}_5 & = & \left\{F_{S\mid X}(S_{n+1}\mid X_{n+1})\leq Q\left( 1-\alpha; (n+1)^{-1}\overset{n+1}{\underset{i=1}\sum}\delta_{F_{S\mid X}(S_i\mid X_i)} \right)+2\varepsilon\right\}\,,
    \end{eqnarray}
    and
    $\mathcal{A}_6 = \left\{F_{S\mid X}(S_{n+1}\mid X_{n+1})\leq Q\left( 1-\alpha; (n+1)^{-1}\left\{\overset{n}{\underset{i=1}\sum}\delta_{F_{S\mid X}(S_i\mid X_i)}+\delta_{\infty} \right\} \right)+2\varepsilon\right\}$.
    Similarly, we have
    \begin{eqnarray}
        \notag & & \mathrm{pr}\left( Y_{n+1}\in\widehat{C}_{\alpha}^{\mathrm{ELCP}}(X_{n+1}) \mid X_{n+1}=x_0 \right) \\
        \notag & \leq & E\left\{ Q\left( 1-\alpha; (n+1)^{-1}\overset{n}{\underset{i=1}\sum}\left\{\delta_{F_{S\mid X}(S_i\mid X_i)}+\delta_\infty\right\} \right) + 2\varepsilon \right\} + \delta\\
        \notag & \leq & E\left\{ \tilde{U}_{(\lceil (n+1)(1-\alpha) \rceil)} \right\}+2\varepsilon+\delta\\
        \notag & = & \dfrac{\lceil (n+1)(1-\alpha) \rceil}{n+1}+2\varepsilon+\delta\,.
    \end{eqnarray}
    We finish the proof of this lemma.
\end{proof}

\subsection{Proof of Lemma \ref{lemma:delta_concentration2}}\label{Proof of lemma:delta_concentration2}

\begin{proof}
    We prove Lemma \ref{lemma:delta_concentration2} in two steps.
    
    \textbf{Step I}: In the first part of the proof, we show the result of Lemma \ref{lemma:delta_concentration2} without Assumption \ref{ass3} while temporarily assuming that $\hat{r}(x,s)$ is independent of $\mathcal{Z}_m^\prime$. This condition can be achieved by partitioning the large auxiliary dataset $\mathcal{Z}_m^\prime$ into two independent subsets: one for training $\hat{r}(x,s)$ and the other for computing the scores $S{j}^{\prime}$ and $\hat{\beta}{\omega,\hat{r}}(X_{i},S_{i})$.
    This assumption will be relaxed in the second part of the proof to the setting where $\hat{r}(x,s)$ is estimated using $\mathcal{Z}_n \cup \{Z_{n+1}\}$ and $\mathcal{Z}_{m}^\prime$, as assumed in the main paper.

    Denote $\Lambda(x_0)=E\{K(X_i,x_0;h)\}=E\{K(X_j^\prime,x_0;h)r(X_j^\prime,S_j^\prime)\}$ for any $x_0\in\mathcal{X}=[0,1]^{d}$. According to Lemma \ref{lemma:KernelFun}, we have $\underline{L}_3\leq h^{-d}\Lambda(x_0) \leq \overline{L}_3$ for any $x_0\in\mathcal{X}$.
    Define
    \begin{eqnarray}
        \notag & J_{i}^{(0,1)}=\sum_{j\neq i}K(X_j,X_i;h),  ~J_{i}^{(0,2)}= \overset{m}{\underset{j=1}\sum}K(X_j^\prime,X_i;h)r(X_j^\prime,S_j^\prime)\,, & \\
        \notag & J_{i}^{(0,3)}= \overset{m}{\underset{j=1}\sum}K(X_j^\prime,X_i;h)\hat{r}(X_j^\prime,S_j^\prime), ~ J_{i}^{(0,4)}= \sum_{j\neq i}K(X_j,X_i;h)\mathbb{1}(S_j\leq S_i)\,, & \\
        \notag & J_{i}^{(0,5)}= \overset{m}{\underset{j=1}\sum}K(X_j^{\prime},X_i;h)r(X_j^\prime,S_j^{\prime})\mathbb{1}(S_j^{\prime}\leq S_i), ~J_{i}^{(0,6)}= \overset{m}{\underset{j=1}\sum}K(X_j^{\prime},X_i;h)\hat{r}(X_j^\prime,S_j^{\prime})\mathbb{1}(S_j^{\prime}\leq S_i)\,. &
    \end{eqnarray}
    We can decompose $\hat{\beta}_{\omega,\hat{r}}(X_{i},S_{i})=\dfrac{K_0(0)+J_{i}^{(0,4)}+ \omega J_{i}^{(0,6)}}{K_0(0)+J_{i}^{(0,1)}+ \omega J_{i}^{(0,3)}}$ as 
    \begin{equation}
        \notag \hat{\beta}_{\omega,\hat{r}}(X_{i},S_{i}) = J_i^{(1)}\left( J_i^{(2)}+J_i^{(3)}H_i^{(1)} \right)\,,
    \end{equation}
    where 
    \begin{eqnarray}
        \notag & J_i^{(1)}=\dfrac{K_0(0)+J_{i}^{(0,1)}+ \omega J_{i}^{(0,2)}}{K_0(0)+J_{i}^{(0,1)}+ \omega J_{i}^{(0,3)}}, ~ J_i^{(2)} = \dfrac{ \omega\left(J_{i}^{(0,6)}-J_{i}^{(0,5)}\right)}{K_0(0)+J_{i}^{(0,1)}+ \omega J_{i}^{(0,2)}}\,, & \\
        \notag & J_i^{(3)} = \dfrac{K_0(0)+\Lambda(X_i)(n+\omega m)}{K_0(0)+J_{i}^{(0,1)}+ \omega J_{i}^{(0,2)}}, ~ H_i^{(1)}= \dfrac{K_0(0)+J_{i}^{(0,4)}+\omega J_{i}^{(0,5)}}{K_0(0)+\Lambda(X_i)(n+\omega m)}\,. &
    \end{eqnarray}

    We analyze $J_i^{(1)}, J_i^{(2)}$ and $J_i^{(3)}$ first. For $J_i^{(1)}$, simple algebra yields that
    \begin{eqnarray}
        \notag \left| J_i^{(1)}-1 \right| & \leq & \left\{ 1 + \dfrac{K_0(0)h^{-d}+h^{-d}J_{i}^{(0,1)}+\omega h^{-d}J_{i}^{(0,2)}}{\omega h^{-d} \overset{m}{\underset{j=1}\sum}K(X_j^\prime,X_i;h)\left|\hat{r}(X_j^\prime,S_j^\prime)-r(X_j^\prime,S_j^\prime)\right|} \right\}^{-1} \\
        & \leq & \dfrac{\omega h^{-d} \overset{m}{\underset{j=1}\sum}K(X_j^\prime,X_i;h)\left|\hat{r}(X_j^\prime,S_j^\prime)-r(X_j^\prime,S_j^\prime)\right|}{K_0(0)h^{-d}+h^{-d}J_{i}^{(0,1)}+\omega h^{-d}J_{i}^{(0,2)}}\,.\label{eq:Ji1 upper1}
    \end{eqnarray}

    For $0<\tau_1<\underline{L}_3$, denote 
    \begin{eqnarray}
        \notag A_i^{(1)}(\tau_1) = \left\{h^{-d}J_{i}^{(0,1)}+\omega h^{-d}J_{i}^{(0,2)} - \Lambda(X_i)(n+\omega m)h^{-d} > -\tau_1(n+\omega m) \right\}\,.
    \end{eqnarray}
    From Lemma \ref{lemma:var_1}, $\Var\{h^{-d}K(X_j,X_i;h)\mid X_i\}\leq \overline{L}_3h^{-d}$ for $j\neq i$ and
    \begin{eqnarray}
        \notag & & \Var\{\omega h^{-d}K(X_j^{\prime},X_i;h)r(X_j^{\prime},S_j^{\prime})\mid X_i\} \\
        \notag & \leq & \overline{L}_3\omega^2h^{-d}\sup_{x\in\mathcal{X}}E\left[\{r(X_j^{\prime},S_j^{\prime})\}^2\mid X_j^{\prime}=x\right]=\overline{L}_3V_r\omega^2h^{-d}\,,
    \end{eqnarray}
    where $V_r=\sup_{x\in\mathcal{X}}E\left[\{r(X_j^{\prime},S_j^{\prime})\}^2\mid X_j^{\prime}=x\right]\leq \overline{L}_2^2$ under Assumption \ref{ass2}. 
    
    Applying Bernstein's inequality in Lemma \ref{lemma:Bernstein} with $h^{-d}K(X_j,X_i;h)\leq K_0(0)h^{-d}$ and  $\omega h^{-d}K(X_j^{\prime},X_i;h)r(X_j^{\prime},S_j^{\prime})\leq \overline{L}_2\omega K_0(0) h^{-d}$, we obtain
    \begin{equation}\label{eq:bernstein_A1}
        1- \mathrm{pr}\left(A_i^{(1)}(\tau_1)\right) \leq \exp\left\{ -\dfrac{\tau_1^2(n+\omega m)^2h^d/2}{\overline{L}_3n+\overline{L}_3V_r\omega^2m+(1+\overline{L}_2\omega)K_0(0)\tau_1(n+\omega m)/3} \right\}\,.
    \end{equation}

    Denote the event $A(\gamma,r,k)=\{D_k(r,\hat{r})\leq\epsilon_k(\gamma;r)\}$ and we have $\mathrm{pr}(A(\gamma,r,k))\geq 1-\gamma$ for $\gamma\in(0,1)$. For $\tau_2>0$, define
    \begin{align}
        \notag A_i^{(2)}(\tau_2) = & \bigg\{ \omega h^{-d} \overset{m}{\underset{j=1}\sum}K(X_j^\prime,X_i;h)\left|\hat{r}(X_j^\prime,S_j^\prime)-r(X_j^\prime,S_j^\prime)\right| \\
        \notag & \quad - \omega mh^{-d}E\left\{ K(X_j^\prime,X_i;h)\left|\hat{r}(X_j^\prime,S_j^\prime)-r(X_j^\prime,S_j^\prime)\right|\mid X_i,\hat{r} \right\} < \tau_2(n+\omega m) \bigg\}\,.
    \end{align}
    Since 
    \begin{eqnarray}
        \notag & & E\left\{ \left|\hat{r}(X_j^\prime,S_j^\prime)-r(X_j^\prime,S_j^\prime)\right|^2\mid X_j^\prime=x,\hat{r} \right\} \leq \overline{L}_2^2\,,
    \end{eqnarray}
    and utilizing Lemma \ref{lemma:var_1} again, we have
    \begin{equation}
        \Var\left\{\omega h^{-d}K(X_j^\prime,X_i;h)\left|\hat{r}(X_j^\prime,S_j^\prime)-r(X_j^\prime,S_j^\prime)\right| \mid X_i,\hat{r}  \right\}
        \leq \overline{L}_2^2\overline{L}_3\omega^2h^{-d}\,.\label{eq:varRhat1}
        % \leq \overline{L}_3h^{-d}\{V_{1}(r)+V_{2}(\hat{r})\}\,.
    \end{equation}
    On the event $A(\gamma,r,k)$, we utilize Holder's inequality and Lemma \ref{lemma:KernelFun} to obtain
    \begin{align}
        \notag & \Var\left\{\omega h^{-d}K(X_j^\prime,X_i;h)\left|\hat{r}(X_j^\prime,S_j^\prime)-r(X_j^\prime,S_j^\prime)\right| \mid X_i,\hat{r}  \right\}\\
        \notag \leq & \omega^2h^{-2d}E\left[ \{K(X_j^\prime,X_i;h)\}^2\left|\hat{r}(X_j^\prime,S_j^\prime)-r(X_j^\prime,S_j^\prime)\right|^2\mid X_i,\hat{r} \right]\\
        \notag \leq & \omega^2h^{-2d} \left( E\left[\{K(X_j^\prime,X_i;h)\}^{2k/(k-2)}\mid X_i \right] \right)^{(k-2)/k} \left[E\left\{ \left|\hat{r}(X_j^\prime,S_j^\prime)-r(X_j^\prime,S_j^\prime)\right|^k\mid \hat{r} \right\} \right]^{2/k}\\
        \notag \leq & \omega^2h^{-(k+2)d/k}\left( E\left[h^{-d}\{K(X_j^\prime,X_i;h)\}^{2k/(k-2)}\mid X_i \right] \right)^{(k-2)/k}\{\epsilon_k(\gamma;r)\}^2\\
        \leq & (\overline{L}_3\vee 1)\omega^2h^{-(k+2)d/k}\{\epsilon_k(\gamma;r)\}^2\,.\label{eq:varRhat2}
    \end{align}
    Combine \eqref{eq:varRhat1} and \eqref{eq:varRhat2}, we get that on the event $A(\gamma,r,k)$,
    \begin{align*}
        &\Var\left\{\omega h^{-d}K(X_j^\prime,X_i;h)\left|\hat{r}(X_j^\prime,S_j^\prime)-r(X_j^\prime,S_j^\prime)\right| \mid X_i,\hat{r}  \right\}\\
        \leq&(\overline{L}_3\vee 1)\left[ h^{-2d/k}\{\epsilon_k(\gamma;r)\}^2\wedge \overline{L}_2^2 \right]\omega^2h^{-d}\,.
    \end{align*}

    The independence between $\hat{r}$ and $\mathcal{Z}_m^\prime$ ensures that within event $A(\gamma,r,k)$, the samples in $\mathcal{Z}_m^\prime$ retain their i.i.d. structure. Applying Lemma \ref{lemma:Bernstein} with the condition 
    \begin{gather*}
        \omega h^{-d}K(X_j^\prime,X_i;h)\left|\hat{r}(X_j^\prime,S_j^\prime)-r(X_j^\prime,S_j^\prime)\right|\leq \overline{L}_2\omega K_0(0) h^{-d}\,,
    \end{gather*}
    we arrive at
    \begin{align}
        \notag&1 - \mathrm{pr}\left(A_i^{(2)}(\tau_2)\mid A(\gamma,r,k)\right)\\
        \leq &\exp\left( -\dfrac{\tau_2^2(n+\omega m)^2h^d/2}{(\overline{L}_3\vee 1)\left[ h^{-2d/k}\{\epsilon_k(\gamma;r)\}^2\wedge\overline{L}_2^2 \right]\omega^2m+\overline{L}_2\omega K_0(0)\tau_2 (n+\omega m)/3} \right)\,.\label{eq:bernstein_A2}
    \end{align}

    Then, on the event $A_i^{(1)}(\tau_1)\cap A_i^{(2)}(\tau_2)\cap A(\gamma,r,k)$, we have
    \begin{eqnarray}
        \notag & & \dfrac{\omega h^{-d} \overset{m}{\underset{j=1}\sum}K(X_j^\prime,X_i;h)\left|\hat{r}(X_j^\prime,S_j^\prime)-r(X_j^\prime,S_j^\prime)\right|}{K_0(0)h^{-d}+h^{-d}J_{i}^{(0,1)}+\omega h^{-d}J_{i}^{(0,2)}} \\
        \notag & \leq & \dfrac{\omega m h^{-d}E\left\{ K(X_j^\prime,X_i;h)\left|\hat{r}(X_j^\prime,S_j^\prime)-r(X_j^\prime,S_j^\prime)\right|\mid X_i,\hat{r} \right\}+\tau_2(n+\omega m)}{K_0(0)h^{-d}+\Lambda(X_i)(n+\omega m)h^{-d}-\tau_1(n+\omega m)} \\
        \notag & \leq & \dfrac{\omega m h^{-d}E\left\{ K(X_j^\prime,X_i;h)\left|\hat{r}(X_j^\prime,S_j^\prime)-r(X_j^\prime,S_j^\prime)\right|\mid X_i,\hat{r} \right\}+\tau_2(n+\omega m)}{(\underline{L}_3-\tau_1)(n+\omega m)} \,.
    \end{eqnarray}
    By Holder's inequality and some simple algebra,
    \begin{eqnarray}
        \notag & & E\left\{ h^{-d}K(X_j^\prime,X_i;h)\left|\hat{r}(X_j^\prime,S_j^\prime)-r(X_j^\prime,S_j^\prime)\right|\mid X_i,\hat{r} \right\} \\
        \notag & \leq & \left(E\left[\left\{h^{-d}K(X_j^\prime,X_i;h)\right\}^{k/(k-1)}\mid X_i \right]\right)^{(k-1)/k}\left[ E\left\{\left|\hat{r}(X_j^\prime,S_j^\prime)-r(X_j^\prime,S_j^\prime)\right|^k\mid \hat{r}\right\} \right]^{1/k} \\
        \notag & = & h^{-d/k}\left(E\left[h^{-d}\left\{K(X_j^\prime,X_i;h)\right\}^{k/(k-1)}\mid X_i \right]\right)^{(k-1)/k}D_k(r,\hat{r}) \\
        %\notag & \leq & \overline{L}_3^{(k-1)/k}h^{-d/k}D_k(r,\hat{r}) \\
        \notag & \leq & (\overline{L}_3\vee 1)h^{-d/k}D_k(r,\hat{r})\,.
    \end{eqnarray}
    It follows that
    \begin{eqnarray}
        \notag & & E\left\{ h^{-d}K(X_j^\prime,X_i;h)\left|\hat{r}(X_j^\prime,S_j^\prime)-r(X_j^\prime,S_j^\prime)\right|\mid X_i,\hat{r} \right\}\leq (\overline{L}_3\vee 1)h^{-d/k}\epsilon_k(\gamma,r)\,
    \end{eqnarray}
     on the event $A(\gamma,r,k)$.
    
    To sum up, 
    \begin{eqnarray}
        \notag \left|J_i^{(1)}-1\right| & \leq & \dfrac{\omega m h^{-d}E\left\{ K(X_j^\prime,X_i;h)\left|\hat{r}(X_j^\prime,S_j^\prime)-r(X_j^\prime,S_j^\prime)\right|\mid X_i,\hat{r} \right\}+\tau_2(n+\omega m)}{(\underline{L}_3-\tau_1)(n+\omega m)} \\
        \notag & \leq & \dfrac{(\overline{L}_3\vee 1)\omega m}{(\underline{L}_3-\tau_1)(n+\omega m)}\dfrac{\epsilon_k(\gamma,r)}{h^{d/k}}+\dfrac{\tau_2}{\underline{L}_3-\tau_1}\,
    \end{eqnarray}
    on the event $A_i^{(1)}(\tau_1)\cap A_i^{(2)}(\tau_2)\cap A(\gamma,r,k)$.
    
    For $J_i^{(2)}$, regarding that
    \begin{equation*}
        \left|J_i^{(2)}\right| \leq \dfrac{\omega h^{-d} \overset{m}{\underset{j=1}\sum}K(X_j^\prime,X_i;h)\left|\hat{r}(X_j^\prime,S_j^\prime)-r(X_j^\prime,S_j^\prime)\right|}{K_0(0)h^{-d}+h^{-d}J_{i}^{(0,1)}+\omega h^{-d}J_{i}^{(0,2)}}  \,,
    \end{equation*}
    it immediately follows that
    \begin{equation}
        \notag \left|J_i^{(2)}\right|\leq \dfrac{(\overline{L}_3\vee 1)\omega m}{(\underline{L}_3-\tau_1)(n+\omega m)}\dfrac{\epsilon_k(\gamma,r)}{h^{d/k}}+\dfrac{\tau_2}{\underline{L}_3-\tau_1}\,
    \end{equation}
    on the event $A_i^{(1)}(\tau_1)\cap A_i^{(2)}(\tau_2)\cap A(\gamma,r,k)$. In addition,
    \begin{align*}
        &\mathrm{pr}\left(\left\{ \overset{n+1}{\underset{i=1}{\cap}}A_i^{(2)}(\tau_2)\right\}\cap A(\gamma,r,k)\right)\\
        =&\mathrm{pr}\left(\left\{ \overset{n+1}{\underset{i=1}{\cap}}A_i^{(2)}(\tau_2)\right\}\mid A(\gamma,r,k)\right)\mathrm{pr}\left( A(\gamma,r,k)\right)\\
        \geq&\left[ 1-\sum_{i=1}^{n+1}\left\{ 1-\mathrm{pr}\left(A_i^{(2)}(\tau_2)\mid A(\gamma,r,k)\right) \right\} \right]\mathrm{pr}\left( A(\gamma,r,k)\right)\\
        \geq&\left[ 1-\sum_{i=1}^{n+1}\left\{ 1-\mathrm{pr}\left(A_i^{(2)}(\tau_2)\mid A(\gamma,r,k)\right) \right\} \right]-\left\{ 1-\mathrm{pr}\left( A(\gamma,r,k)\right) \right\}\\
        \geq&1-\gamma-(n+1) \exp\left[ -\dfrac{\tau_2^2(n+\omega m)^2h^d/2}{(\overline{L}_3\vee 1)\left[ h^{-2d/k}\{\epsilon_k(\gamma;r)\}^2\wedge\overline{L}_2^2 \right]\omega^2m+\overline{L}_2\omega K_0(0)\tau_2 (n+\omega m)/3} \right]\,.
    \end{align*}

    For $J_i^{(3)}$, we can bound $\big|J_i^{(3)}-1\big|$ by
    \begin{equation*}
        \big|J_i^{(3)}-1\big|\leq\dfrac{\left|J_i^{(0,1)}+\omega J_i^{(0,2)}-\Lambda(X_i)(n+\omega m)\right|}{J_i^{(0,1)}+\omega J_i^{(0,2)}}\,.
    \end{equation*}
    Denote $A_i^{(3)}(\tau_3)=\big\{\big|h^{-d}J_i^{(0,1)}+\omega h^{-d}J_i^{(0,2)}-\Lambda(X_i)(n+\omega m)h^{-d}\big|<\tau_3\Lambda(X_i)(n+\omega m)h^{-d}\big\}$ for $\tau_3>0$.
    Then $A_i^{(3)}(\tau_3)\subset \big\{\big|J_i^{(3)}-1\big|< \tau_3/(1+\tau_3) \big\}$.
    Similar to the technique in deriving \eqref{eq:bernstein_A1} for $A_i^{(1)}(\tau_1)$, we have
    \begin{equation}\label{eq:bernstein_A3}
        1- \mathrm{pr}\left(A_i^{(3)}(\tau_3)\right) \leq \exp\left\{ -\dfrac{\tau_3^2\underline{L}_3^2(n+\omega m)^2h^d/2}{\overline{L}_3n+\overline{L}_3V_r\omega^2m+(1+\overline{L}_2\omega)\overline{L}_3K_0(0)\tau_3(n+\omega m)/3} \right\}\,.
    \end{equation}
    Thus, $\big|J_i^{(3)}-1\big|<\tau_3/(1+\tau_3)<\tau_3$ with probability over 
    \begin{equation}
        \notag 1 - \exp\left\{ -\dfrac{\tau_3^2\underline{L}_3^2(n+\omega m)^2h^d/2}{\overline{L}_3n+\overline{L}_3V_r\omega^2m+(1+\overline{L}_2\omega)\overline{L}_3K_0(0)\tau_3(n+\omega m)/3} \right\}\,.
    \end{equation}

    Now, we decompose $H_i^{(1)}=\dfrac{K_0(0)+J_{i}^{(0,4)}+\omega J_{i}^{(0,5)}}{K_0(0)+\Lambda(X_i)(n+\omega m)}$ as $H_i^{(1)} = D_i^{(1)}+D_i^{(2)}+D_i^{(3)}+F_{S\mid X}(S_i\mid X_i)$, where $D_i^{(1)}=H_i^{(1)}-H_i^{(2)}$, $D_i^{(2)}=H_i^{(2)}-H_i^{(3)}$, and $D_i^{(3)}=H_i^{(3)}-F_{S\mid X}(S_i\mid X_i)$ with
    \begin{eqnarray}
        \notag H_i^{(2)} & = & \dfrac{K_0(0)F_{S\mid X}(S_i\mid X_i)+J_i^{(0,7)}+\omega J_i^{(0,9)}}{K_0(0)+\Lambda(X_i)(n+\omega m)}\,,\\
        \notag H_i^{(3)} & = & \dfrac{K_0(0)+J_i^{(0,1)}+\omega J_i^{(0,8)}}{K_0(0)+\Lambda(X_i)(n+\omega m)}F_{S\mid X}(S_i\mid X_i)\,,
    \end{eqnarray}
    and
    \begin{gather}
        \notag J_i^{(0,7)}=\overset{}{\underset{j\neq i}\sum}K(X_j,X_i;h)F_{S\mid X}(S_i\mid X_j),~J_i^{(0,8)}=\overset{m}{\underset{j=1}\sum}K(X_j^\prime,X_i;h)\dfrac{f_X(X_j^\prime)}{g_X(X_j^\prime)},\\
        \notag J_i^{(0,9)} = \overset{m}{\underset{j=1}\sum}K(X_j^\prime,X_i;h)\dfrac{f_X(X_j^\prime)}{g_X(X_j^\prime)}F_{S\mid X}(S_i\mid X_j^\prime)\,.
    \end{gather}

    For $D_i^{(1)}$, define the event
    \begin{eqnarray}
        \notag A_i^{(4,1)}(\tau_{4,1}) = \Big\{ \left|h^{-d}J_{i}^{(0,4)}+\omega h^{-d}J_{i}^{(0,5)} -h^{-d}J_i^{(0,7)}-\omega h^{-d}J_i^{(0,9)} \right| < \tau_{4,1}(n+\omega m)\Big\}\,
    \end{eqnarray}
    for $\tau_{4,1}>0$.
    For $j\neq i$, straightforward calculations show that 
    \begin{eqnarray}
        \notag \Var\left\{ h^{-d}K(X_j,X_i;h)\mathbb{1}(S_j\leq S_i)\mid X_1,\ldots,X_{n+1}; X_1^\prime,\ldots,X_m^\prime; S_i \right\}\leq h^{-2d}\left\{ K(X_j,X_i;h) \right\}^2\,,
    \end{eqnarray}
    and for $j=1,\ldots,m$,
    \begin{eqnarray}
        \notag & & 
        \Var\left\{ \omega h^{-d}K(X_j^\prime,X_i;h)r(X_j^\prime,S_j^\prime)\mathbb{1}(S_j^\prime\leq S_i)\mid X_1,\ldots,X_{n+1}; X_1^\prime,\ldots,X_m^\prime; S_i \right\} \\
        \notag & = & \omega^2 h^{-2d}\left\{ K(X_j^{\prime},X_i;h) \right\}^2\Var\left\{ r(X_j^\prime,S_j^\prime)\mathbb{1}(S_j^\prime\leq S_i)\mid  X_1,\ldots,X_{n+1}; X_1^\prime,\ldots,X_m^\prime; S_i \right\}\,,
    \end{eqnarray}
    where 
    \begin{eqnarray}
        \notag & & \Var\left\{ r(X_j^\prime,S_j^\prime)\mathbb{1}(S_j^\prime\leq S_i)\mid  X_1,\ldots,X_{n+1}; X_1^\prime,\ldots,X_m^\prime; S_i \right\} \\
        \notag & \leq & E\left[ \left\{ r(X_j^\prime,S_j^\prime) \right\}^2\mid X_1,\ldots,X_{n+1}; X_1^\prime,\ldots,X_m^\prime; S_i \right] \leq V_r\,.
    \end{eqnarray}
    Thus, applying Bernstein inequality in Lemma \ref{lemma:Bernstein}, we obtain that
    \begin{eqnarray}\label{eq:bernstein_A41}
        \notag & & 1 - \mathrm{pr}\left( A_i^{(4,1)}(\tau_{4,1})\mid X_1,\ldots,X_{n+1}; X_1^\prime,\ldots,X_m^\prime; S_i \right) \\
        & \leq & 2\exp\left\{ -\dfrac{\tau_{4,1}^2(n+\omega m)^2h^d/2}{J_i^{(4,0)}+(1+\overline{L}_2\omega)K_0(0)\tau_{4,1}(n+\omega m)/3} \right\}\,,
    \end{eqnarray}
    where $J_i^{(4,0)}=h^{-d}\overset{}{\underset{j\neq i}\sum}\{K(X_j,X_i;h)\}^2+V_r\omega^2h^{-d}\overset{m}{\underset{j=1}\sum}\{K(X_j^\prime,X_i;h)\}^2$.
    Define for $\tau_{4,2}>0$,
    $$A_i^{(4,2)}(\tau_{4,2})=\left\{J_i^{(4,0)}< (1+\tau_{4,2})\overline{L}_3\big(n+V_r\omega^2m\big)\right\}\,.$$
    As $E\big(J_i^{(4,0)}\big)\leq \overline{L}_3\big(n+V_r\omega^2m\big)$ and $\Var\left[h^{-d}\{K(X_j,X_i;h)\}^2\mid X_i\right]\leq \overline{L}_3h^{-d}$, it follows that
    \begin{eqnarray}\label{eq:bernstein_A42}
        \notag & & 1 - \mathrm{pr}\left( A_i^{(4,2)}(\tau_{4,2}) \right) \\
        & \leq & \exp\left[ -\dfrac{\tau_{4,2}^2\overline{L}_3\big(n+V_r\omega^2m\big)^2h^d/2}{n+V_r^2\omega^4m+\big(1+V_r\omega^2\big)\{K_0(0)\}^2\tau_{4,2}\big(n+V_r\omega^2m\big)/3} \right]\,.
    \end{eqnarray}
    Thus, taking expectations on both sides of \eqref{eq:bernstein_A41}, we obtain
    \begin{eqnarray}
        \notag & & 1 - \mathrm{pr}\left( A_i^{(4,1)}(\tau_{4,1})\right) \\
        \notag & \leq & 2E\left[\exp\left\{ -\dfrac{\tau_{4,1}^2(n+\omega m)^2h^d/2}{J_i^{(4,0)}+(1+\overline{L}_2\omega)K_0(0)\tau_{4,1}(n+\omega m)/3} \right\}\right] \\
        \notag & \leq & 2E\left[\exp\left\{ -\dfrac{\tau_{4,1}^2(n+\omega m)^2h^d/2}{J_i^{(4,0)}+(1+\overline{L}_2\omega)K_0(0)\tau_{4,1}(n+\omega m)/3} \right\}\mathbb{1}\left( A_i^{(4,2)}(\tau_{4,2}) \right)\right] \\
        \notag & & + 1 - \mathrm{pr}\left( A_i^{(4,2)}(\tau_{4,2}) \right) \\
        \notag & \leq & \exp\left[ -\dfrac{\tau_{4,1}^2(n+\omega m)^2h^d/2}{(1+\tau_{4,2})\overline{L}_3\big(n+V_r\omega^2m\big)+(1+\overline{L}_2\omega)K_0(0)\tau_{4,1}(n+\omega m)/3} \right] \\
        \notag & & + \exp\left[ -\dfrac{\tau_{4,2}^2\overline{L}_3\big(n+V_r\omega^2m\big)^2h^d/2}{n+V_r^2\omega^4m+\big(1+V_r\omega^2\big)\{K_0(0)\}^2\tau_{4,2}\big(n+V_r\omega^2m\big)/3} \right]\,.
    \end{eqnarray}
    
    Use the fact that
    \begin{eqnarray}
    \notag & & \left|D_i^{(1)}\right|\\
        \notag & \leq &  \dfrac{K_0(0)h^{-d}\left|1-F_{S\mid X}(S_i|X_i)\right|+\left|h^{-d}J_{i}^{(0,4)}+\omega h^{-d}J_{i}^{(0,5)} -h^{-d}J_i^{(0,7)}-\omega h^{-d}J_i^{(0,9)} \right|}{\Lambda(X_i)(n+\omega m)h^{-d}} \\
        \notag & \leq & \dfrac{K_0(0)}{\underline{L}_3(n+\omega m)h^{d}} + \dfrac{\left|h^{-d}J_{i}^{(0,4)}+\omega h^{-d}J_{i}^{(0,5)} -h^{-d}J_i^{(0,7)}-\omega h^{-d}J_i^{(0,9)} \right|}{\underline{L}_3(n+\omega m)}\,,
    \end{eqnarray}
    we conclude that 
    $\left|D_i^{(1)}\right|\leq \dfrac{K_0(0)}{\underline{L}_3(n+\omega m)h^{d}} + \dfrac{\tau_{4,1}}{\underline{L}_3}$ on $A_i^{(4,1)}(\tau_{4,1})$ with probability over
    \begin{eqnarray}
        \notag 1&- & \exp\left[ -\dfrac{\tau_{4,1}^2(n+\omega m)^2h^d/2}{(1+\tau_{4,2})\overline{L}_3\big(n+V_r\omega^2m\big)+(1+\overline{L}_2\omega)K_0(0)\tau_{4,1}(n+\omega m)/3} \right]\, \\
        \notag &- & \exp\left[ -\dfrac{\tau_{4,2}^2\overline{L}_3\big(n+V_r\omega^2m\big)^2h^d/2}{n+V_r^2\omega^4m+\big(1+V_r\omega^2\big)\{K_0(0)\}^2\tau_{4,2}\big(n+V_r\omega^2m\big)/3} \right]\,.
    \end{eqnarray}

    For $D_i^{(2)}$, denote $B_l(x_0)=\left\{ x\in\mathcal{X}:\|x-x_0\|\leq lh \right\}$ for any positive $l\in\mathbb{R}$.
    Assumption \ref{ass2} indicates that for any $l_0>1$,
    \begin{eqnarray}
        \notag & & \left|D_i^{(2)}\right|\\
        \notag  & \leq & \dfrac{L\overset{}{\underset{j\neq i}\sum}K(X_j,X_i;h)\|X_j-X_i\|+\omega L\overset{m}{\underset{j=1}\sum} K(X_j^\prime,X_i;h)\dfrac{f_X(X_j^\prime)}{g_X(X_j^\prime)}\|X_j^\prime-X_i\|}{K_0(0)+\Lambda(X_i)(n+\omega m)} \\
        \notag & = & L\dfrac{\overset{}{\underset{X_j\in B_{l_0}(X_i), j\neq i}\sum}K(X_j,X_i;h)\|X_j-X_i\|+\omega\overset{}{\underset{X_j^\prime\in B_{l_0}(X_i)}\sum} K(X_j^\prime,X_i;h)\dfrac{f_X(X_j^\prime)}{g_X(X_j^\prime)}\|X_j^\prime-X_i\|}{K_0(0)+\Lambda(X_i)(n+\omega m)}\\
        \notag & & + L\dfrac{\overset{}{\underset{X_j\notin B_{l_0}(X_i)}\sum}K(X_j,X_i;h)\|X_j-X_i\|+\omega\overset{}{\underset{X_j^\prime\notin B_{l_0}(X_i)}\sum} K(X_j^\prime,X_i;h)\dfrac{f_X(X_j^\prime)}{g_X(X_j^\prime)}\|X_j^\prime-X_i\|}{K_0(0)+\Lambda(X_i)(n+\omega m)}\\
        \notag & \leq & Ll_0h\dfrac{\overset{}{\underset{j\neq i}\sum}K(X_j,X_i;h)+\omega\overset{m}{\underset{j=1}\sum} K(X_j^\prime,X_i;h)\dfrac{f_X(X_j^\prime)}{g_X(X_j^\prime)}}{\Lambda(X_i)(n+\omega m)}+\dfrac{L\overline{L}_1}{\underline{L}_1\underline{L}_3}l_0K_0(l_0)h^{-d+1},
    \end{eqnarray}
    where the second item in the last inequality is derived as follows due to Assumption \ref{ass2}:
    \begin{eqnarray}
        \notag & & \dfrac{\overset{}{\underset{X_j\notin B_{l_0}(X_i)}\sum}K(X_j,X_i;h)\|X_j-X_i\|+\omega\overset{}{\underset{X_j^\prime\notin B_{l_0}(X_i)}\sum} K(X_j^\prime,X_i;h)\dfrac{f_X(X_j^\prime)}{g_X(X_j^\prime)}\|X_j^\prime-X_i\|}{K_0(0)+\Lambda(X_i)(n+ m)}\\
        \notag & \leq & l_0K_0(l_0)h\dfrac{\overset{}{\underset{X_j\notin B_{l_0}(X_i)}\sum}1+\omega\overset{}{\underset{X_j^\prime\notin B_{l_0}(X_i)}\sum} f_X(X_j^\prime)/g_X(X_j^\prime)}{\Lambda(X_i)(n+\omega m)}\\
        \notag & \leq & l_0K_0(l_0)h\dfrac{n+\omega m\overline{L}_1/\underline{L}_1}{\Lambda(X_i)(n+\omega m)}\\
        \notag & \leq & \dfrac{\overline{L}_1}{\underline{L}_1\underline{L}_3}l_0K_0(l_0)h^{-d+1}\,.
    \end{eqnarray}
    
    For $\tau_5>0$, define
    \begin{equation*}
        A_i^{(5)}(\tau_5)=\left\{ \left|h^{-d}J_i^{(0,1)}+\omega h^{-d}J_i^{(0,8)}-\Lambda(X_i)(n+\omega m)h^{-d}\right|<\tau_5\Lambda(X_i)(n+\omega m)h^{-d} \right\}\,.
    \end{equation*}
    According to
    \begin{equation}
        \notag \left\{ \dfrac{f_X(X_j^\prime)}{g_X(X_j^\prime)} \right\}^2=\left[ E\left\{ r(X_{\ell}^\prime,S_{\ell}^\prime)\mid X_{\ell}^\prime=X_j^\prime \right\} \right]^2\leq E\left[ \left\{ r(X_{\ell}^\prime,S_{\ell}^\prime) \right\}^2\mid X_{\ell}^\prime=X_j^\prime \right]\leq V_r\,,
    \end{equation}
    we obtain that
    \begin{equation}
        \notag 1 - \mathrm{pr}\left( A_i^{(5)}(\tau_5) \right)\leq \exp\left[ -\dfrac{\underline{L}_3^2\tau_5^2(n+\omega m)^2h^d/2}{\overline{L}_3n+\overline{L}_3V_r\omega^2m+(1+ \underline{L}_1^{-1}\overline{L}_1\omega)\overline{L}_3K_0(0)\tau_5(n+\omega m)/3}\right].
    \end{equation}
    Take $l_0=K_0^{-1}(h^d)$, we arrive at 
    \begin{equation*}
        \left|D_i^{(2)}\right|\leq \left\{L(1+\tau_5)+\dfrac{L\overline{L}_1}{\underline{L}_1\underline{L}_3}\right\}K_0^{-1}(h^d)h
    \end{equation*}
    on $A_i^{(5)}(\tau_5)$.

    For $D_i^{(3)}$, it is obvious that $\left|D_i^{(3)}\right|\leq\tau_5$ on $A^{(5)}(\tau_5)$. 
    
    Finally, on the event
    \begin{equation*}
        \{\cap_{i=1}^{n+1}A_i^{(1)}\}\cap\{\cap_{i=1}^{n+1}A_i^{(2)}\}\cap\{\cap_{i=1}^{n+1}A_i^{(3)}\}\cap\{\cap_{i=1}^{n+1}A_i^{(4,1)}\}\cap\{\cap_{i=1}^{n+1}A_i^{(5)}\}\cap A(\gamma,r,k)\,,
    \end{equation*}
    we have
    \begin{eqnarray}
        \notag & & \underset{1\leq i\leq n+1}{\sup}\left|\hat{\beta}_{\omega,\hat{r}}(X_{i},S_{i})-F_{S\mid X}(S_i\mid X_i)\right| \\
        \notag & \leq &  \underset{1\leq i\leq n+1}{\sup}\left|J_i^{(1)}J_i^{(2)}\right| + \underset{1\leq i\leq n+1}{\sup}\left|J_i^{(1)}J_i^{(3)}D_i^{(1)}\right| + \underset{1\leq i\leq n+1}{\sup}\left|J_i^{(1)}J_i^{(3)}D_i^{(2)}\right| \\
        \notag & & + \underset{1\leq i\leq n+1}{\sup}\left|J_i^{(1)}J_i^{(3)}D_i^{(3)}\right| + \underset{1\leq i\leq n+1}{\sup}\left|J_i^{(1)}J_i^{(3)}F_{S\mid X}(S_i\mid X_i)-F_{S\mid X}(S_i\mid X_i)\right| \\
        \notag & \leq & \left|\left\{1+\dfrac{(\overline{L}_3\vee 1)\omega m}{(\underline{L}_3-\tau_1)(n+\omega m)}\dfrac{\epsilon_k(\gamma,r)}{h^{d/k}}+\dfrac{\tau_2}{\underline{L}_3-\tau_1}\right\}\left\{\dfrac{(\overline{L}_3\vee 1)\omega m}{(\underline{L}_3-\tau_1)(n+\omega m)}\dfrac{\epsilon_k(\gamma,r)}{h^{d/k}}+\dfrac{\tau_2}{\underline{L}_3-\tau_1}\right\}\right| \\
        \notag & & + \left|(1+\tau_3)\left\{\dfrac{(\overline{L}_3\vee 1)\omega m}{(\underline{L}_3-\tau_1)(n+\omega m)}\dfrac{\epsilon_k(\gamma,r)}{h^{d/k}}+\dfrac{\tau_2}{\underline{L}_3-\tau_1}\right\}\left\{\dfrac{K_0(0)}{\underline{L}_3(n+\omega m)h^{d}} + \dfrac{\tau_{4,1}}{\underline{L}_3}\right\}\right| \\
        \notag & & + \left|(1+\tau_3)\left\{\dfrac{(\overline{L}_3\vee 1)\omega m}{(\underline{L}_3-\tau_1)(n+\omega m)}\dfrac{\epsilon_k(\gamma,r)}{h^{d/k}}+\dfrac{\tau_2}{\underline{L}_3-\tau_1}\right\}\left[\left\{L(1+\tau_5)+\dfrac{L\overline{L}_1}{\underline{L}_1\underline{L}_3}\right\}K_0^{-1}(h^d)h\right]\right| \\
        \notag & & + \left|(1+\tau_3)\tau_5\left\{\dfrac{(\overline{L}_3\vee 1)\omega m}{(\underline{L}_3-\tau_1)(n+\omega m)}\dfrac{\epsilon_k(\gamma,r)}{h^{d/k}}+\dfrac{\tau_2}{\underline{L}_3-\tau_1}\right\}\right| \\
        \notag & &  + \left|(1+\tau_3)\left\{\dfrac{(\overline{L}_3\vee 1)\omega m}{(\underline{L}_3-\tau_1)(n+\omega m)}\dfrac{\epsilon_k(\gamma,r)}{h^{d/k}}+\dfrac{\tau_2}{\underline{L}_3-\tau_1}\right\}-1\right|\,,
    \end{eqnarray}
    where
    \begin{eqnarray}
        \notag & & \mathrm{pr}\left(\{\cap_{i=1}^{n+1}A_i^{(1)}\}\cap\{\cap_{i=1}^{n+1}A_i^{(2)}\}\cap\{\cap_{i=1}^{n+1}A_i^{(3)}\}\cap\{\cap_{i=1}^{n+1}A_i^{(4,1)}\}\cap\{\cap_{i=1}^{n+1}A_i^{(5)}\}\cap A(\gamma,r,k)\right) \\
        \notag & \geq & 1 - (n+1) \exp\left\{ -\dfrac{\tau_1^2(n+\omega m)^2h^d/2}{\overline{L}_3n+\overline{L}_3V_r\omega^2m+(1+\overline{L}_2\omega)K_0(0)\tau_1(n+\omega m)/3} \right\} \\
        \notag & & - (n+1)\exp\left[ -\dfrac{\tau_2^2(n+\omega m)^2h^d/2}{(\overline{L}_3\vee 1)\left[ h^{-2d/k}\{\epsilon_k(\gamma;r)\}^2\wedge\overline{L}_2^2 \right]\omega^2m+\overline{L}_2\omega K_0(0)\tau_2 (n+\omega m)/3} \right] \\
        \notag & & - (n+1)\exp\left\{ -\dfrac{\tau_3^2\underline{L}_3^2(n+\omega m)^2h^d/2}{\overline{L}_3n+\overline{L}_3V_r\omega^2m+(1+\overline{L}_2\omega)\overline{L}_3K_0(0)\tau_3(n+\omega m)/3} \right\} \\
        \notag & & - (n+1)\exp\left[ -\dfrac{\tau_{4,1}^2(n+\omega m)^2h^d/2}{(1+\tau_{4,2})\overline{L}_3\big(n+V_r\omega^2m\big)+(1+\overline{L}_2\omega)K_0(0)\tau_{4,1}(n+\omega m)/3} \right] \\
        \notag & & - (n+1)\exp\left[ -\dfrac{\tau_{4,2}^2\overline{L}_3\big(n+V_r\omega^2m\big)^2h^d/2}{n+V_r^2\omega^4m+\big(1+V_r\omega^2\big)\{K_0(0)\}^2\tau_{4,2}\big(n+\omega^2V_rm\big)/3} \right] \\
        \notag & & -(n+1)\exp\left[ -\dfrac{\underline{L}_3^2\tau_5^2(n+\omega m)^2h^d/2}{\overline{L}_3n+\overline{L}_3V_r\omega^2m+(1+ \underline{L}_1^{-1}\overline{L}_1\omega)\overline{L}_3K_0(0)\tau_5(n+\omega m)/3}\right]-\gamma\,.
    \end{eqnarray}
    
    With properly chosen $\tau_1,\tau_2,\tau_3,\tau_{4,1},\tau_{4,2}$ and $\tau_5$, there exists some $\tau>0$, which is bounded by a universal positive constant, such that
    \begin{eqnarray}
        \notag \underset{1\leq i\leq n+1}{\sup}\left|\hat{\beta}_{\hat{r}}(X_{i},S_{i})-F_{S\mid X}(S_i\mid X_i)\right| \leq C_0\left\{\dfrac{\omega m\epsilon_k(\gamma,r)}{(n+\omega m)h^{d/k}} + K_0^{-1}(h^{d})h + \tau\right\}\,.
    \end{eqnarray}
    for some positive constant $C_0$,
    with probability larger than
    \begin{eqnarray}
        \notag 
        1 - C_1n\exp\left\{ -C_2\tau^2(n+\omega m)h^d \right\}-\gamma\,,
    \end{eqnarray}
    for some positive constants $C_1$ and $C_2$.

    Solving the $\tau$ from $C_1n\exp\left\{ -C_2\tau^2(n+\omega m)h^d \right\}=\delta$, we get with probability at least $1-\delta-\gamma$,
    \begin{eqnarray}
        \notag \underset{1\leq i\leq n+1}{\sup}\Delta_{i} & \leq & C_0\Bigg\{\dfrac{\omega m\epsilon_k(\gamma,r)}{(n+\omega m)h^{d/k}} + K_0^{-1}(h^{d})h + \dfrac{1}{(n+\omega m)^{1/2}h^{d/2}}\log^{1/2}\left(\dfrac{n}{\delta}\right) \Bigg\}\,.
    \end{eqnarray}

    \textbf{Step II}: In the second part of the proof, we establish Lemma~\ref{lemma:delta_concentration2} under the setting where $\hat{r}(x,s)$ is estimated using $\mathcal{Z}_n \cup \{Z_{n+1}\}$ and $\mathcal{Z}_{m}^\prime$.
    Since only the assumption about the dependency between $\hat{r}(x,s)$ and auxiliary data $\mathcal{Z}_{m}^{\prime}$ is modified here, only events related to $J_i^{(0,3)}$ and $J_i^{(0,6)}$ are affected. 
    
    According to the proof in \textbf{Step I}, we have the following decomposition:
    \begin{gather*}
        \hat{\beta}_{\omega,\hat{r}}(X_{i},S_{i}) = J_i^{(1)}\left( J_i^{(2)}+J_i^{(3)}H_i^{(1)} \right)\,,
    \end{gather*}
    where only $J_i^{(1)}$ and $J_i^{(2)}$ involves $J_i^{(0,3)}$ and $J_i^{(0,6)}$. 
    Since $J_i^{(2)}$ can be treated in the same manner as $J_i^{(1)}$ following the derivation in \textbf{Step I}, it suffices to reconsider $J_i^{(1)}$ here.
    
    On the event $A_i^{(1)}(\tau_1)$, based on \eqref{eq:Ji1 upper1}, Lemma \ref{lemma:KernelFun} and Assumption \ref{ass3}, we have
    \begin{align}
        \notag &\left|J_i^{(1)}-1\right|\\
        \notag \leq&\dfrac{\omega h^{-d} \overset{m}{\underset{j=1}\sum}K(X_j^\prime,X_i;h)\left|\hat{r}(X_j^\prime,S_j^\prime)-r(X_j^\prime,S_j^\prime)\right|}{K_0(0)h^{-d}+h^{-d}J_{i}^{(0,1)}+\omega h^{-d}J_{i}^{(0,2)}}\\
        \notag \leq&\dfrac{\omega h^{-d} \overset{m}{\underset{j=1}\sum}K(X_j^\prime,X_i;h)\left|\hat{r}(X_j^\prime,S_j^\prime)-r(X_j^\prime,S_j^\prime)\right|}{(n+\omega m)(\underline{L}_3-\tau_1)}\\
        \notag \leq&\dfrac{\omega h^{-d} \overset{m}{\underset{j=1}\sum}K(X_j^\prime,X_i;h)\left\{ \left|\hat{r}(X_j^\prime,S_j^\prime)-\hat{r}_j^\prime(X_j^\prime,S_j^\prime)\right|+\left|\hat{r}_j^\prime(X_j^\prime,S_j^\prime)-r(X_j^\prime,S_j^\prime)\right| \right\}}{(n+\omega m)(\underline{L}_3-\tau_1)}\\
        \notag \leq&\dfrac{ C_r \omega h^{-d} \overset{m}{\underset{j=1}\sum}K(X_j^\prime,X_i;h)}{m(n+\omega m)(\underline{L}_3-\tau_1)}+\dfrac{\omega h^{-d} \overset{m}{\underset{j=1}\sum}K(X_j^\prime,X_i;h)\left|\hat{r}_j^\prime(X_j^\prime,S_j^\prime)-r(X_j^\prime,S_j^\prime)\right|}{(n+\omega m)(\underline{L}_3-\tau_1)}\,.
    \end{align}
    For $\tau_6>0$, consider the event
    \begin{gather*}
        A_i^{(6)}(\tau_6)=\left\{ \omega h^{-d} \overset{m}{\underset{j=1}\sum}K(X_j^\prime,X_i;h)-\omega m\overline{L}_3<\tau_6(n+\omega m) \right\}\,.
    \end{gather*}
    Due to Lemma \ref{lemma:KernelFun}, it can be shown that
    \begin{gather*}
       \left\{ \omega h^{-d} \overset{m}{\underset{j=1}\sum}K(X_j^\prime,X_i;h)-\omega h^{-d} \overset{m}{\underset{j=1}\sum}E\{K(X_j^\prime,X_i;h)\}<\tau_6(n+\omega m) \right\}\subset A_i^{(6)}(\tau_6)\,. 
    \end{gather*}
    Similar to the calculation of $\mathrm{pr}\big( A_i^{(1)}(\tau_1)\big)$ in \textbf{Step I}, we obtain
    \begin{gather*}
        1-\mathrm{pr}\left( A_i^{(6)}(\tau_6) \right)\leq\exp\left\{ -\dfrac{\tau_6^2(n+\omega m)^2h^d/2}{\overline{L}_3\omega^2m+K_0(0)\tau_6(n+\omega m)/3} \right\}\,.
    \end{gather*}
    Therefore, on the event $A_i^{(6)}(\tau_6)$, it holds that
    \begin{align*}
        \left|J_i^{(1)}-1\right|\leq \dfrac{ C_r (\overline{L}_3+\tau_6)\omega}{(\underline{L}_3-\tau_1)m}+\dfrac{\omega h^{-d} \overset{m}{\underset{j=1}\sum}K(X_j^\prime,X_i;h)\left|\hat{r}_j^\prime(X_j^\prime,S_j^\prime)-r(X_j^\prime,S_j^\prime)\right|}{(n+\omega m)(\underline{L}_3-\tau_1)}\,.
    \end{align*}
    
    We modify the definition of $A_i^{(2)}(\tau_2)$ in \textbf{Step I} as follows:
    \begin{align*}
        A_i^{(2)}(\tau_2) =&  \bigg\{ \omega h^{-d} \overset{m}{\underset{j=1}\sum}K(X_j^\prime,X_i;h)\left|\hat{r}_j^\prime(X_j^\prime,S_j^\prime)-r(X_j^\prime,S_j^\prime)\right| \\
        &~- \omega mh^{-d}E\left\{ K(X_j^\prime,X_i;h)\left|\hat{r}_j^\prime(X_j^\prime,S_j^\prime)-r(X_j^\prime,S_j^\prime)\right|\mid \mathcal{Z}_n\cup\{Z_{n+1}\} \right\} \\
        &~< \tau_2(n+\omega m) \bigg\}\,.
    \end{align*}
    Consider $(X^\prime,S^\prime)$ as an i.i.d.~copy of $(X_1^\prime,S_1^\prime)$. Regarding the fact that $\hat{r}_j^{\prime}(x,s)$ is independent of $Z_j^{\prime}$, we have that
    \begin{align*}
        &h^{-d}E\left\{ K(X_j^\prime,X_i;h)\left|\hat{r}_j^\prime(X_j^\prime,S_j^\prime)-r(X_j^\prime,S_j^\prime)\right|\mid \mathcal{Z}_n\cup\{Z_{n+1}\} \right\}\\
        =&h^{-d}E\left\{ K(X^\prime,X_i;h)\left|\hat{r}_j^\prime(X^\prime,S^\prime)-r(X^\prime,S^\prime)\right|\mid \mathcal{Z}_n\cup\{Z_{n+1}\} \right\}\,.
    \end{align*}
    Recall that $A(\gamma,r,k)=\{D_k(r,\hat{r})\leq\epsilon_k(\gamma;r)\}$. Since $1=\mathbb{1}(A(\gamma,r,k))+\mathbb{1}(A^c(\gamma,r,k))$, where $A^c(\gamma,r,k)$ denotes the complement of $A(\gamma,r,k)$, we can derive that
    \begin{align}
        \notag &h^{-d}E\left\{ K(X^\prime,X_i;h)\left|\hat{r}_j^\prime(X^\prime,S^\prime)-r(X^\prime,S^\prime)\right|\mid \mathcal{Z}_n\cup\{Z_{n+1}\} \right\}\\
        \notag =&h^{-d}E\left\{ K(X^\prime,X_i;h)\left|\hat{r}_j^\prime(X^\prime,S^\prime)-r(X^\prime,S^\prime)\right|\mathbb{1}(A(\gamma,r,k))\mid \mathcal{Z}_n\cup\{Z_{n+1}\} \right\}\\
        \notag &+h^{-d}E\left\{ K(X^\prime,X_i;h)\left|\hat{r}_j^\prime(X^\prime,S^\prime)-r(X^\prime,S^\prime)\right|\mathbb{1}(A^c(\gamma,r,k))\mid \mathcal{Z}_n\cup\{Z_{n+1}\} \right\}\\
        \leq&h^{-d}E\left\{ K(X^\prime,X_i;h)\left|\hat{r}_j^\prime(X^\prime,S^\prime)-r(X^\prime,S^\prime)\right|\mathbb{1}(A(\gamma,r,k))\mid \mathcal{Z}_n\cup\{Z_{n+1}\} \right\}\label{eq:Ai2 new exp 1}\\
        &+\overline{L}_2h^{-d}E\left\{ K(X^\prime,X_i;h)\mathbb{1}(A^c(\gamma,r,k))\mid \mathcal{Z}_n\cup\{Z_{n+1}\} \right\}\label{eq:Ai2 new exp 2}\,.
    \end{align}
    
    As $X^\prime$ is independent of both $\mathcal{Z}_n\cup\{Z_{n+1}\}$ and $\mathcal{Z}_m^\prime$, we can bound \eqref{eq:Ai2 new exp 2} as
    \begin{align*}
        &\overline{L}_2h^{-d}E\left\{ K(X^\prime,X_i;h)\mathbb{1}(A^c(\gamma,r,k))\mid \mathcal{Z}_n\cup\{Z_{n+1}\} \right\}\\
        =&\overline{L}_2h^{-d}E\left\{ K(X^\prime,X_i;h)\mid \mathcal{Z}_n\cup\{Z_{n+1}\} \right\}E\left\{ \mathbb{1}(A^c(\gamma,r,k))\mid \mathcal{Z}_n\cup\{Z_{n+1}\} \right\}\\
        \leq&\overline{L}_2\overline{L}_3E\left\{ \mathbb{1}(A^c(\gamma,r,k))\mid \mathcal{Z}_n\cup\{Z_{n+1}\} \right\}\,.
    \end{align*}
    By the definition of $A(\gamma,r,k)$, we know $\mathrm{pr}\{A(\gamma,r,k)\} \ge 1-\gamma$, and hence
    \begin{gather*}
        E\left[E\left\{ \mathbb{1}(A^c(\gamma,r,k))\mid \mathcal{Z}_n\cup\{Z_{n+1}\} \right\}\right]=\mathrm{pr}(A^c(\gamma,r,k))\leq\gamma\,.
    \end{gather*}
    Define the event $A_n(\gamma,r,k)=\{E\left\{ \mathbb{1}(A^c(\gamma,r,k))\mid \mathcal{Z}_n\cup\{Z_{n+1}\} \right\}\leq\gamma^{1/2}\}$. By Markov’s inequality,
    \begin{align*}
        \mathrm{pr}\big(A_n^c(\gamma,r,k)\big)
    = \mathrm{pr}\Big( E\big\{\mathbb{1}\big(A^c(\gamma,r,k)\big)\mid \mathcal{Z}_n\cup\{Z_{n+1}\}\big\} > \gamma^{1/2}\Big)
    \le \frac{\gamma}{\gamma^{1/2}} = \gamma^{1/2}.
    \end{align*}
    Therefore, 
    \begin{gather*}
        \mathrm{pr}(A_n(\gamma,r,k))\leq\gamma^{1/2}\,.
    \end{gather*}
    
    Using the iterated expectation theorem, we can analyze \eqref{eq:Ai2 new exp 1} as
    \begin{align*}
        &h^{-d}E\left\{ K(X^\prime,X_i;h)\left|\hat{r}_j^\prime(X^\prime,S^\prime)-r(X^\prime,S^\prime)\right|\mathbb{1}(A(\gamma,r,k))\mid \mathcal{Z}_n\cup\{Z_{n+1}\} \right\}\\
        =&h^{-d}E\left[ E\left\{ K(X^\prime,X_i;h)\left|\hat{r}_j^\prime(X^\prime,S^\prime)-r(X^\prime,S^\prime)\right|\mathbb{1}(A(\gamma,r,k))\mid \mathcal{Z}_n\cup\{Z_{n+1}\},\mathcal{Z}_m^\prime \right\}\mid\mathcal{Z}_n\cup\{Z_{n+1}\} \right]\,,
    \end{align*}
    where the inner expectation is taken over $(X^\prime,S^\prime)$. By Holder's inequality, Lemma \ref{lemma:KernelFun} and some simple algebra, we can derive
    \begin{align*}
        &h^{-d}E\left\{ K(X^\prime,X_i;h)\left|\hat{r}_j^\prime(X^\prime,S^\prime)-r(X^\prime,S^\prime)\right|\mathbb{1}(A(\gamma,r,k))\mid \mathcal{Z}_n\cup\{Z_{n+1}\},\mathcal{Z}_m^\prime \right\}\\
        \leq&h^{-d}E\left\{ K(X^\prime,X_i;h)\left|\hat{r}_j^\prime(X^\prime,S^\prime)-\hat{r}(X^\prime,S^\prime)\right|\mathbb{1}(A(\gamma,r,k))\mid \mathcal{Z}_n\cup\{Z_{n+1}\},\mathcal{Z}_m^\prime \right\}\\
        &+h^{-d}E\left\{ K(X^\prime,X_i;h)\left|\hat{r}(X^\prime,S^\prime)-r(X^\prime,S^\prime)\right|\mathbb{1}(A(\gamma,r,k))\mid \mathcal{Z}_n\cup\{Z_{n+1}\},\mathcal{Z}_m^\prime \right\}\\
        \leq& C_r h^{-d}E\left\{ K(X^\prime,X_i;h)\mid X_i \right\}m^{-1}\\
        &+\left( E\left[ \left\{ h^{-d}K(X^\prime,X_i;h) \right\}^{k/(k-1)}\mid X_i \right] \right)^{(k-1)/k}\\
        &~~~~~~\times\mathbb{1}(A(\gamma,r,k))\left[ E\left\{ \left|\hat{r}(X^\prime,S^\prime)-r(X^\prime,S^\prime)\right|^k\mid \mathcal{Z}_n\cup\{Z_{n+1}\},\mathcal{Z}_m^\prime \right\} \right]^{1/k}\\
        \leq& C_r \overline{L}_3m^{-1}+(\overline{L}_3\vee 1)h^{-d/k}D_k(r,\hat{r})\mathbb{1}(A(\gamma,r,k))\\
        \leq& C_r \overline{L}_3m^{-1}+(\overline{L}_3\vee 1)h^{-d/k}\epsilon_k(\gamma;r)\,.
    \end{align*}
    Thus, on the event $A_n(\gamma,r,k)$, combining the results on \eqref{eq:Ai2 new exp 1} and \eqref{eq:Ai2 new exp 2} yields
    \begin{align*}
        &h^{-d}E\left\{ K(X^\prime,X_i;h)\left|\hat{r}_j^\prime(X^\prime,S^\prime)-r(X^\prime,S^\prime)\right|\mid \mathcal{Z}_n\cup\{Z_{n+1}\} \right\}\\
        \leq&(\overline{L}_2+ C_r )(\overline{L}_3\vee 1)\left\{ \gamma^{1/2}+m^{-1}+h^{-d/k}\epsilon_k(\gamma;r) \right\}\,.
    \end{align*}
    Consequently, on the event $A_i^{(2)}(\tau_2)\cap A_i^{(6)}(\tau_6)\cap A(\gamma,r,k)\cap A_n(\gamma,r,k)$, we have
    \begin{align}
        \notag &\left|J_i^{(1)}-1\right|\\
        \notag \leq&\dfrac{\omega C_r (\overline{L}_3+\tau_6)}{(\underline{L}_3-\tau_1)m}+\dfrac{\omega m(\overline{L}_2+ C_r )(\overline{L}_3\vee 1)\left\{ \gamma^{1/2}+m^{-1}+h^{-d/k}\epsilon_k(\gamma;r) \right\}+\tau_2(n+\omega m)}{(n+\omega m)(\underline{L}_3-\tau_1)}\\
        \leq&\dfrac{(\overline{L}_2+ C_r )(\overline{L}_3\vee 1+\tau_6)}{(\underline{L}_3-\tau_1)}(m^{-1}+\tau_2+\gamma^{1/2})+\dfrac{(\overline{L}_2+ C_r )(\overline{L}_3\vee 1)}{(\underline{L}_3-\tau_1)} \dfrac{\omega m\epsilon_k(\gamma;r)}{(n+\omega m)h^{d/k}}\,.\label{eq:Ji1 Exp bound}
    \end{align}
    
    We proceed to derive a lower bound for the probability of $A_i^{(2)}(\tau_2)$. Let $\mathcal{F}_j^\prime$ denote the $\sigma$-algebra generated by $(X_1^\prime,S_1^\prime), \ldots, (X_j^\prime,S_j^\prime)$ and $\mathcal{Z}_n\cup\{Z_{n+1}\}$ for $j\in[m]$, so that $\mathcal{F}_0^\prime  \subset \cdots \subset \mathcal{F}_m^\prime$. Define $\xi_j=\omega h^{-d}K(X_j^\prime,X_i;h)|\hat{r}_j^\prime(X_j^\prime,S_j^\prime)-r(X_j^\prime,S_j^\prime)|$ for $j\in[m]$ and let
    \begin{align*}
        \widetilde{\xi}_j=&E\left( \sum_{\ell=1}^{m}\xi_{\ell}\mid \mathcal{F}_{j}^\prime \right)-E\left( \sum_{\ell=1}^{m}\xi_{\ell}\mid \mathcal{F}_{j-1}^\prime \right)\,,\,j\in[m],\\
        \widetilde{\xi}_0=&E\left( \sum_{\ell=1}^{m}\xi_{\ell}\mid \mathcal{F}_{0}^\prime \right)\,.
    \end{align*}
    be the martingale difference.
    Then,
    \begin{align*}
        \sum_{j=1}^{m}\widetilde{\xi}_j=&\omega h^{-d} \overset{m}{\underset{j=1}\sum}K(X_j^\prime,X_i;h)\left|\hat{r}_j^\prime(X_j^\prime,S_j^\prime)-r(X_j^\prime,S_j^\prime)\right| \\
        &~- \omega mh^{-d}E\left\{ K(X_j^\prime,X_i;h)\left|\hat{r}_j^\prime(X_j^\prime,S_j^\prime)-r(X_j^\prime,S_j^\prime)\right|\mid \mathcal{Z}_n\cup\{Z_{n+1}\} \right\}\\
        =&\omega h^{-d} \overset{m}{\underset{j=1}\sum}K(X_j^\prime,X_i;h)\left|\hat{r}_j^\prime(X_j^\prime,S_j^\prime)-r(X_j^\prime,S_j^\prime)\right|-\widetilde{\xi}_0\,
    \end{align*}
    and
    \begin{align*}
        \left|\widetilde{\xi}_j\right|\leq&\sum_{\ell=1}^{m}\left|E\left( \xi_{\ell}\mid\mathcal{F}_j^\prime \right)-E\left( \xi_{\ell}\mid\mathcal{F}_{j-1}^\prime \right)\right|\,.
    \end{align*}
    For $\ell\neq j$, define $\xi_\ell^{\setminus j}$ and $\hat{r}_\ell^{\setminus j}$ as the counterpart of $\xi_\ell$ and $\hat{r}_\ell^\prime$ by replacing $Z_j^\prime$ with $\widetilde{Z}_j^\prime$, where $\widetilde{Z}_j^\prime$ is an i.i.d.~copy of $Z_j^\prime$. Since both $\hat{r}_\ell^{\setminus j}$ and $\hat{r}_\ell^\prime$ utilize $m-1$ samples from the auxiliary dataset with only one differing sample, Assumption 3 yields $\sup_{x,s}\left|\hat{r}_\ell^{\setminus j}(x,s)-\hat{r}_\ell^\prime(x,s)\right|\leq 2 C_r (m-1)^{-1}\leq 4 C_r m^{-1}$. Under this circumstance, %$\left|\xi_l-\xi_l^{\setminus j}\right|$ is bounded:
    \begin{align*}
        \left|\xi_\ell-\xi_\ell^{\setminus j}\right| \leq&\omega h^{-d}K(X_\ell^\prime,X_i;h)\left|\hat{r}_\ell^{\setminus j}(X_\ell^\prime,S_\ell^\prime)-\hat{r}_\ell^{\prime}(X_\ell^\prime,S_\ell^\prime)\right|\\
        \leq&4\omega C_r h^{-d}K(X_\ell^\prime,X_i;h)m^{-1}\,.
    \end{align*}
    By definition, $\xi_\ell$ is independent of $\widetilde{Z}_j^\prime$, which indicates 
    \begin{gather*}
        E\left( \xi_\ell\mid\mathcal{F}_j^\prime \right)=E\left\{ \xi_\ell\mid\sigma(\mathcal{F}_{j}^\prime,\widetilde{Z}_j^\prime) \right\}\,,
    \end{gather*}
    where $\sigma(\mathcal{F}_{j}^\prime,\widetilde{Z}_j^\prime)$ is the $\sigma$-algebra generated by $(X_1^\prime,S_1^\prime), \ldots, (X_j^\prime,S_j^\prime)$, $\widetilde{Z}_j^\prime$ and $\mathcal{Z}_n\cup\{Z_{n+1}\}$.
    Moreover, since $\xi_\ell^{\setminus j}$ is obtained from $\xi_\ell$ by replacing $Z_j^\prime$ with $\widetilde{Z}_j^\prime$, and $Z_j^\prime$ and $\widetilde{Z}_j^\prime$ are i.i.d.~while being independent of other variables, it follows that $E\left( \xi_\ell\mid\mathcal{F}_{j-1}^\prime \right)=E\big( \xi_\ell^{\setminus j}\mid\mathcal{F}_{j-1}^\prime \big)$. As $\xi_\ell^{\setminus j}$ and $\widetilde{Z}_j^\prime$ are independent, we have $E\big( \xi_\ell^{\setminus j}\mid\mathcal{F}_{j-1}^\prime \big)=E\big( \xi_\ell^{\setminus j}\mid\mathcal{F}_j^\prime \big)$. Applying the iterated expectation theorem yields
    \begin{gather*}
        E\left( \xi_\ell\mid\mathcal{F}_{j-1}^\prime \right)=E\left[ E\left\{\xi_\ell^{\setminus j}\mid\sigma(\mathcal{F}_{j}^\prime,\widetilde{Z}_j^\prime) \right\}\mid\mathcal{F}_{j}^\prime \right]\,.
    \end{gather*}
    Furthermore, the martingale difference can be bounded by
    \begin{align*}
        \left|\widetilde{\xi}_j\right|\leq&\sum_{\ell=1}^{m}\left|E\left( \xi_\ell\mid\mathcal{F}_j^\prime \right)-E\left( \xi_\ell\mid\mathcal{F}_{j-1}^\prime \right)\right|\\
        =&\sum_{\ell=1}^{m}\left|E\left\{ \xi_\ell\mid\sigma(\mathcal{F}_{j}^\prime,\widetilde{Z}_j^\prime) \right\}-E\left[ E\left\{\xi_\ell^{\setminus j}\mid\sigma(\mathcal{F}_{j}^\prime,\widetilde{Z}_j^\prime) \right\}\mid\mathcal{F}_{j}^\prime \right]\right|\\
        \leq&\sum_{\ell=1}^{m}E\left[ E\left\{ \left|\xi_\ell-\xi_\ell^{\setminus j}\right|\mid\sigma(\mathcal{F}_{j}^\prime,\widetilde{Z}_j^\prime) \right\}\mid\mathcal{F}_{j}^\prime \right]\\
        \leq&4 C_r \omega h^{-d}\sum_{\ell=1}^{m}E\left\{ K(X_\ell^\prime,X_i;h)\mid\mathcal{F}_j^\prime\right\}m^{-1}\\
        \leq&4 C_r \omega\left\{ (m-j)\overline{L}_3+h^{-d}\sum_{\ell=1}^{j}K(X_\ell^\prime,X_i;h) \right\}m^{-1}\\
        \leq&4 C_r \omega\left\{ \overline{L}_3+m^{-1}h^{-d}\sum_{\ell=1}^{j-1}K(X_\ell^\prime,X_i;h)+m^{-1}h^{-d}K(X_j^\prime,X_i;h) \right\}\,.
    \end{align*}
    It follows from Cauchy inequality and Lemma \ref{lemma:KernelFun} that
    \begin{align*}
        &E\left( \widetilde{\xi}_j^2\mid\mathcal{F}_{j-1}^\prime \right)\\
        \leq&(4 C_r \omega)^2E\left[ 3\overline{L}_3^2+3\left\{ m^{-1}h^{-d}\sum_{\ell=1}^{j-1}K(X_\ell^\prime,X_i;h) \right\}^2+3\left\{ m^{-1}h^{-d}K(X_j^\prime,X_i;h) \right\}^2\mid\mathcal{F}_{j-1}^\prime  \right]\\
        %\leq&48(\omega C_r )^2\left[ \overline{L}_3^2+\left\{ m^{-1}h^{-d}\sum_{l=1}^{j-1}K(X_l^\prime,X_i;h) \right\}^2+\overline{L}_3m^{-2}h^{-d} \right]\\
        \leq&48( C_r \omega)^2\left[ \overline{L}_3^2+\left\{ m^{-1}h^{-d}\sum_{\ell=1}^{m}K(X_\ell^\prime,X_i;h) \right\}^2+\overline{L}_3m^{-2}h^{-d} \right]\,.
    \end{align*}
    Finally, the accumulated variance is bounded by 
    \begin{gather*}
        \sum_{j=1}^{m}E\left( \widetilde{\xi}_j^2\mid\mathcal{F}_{j-1}^\prime \right)\leq48( C_r \omega)^2\left[ \overline{L}_3^2+\left\{ m^{-1}h^{-d}\sum_{\ell=1}^{m}K(X_\ell^\prime,X_i;h) \right\}^2+\overline{L}_3m^{-2}h^{-d} \right]m\,.
    \end{gather*}
    Under the event $A_i^{(6)}(\tau_6)$, the bound can be further refined as
    \begin{align*}
        \sum_{j=1}^{m}E\left( \widetilde{\xi}_j^2\mid\mathcal{F}_{j-1}^\prime \right)\leq&48 C_r ^2\left[ \overline{L}_3^2\omega^2+\left\{ \omega m^{-1}h^{-d}\sum_{\ell=1}^{m}K(X_\ell^\prime,X_i;h) \right\}^2+\overline{L}_3\omega^2m^{-2}h^{-d} \right]m\\
        \leq&48 C_r ^2\left[ \overline{L}_3^2\omega^2+\left\{ \overline{L}_3\omega+\tau_6(n/m+\omega) \right\}^2+\omega^2\overline{L}_3m^{-2}h^{-d} \right]m\,.
    \end{align*}
    Denote $v_{n,m}^{\omega, h}=48 C_r ^2\left[ \overline{L}_3^2\omega^2+\left\{ \overline{L}_3\omega+\tau_6(n/m+\omega) \right\}^2+\overline{L}_3\omega^2m^{-2}h^{-d} \right]$.
    For any $j\in[m]$, the following bound always holds:
    \begin{gather*}
        \left|\widetilde{\xi}_j\right|\leq 4 C_r \omega K_0(0)h^{-d}\,.
    \end{gather*}
    Apply Freedman's inequality, we get
    \begin{align*}
        &\mathrm{pr}\left( \sum_{j=1}^{m}\widetilde{\xi}_j\geq\tau_2(n+\omega m) \right)\\
        \leq&\mathrm{pr}\left( \sum_{j=1}^{m}\widetilde{\xi}_j\geq\tau_2(n+\omega m),A_i^{(6)}(\tau_6) \right)+\left\{ 1-\mathrm{pr}\left( A_i^{(6)}(\tau_6) \right) \right\}\\
        \leq&\mathrm{pr}\left( \sum_{j=1}^{m}\widetilde{\xi}_j\geq\tau_2(n+\omega m)\,,\,\sum_{j=1}^{m}E\left( \widetilde{\xi}_j^2\mid\mathcal{F}_{j-1}^\prime \right)\leq v_{n,m}^{\omega, h}m \right)+\left\{ 1-\mathrm{pr}\left( A_i^{(6)}(\tau_6) \right) \right\}\\
        \leq&\exp\left( -\dfrac{\tau_2^2(n+\omega m)^2h^{d}}{2\left\{ v_{n,m}^{\omega, h}h^{d}m+4 C_r \omega K_0(0)\tau_2(n+\omega m)/3 \right\}} \right)+\left\{ 1-\mathrm{pr}\left( A_i^{(6)}(\tau_6) \right) \right\}\,.
    \end{align*}
    Consequently,
    \begin{align*}
        & 1 - \mathrm{pr}\left(A_i^{(2)}(\tau_2)\right)=\mathrm{pr}\left( \sum_{j=1}^{m}\widetilde{\xi}_j\geq\tau_2(n+\omega m) \right)\\
        \leq &\exp\left( -\dfrac{\tau_2^2(n+\omega m)^2h^{d}}{2\left\{ v_{n,m}^{\omega, h}h^{d}m+4 C_r \omega K_0(0)\tau_2(n+\omega m)/3 \right\}} \right)\\
        &~~~~~~~~~~~~~~~~~~~~~~~~~~~~~~~~~~~~~~~~+\exp\left\{ -\dfrac{\tau_6^2(n+\omega m)^2h^d/2}{\overline{L}_3\omega^2m+K_0(0)\tau_6(n+\omega m)/3} \right\}\,.
    \end{align*}
    
    Similar to the proof in \textbf{Step I}, with properly chosen $\tau_1,\tau_2,\tau_3,\tau_{4,1},\tau_{4,2},\tau_5$ and $\tau_6$, there exists some $\tau>0$, which is bounded by a universal positive constant, such that
    \begin{eqnarray}
        \notag \underset{1\leq i\leq n+1}{\sup}\left|\hat{\beta}_{\hat{r}}(X_{i},S_{i})-F_{S\mid X}(S_i\mid X_i)\right| \leq C_0\left\{\dfrac{\omega m\epsilon_k(\gamma,r)}{(n+\omega m)h^{d/k}} + K_0^{-1}(h^{d})h + \tau + \gamma^{1/2} + m^{-1}\right\}\,.
    \end{eqnarray}
    for some positive constant $C_0$,
    with probability larger than
    \begin{eqnarray}
        \notag 
        1 - C_1n\exp\left\{ -C_2\tau^2(n+\omega m)h^d \right\}-\gamma^{1/2}\,,
    \end{eqnarray}
    for some positive constants $C_1$ and $C_2$. Solve the $\tau$ from $C_1n\exp\left\{ -C_2\tau^2(n+\omega m)h^d \right\}=\delta$, we get with probability at least $1-\delta-\gamma^{1/2}$,
    \begin{eqnarray}
        \notag \underset{1\leq i\leq n+1}{\sup}\Delta_{i} & \leq & C_0\Bigg\{\dfrac{\omega m\epsilon_k(\gamma,r)}{(n+\omega m)h^{d/k}} + K_0^{-1}(h^{d})h + \dfrac{1}{(n+\omega m)^{1/2}h^{d/2}}\log^{1/2}\left(\dfrac{n}{\delta}\right)+\gamma^{1/2} \Bigg\}\,,
    \end{eqnarray}
    and thus complete the proof of this lemma.
\end{proof}

\subsection{Proof of Theorem \ref{theo:conditional_coverage_error_bound}}\label{Proof of theo:conditional_coverage_error_bound}

\begin{proof}
    Lemma \ref{lemma:conditional_coverage_bound} and Lemma \ref{lemma:delta_concentration2} imply that
    \begin{eqnarray}
        \notag & & \left|\mathrm{pr}\left( Y_{n+1}\in\widehat{C}_{\alpha}^{\mathrm{ELCP}}(X_{n+1})\mid X_{n+1}=x_0 \right)-(1-\alpha)\right| - (n+1)^{-1}  \\
        \notag & \leq & \delta + \gamma^{1/2} + 2C_0\Bigg\{\dfrac{\omega m\epsilon_k(\gamma,r)}{(n+\omega m)h^{d/k}} + K_0^{-1}(h^{d})h + \dfrac{1}{(n+\omega m)^{1/2}h^{d/2}}\log^{1/2}\left(\dfrac{n}{\delta}\right)+\gamma^{1/2} \Bigg\}\,.
    \end{eqnarray}
    Define $M_{n,m,\omega}=(n+\omega m)^{-1/2}h^{-d/2}$, and consider the function
    \begin{equation*}
        \varphi_0(\delta)=\delta+2C_0M_{n,m,\omega}^{1/2}\log^{1/2}\left(n/\delta\right)\,.
    \end{equation*}
    Let the derivative of $\varphi_0(\delta)$ equal to $0$, we obtain
    $\delta\log^{1/2}\left( n /\delta\right)=C_0M_{n,m,\omega}^{1/2}$.
    Let $\delta^*$ be the minimizer of $\varphi_0(\delta)$, then we know that 
    $\delta^*<C_0M_{n,m,\omega}^{1/2}/\log^{1/2}(n)$.
    Since $1<\log^{1/2}(n)<\log(n)<n$ for $n>3$, we have
    \begin{eqnarray}
        \notag \varphi_0(\delta^*) & < & \dfrac{C_0M_{n,m,\omega}^{1/2}}{\log^{1/2}(n)} + 2C_0M_{n,m,\omega}^{1/2}\log^{1/2}\left(\dfrac{n\log^{1/2}(n)}{C_0M_{n,m,\omega}^{1/2}}\right)\\
        \notag &\leq& \dfrac{C_0M_{n,m,\omega}^{1/2}}{\log^{1/2}(n)} + 2C_0M_{n,m,\omega}^{1/2}\left\{ \sqrt{2}\log^{1/2}\left( n \right)+\log^{1/2}\left( 1/M_{n,m,\omega} \right) \right\}\\
        \notag &\leq & 3C_0M_{n,m,\omega}^{1/2}\left\{ \log^{1/2}\left( n \right)+\log^{1/2}\left( 1/M_{n,m,\omega} \right)\right\}\,.
    \end{eqnarray}
    Therefore, we can derive that
    \begin{eqnarray}
        \notag & & \left|\mathrm{pr}\left\{ Y_{n+1}\in\widehat{C}_{\alpha}^{\mathrm{ELCP}}(X_{n+1})\mid X_{n+1}=x_0 \right\}-(1-\alpha)\right| - (n+1)^{-1} \\
        \notag & \leq & \widetilde{C}_0\left[\dfrac{\omega m\epsilon_k(\gamma,r)}{(n+\omega m)h^{d/k}} + K_0^{-1}(h^{d})h + M_{n,m,\omega}^{1/2}\left\{\log^{1/2}(n)+\log^{1/2}(1/M_{n,m,\omega})\right\} + \gamma^{1/2}\right]\,
    \end{eqnarray}
    for some positive constant $\widetilde{C}_0$.
\end{proof}

\subsection{Proof of Theorem \ref{theo:weak_test}}\label{Proof of lemma:test_conditional_B}

Theorem \ref{theo:weak_test} is a direct consequence of Lemma \ref{lemma:delta_concentration2} and the following lemma.

\begin{lemma}\label{lemma:test_conditional_B}
    Assume $\sup_{1\leq i\leq n+1}\Delta_{i}\leq\varepsilon$ with probability at least $1-\delta$. Then for any fixed set $\mathcal{B}\subset \mathcal{X}$ with $\mathrm{pr}(X_{n+1} \in \mathcal{B})=p_0$,
    \begin{align*}
        & \left|\mathrm{pr}\left( Y_{n+1}\in\widehat{C}_{\alpha}^{\mathrm{ELCP}}(X_{n+1})\mid X_{n+1}\in \mathcal{B} \right) - (1-\alpha)\right| \\
        \leq & (n+1)^{-1}+p_0^{-1}(1-p_0)\left\{ 1-(1-p_0)^n \right\}(\delta+2\varepsilon)\,.
    \end{align*}
\end{lemma}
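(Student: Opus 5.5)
The plan is to combine the exact marginal coverage guarantee of Theorem~\ref{theo:marginal_coverage} with the pointwise argument behind Lemma~\ref{lemma:conditional_coverage_bound}. The factor $p_0^{-1}(1-p_0)$ should come from a law-of-total-probability identity, and the factor $\{1-(1-p_0)^n\}$ from conditioning on which calibration points fall in $\mathcal{B}$. Throughout, write $E=\{Y_{n+1}\in\widehat{C}_{\alpha}^{\mathrm{ELCP}}(X_{n+1})\}$, $U_i=F_{S\mid X}(S_i\mid X_i)$ and $\mathcal{A}_0=\{\sup_i\Delta_i\le\varepsilon\}$.

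\textbf{Step 1: coverage on $\mathcal{B}^c$.} I would first rerun the proof of Lemma~\ref{lemma:conditional_coverage_bound} with the conditioning event $\{X_{n+1}=x_0\}$ replaced by $\{X_{n+1}\in\mathcal{B}^c\}$. On $\mathcal{A}_0$, the sandwich of events $\mathcal{A}_1\subset E\subset\mathcal{A}_4$ reduces coverage to the event that $U_{n+1}$ lies below a $(1-\alpha)$-quantile of $\{U_i\}$, shifted by $\pm2\varepsilon$. Since $U_{n+1}\sim\mathrm{Uniform}[0,1]$ independently of $X_{n+1}$ and of $\{U_i\}_{i\le n}$, the order-statistic computation gives
\[
\left|\mathrm{pr}(E\mid X_{n+1}\in\mathcal{B}^c)-(1-\alpha)\right|\le (n+1)^{-1}+2\varepsilon+\delta,
\]
interpreting $\delta$ as the failure probability of $\mathcal{A}_0$, exactly as in Lemma~\ref{lemma:conditional_coverage_bound}.

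\textbf{Step 2: transfer to $\mathcal{B}$.} Next I would use the identity
\[
\mathrm{pr}(E)=p_0\,\mathrm{pr}(E\mid X_{n+1}\in\mathcal{B})+(1-p_0)\,\mathrm{pr}(E\mid X_{n+1}\in\mathcal{B}^c),
\]
together with $\mathrm{pr}(E)\in[1-\alpha,1-\alpha+(n+1)^{-1})$ from Theorem~\ref{theo:marginal_coverage}. Solving for $\mathrm{pr}(E\mid X_{n+1}\in\mathcal{B})-(1-\alpha)$ gives the term $p_0^{-1}(1-p_0)(2\varepsilon+\delta)$. To keep the discreteness term at $(n+1)^{-1}$ rather than $p_0^{-1}(n+1)^{-1}$, I would not bound the two exact-exchangeability errors separately. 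Instead I would use that, absent the $\varepsilon,\delta$ perturbations, both the marginal coverage and the $\mathcal{B}^c$-conditional coverage equal $E\{\tilde U_{(\cdot)}\}$-type quantities lying in the same window of width $(n+1)^{-1}$, so these errors partially cancel.

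\textbf{Step 3: the factor $1-(1-p_0)^n$.} I would condition on $N=\#\{i\le n: X_i\in\mathcal{B}\}\sim\mathrm{Binomial}(n,p_0)$. On $\{N=0\}$, which has probability $(1-p_0)^n$, no calibration point lies in $\mathcal{B}$. There I would argue by exchangeability within the points in $\mathcal{B}^c$, conditionally on the unordered multiset of data, that the test point's rank among $\{\hat\beta_{\omega,\hat r}(X_i,S_i)\}$ is uniform. This gives the exact conformal window with no $\varepsilon,\delta$ loss, so only the event $\{N\ge1\}$ contributes the perturbation $(2\varepsilon+\delta)$, weighted by $1-(1-p_0)^n$.

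The main obstacle is Step 3 together with the constant bookkeeping in Step 2. The claim that the $\{N=0\}$ configuration incurs no approximation error is delicate, because $\hat\beta$ is not invariant under swapping points across $\mathcal{B}$ and $\mathcal{B}^c$. If this exact cancellation fails, I would fall back to the direct Step~1-type argument on $\mathcal{B}$. That gives $(n+1)^{-1}+2\varepsilon+\mathrm{pr}(\mathcal{A}_0^c\mid X_{n+1}\in\mathcal{B})$ with $\mathrm{pr}(\mathcal{A}_0^c\mid X_{n+1}\in\mathcal{B})\le\delta/p_0$, and I would then reconcile this with the stated form by taking the smaller of the two bounds.
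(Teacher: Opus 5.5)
Your Step 3 fails, and it is exactly where the factor $1-(1-p_0)^n$ is supposed to come from. Below, $\beta_i$ denotes the realized scores and $Q$ their empirical $(1-\alpha)$-quantile.

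On $\{N=0\}\cap\{X_{n+1}\in\mathcal{B}\}$, the test point is the \emph{only} sample point in $\mathcal{B}$. So conditionally on the unordered data its index is determined, not uniformly random. Exchangeability among the $n$ calibration points in $\mathcal{B}^c$ says nothing about its rank.

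This is in fact the worst configuration, not an error-free one. Absent ties, the coverage indicator there equals $\lceil (n+1)(1-\alpha)\rceil-\sum_{i\in\mathcal{I}_z^c}\mathbbm{1}(\beta_i\le Q)$. Controlling it means approximating all $n$ indicators on $\mathcal{B}^c$ through the $U_i$, with weight $n$. The configuration with no approximation error is $\{N=n\}$, of probability $p_0^n$.

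The factor in the statement is therefore not $\mathrm{pr}(N\ge1)$ times an error. It is the identity $E\{(n-N)/(N+1)\}=p_0^{-1}(1-p_0)\{1-(1-p_0)^n\}$ for $N\sim\mathrm{Binomial}(n,p_0)$, which follows from $\binom{n}{k}(n-k)/(k+1)=\binom{n}{k+1}$. For $n=1$, for instance, the entire perturbation comes from $N=0$. Also, your Step 2 identity is not decomposed by $N$, so there is nowhere to insert a configuration-dependent factor.

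The missing idea is to run your Step 2 conditionally on the unordered data $\mathcal{A}_z$ rather than marginally, which is what the paper does.
\begin{enumerate}
\item Condition on $\mathcal{A}_z$, on $X_{n+1}\in\mathcal{B}$, and on there being $k$ calibration points in $\mathcal{B}$. Exchangeability makes the test point uniform over the $k+1$ points of $\mathcal{B}$. Hence the conditional coverage is $(k+1)^{-1}\{\lceil (n+1)(1-\alpha)\rceil-\sum_{i\in\mathcal{I}_z^c}\mathbbm{1}(\beta_i\le Q)\}$.
\item Handle each $\mathcal{B}^c$ indicator by your Step 1 argument via $U_i$, which are still i.i.d.\ uniform given the covariates. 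This gives error $\{(n-k)/(k+1)\}(2\varepsilon+\delta)$.
\item Average over $k$ with binomial weights. The leading terms recombine to exactly $\lceil (n+1)(1-\alpha)\rceil/(n+1)$, and the stated factor appears.
\end{enumerate}

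Your marginal Steps 1--2 are valid, but they only give $p_0^{-1}(1-p_0)(2\varepsilon+\delta)$. The partial cancellation you describe still leaves $\{p_0(n+1)\}^{-1}$ on the upper side. That discreteness issue is secondary: the paper's own bookkeeping also leaves an extra $p_0^{-1}(1-p_0)\{1-(1-p_0)^n\}(n+1)^{-1}$ there, because it lower-bounds the $\mathcal{B}^c$ indicators by $\{\lceil (n+1)(1-\alpha)\rceil-1\}/(n+1)$.

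Your fallback $(n+1)^{-1}+2\varepsilon+\delta/p_0$ is a valid but different bound. Taking a minimum with it still does not produce the factor $1-(1-p_0)^n$, so it does not establish the statement as written.
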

\begin{proof}
    Define $\mathcal{A}_0=\big\{{\sup}_{1\leq i\leq n+1}\left|\Delta_i\right|\leq\varepsilon \big\}$ such that $\mathrm{pr}\left( \mathcal{A}_0 \right)\geq 1-\delta$, as assumed. 
    For $z=(z_1,\ldots,z_{n+1})$ with $z_i=(x_i,y_i)$ for $i\in[n+1]$, define $\mathcal{A}_z=\left\{\{Z_1,\ldots,Z_{n+1}\}=\{z_1,\ldots,z_{n+1}\}\right\}$, $n_{\mathcal{B}}(z)=\sum_{i=1}^{n+1}\mathbb{1}\left(x_i\in \mathcal{B}\right)$, $\mathcal{I}_z=\{i:\ x_i\in\mathcal{B},i\in[n+1]\}$ and $\mathcal{I}_z^c=\{i:\ x_i\notin\mathcal{B},i\in[n+1]\}$. The observed values of $\hat{\beta}_{\omega,\hat{r}}(X_1,S_1),\ldots, \hat{\beta}_{\omega,\hat{r}}(X_{n+1},S_{n+1})$ when $\{Z_1=z_1,\ldots,Z_{n+1}=z_{n+1}\}$ are $\beta_1,\ldots,\beta_{n+1}$. 
    Let $\mathcal{N}_k=\left\{ n_{\mathcal{B}}(Z)=k+1 \right\}$, then $\mathrm{pr}\left( \mathcal{N}_k\mid X_{n+1}\in \mathcal{B} \right)={n\choose k}p_0^k(1-p_0)^{n-k}$. Thus, for any $z\in\mathcal{N}_k$, we have $n_{\mathcal{B}}(z)=k+1$. It follows that
    \begin{align}
        &\mathrm{pr}\left( \hat{\beta}_{\omega,\hat{r}}(X_{n+1},S_{n+1})\leq Q\left( 1-\alpha;\dfrac{1}{n+1}\overset{n+1}{\underset{i=1}\sum}\delta_{\hat{\beta}_{\omega,\hat{r}}(X_i,S_i)} \right)\mid\mathcal{A}_z\cap\{X_{n+1}\in \mathcal{B}\} \right)\notag\\
        =&E\left\{ \mathbb{1}\left(\hat{\beta}_{\omega,\hat{r}}(X_{n+1},S_{n+1})\leq Q\left( 1-\alpha;\dfrac{1}{n+1}\overset{n+1}{\underset{i=1}\sum}\delta_{\hat{\beta}_{\omega,\hat{r}}(X_i,S_i)} \right)\right) \mid \mathcal{A}_z\cap\{X_{n+1}\in \mathcal{B}\}\right\}\notag\\
        =&\overset{}{\underset{i\in\mathcal{I}_z}{\sum}}\dfrac{1}{n_\mathcal{B}(z)}\mathbb{1}\left( \beta_i\leq Q\left( 1-\alpha;\dfrac{1}{n+1}\overset{n+1}{\underset{j=1}\sum}\delta_{\beta_j} \right) \right)\notag\\
        \notag =&\dfrac{\lceil (n+1)(1-\alpha)\rceil}{k+1}-\dfrac{1}{k+1}\overset{}{\underset{i\in\mathcal{I}_z^c}{\sum}}\mathbb{1}\left( \beta_i\leq Q\left( 1-\alpha;\dfrac{1}{n+1}\overset{n+1}{\underset{j=1}\sum}\delta_{\beta_j} \right) \right)\,.%\label{Proof of lemma:test_conditional_B form1}
    \end{align}
    As $Z_1,\ldots,Z_{n+1}$ are exchangeable, on the event $\mathcal{N}_k$,
    \begin{align*}
        &\mathrm{pr}\left( \hat{\beta}_{\omega,\hat{r}}(X_{n+1},S_{n+1})\leq Q\left( 1-\alpha;\dfrac{1}{n+1}\overset{n+1}{\underset{i=1}\sum}\delta_{\hat{\beta}_{\omega,\hat{r}}(X_i,S_i)} \right)\mid \mathcal{N}_k\cap\left\{ X_{n+1}\in \mathcal{B} \right\} \right)\\
        =&E\left\{ \mathrm{pr}\left( \hat{\beta}_{\omega,\hat{r}}(X_{n+1},S_{n+1})\leq Q\left( 1-\alpha;\dfrac{1}{n+1}\overset{n+1}{\underset{i=1}\sum}\delta_{\hat{\beta}_{\omega,\hat{r}}(X_i,S_i)} \right)\mid \mathcal{A}_{Z} \right)\mid \mathcal{N}_k\cap\left\{ X_{n+1}\in \mathcal{B} \right\} \right\}\\
        =&E\left\{ \overset{}{\underset{i\in\mathcal{I}_Z}{\sum}}\dfrac{1}{n_\mathcal{B}(Z)}\mathbb{1}\left( \beta_i\leq Q\left( 1-\alpha;\dfrac{1}{n+1}\overset{n+1}{\underset{j=1}\sum}\delta_{\beta_j} \right) \right)\mid \mathcal{N}_k\cap\left\{ X_{n+1}\in \mathcal{B} \right\} \right\}\\
        \notag =&\dfrac{\lceil (n+1)(1-\alpha)\rceil}{k+1} -\dfrac{1}{k+1}E\left\{ \overset{}{\underset{i\in\mathcal{I}_Z^c}{\sum}}\mathbb{1}\left( \beta_i\leq Q\left( 1-\alpha;\dfrac{1}{n+1}\overset{n+1}{\underset{j=1}\sum}\delta_{\beta_j} \right) \right) \mid \mathcal{N}_k\cap\left\{ X_{n+1}\in \mathcal{B} \right\}\right\}\\
        \notag =&\dfrac{\lceil (n+1)(1-\alpha)\rceil}{k+1} \\
        \notag & -\dfrac{n-k}{k+1}E\left\{ \mathbb{1}\left(\beta_i\leq Q\left( 1-\alpha;\dfrac{1}{n+1}\overset{n+1}{\underset{j=1}\sum}\delta_{\beta_j} \right) \right)\mid \{i\in\mathcal{I}_Z^c\}\cap\mathcal{N}_k\cap\left\{ X_{n+1}\in \mathcal{B}\right\} \right\}\\
        =&\dfrac{\lceil (n+1)(1-\alpha)\rceil}{k+1} \notag \\
        \notag & -\dfrac{n-k}{k+1}\mathrm{pr}\left(\beta_i\leq Q\left( 1-\alpha;\dfrac{1}{n+1}\overset{n+1}{\underset{j=1}\sum}\delta_{\beta_j} \right)\mid \{i\in\mathcal{I}_Z^c\}\cap\mathcal{N}_k\cap\left\{ X_{n+1}\in \mathcal{B}\right\} \right)\,.
    \end{align*} 
    Denote $U_i=F_{S\mid X}(S_i\mid X_i)$ for $i\in[n+1]$, the probability term in the last equation can derived as
    \begin{align*}
        &\mathrm{pr}\left(\beta_i\leq Q\left( 1-\alpha;\dfrac{1}{n+1}\overset{n+1}{\underset{j=1}\sum}\delta_{\beta_j} \right)\mid \{i\in\mathcal{I}_Z^c\}\cap\mathcal{N}_k\cap\left\{ X_{n+1}\in \mathcal{B}\right\} \right)\\
        \leq&\mathrm{pr}\left(\beta_i\leq Q\left( 1-\alpha;\dfrac{1}{n+1}\overset{n+1}{\underset{j=1}\sum}\delta_{\beta_j} \right),\mathcal{A}_0\mid \{i\in\mathcal{I}_Z^c\}\cap\mathcal{N}_k\cap\left\{ X_{n+1}\in \mathcal{B}\right\} \right)+\mathrm{pr}\left( \mathcal{A}_0^c \right)\\
        \leq&\mathrm{pr}\left(U_i\leq Q\left( 1-\alpha;\dfrac{1}{n+1}\overset{n+1}{\underset{j=1}\sum}\delta_{U_j} \right)+2\varepsilon\mid \{i\in\mathcal{I}_Z^c\}\cap\mathcal{N}_k\cap\left\{ X_{n+1}\in \mathcal{B}\right\} \right)+\delta\\
        =&\dfrac{\lceil (n+1)(1-\alpha)\rceil}{n+1}+2\varepsilon+\delta.
    \end{align*}
    The last equation follows a similar derivation of the proof of Lemma \ref{lemma:conditional_coverage_bound}. Thus
    \begin{align*}
        &\mathrm{pr}\left( \hat{\beta}_{\omega,\hat{r}}(X_{n+1},S_{n+1})\leq Q\left( 1-\alpha;\dfrac{1}{n+1}\overset{n+1}{\underset{i=1}\sum}\delta_{\hat{\beta}_{\omega,\hat{r}}(X_i,S_i)} \right)\mid X_{n+1}\in \mathcal{B} \right)\\
        =&\overset{n}{\underset{k=0}\sum}\mathrm{pr}\left( \hat{\beta}_{\omega,\hat{r}}(X_{n+1},S_{n+1})\leq Q\left( 1-\alpha;\dfrac{1}{n+1}\overset{n+1}{\underset{i=1}\sum}\delta_{\hat{\beta}_{\omega,\hat{r}}(X_i,S_i)} \right)\mid \mathcal{N}_k\cap\left\{ X_{n+1}\in \mathcal{B} \right\} \right) \\
        \notag & \times\mathrm{pr}\left( \mathcal{N}_k\mid X_{n+1}\in\mathcal{B} \right)\\
        \geq&\overset{n}{\underset{k=0}\sum}\mathrm{pr}\left( \mathcal{N}_k\mid X_{n+1}\in\mathcal{B} \right)\left[ \dfrac{\lceil (n+1)(1-\alpha)\rceil}{k+1}-\dfrac{n-k}{k+1}\left\{ \dfrac{\lceil (n+1)(1-\alpha)\rceil}{n+1}+2\varepsilon+\delta \right\} \right].
    \end{align*}
    By the properties of combination numbers,
    \begin{align*}
        \overset{n}{\underset{k=0}\sum}\dfrac{\mathrm{pr}\left( \mathcal{N}_k\mid X_{n+1}\in\mathcal{B} \right)}{k+1}& =\overset{n}{\underset{k=0}\sum}\dfrac{1}{p_0(n+1)}{n+1 \choose k+1}p_0^{k+1}(1-p_0)^{n-k}=\dfrac{1-(1-p_0)^{n+1}}{p_0(n+1)},\\
        \overset{n}{\underset{k=0}\sum}\dfrac{\mathrm{pr}\left( \mathcal{N}_k\mid X_{n+1}\in\mathcal{B} \right)(n-k)}{k+1}& =\overset{n-1}{\underset{k=0}\sum}\dfrac{1-p_0}{p_0}{n \choose k+1}p_0^{k+1}(1-p_0)^{n-k-1} \\
        & =\dfrac{1-p_0}{p_0}\left\{ 1-(1-p_0)^n \right\}.
    \end{align*}
    It follows that
    \begin{align*}
        &\mathrm{pr}\left( \hat{\beta}_{\omega,\hat{r}}(X_{n+1},S_{n+1})\leq Q\left( 1-\alpha;\dfrac{1}{n+1}\overset{n+1}{\underset{i=1}\sum}\delta_{\hat{\beta}_{\omega,\hat{r}}(X_i,S_i)} \right)\mid X_{n+1}\in \mathcal{B} \right)\\
        \geq&\dfrac{\lceil (n+1)(1-\alpha)\rceil}{n+1}-\dfrac{1-p_0}{p_0}\left\{ 1-(1-p_0)^n \right\}(\delta+2\varepsilon)\\
        \geq&1-\alpha-\dfrac{1-p_0}{p_0}\left\{ 1-(1-p_0)^n \right\}(\delta+2\varepsilon)\,.
    \end{align*}
    On the other hand,
    \begin{align*}
        &\mathrm{pr}\left(\beta_i\leq Q\left( 1-\alpha;\dfrac{1}{n+1}\overset{n+1}{\underset{j=1}\sum}\delta_{\beta_j} \right)\mid \{i\in\mathcal{I}_Z^c\}\cap\mathcal{N}_k\cap\left\{ X_{n+1}\in \mathcal{B}\right\} \right)\\
        \geq&\mathrm{pr}\left(\beta_i\leq Q\left( 1-\alpha;\dfrac{1}{n+1}\overset{n+1}{\underset{j=1}\sum}\delta_{\beta_j} \right),\mathcal{A}_0\mid \{i\in\mathcal{I}_Z^c\}\cap\mathcal{N}_k\cap\left\{ X_{n+1}\in \mathcal{B}\right\} \right)\\
        \geq&\mathrm{pr}\left(U_i\leq Q\left( 1-\alpha;\dfrac{1}{n+1}\overset{n+1}{\underset{j=1}\sum}\delta_{U_j} \right)-2\varepsilon\mid \{i\in\mathcal{I}_Z^c\}\cap\mathcal{N}_k\cap\left\{ X_{n+1}\in \mathcal{B}\right\} \right)-\mathrm{pr}\left( \mathcal{A}_0^c \right)\\
        \notag =&\dfrac{\lceil (n+1)(1-\alpha) \rceil-1}{n+1} - 2\varepsilon-\delta\,.
    \end{align*}
    Therefore
    \begin{align*}
        &\mathrm{pr}\left( \hat{\beta}_{\omega,\hat{r}}(X_{n+1},S_{n+1})\leq Q\left( 1-\alpha;\dfrac{1}{n+1}\overset{n+1}{\underset{i=1}\sum}\delta_{\hat{\beta}_{\omega,\hat{r}}(X_i,S_i)} \right)\mid X_{n+1}\in \mathcal{B} \right)\\
        \leq&\dfrac{\lceil (n+1)(1-\alpha) \rceil-1}{n+1}-\dfrac{1-p_0}{p_0}\left\{ 1-(1-p_0)^n \right\}(\delta+2\varepsilon)\,.
    \end{align*}
    In conclusion, 
    \begin{eqnarray}
        \notag & & \left|\mathrm{pr}\left( Y_{n+1}\in\widehat{C}_{\alpha}^{\mathrm{ELCP}}(X_{n+1})\mid X_{n+1}\in \mathcal{B} \right) - (1-\alpha)\right| \\
        \notag & \leq & \dfrac{1}{n+1}+\dfrac{1-p_0}{p_0}\left\{ 1-(1-p_0)^n \right\}(\delta+2\varepsilon)\,,
    \end{eqnarray}
    which completes the proof of this lemma.
\end{proof}

\subsection{Proof of Theorem \ref{theo:modsel elcp}}

\begin{proof}
    To highlight the impact of $\omega$ and $h$ in $\hat{\beta}_{\omega,\hat{r}}(X_i,S_i)$, we rewrite it as $\hat{\beta}_{\omega,h}(X_i,S_i)$.
    By the definition of $\widehat{C}_\alpha^{\mathrm{ELCP-PS}}(X_{n+1})$, the event $\left\{Y_{n+1}\in\widehat{C}_\alpha^{\mathrm{ELCP-PS}}(X_{n+1})\right\}$ is equivalent to the following:
    \begin{gather}
        \left\{\hat{\beta}_{\hat{\omega}_{n+1},\hat{h}_{n+1}}(X_{n+1},S_{n+1})\leq Q\left(1-\alpha;(n+1)^{-1}\sum_{i=1}^{n+1}\delta_{\hat{\beta}_{\hat{\omega}_{n+1},\hat{h}_{n+1}}(X_i,S_i)}\right)\right\}\,,\label{eq:modsel event}
    \end{gather}
    where $\hat{\omega}_{n+1}=\hat{\omega}(Y_{n+1})$ and $\hat{h}_{n+1}=\hat{h}(Y_{n+1})$.
    By the definitions of $\hat{\omega}_{n+1}$ and $\hat{h}_{n+1}$, they are invariant under any permutation of the $n+1$ data points $\mathcal{Z}_n\cup\{(X_{n+1},Y_{n+1})\}$. 
    Consequently, the probability of the event described in \eqref{eq:modsel event} is equal to the probability of the following event:
    \begin{gather*}
        \left\{\hat{\beta}_{\hat{\omega}_{n+1},\hat{h}_{n+1}}(X_j,S_j)\leq Q\left(1-\alpha;(n+1)^{-1}\sum_{i=1}^{n+1}\delta_{\hat{\beta}_{\hat{\omega}_{n+1},\hat{h}_{n+1}}(X_i,S_i)}\right)\right\}\,,\forall ~j\in[n+1]\,.
    \end{gather*}
    Thus,
    \begin{align*}
        &\mathrm{pr}\left(Y_{n+1}\in\widehat{C}_\alpha^{\mathrm{ELCP-PS}}(X_{n+1})\right)\\
        =&\mathrm{pr}\left(\hat{\beta}_{\hat{\omega}_{n+1},\hat{h}_{n+1}}(X_{n+1},S_{n+1})\leq Q\left(1-\alpha;(n+1)^{-1}\sum_{i=1}^{n+1}\delta_{\hat{\beta}_{\hat{\omega}_{n+1},\hat{h}_{n+1}}(X_i,S_i)}\right)\right)\\
        =&\dfrac{1}{n+1}\sum_{j=1}^{n+1}\mathrm{pr}\left(\hat{\beta}_{\hat{\omega}_{n+1},\hat{h}_{n+1}}(X_j,S_j)\leq Q\left(1-\alpha;(n+1)^{-1}\sum_{i=1}^{n+1}\delta_{\hat{\beta}_{\hat{\omega}_{n+1},\hat{h}_{n+1}}(X_i,S_i)}\right)\right)\\
        =&E\left\{\dfrac{1}{n+1}\sum_{j=1}^{n+1}\mathbb{1}\left(\hat{\beta}_{\hat{\omega}_{n+1},\hat{h}_{n+1}}(X_j,S_j)\leq Q\left(1-\alpha;(n+1)^{-1}\sum_{i=1}^{n+1}\delta_{\hat{\beta}_{\hat{\omega}_{n+1},\hat{h}_{n+1}}(X_i,S_i)}\right)\right)\right\}\,,
    \end{align*}
    and the last formula is in $\left[1-\alpha,1-\alpha+\dfrac{1}{n+1}\right)$ by the definition of the quantile.
\end{proof}

\subsection{Proof of Theorem \ref{theo:parameter_selection_consistent}}\label{proof of theo:parameter_selection_consistent}

The proof of Theorem \ref{theo:parameter_selection_consistent} builds on the following lemma, which establishes a concentration inequality for the loss function $\mathcal{L}_2$.

\begin{lemma}\label{lemma: para_select_conv}
    Suppose Assumptions \ref{assump:permutable}--\ref{ass3} hold.
    Assume that the kernel functions $K_1(\cdot,\cdot)$ and $K_2(\cdot,\cdot)$ are uniformly bounded by $D_{K,0}>0$ and that their partial derivatives are bounded in absolute value by $D_{K,1}>0$.
    Then for $h$ satisfies $(n+\omega m)h^d\log^{-1}(n)\rightarrow\infty$ and $\tau>0$, it holds that
    \begin{align*}
        & \mathrm{pr}\left( \left|\mathcal{L}_{2}\left(\omega,h;\mathcal{Z}_n\cup\{Z_{n+1}\},\mathcal{Z}_m^\prime\right)-\mathcal{R}_{\omega,h}^{(n,m)}\right|>\tau \right) \\
        < & n\exp\left\{ -\overline{C}_0(n+\omega m)h^d \right\}+\exp\left( -\overline{C}_1\tau^2n \right)\,,
    \end{align*}
    for some positive constants $\overline{C}_0$ and $\overline{C}_1$, where
    \begin{gather*}
        \mathcal{R}_{\omega,h}^{(n,m)}=E\left[ \left\{ K_1(\hat{\beta}_{\omega,h}(X_1,S_1),\hat{\beta}_{\omega,h}(X_2,S_2))-2\phi(\hat{\beta}_{\omega,h}(X_1,S_1)) \right\}K_2(X_1,X_2) \right]\,.
    \end{gather*}    
\end{lemma}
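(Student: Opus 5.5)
We aim to show that $\mathcal{L}_2(\omega,h;\mathcal{Z}_n\cup\{Z_{n+1}\},\mathcal{Z}_m^\prime)$ concentrates around $\mathcal{R}_{\omega,h}^{(n,m)}$ by comparing it with an oracle U-statistic built from the true conditional cdf values. Write $\hat\beta_i=\hat{\beta}_{\omega,h}(X_i,S_i)$ and $U_i=F_{S\mid X}(S_i\mid X_i)$, and set
\[
\mathcal{L}_2 = \frac{1}{n(n+1)}\sum_{i\neq j}\Psi(\hat\beta_i,\hat\beta_j,X_i,X_j),\qquad \Psi(b_1,b_2,x_1,x_2)=\{K_1(b_1,b_2)-2\phi(b_1)\}K_2(x_1,x_2).
\]
Because $K_1,K_2$ are bounded by $D_{K,0}$ with derivatives bounded by $D_{K,1}$, the map $\Psi$ is Lipschitz in $(b_1,b_2)$: $\phi$ inherits a derivative bound from $K_1$, and the Lipschitz constant is of order $D_{K,0}D_{K,1}$. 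Moreover, $|\Psi|\le 3D_{K,0}^2$. The proof therefore splits into three terms: the perturbation $\mathcal{L}_2-\widetilde{\mathcal{L}}_2$, where $\widetilde{\mathcal{L}}_2$ is the same U-statistic with $\hat\beta_i$ replaced by $U_i$; the fluctuation $\widetilde{\mathcal{L}}_2-E\widetilde{\mathcal{L}}_2$; and the bias $E\widetilde{\mathcal{L}}_2-\mathcal{R}_{\omega,h}^{(n,m)}$.

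\textbf{Perturbation term.} By the Lipschitz property,
\[
|\mathcal{L}_2-\widetilde{\mathcal{L}}_2|\le C\sup_i\Delta_i(\omega).
\]
We invoke the proof of Lemma~\ref{lemma:delta_concentration2}, using the raw exponential form before $\tau$ is solved out. That form shows that, except on an event of probability at most $C_1 n\exp\{-C_2\tau_0^2(n+\omega m)h^d\}$ (together with the $\gamma$-type terms, which we treat as absorbed), $\sup_i\Delta_i$ is at most a small deterministic quantity plus $\tau_0$. We fix $\tau_0$ to be a constant multiple of $\tau$. This produces the term $n\exp\{-\overline{C}_0(n+\omega m)h^d\}$. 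The condition $(n+\omega m)h^d/\log n\to\infty$ ensures this probability is nontrivially small and that the deterministic bias parts become negligible relative to $\tau$ for large $n$.

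\textbf{Fluctuation term.} The pairs $(U_i,X_i)$ are i.i.d., so $\widetilde{\mathcal{L}}_2$ is a bounded U-statistic of order two. Hoeffding's inequality for U-statistics, or McDiarmid's inequality with bounded differences $O(1/n)$, gives
\[
\mathrm{pr}\big(|\widetilde{\mathcal{L}}_2-E\widetilde{\mathcal{L}}_2|>\tau/3\big)\le \exp(-\overline{C}_1\tau^2 n).
\]

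\textbf{Bias term.} Exchangeability of $\{(\hat\beta_i,X_i)\}$ gives $E\mathcal{L}_2=\mathcal{R}_{\omega,h}^{(n,m)}$ exactly. Hence
\[
|E\widetilde{\mathcal{L}}_2-\mathcal{R}_{\omega,h}^{(n,m)}|\le C\,E\sup_i\Delta_i .
\]
Splitting this expectation on the high-probability event of the perturbation step and using boundedness $\Delta_i\le1$ off that event bounds it by $o(1)+n\exp\{-\cdot\}$, which is below $\tau/3$ eventually.

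\textbf{Main obstacle.} The delicate step is getting a clean exponential tail for $\sup_i\Delta_i$ free of the $\gamma^{1/2}$ and $\epsilon_k$ terms. My plan is to work directly with the event decomposition $A_i^{(1)},\dots,A_i^{(6)}$ from the proof of Lemma~\ref{lemma:delta_concentration2}, all of whose tails are of the form $\exp\{-c\tau_0^2(n+\omega m)h^d\}$. For the density-ratio error contributions I would fall back on the uniform bounds of Assumptions~\ref{ass2}--\ref{ass3}, which cap them by constants. If that proves too crude, the alternative is to use McDiarmid directly on $\mathcal{L}_2$ in the $Z_i$. Changing one $Z_i$ changes each $\hat\beta_j$ by $O\{1/((n+\omega m)h^d)\}$ on the good event, so the bounded differences are $O(1/n)$, and this again yields the $\exp(-\overline{C}_1\tau^2 n)$ term.
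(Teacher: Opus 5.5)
Your main route has a genuine gap. Comparing $\mathcal{L}_2$ with the oracle U-statistic built from $U_i=F_{S\mid X}(S_i\mid X_i)$ makes your perturbation and bias terms of size $\sup_i\Delta_i(\omega)$, and nothing in this lemma makes that small.

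\begin{itemize}
\item The hypotheses give no $h\to0$ and no accuracy for $\hat r$, only boundedness and stability.
\item The pieces $K_0^{-1}(h^d)h$, $\omega m\epsilon_k(\gamma,r)/\{(n+\omega m)h^{d/k}\}$ and $\gamma^{1/2}$ in Lemma~\ref{lemma:delta_concentration2} are deterministic offsets, not exponential tails. They cannot be ``absorbed'', and the condition $(n+\omega m)h^d/\log n\to\infty$ does not make them negligible.
\item With a bounded but inconsistent $\hat r$ and $\omega>0$, $\hat\beta_{\omega,h}$ targets the wrong conditional cdf and $\mathcal{L}_2-\widetilde{\mathcal{L}}_2$ need not vanish. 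Yet $\mathcal{L}_2$ still concentrates around its own mean.
\item Even when $\hat\beta_{\omega,h}$ is consistent, $\sup_i\Delta_i$ decays like $\{(n+\omega m)h^d\}^{-1/2}$ or $K_0^{-1}(h^d)h$, far slower than $n^{-1/2}$. So the triangle inequality cannot give $\exp(-\overline{C}_1\tau^2n)$ at $\tau\asymp n^{-1/2}\log^{1/2}n$, which is exactly the scale used in Theorem~\ref{theo:parameter_selection_consistent}.
\item Taking $\tau_0\propto\tau$ gives a first tail $n\exp\{-C_2\tau^2(n+\omega m)h^d\}$. This depends on $\tau$ and is vacuous at that scale for small $h$.
\end{itemize}

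Your observation that $E\mathcal{L}_2=\mathcal{R}_{\omega,h}^{(n,m)}$ by exchangeability is correct, and it is the only centering needed. $F_{S\mid X}$ should not enter the argument at all.

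Your fallback, bounded differences applied directly to $\mathcal{L}_2$ as a function of $Z_1,\ldots,Z_{n+1}$, is the paper's route. As sketched, it misses three ingredients.

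\begin{enumerate}
\item \textbf{The bounded differences hold only on a good event.} The $O(1/n)$ differences require the kernel denominators to exceed a multiple of $(n+\omega m)h^d$, so plain McDiarmid does not apply. The paper instead:
\begin{itemize}
\item writes $\mathcal{L}_2-\mathcal{R}_{\omega,h}^{(n,m)}=\sum_{\ell}\xi^{\omega,h,\ell}$ as a Doob martingale in $\mathcal{F}_\ell=\sigma(Z_1,\ldots,Z_\ell)$;
\item bounds each increment by comparing with a copy in which $Z_\ell$ is replaced by an independent $\widetilde Z_\ell$;
\item controls the conditional bad-event probabilities $\sup_\ell E\{\mathbbm{1}(A_0^c)\mid\mathcal{F}_{\ell-1}\}$, with $A_0$ the good event, by Doob's maximal inequality;
\item concludes with Freedman's inequality.
\end{itemize}
The Doob step is where $(n+\omega m)h^d/\log n\to\infty$ enters: it makes the bad-event contribution to each increment $O(1/n)$. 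The $\tau$-free term $n\exp\{-\overline{C}_0(n+\omega m)h^d\}$ is the probability of the bad event.

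\item \textbf{A uniform per-$\hat\beta_j$ bound is too weak.} A change of $O\{1/((n+\omega m)h^d)\}$ in every $\hat\beta_j$ only bounds the change of the average by that same order, which is not $O(1/n)$ when $h\to0$. Write $\hat\beta^{\setminus\ell}_{\omega,h}$ for the version computed with $Z_\ell$ replaced by $\widetilde Z_\ell$. You need that $|\hat\beta^{\setminus\ell}_{\omega,h}(X_j,S_j)-\hat\beta_{\omega,h}(X_j,S_j)|$ is at most a constant times $\{K(X_j,X_\ell;h)+K(X_j,\widetilde X_\ell;h)\}/\{(n+\omega m)h^d\}+n^{-1}$. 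You also need these kernel weights to sum over $j$ to $O(nh^d)$; the paper gets this from Lemma~\ref{lemma:KernelFun}. Then the total change summed over $j$ is $O(1)$, and the change in $\mathcal{L}_2$ is $O(1/n)$.

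\item \textbf{The density-ratio estimator moves too.} Since $\hat r$ is fitted on $\mathcal{Z}_n\cup\{Z_{n+1}\}$, replacing $Z_\ell$ moves every $\hat\beta_j$ through the auxiliary sums. The paper controls this with the calibration-side stability bound $\sup_{x,s}|\hat r^{\setminus\ell}(x,s)-\hat r(x,s)|\le C_rn^{-1}$, which is the source of the $n^{-1}$ term above. Assumption~\ref{ass3} as stated only covers deleting points of $\mathcal{Z}_m^\prime$.
\end{enumerate}
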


\begin{proof}
    Let $\widetilde{\mathcal{Z}}_{n+1}=\{\widetilde{Z}_i\}_{i\in[n+1]}$ be an i.i.d.~copy of $\mathcal{Z}_n\cup\{Z_{n+1}\}$, and the corresponding scores are $\{\widetilde{S}_i\}_{i\in[n+1]}$. 
    Note that $\hat{\beta}_{\omega,h}(x,s)$ and $\hat{r}(x,s)$ are computed using $\mathcal{Z}_n\cup\{Z_{n+1}\}$ and $\mathcal{Z}_m^\prime$. 
    For any subset of indices $\{i_1,\ldots,i_k\}\subset[n+1]$, define $\hat{\beta}_{\omega,h}^{\setminus i_1,\ldots,i_k}(x,s)$ and $\hat{r}^{\setminus i_1,\ldots,i_k}(x,s)$ as those obtained using $\mathcal{Z}_n\cup\{Z_{n+1}\}\setminus\{Z_{i_\ell}\}_{\ell=1}^k\cup\{\widetilde{Z}_{i_\ell}\}_{\ell=1}^k$ and $\mathcal{Z}_m^\prime$.
    
    By the exchangeability of the pairs $(X_1,S_1),\ldots,(X_n,S_n)$ and $(X_{n+1},S_{n+1})$, the transformed sequence $\hat{\beta}_{\omega,h}(X_1,S_1),\ldots,\hat{\beta}_{\omega,h}(X_n,S_n), \hat{\beta}_{\omega,h}(X_{n+1},S_{n+1})$ inherits this exchangeability property. Therefore, $E\left\{ \mathcal{L}_{2}\left(\omega,h;\mathcal{Z}_n\cup\{Z_{n+1}\},\mathcal{Z}_m^\prime\right) \right\}=\mathcal{R}_{\omega,h}^{(n,m)}$.
    
    \noindent\textbf{Step 1: } We first quantify the difference between $\hat{\beta}_{\omega,h}^{\setminus j}(X_i,S_i)$ and $\hat{\beta}_{\omega,h}(X_i,S_i)$ for $j\neq i$.

    For notation simplicity, we define $X_\ell^{\setminus i_1,\ldots,i_k}=X_\ell$ if $\ell\neq i_p$ for all $p\in[k]$, otherwise $X_\ell^{\setminus i_1,\ldots,i_k}=\widetilde{X}_{i_p}$ if $\ell=i_p$ for some $p\in[k]$. Similarly, we define $S_\ell^{\setminus i_1,\ldots,i_k}=S_\ell$ if $\ell\neq i_p$ for all $p\in[k]$, otherwise $S_\ell^{\setminus i_1,\ldots,i_k}=\widetilde{S}_{i_p}$ if $\ell=i_p$ for some $p\in[k]$. Then, 
    \begin{align}
        &|\hat{\beta}_{\omega,h}^{\setminus j}(X_i,S_i)-\hat{\beta}_{\omega,h}(X_i,S_i)|\notag\\
        =&\Big|\dfrac{\sum_{\ell=1}^{n+1}K(X_i,X_\ell^{\setminus j};h)\mathbb{1}(S_\ell^{\setminus j}\leq S_i)+\omega\sum_{\ell=1}^{m}K(X_i,X_\ell^\prime;h)\hat{r}^{\setminus j}(X_\ell^\prime,S_\ell^\prime)\mathbb{1}(S_\ell^\prime\leq S_i)}{\sum_{\ell=1}^{n+1}K(X_i,X_\ell^{\setminus j};h)+\omega\sum_{\ell=1}^{m}K(X_i,X_\ell^\prime;h)\hat{r}^{\setminus j}(X_\ell^\prime,S_\ell^\prime)}\notag\\
        &-\dfrac{\sum_{\ell=1}^{n+1}K(X_i,X_\ell;h)\mathbb{1}(S_\ell\leq S_i)+\omega\sum_{\ell=1}^{m}K(X_i,X_\ell^\prime;h)\hat{r}(X_\ell^\prime,S_l^\prime)\mathbb{1}(S_\ell^\prime\leq S_i)}{\sum_{\ell=1}^{n+1}K(X_i,X_\ell;h)+\omega\sum_{\ell=1}^{m}K(X_i,X_\ell^\prime;h)\hat{r}(X_\ell^\prime,S_\ell^\prime)}\Big|\notag\\
        \notag \leq & \frac{|\Pi_{1,1}\Pi_{4,1}-\Pi_{3,1}\Pi_{2,1}| + |\Pi_{1,1}\Pi_{4,2}-\Pi_{3,1}\Pi_{2,2}| + |\Pi_{1,2}\Pi_{4,1}-\Pi_{3,2}\Pi_{2,1}| + |\Pi_{1,2}\Pi_{4,2}-\Pi_{3,2}\Pi_{2,2}|}{(\Pi_{2,1}+\Pi_{2,2})(\Pi_{4,1}+\Pi_{4,2})}\,,
    \end{align}
    where $\Pi_{1,1}=\sum_{\ell=1}^{n+1}K(X_i,X_\ell^{\setminus j};h)\mathbb{1}(S_\ell^{\setminus j}\leq S_i)$, $\Pi_{1,2}=\omega\sum_{\ell=1}^{m}K(X_i,X_\ell^\prime;h)\hat{r}^{\setminus j}(X_\ell^\prime,S_\ell^\prime)\mathbb{1}(S_\ell^\prime\leq S_i)$,
    $\Pi_{2,1}=\sum_{\ell=1}^{n+1}K(X_i,X_\ell^{\setminus j};h)$, $\Pi_{2,2}=\omega\sum_{\ell=1}^{m}K(X_i,X_\ell^\prime;h)\hat{r}^{\setminus j}(X_\ell^\prime,S_\ell^\prime)$, \\
    $\Pi_{3,1}=\sum_{\ell=1}^{n+1}K(X_i,X_\ell;h)\mathbb{1}(S_\ell\leq S_i)$, $\Pi_{3,2}=\omega\sum_{\ell=1}^{m}K(X_i,X_\ell^\prime;h)\hat{r}(X_\ell^\prime,S_\ell^\prime)\mathbb{1}(S_\ell^\prime\leq S_i)$, $\Pi_{4,1}=\sum_{\ell=1}^{n+1}K(X_i,X_\ell;h)$ and $\Pi_{4,2}=\omega\sum_{\ell=1}^{m}K(X_i,X_\ell^\prime;h)\hat{r}(X_\ell^\prime,S_\ell^\prime)$.

    By the fact that $\sup_{x,s}|\hat{r}^{\setminus j}(x,s)-\hat{r}(x,s)|\leq  C_r  n^{-1}$, simple algebra yields that
    \begin{align*}
        |\Pi_{1,1}\Pi_{4,1}-\Pi_{3,1}\Pi_{2,1}|&\leq 2\left\{ \sum_{\ell=1}^{n+1}K(X_i,X_\ell;h) \right\}\left|K(X_i,X_j;h)-K(X_i,\widetilde{X}_j;h)\right|\,,\\
        |\Pi_{1,1}\Pi_{4,2}-\Pi_{3,1}\Pi_{2,2}| &\leq \left\{ \omega\sum_{\ell=1}^{m}K(X_i,X_\ell^\prime;h)\hat{r}(X_\ell^\prime,S_\ell^\prime) \right\}\left|K(X_i,X_j;h)-K(X_i,\widetilde{X}_j;h)\right|\\
        &~~~~~~~~~~~~~~~~~~+\dfrac{ C_r }{n}\left\{ \sum_{\ell=1}^{n+1}K(X_i,X_\ell;h) \right\}\left\{ \omega\sum_{\ell=1}^{m}K(X_i,X_\ell^\prime;h) \right\}\,,\\
        |\Pi_{1,2}\Pi_{4,1}-\Pi_{3,2}\Pi_{2,1}| &\leq\dfrac{ C_r }{n}\left\{ \sum_{\ell=1}^{n+1}K(X_i,X_\ell;h) \right\}\left\{ \omega\sum_{\ell=1}^{m}K(X_i,X_\ell^\prime;h) \right\}\\
        &~+\left\{ \omega\sum_{\ell=1}^{m}K(X_i,X_\ell^\prime;h)\hat{r}(X_\ell^\prime,S_\ell^\prime) \right\}\left|K(X_i,X_j;h)-K(X_i,\widetilde{X}_j;h)\right|\,,\\
        |\Pi_{1,2}\Pi_{4,2}-\Pi_{3,2}\Pi_{2,2}| &\leq\dfrac{2 C_r }{n}\left\{ \omega\sum_{\ell=1}^{m}K(X_i,X_\ell^\prime;h) \right\}\left\{ \omega\sum_{\ell=1}^{m}K(X_i,X_\ell^\prime;h)\hat{r}(X_\ell^\prime,S_\ell^\prime) \right\}\,.
    \end{align*}
    %Similar to the analysis of $A_i^{(2)}(\tau_2)$ in the proof of \eqref{Proof of lemma:delta_concentration2} when $\hat{r}$ is dependent on $\mathcal{Z}_m^\prime$, 
    Our preceding derivation established that the upper bound of $|\hat{\beta}_{\omega,h}^{\setminus j}(X_i,S_i)-\hat{\beta}_{\omega,h}(X_i,S_i)|$ depends on $\Pi_{2,1}+\Pi_{2,2}$ and $\Pi_{4,1}+\Pi_{4,2}$ in the denominator, motivating the need to control their lower bounds. For $i\in[n+1]$, define the following event: 
    \begin{align*}
        A_i=&\left\{\Pi_{4,1}+\Pi_{4,2}>\underline{L}_3(n+\omega m\underline{L}_2)h^{d}/2\right\}\cap\\
        &\left[ \underset{j\neq i}{\cap}\left\{ \sum_{\ell=1}^{n+1}K(X_i,X_\ell^{\setminus j};h)+\omega\sum_{\ell=1}^{m}K(X_i,X_\ell^\prime;h)\hat{r}^{\setminus j}(X_\ell^\prime,S_\ell^\prime)>\underline{L}_3(n+\omega m\underline{L}_2)h^{d}/2 \right\} \right]\,.
    \end{align*}
    Since $\sum_{\ell\neq i}K(X_i,X_\ell;h)=\sum_{\ell=1}^{n+1}K(X_i,X_\ell;h)-K_0(0)\leq\sum_{\ell=1}^{n+1}K(X_i,X_\ell^j;h)$ and the density ratio is lower bounded by $\hat{r}^{\setminus j}(X_\ell^\prime,S_\ell^\prime)\geq\underline{L}_2$, we obtain
    \begin{gather*}
        A_i\supset\left\{ \sum_{\ell\neq i}K(X_i,X_\ell;h)+\underline{L}_2\omega \sum_{\ell=1}^{m}K(X_i,X_\ell^\prime;h)>\underline{L}_3(n+\omega m\underline{L}_2)h^{d}/2 \right\}\,.
    \end{gather*}
    According to Lemma \ref{lemma:KernelFun}, $E\left\{ K(X_i,X_\ell;h) \right\}>\underline{L}_3h^d$ and $E\left\{ K(X_i,X_\ell^\prime;h) \right\}>\underline{L}_3h^d$. Similar to the proof of Lemma \ref{lemma:delta_concentration2}, the probability of $A_i$ satisfies
    \begin{gather*}
        \mathrm{pr}\left( A_i \right)\geq 1-\exp\left\{ -\overline{C}_0(n+\omega m)h^d \right\}\,,
    \end{gather*}
    where $\overline{C}_0$ is a positive constant independent of $i$. Denote $A_0=\cup_{i=1}^{n+1}A_i$, then 
    \begin{gather*}
        \mathrm{pr}\left( A_0 \right)\geq 1-(n+1)\exp\left( -\overline{C}_0(n+\omega m)h^d \right)\,.
    \end{gather*}
    
    Summarizing the preceding analysis, we establish the following bound under event $A_0$:
    \begin{gather*}
        |\hat{\beta}_{\omega,h}^{\setminus j}(X_i,S_i)-\hat{\beta}_{\omega,h}(X_i,S_i)|\leq\dfrac{4\underline{L}_2\left|K(X_i,X_j;h)-K(X_i,\widetilde{X}_j;h)\right|}{\underline{L}_3(n+\omega m)h^d}+\dfrac{4\underline{L}_2}{n}\overset{\rm def.}{\equiv}\delta_{\beta,i,j}\,.
    \end{gather*}

    \noindent\textbf{Step 2: } We now analyze the concentration of $\mathcal{L}_{2}\left(\omega,h;\mathcal{Z}_n\cup\{Z_{n+1}\},\mathcal{Z}_m^\prime\right)$ around $\mathcal{R}_{\omega,h}^{(n,m)}$.
    
    Define filters $\mathcal{F}_i=\sigma(Z_1,\ldots,Z_i)$ for $i\in[n+1]$ and $\mathcal{F}_0$ as the trivial $\sigma$-algebra. Let
    \begin{gather*}
        L_{\omega,h,i,j}=\dfrac{1}{n(n+1)}\left\{ K_1(\hat{\beta}_{\omega,h}(X_i,S_i),\hat{\beta}_{\omega,h}(X_j,S_j))-2\phi(\hat{\beta}_{\omega,h}(X_i,S_i)) \right\}K_2(X_i,X_j).
    \end{gather*}
    Then, $\mathcal{L}_{2}\left(\omega,h;\mathcal{Z}_n\cup\{Z_{n+1}\},\mathcal{Z}_m^\prime\right)=\sum_{1\leq i\neq j\leq n+1}L_{\omega,h,i,j}$. Denote
    \begin{gather*}
        \xi^{\omega,h,\ell}=E\left\{ \mathcal{L}_{2}\left(\omega,h;\mathcal{Z}_n\cup\{Z_{n+1}\},\mathcal{Z}_m^\prime\right)\mid\mathcal{F}_\ell \right\}-E\left\{ \mathcal{L}_{2}\left(\omega,h;\mathcal{Z}_n\cup\{Z_{n+1}\},\mathcal{Z}_m^\prime\right)\mid\mathcal{F}_{\ell-1} \right\}\,,\\
        \xi_{i,j}^{\omega,h,\ell}=E\left\{ L_{\omega,h,i,j}\mid\mathcal{F}_\ell \right\}-E\left\{ L_{\omega,h,i,j}\mid\mathcal{F}_{\ell-1} \right\}\,.
    \end{gather*}
    Therefore,
    \begin{gather*}
        \mathcal{L}_{2}\left(\omega,h;\mathcal{Z}_n\cup\{Z_{n+1}\},\mathcal{Z}_m^\prime\right)-\mathcal{R}_{\omega,h}^{(n,m)}=\sum_{\ell=1}^{n+1}\xi^{\omega,h,\ell}=\sum_{\ell=1}^{n+1}\sum_{1\leq i\neq j\leq n+1}\xi_{i,j}^{\omega,h,\ell}
    \end{gather*}
    
    In the analysis of Step 1, when $i\neq \ell$, the upper bound of $|\hat{\beta}_{\omega,h}^{\setminus \ell}(X_i,S_i)-\hat{\beta}_{\omega,h}(X_i,S_i)|$ is controlled by $\delta_{\beta,i,\ell}$ under event $A_0$. Since $\xi_{i,j}^{\omega,h,\ell}$ involves both $|\hat{\beta}_{\omega,h}^{\setminus \ell}(X_i,S_i)-\hat{\beta}_{\omega,h}(X_i,S_i)|$ and $|\hat{\beta}_{\omega,h}^{\setminus \ell}(X_j,S_j)-\hat{\beta}_{\omega,h}(X_j,S_j)|$, we consider Case 1 where at least one of $i$ and $j$ is equal to $\ell$ or $\ell-1$, and Case 2 where both $i$ and $j$ are not equal to $\ell$ or $\ell-1$.
    
    \noindent\textbf{Case 1:} at least one of $i$ and $j$ is equal to $\ell$ or $\ell-1$: 
    
    \noindent Since each term in $\xi_{i,j}^{\omega,h,\ell}$ is bounded, taking their suprema yields
    \begin{gather*}
        \left|\xi_{i,j}^{\omega,h,\ell}\right|\leq\dfrac{D_{K,0}^2}{n(n+1)}\,.
    \end{gather*}

    \noindent\textbf{Case 2:} both $i$ and $j$ are not equal to $\ell$ or $\ell-1$: 

    \noindent Based on the definition, $L_{\omega,h,i,j}$ is independent of $\widetilde{Z}_\ell$. Replace the $Z_\ell$ in $L_{\omega,h,i,j}$ with $\widetilde{Z}_\ell$, and denote it as $L_{\omega,h,i,j}^{\setminus \ell}$. Therefore,
    \begin{gather}
        E\left\{ L_{\omega,h,i,j}\mid\mathcal{F}_{\ell-1} \right\}=E\left\{ L_{\omega,h,i,j}^{\setminus \ell}\mid\mathcal{F}_{\ell-1} \right\}=E\left[ E\left\{ L_{\omega,h,i,j}^{\setminus \ell}\mid\sigma(\mathcal{F}_{\ell-1},\widetilde{Z}_\ell) \right\}\mid \mathcal{F}_{\ell-1} \right]\,,\label{eq:towerLwhij}
    \end{gather}
    where the outer expectation of the last formula is taken with respect to $\widetilde{Z}_\ell$. When both $i$ and $j$ are not equal to $\ell$ or $\ell-1$, on the event $A_0$ we can derive that:
    \begin{align*}
        &n(n+1)\left|L_{\omega,h,i,j}-L_{\omega,h,i,j}^{\setminus \ell}\right|\\
        \leq&K_2(X_i,X_j)\left|K_1(\hat{\beta}_{\omega,h}(X_i,S_i),\hat{\beta}_{\omega,h}(X_j,S_j))-K_1(\hat{\beta}_{\omega,h}^{\setminus \ell}(X_i,S_i),\hat{\beta}_{\omega,h}^{\setminus \ell}(X_j,S_j))\right|\\
        &+2K_2(X_i,X_j)\left|\phi(\hat{\beta}_{\omega,h}(X_i,S_i))-\phi(\hat{\beta}_{\omega,h}^{\setminus \ell}(X_i,S_i))\right|\\
        \leq&D_{K,0}D_{K,1}\left\{ 3\left|\hat{\beta}_{\omega,h}(X_i,S_i)-\hat{\beta}_{\omega,h}^{\setminus \ell}(X_i,S_i)\right|+\left|\hat{\beta}_{\omega,h}(X_j,S_j)-\hat{\beta}_{\omega,h}^{\setminus \ell}(X_j,S_j)\right| \right\}\\
        =&D_{K,0}D_{K,1}\left( 3\delta_{\beta,i,\ell}+\delta_{\beta,j,\ell} \right)\,.
    \end{align*}
    Moreover, $L_{\omega,h,i,j}$ and $L_{\omega,h,i,j}^{\setminus \ell}$ are always bounded by $D_{K,0}^2/\{n(n+1)\}$.
    Based on \eqref{eq:towerLwhij}, we obtain
    \begin{align*}
        &n(n+1)\left|\xi_{i,j}^{\omega,h,\ell}\right|\\
        =&\left|E\left[ E\left\{ L_{\omega,h,i,j}-L_{\omega,h,i,j}^{\setminus \ell}\mid\sigma(\mathcal{F}_\ell,\widetilde{Z}_\ell) \right\}\mid\mathcal{F}_{\ell-1} \right]\right|\\
        \leq&E\left[ E\left\{ \left|L_{\omega,h,i,j}-L_{\omega,h,i,j}^{\setminus \ell}\right|\mid\sigma(\mathcal{F}_\ell,\widetilde{Z}_\ell) \right\}\mid\mathcal{F}_{\ell-1} \right]\\
        =&E\left[ E\left\{ \left|L_{\omega,h,i,j}-L_{\omega,h,i,j}^{\setminus \ell}\right|\mathbb{1}\left(A_0\right)\mid\sigma(\mathcal{F}_\ell,\widetilde{Z}_\ell) \right\}\mid\mathcal{F}_{\ell-1} \right]\\
        &~~~~~~+E\left[ E\left\{ \left|L_{\omega,h,i,j}-L_{\omega,h,i,j}^{\setminus \ell}\right|\mathbb{1}\left(A_0^c\right)\mid\sigma(\mathcal{F}_\ell,\widetilde{Z}_\ell) \right\}\mid\mathcal{F}_{\ell-1} \right]\\
        \leq&D_{K,0}D_{K,1}E\left[ E\left\{ 3\delta_{\beta,i,\ell}+\delta_{\beta,j,\ell}\mid\sigma(\mathcal{F}_\ell,\widetilde{Z}_\ell) \right\}\mid\mathcal{F}_{\ell-1} \right]+2D_{K,0}^2E\left\{\mathbb{1}\left(A_0^c\right)\mid\mathcal{F}_{\ell-1}\right\}\,.
    \end{align*}
    We analyze $E\left[ E\left\{ \delta_{\beta,i,\ell}\mid\sigma(\mathcal{F}_\ell,\widetilde{Z}_\ell) \right\}\mid\mathcal{F}_{\ell-1} \right]$ as follows:
    \begin{align}
        &E\left[ E\left\{ \delta_{\beta,i,\ell}\mid\sigma(\mathcal{F}_\ell,\widetilde{Z}_\ell) \right\}\mid\mathcal{F}_{\ell-1} \right]-\dfrac{4\underline{L}_2}{n}\notag\\
        \leq&\dfrac{4\underline{L}_2}{\underline{L}_3(n+\omega m)h^d}E\left[ E\left\{ \left|K(X_i,X_\ell;h)-K(X_i,\widetilde{X}_\ell;h)\right|\mid\sigma(\mathcal{F}_\ell,\widetilde{Z}_\ell) \right\}\mid\mathcal{F}_{\ell-1} \right]\notag\\
        \leq&\dfrac{4\underline{L}_2}{\underline{L}_3(n+\omega m)h^d}E\left[ E\left\{ K(X_i,X_\ell;h)+K(X_i,\widetilde{X}_\ell;h)\mid\sigma(\mathcal{F}_{\ell-1},X_i) \right\}\mid\mathcal{F}_{\ell-1} \right].\label{eq:conce loss eq1}
    \end{align}
    From Lemma \ref{lemma:KernelFun}, we can get that
    \begin{align*}
        n(n+1)\left|\xi_{i,j}^{\omega,h,\ell}\right|&\leq D_{K,0}D_{K,1}\left\{ \dfrac{16\underline{L}_2}{n}+\dfrac{32\underline{L}_2\overline{L}_3}{\underline{L}_3(n+\omega m)} \right\}+2D_{K,0}^2E\left\{\mathbb{1}\left(A_0^c\right)\mid\mathcal{F}_{\ell-1}\right\}\,.
    \end{align*}
    The second term in the preceding equation constitutes a non-negative martingale with expectation $E\left\{\mathbb{1}\left(A_0^c\right)\right\}=1-E\left\{\mathbb{1}\left(A_0\right)\right\}\leq(n+1)\exp\{-\overline{C}_0(n+\omega m)h^d\}$. Consequently, applying Doob's maximal inequality yields:
    \begin{align*}
        &\mathrm{pr}\left(\sup_{1\leq \ell\leq n+1}E\left\{\mathbb{1}\left(A_0^c\right)\mid\mathcal{F}_{\ell-1}\right\}\geq(n+1)^{1/2}\exp\{-\overline{C}_0(n+\omega m)h^d/2\}\right)\\
        \leq&(n+1)^{1/2}\exp\{-\overline{C}_0(n+\omega m)h^d/2\}\,.
    \end{align*}
    Consider the following transformation:
    \begin{align*}
        (n+1)^{3/2}\exp\{-\overline{C}_0(n+\omega m)h^d/2\}\leq&\exp\{-\overline{C}_0(n+\omega m)h^d/2+3\log(n)\}\\
        =&\exp[-\log(n)\{-\overline{C}_0(n+\omega m)h^d\log^{-1}(n)/2+3\}]\,.
    \end{align*}
    As $h$ satisfies $(n+\omega m)h^d\log^{-1}(n)\to\infty$, the above expression admits a uniform upper bound, which we denote by $C_h$. Therefore $(n+1)^{1/2}\exp\{-\overline{C}_0(n+\omega m)h^d/2\}\leq C_h n^{-1}$. Denote the event
    \begin{gather*}
        \widetilde{A}=\left\{\sup_{1\leq \ell\leq n+1}E\left\{\mathbb{1}\left(A_0^c\right)\mid\mathcal{F}_{\ell-1}\right\}<(n+1)^{1/2}\exp\{-\overline{C}_0(n+\omega m)h^d/2\}\right\}\,.
    \end{gather*}
    Previous analysis gives $\mathrm{pr}(\widetilde{A})> 1-(n+1)^{1/2}\exp\{-\overline{C}_0(n+\omega m)h^d/2\}$. Therefore, on the event $\widetilde{A}$,
    \begin{gather*}
        n(n+1)\left|\xi_{i,j}^{\omega,h,\ell}\right|\leq D_{K,0}D_{K,1}\left\{ \dfrac{16\underline{L}_2}{n}+\dfrac{32\underline{L}_2\overline{L}_3}{\underline{L}_3(n+\omega m)} \right\}+\dfrac{2D_{K,0}^2C_h}{n}\,.
    \end{gather*}
    Based on \textbf{Case 1} and \textbf{Case 2}, on the event $\widetilde{A}$,
    \begin{align*}
        \left|\xi^{\omega,h,\ell}\right|&\leq\dfrac{4D_{K,0}^2}{n}+D_{K,0}D_{K,1}\left\{ \dfrac{16\underline{L}_2}{n}+\dfrac{32\underline{L}_2\overline{L}_3}{\underline{L}_3(n+\omega m)} \right\}+\dfrac{2D_{K,0}^2C_h}{n}\\
        &\leq\left( 4D_{K,0}^2+16D_{K,0}D_{K,1}\underline{L}_2+\dfrac{32D_{K,0}D_{K,1}\underline{L}_2\underline{L}_3}{\underline{L}_3}+2D_{K,0}^2C_h \right)\dfrac{1}{n}\,.
    \end{align*}
    Define the constant
    \begin{gather*}
        D_\xi\ = 4D_{K,0}^2+16D_{K,0}D_{K,1}\underline{L}_2+\dfrac{32D_{K,0}D_{K,1}\underline{L}_2\underline{L}_3}{\underline{L}_3}+2D_{K,0}^2C_h\,,
    \end{gather*}
    and we have $\sum_{\ell=1}^{n+1}E\left( \left|\xi^{\omega,h,\ell}\right|^2\mid\mathcal{F}_{\ell-1} \right)\leq D_\xi^2n^{-1}$ on the event $A_0$ and $\widetilde{A}$. 
    As $\xi^{\omega,h,\ell}$ is martingale with respect to filter $\mathcal{F}_\ell$, Freedman's inequality gives
    \begin{align*}
        &\mathrm{pr}\left( \left|\sum_{\ell=1}^{n+1}\xi^{\omega,h,\ell}\right|>\tau,\widetilde{A} \right)\\
        \leq&\mathrm{pr}\left( \left|\sum_{\ell=1}^{n+1}\xi^{\omega,h,\ell}\right|>\tau,\sum_{\ell=1}^{n+1}E\left( \left|\xi^{\omega,h,\ell}\right|^2\mid\mathcal{F}_{\ell-1} \right)\leq D_\xi^2n^{-1}, \left|\xi^{\omega,h,\ell}\right|\leq D_\xi n^{-1},~\forall ~1\leq \ell\leq n+1 \right)\\
        \leq&\exp\left( -\dfrac{\tau^2n}{D_\xi^2+2D_\xi\tau/3} \right)\,.
    \end{align*}
    Therefore,
    \begin{align*}
        \mathrm{pr}\left( \left|\sum_{\ell=1}^{n+1}\xi^{\omega,h,\ell}\right|>\tau \right)&\leq\mathrm{pr}\left( \left|\sum_{\ell=1}^{n+1}\xi^{\omega,h,\ell}\right|>\tau,\widetilde{A} \right)+\left\{ 1-\mathrm{pr}\left( \widetilde{A} \right) \right\}\\
        &\leq\exp\left( -\dfrac{\tau^2n}{D_\xi^2+2D_\xi\tau/3} \right)+(n+1)^{1/2}\exp\{-\overline{C}_0(n+\omega m)h^d/2\}\,.
    \end{align*}
    For $\tau<D_{K,0}^2$, let $\overline{C}_1=(D_\xi^2+2D_\xi D_{K,0}^2/3)^{-1}$. For all $n>1$, $n>(n+1)^{1/2}$ holds, and replacing the constant $\overline{C}_0/2$ with $\overline{C}_0$ in the preceding inequality yields the following probabilistic bound:
    \begin{align*}
        & \mathrm{pr}\left( \left|\mathcal{L}_{2}\left(\omega,h;\mathcal{Z}_n\cup\{Z_{n+1}\},\mathcal{Z}_m^\prime\right)-\mathcal{R}_{\omega,h}^{(n,m)}\right|>\tau \right) \\
        = & \mathrm{pr}\left( \left|\sum_{\ell=1}^{n+1}\xi^{\omega,h,\ell}\right|>\tau \right)
        \leq n\exp\left\{ -\overline{C}_0(n+\omega m)h^d \right\}+\exp\left( -\overline{C}_1\tau^2n \right)\,.
    \end{align*}
    We finish the proof of this lemma.
\end{proof}    

\begin{proof}[Proof of Theorem \ref{theo:parameter_selection_consistent}]
    Define the counterpart of $\widetilde{A}$ with bandwidth $h$ as $\widetilde{A}_h$. 
    By Lemma~\ref{lemma: para_select_conv}, for any $(\omega,h)\in\mathcal{G}$, 
    \begin{align*}
        \mathrm{pr}\left( \left|\mathcal{L}_{2}\left(\omega,h;\mathcal{Z}_n\cup\{Z_{n+1}\},\mathcal{Z}_m^\prime\right)-\mathcal{R}_{\omega,h}^{(n,m)}\right|>\tau,\widetilde{A}_h \right)\leq\exp\left( -\overline{C}_1\tau^2n \right)\,.
    \end{align*}
    For $\zeta>0$, define the event 
    \begin{gather*}
        A_{\omega,h}(\zeta)=\left\{ \left|\mathcal{L}_{2}\left(\omega,h;\mathcal{Z}_n\cup\{Z_{n+1}\},\mathcal{Z}_m^\prime\right)-\mathcal{R}_{\omega,h}^{(n,m)}\right|\leq\zeta n^{-1/2}\mathrm{log}^{1/2} n \right\}\,.
    \end{gather*}
    Its complement is denoted as $A_{\omega,h}^c(\zeta)$, and we have  
    \begin{align*}
        \mathrm{pr}\left( A_{\omega,h}^c(\zeta)\cap\widetilde{A}_h \right)\leq n^{-\overline{C}_1\zeta^2}\,,
    \end{align*}
    
    It is assumed that
    \begin{gather*}
        \inf_{(\omega,h)\in\mathcal{G}\setminus\{(\omega^*,h^*)\}}\mathcal{R}_{\omega,h}^{(n,m)}-\mathcal{R}_{\omega^*,h^*}^{(n,m)}>\zeta n^{-1/2}\mathrm{log}^{1/2}n\,.
    \end{gather*}
    On the event $\bigcap_{(\omega,h)\in \mathcal{G}} A_{\omega,h}(\zeta)$, it then follows that  
    \begin{gather*}
        \mathcal{L}_{2}(\omega^*,h^*,\mathcal{Z}_n\cup\{Z_{n+1}\},\mathcal{Z}_m^\prime)<\inf_{(\omega,h)\in\mathcal{G}\setminus\{(\omega^*,h^*)\}}\mathcal{L}_{2}(\omega,h,\mathcal{Z}_n\cup\{Z_{n+1}\},\mathcal{Z}_m^\prime)\,,
    \end{gather*}
    which implies $(\hat{\omega},\hat{h})=(\omega^*,h^*)$. The probability of this event can be bounded as follows:
    \begin{align*}
        \mathrm{pr}\left( (\hat{\omega},\hat{h})=(\omega^*,h^*) \right)&\geq\mathrm{pr}\left( \underset{(\omega,h)\in\mathcal{G}}{\cap}A_{\omega,h}(\zeta) \right)\\
        &\geq 1-\sum_{\ell=1}^{L}\mathrm{pr}\left( \underset{j\in[L]}{\cap}A_{\omega_j,h_\ell}(\zeta) \right)\\
        &\geq 1-\sum_{\ell=1}^{L}\mathrm{pr}\left( \underset{j\in[L]}{\cap}A_{\omega_j,h_\ell}(\zeta)\cap\widetilde{A}_{h_\ell} \right)-\sum_{\ell=1}^{L}\mathrm{pr}\left( \widetilde{A}_{h_\ell} \right)\\
        &\geq 1-\sum_{\ell,j\in[L]}\mathrm{pr}\left( A_{\omega_j,h_\ell}(\zeta)\cap\widetilde{A}_{h_\ell} \right)-\sum_{\ell=1}^{L}\mathrm{pr}\left( \widetilde{A}_{h_\ell} \right)\\
        &=1-L^2n^{-\overline{C}_1\zeta^2}-\sum_{\ell=1}^{L}n\exp\left\{ -\overline{C}_0(n+\omega m)h_i^d \right\}\\
        &\geq 1-L^2n^{-\overline{C}_1\zeta^2}-Ln\exp\left\{ -\overline{C}_0(n+\omega m)[\inf_{\ell\in[L]}h_\ell]^d \right\}
    \end{align*}
    When $(n+\omega m)[\inf_{\ell\in[L]}h_\ell]^d\log^{-1} (Ln)\rightarrow\infty$ and $L=o(n^{\overline{C}_1\zeta/2})$, we get
    \begin{gather*}
        \mathrm{pr}\left( (\hat{\omega},\hat{h})=(\omega^*,h^*) \right)\rightarrow 1\,,
    \end{gather*}
    which completes the proof of this theorem.
\end{proof}

\subsection{Proof of Theorem \ref{theo:marginal_coverage_r}}\label{sec:proof_sm_computing}

\begin{proof}
    Denote $\hat{q}_\alpha^y=Q\left( 1-\alpha;(n+1)^{-1}\left\{ \sum_{i=1}^{n+1}\delta_{\hat{\beta}_{\omega,\hat{r}}^y(X_i,S_i^y)} \right\} \right)$.
    Define the prediction set constructed using $\hat{\beta}_{\omega,\hat{r}}^y(X_{n+1},S_{n+1}^y)$ and $\tilde{q}_\alpha^y$ as
    \begin{gather*}
        \overline{C}_\alpha^{\rm ELCP}=\left\{ y:\hat{\beta}_{\omega,\hat{r}}^y(X_{n+1},S_{n+1}^y)\leq \tilde{q}_\alpha^y \right\}\,.
    \end{gather*}
    We decompose $\widehat{C}_\alpha^{\rm ELCP}\setminus\widetilde{C}_\alpha^{\rm ELCP}$ by $\widehat{C}_\alpha^{\rm ELCP}\setminus\widetilde{C}_\alpha^{\rm ELCP}\subset\left( \widehat{C}_\alpha^{\rm ELCP}\setminus\overline{C}_\alpha^{\rm ELCP} \right)\cup\left( \overline{C}_\alpha^{\rm ELCP}\setminus\widetilde{C}_\alpha^{\rm ELCP} \right)$.
    As $\widehat{C}_\alpha^{\rm ELCP}=\left\{ y:\hat{\beta}_{\omega,\hat{r}}^y(X_{n+1},S_{n+1}^y)\leq \hat{q}_\alpha^y \right\}$, we have
    \begin{gather*}
        \widehat{C}_\alpha^{\rm ELCP}\setminus\overline{C}_\alpha^{\rm ELCP}\subset\left\{ y:\tilde{q}_\alpha^y< \hat{\beta}_{\omega,\hat{r}}^y(X_{n+1},S_{n+1}^y)\leq \hat{q}_\alpha^y \right\}\,.
    \end{gather*}
    Based on the definition of $\tilde{\delta}^y=\underset{1\leq i\leq n+1}{\sup}|\hat{\beta}_{\omega,\hat{r}}^y(X_i,S_i^y)-\hat{\beta}_{\omega,\tilde{r}}^y(X_i,S_i^y)|$, we get that $|\hat{q}_\alpha^y-\tilde{q}_\alpha^y|\leq\tilde{\delta}^y$. Therefore,
    \begin{align*}
        \widehat{C}_\alpha^{\rm ELCP}\setminus\overline{C}_\alpha^{\rm ELCP}&\subset\left\{ y:\tilde{q}_\alpha^y< \hat{\beta}_{\omega,\hat{r}}^y(X_{n+1},S_{n+1}^y)\leq \tilde{q}_\alpha^y+\tilde{\delta}^y \right\}\\
        &\subset\left\{ y:\tilde{q}_\alpha^y-\tilde{\delta}^y< \hat{\beta}_{\omega,\tilde{r}}^y(X_{n+1},S_{n+1}^y)\leq \tilde{q}_\alpha^y+2\tilde{\delta}^y \right\}\,.
    \end{align*}
    On the other hand,
    \begin{align*}
        \overline{C}_\alpha^{\rm ELCP}\setminus\widetilde{C}_\alpha^{\rm ELCP}&\subset\left\{ y:\hat{\beta}_{\omega,\hat{r}}^y(X_{n+1},S_{n+1}^y)\leq \tilde{q}_\alpha^y<\hat{\beta}_{\omega,\tilde{r}}^y(X_{n+1},S_{n+1}^y) \right\}\\
        &\subset\left\{ y:\tilde{q}_\alpha^y< \hat{\beta}_{\omega,\tilde{r}}^y(X_{n+1},S_{n+1}^y)\leq \tilde{q}_\alpha^y+\tilde{\delta}^y \right\}\,.
    \end{align*}
    It follows that
    \begin{gather*}
        \widehat{C}_\alpha^{\rm ELCP}\setminus\widetilde{C}_\alpha^{\rm ELCP}\subset\left\{ y:\tilde{q}_\alpha^y-\tilde{\delta}^y< \hat{\beta}_{\omega,\tilde{r}}^y(X_{n+1},S_{n+1}^y)\leq \tilde{q}_\alpha^y+2\tilde{\delta}^y \right\}\,.
    \end{gather*}
    Similarly, we can show that
    \begin{gather*}
        \widetilde{C}_\alpha^{\rm ELCP}\setminus\widehat{C}_\alpha^{\rm ELCP}\subset\left\{ y:\tilde{q}_\alpha^y-2\tilde{\delta}^y< \hat{\beta}_{\omega,\tilde{r}}^y(X_{n+1},S_{n+1}^y)\leq \tilde{q}_\alpha^y+\tilde{\delta}^y \right\}\,.
    \end{gather*}
    
    Therefore, the difference between the prediction sets $\widetilde{C}_\alpha^{\rm ELCP}(X_{n+1})$ and $\widehat{C}_\alpha^{\rm ELCP}(X_{n+1})$, denoted as $\widetilde{C}_\alpha^{\rm ELCP}(X_{n+1})\Delta\widehat{C}_\alpha^{\rm ELCP}(X_{n+1})$, is a subset of $$D_{\hat{r},\tilde{r}}(X_{n+1})=\left\{ y:\tilde{q}_\alpha^y-2\tilde{\delta}^y< \hat{\beta}_{\omega,\tilde{r}}^y(X_{n+1},S_{n+1}^y)\leq \tilde{q}_\alpha^y+\tilde{\delta}^y \right\}.$$ 
    
    Under the exchangeability of $\mathcal{Z}_n\cup\{Z_{n+1}\}$ and Assumption \ref{assump:permutable}, we can show immediately that
    \begin{align*}
        &\mathrm{pr}\left( Y_{n+1}\in\widetilde{C}_\alpha^{\rm ELCP}(X_{n+1}) \right)\\
        \leq&\mathrm{pr}\left( Y_{n+1}\in\widehat{C}_\alpha^{\rm ELCP}(X_{n+1}) \right)+\mathrm{pr}\left( Y_{n+1}\in\widetilde{C}_\alpha^{\rm ELCP}(X_{n+1})\Delta\widehat{C}_\alpha^{\rm ELCP}(X_{n+1}) \right)\\
        \leq& 1-\alpha+(n+1)^{-1}+\mathrm{pr}(Y_{n+1}\in D_{\hat{r},\tilde{r}}(X_{n+1}))\,,
    \end{align*}
    and
    \begin{align*}
        &\mathrm{pr}\left( Y_{n+1}\in\widetilde{C}_\alpha^{\rm ELCP}(X_{n+1}) \right)\\
        \geq&\mathrm{pr}\left( Y_{n+1}\in\widehat{C}_\alpha^{\rm ELCP}(X_{n+1}) \right)-\mathrm{pr}\left( Y_{n+1}\in\widetilde{C}_\alpha^{\rm ELCP}(X_{n+1})\Delta\widehat{C}_\alpha^{\rm ELCP}(X_{n+1}) \right)\\
        \geq& 1-\alpha-\mathrm{pr}(Y_{n+1}\in D_{\hat{r},\tilde{r}}(X_{n+1}))\,.
    \end{align*}

    Furthermore, under  Assumption \ref{ass2} and the condition that  $\underset{x,s}{\sup}|\hat{r}(x,s)-\tilde{r}(x,s)|\leq C_r n^{-1}$ for some positive constant $ C_r $, we can derive
    \begin{align}
        &|\hat{\beta}_{\omega,\hat{r}}^y(X_i,S_i^y)-\hat{\beta}_{\omega,\tilde{r}}^y(X_i,S_i^y)|\notag\\
        =&\left|\dfrac{\sum_{j=1}^{n+1}K(X_i,X_j;h)\mathbb{1}(S_j^y\leq S_i^y)+\omega\sum_{j=1}^{m}K(X_i,X_j^\prime;h)\tilde{r}(X_j^\prime,S_j^\prime)\mathbb{1}(S_j^\prime\leq S_i^y)}{\sum_{j=1}^{n+1}K(X_i,X_j;h)+\omega\sum_{j=1}^{m}K(X_i,X_j^\prime;h)\tilde{r}(X_j^\prime,S_j^\prime)}\right.\notag\\
        &-\left.\dfrac{\sum_{j=1}^{n+1}K(X_i,X_j;h)\mathbb{1}(S_j^y\leq S_i^y)+\omega\sum_{j=1}^{m}K(X_i,X_j^\prime;h)\hat{r}(X_j^\prime,S_j^\prime)\mathbb{1}(S_j^\prime\leq S_i^y)}{\sum_{j=1}^{n+1}K(X_i,X_j;h)+\omega\sum_{j=1}^{m}K(X_i,X_j^\prime;h)\hat{r}(X_j^\prime,S_j^\prime)}\right|\notag\\
        \leq&\left|\dfrac{\sum_{j=1}^{n+1}K(X_i,X_j;h)\mathbb{1}(S_j^y\leq S_i^y)}{\sum_{j=1}^{n+1}K(X_i,X_j;h)+\omega\sum_{j=1}^{m}K(X_i,X_j^\prime;h)\tilde{r}(X_j^\prime,S_j^\prime)}\right.\notag\\
        &~~~~~~~~~~~~~~~~~~~~~~~~~~-\left.\dfrac{\sum_{j=1}^{n+1}K(X_i,X_j;h)\mathbb{1}(S_j^y\leq S_i^y)}{\sum_{j=1}^{n+1}K(X_i,X_j;h)+\omega\sum_{j=1}^{m}K(X_i,X_j^\prime;h)\hat{r}(X_j^\prime,S_j^\prime)}\right|\label{eq:delta_decomp1}\\
        &+\left|\dfrac{\omega\sum_{j=1}^{m}K(X_i,X_j^\prime;h)\tilde{r}(X_j^\prime,S_j^\prime)\mathbb{1}(S_j^\prime\leq S_i^y)}{\sum_{j=1}^{n+1}K(X_i,X_j;h)+\omega\sum_{j=1}^{m}K(X_i,X_j^\prime;h)\tilde{r}(X_j^\prime,S_j^\prime)}\right.\notag\\
        &~~~~~~~~~~~~~~~~~~~~~~~~~~-\left.\dfrac{\omega\sum_{j=1}^{m}K(X_i,X_j^\prime;h)\hat{r}(X_j^\prime,S_j^\prime)\mathbb{1}(S_j^\prime\leq S_i^y)}{\sum_{j=1}^{n+1}K(X_i,X_j;h)+\omega\sum_{j=1}^{m}K(X_i,X_j^\prime;h)\hat{r}(X_j^\prime,S_j^\prime)}\right|\label{eq:delta_decomp2}\,,
    \end{align}
    where
    \begin{align*}
        \text{\eqref{eq:delta_decomp1}}&\leq{\dfrac{\scriptstyle \sum_{j=1}^{n+1}K(X_i,X_j;h)\mathbb{1}(S_j^y\leq S_i^y)\left\{ \omega\sum_{j=1}^{m}K(X_i,X_j^\prime;h)|\hat{r}(X_j^\prime,S_j^\prime)-\tilde{r}(X_j^\prime,S_j^\prime)| \right\}}{\scriptstyle \left\{ \sum_{j=1}^{n+1}K(X_i,X_j;h)+\omega\sum_{j=1}^{m}K(X_i,X_j^\prime;h)\hat{r}(X_j^\prime,S_j^\prime) \right\}\left\{ \sum_{j=1}^{n+1}K(X_i,X_j;h)+\omega\sum_{j=1}^{m}K(X_i,X_j^\prime;h)\tilde{r}(X_j^\prime,S_j^\prime) \right\}}}\\
        &\leq\dfrac{\left\{ \sum_{j=1}^{n+1}K(X_i,X_j;h) \right\}\left\{ \omega\sum_{j=1}^{m}K(X_i,X_j^\prime;h) \right\}\underset{x,s}{\sup}|\hat{r}(x,s)-\tilde{r}(x,s)|}{\left\{ \sum_{j=1}^{n+1}K(X_i,X_j;h) \right\}\left\{ \omega\sum_{j=1}^{m}K(X_i,X_j^\prime;h)\tilde{r}(X_j^\prime,S_j^\prime) \right\}}\\
        &=\underset{x,s}{\sup}|\hat{r}(x,s)-\tilde{r}(x,s)|/\underset{x,s}{\inf}\ \tilde{r}(x,s)\leq\underline{L}_2^{-1} C_r n^{-1}\,.
    \end{align*}
    and
    \begin{align}
        \notag & \text{\eqref{eq:delta_decomp2}} \\
        \notag \leq&\dfrac{\scriptstyle \sum_{j,l}K(X_i,X_j^\prime;h)K(X_i,X_l^\prime;h)\mathbb{1}(S_j^\prime\leq S_i^y)\left| \hat{r}(X_j^\prime,S_j^\prime)\tilde{r}(X_l^\prime,S_l^\prime)-\hat{r}(X_l^\prime,S_l^\prime)\tilde{r}(X_j^\prime,S_j^\prime) \right|}{\scriptstyle \left\{ \sum_{j=1}^{n+1}K(X_i,X_j;h)+\omega\sum_{j=1}^{m}K(X_i,X_j^\prime;h)\hat{r}(X_j^\prime,S_j^\prime) \right\}\left\{ \sum_{j=1}^{n+1}K(X_i,X_j;h)+\omega\sum_{j=1}^{m}K(X_i,X_j^\prime;h)\tilde{r}(X_j^\prime,S_j^\prime) \right\}} \\
        \notag \leq& \dfrac{\scriptstyle \sum_{j,l}K(X_i,X_j^\prime;h)K(X_i,X_l^\prime;h)\mathbb{1}(S_j^\prime\leq S_i^y)\left|\hat{r}(X_j^\prime,S_j^\prime)\tilde{r}(X_l^\prime,S_l^\prime)-\tilde{r}(X_j^\prime,S_j^\prime)\tilde{r}(X_l^\prime,S_l^\prime)+\tilde{r}(X_j^\prime,S_j^\prime)\tilde{r}(X_l^\prime,S_l^\prime)-\hat{r}(X_l^\prime,S_l^\prime)\tilde{r}(X_j^\prime,S_j^\prime)\right|}{\scriptstyle \left\{ \sum_{j=1}^{n+1}K(X_i,X_j;h)+\omega\sum_{j=1}^{m}K(X_i,X_j^\prime;h)\hat{r}(X_j^\prime,S_j^\prime) \right\}\left\{ \sum_{j=1}^{n+1}K(X_i,X_j;h)+\omega\sum_{j=1}^{m}K(X_i,X_j^\prime;h)\tilde{r}(X_j^\prime,S_j^\prime) \right\}} \\
        \notag \leq&\dfrac{\scriptstyle  C_r n^{-1}\left[ \omega\sum_{j=1}^{m}K(X_i,X_j^\prime;h)\left\{\tilde{r}(X_j^\prime,S_j^\prime)+\hat{r}(X_j^\prime,S_j^\prime)\right\}\mathbb{1}(S_j^\prime\leq S_i^y) \right]\left\{ \sum_{l=1}^{m}K(X_i,X_l^\prime;h) \right\}}{\scriptstyle \left\{ \sum_{j=1}^{n+1}K(X_i,X_j;h)+\omega\sum_{j=1}^{m}K(X_i,X_j^\prime;h)\hat{r}(X_j^\prime,S_j^\prime) \right\}\left\{ \sum_{j=1}^{n+1}K(X_i,X_j;h)+\omega\sum_{j=1}^{m}K(X_i,X_j^\prime;h)\tilde{r}(X_j^\prime,S_j^\prime) \right\}}\\        \notag \leq &2 C_r n^{-1}\,.
    \end{align}
    Finally, we get that $|\hat{\beta}_{\omega,\hat{r}}^y(X_i,S_i^y)-\hat{\beta}_{\omega,\tilde{r}}^y(X_i,S_i^y)|\leq C_r (2+\underline{L}_2^{-1})n^{-1}$ and thus $\tilde{\delta}^y=O(n^{-1})$.
\end{proof}

%\appendixtwo
\section{Technical details}\label{technical details}

\subsection{Detailed definition of $\alpha(y)$ in $\widehat{C}_{\alpha}^{\mathrm{LCP}}(X_{n+1})$}\label{sec:LCP_detail}

    Define $\alpha_1(y)$ as the largest value in $\Gamma(\omega)=\{\sum_{j\in I_i}\omega_{i,j}-\epsilon: i\in[n+1], I_i\subseteq [n+1]\}$, where $0<\epsilon<\underset{\{i,j\in[n+1]:\omega_{i,j}>0\}}{\min}\omega_{i,j}$, such that
    \begin{eqnarray}\label{eq:LCP_alpha1}
        (n+1)^{-1}\sum_{i=1}^{n+1}\mathbb{1}\left( S_i^y<Q\left(1-\alpha_1(y);\hat{F}_i^y\right) \right) \geq 1-\alpha\,.
    \end{eqnarray}
    
    In $\widehat{C}_{\alpha}^{\mathrm{LCP}}(X_{n+1})$ as defined by (4), the value of $\alpha(y)$ is chosen as the largest value in the range of $\hat{F}_{n+1}^y$ that is smaller than $1 - \alpha_1(y)$, minus $\epsilon$.
    By the definition of $\alpha(y)$, there is no value of $\hat{F}_i^y(s)$ or $\hat{\beta}_i^{\mathrm{LCP}}(y)$, $i\in[n+1]$, equal to $1-\alpha(y)$. Therefore
    \begin{align}
        \notag \widehat{C}_\alpha^{\mathrm{LCP}}(X_{n+1})= & \left\{y:S(X_{n+1},y)\leq Q\left(1-\alpha(y); \hat{F}_{n+1}^{y}\right)\right\}\\
        \notag = & \left\{y:S(X_{n+1},y)< Q\left(1-\alpha_1(y); \hat{F}_{n+1}^{y}\right)\right\} \\
        \notag = & \{y:\hat{\beta}_{n+1}^{\mathrm{LCP}}(y)\leq 1-\alpha(y)\}\,,
    \end{align}
    and $\mathbb{1}\left( S_i^y<Q\left(1-\alpha(y);\hat{F}_i^y\right) \right)=\mathbb{1}\left( \hat{\beta}_i^{\mathrm{LCP}}(y)\leq 1-\alpha(y) \right)$ for $i\in[n+1]$.
    
    By the definition of $Q(\cdot;\cdot)$, $z^*=Q\left(1-\alpha;(n+1)^{-1}\sum_{j=1}^{n+1}\delta_{\hat{\beta}_j^{\mathrm{LCP}}(y)}\right)$ is the smallest $z$ such that
    \begin{gather*}
        (n+1)^{-1}\sum_{i=1}^{n+1}\mathbb{1}\left( \hat{\beta}_i^{\mathrm{LCP}}(y)\leq z \right)\geq 1-\alpha\,.
    \end{gather*}
    Since $(n+1)^{-1}\sum_{i=1}^{n+1}\mathbb{1}\left( \hat{\beta}_i^{\mathrm{LCP}}(y)\leq z \right)$ is monotonically increasing with respect to $z$, we have $z^*+\epsilon\geq 1-\alpha_1(y)>z^*$.
    Therefore,
    \begin{gather*}
        \left\{y:\hat{\beta}_{n+1}^{\mathrm{LCP}}(y)\leq z^*\right\}\subset\widehat{C}_\alpha^{\mathrm{LCP}}(X_{n+1})\subset\left\{y:\hat{\beta}_{n+1}^{\mathrm{LCP}}(y)\leq z^*+\epsilon\right\}\,.
    \end{gather*}
    
    For a given weight sequence $\omega_{i,j}$, define $\widetilde{\Gamma}(\omega)=\{\sum_{j\in I_i}\omega_{i,j}: i\in[n+1], I_i\subseteq [n+1]\}$.
    Since $\hat{\beta}_i^{\mathrm{LCP}}(y), i\in[n+1]$ and $z^*$ belong to $\widetilde{\Gamma}(\omega)$, and by the definition of $\epsilon$, the interval $(z^*,z^*+\epsilon]$ contains no elements of $\widetilde{\Gamma}(\omega)$, that is, $(z^*,z^*+\epsilon]\cap\widetilde{\Gamma}(\omega)=\emptyset$. 
    As $\hat{\beta}_{n+1}^{\mathrm{LCP}}(y)$ also takes value only in $\widetilde{\Gamma}(\omega)$, we have $\left\{y:z^*<\hat{\beta}_{n+1}^{\mathrm{LCP}}(y)\leq z^*+\epsilon\right\}=\emptyset$ and consequently $\left\{y:\hat{\beta}_{n+1}^{\mathrm{LCP}}(y)\leq z^*\right\}=\left\{y:\hat{\beta}_{n+1}^{\mathrm{LCP}}(y)\leq z^*+\epsilon\right\}$. 
    This implies that
    \begin{gather*}
        \widehat{C}_\alpha^{\mathrm{LCP}}(X_{n+1})=\left\{y:\hat{\beta}_{n+1}^{\mathrm{LCP}}(y)\leq z^*\right\}=\left\{y:\hat{\beta}_{n+1}^{\mathrm{LCP}}(y)\leq Q\left(1-\alpha;(n+1)^{-1}\sum_{j=1}^{n+1}\delta_{\hat{\beta}_j^{\mathrm{LCP}}(y)}\right)\right\}\,.
    \end{gather*}

\subsection{Comparison of $\widehat{C}_{\alpha}^{\mathrm{LCP}}(X_{n+1})$ and the LCP set in \citet{guan2023localized}}\label{sec:LCP_compare}

    Note that the LCP set $\widehat{C}_{\alpha}^{\mathrm{LCP}}(X_{n+1})$ in (6) is slightly different from that proposed by \citet{guan2023localized}, which is
    \begin{gather*}
        \widetilde{C}_\alpha^{\mathrm{LCP}}(X_{n+1})=\left\{y:S(X_{n+1},y)\leq Q\left(1-\widetilde{\alpha}(y); \hat{F}_{n+1}^{y}\right)\right\}\,,
    \end{gather*}
    where $\widetilde{\alpha}(y)$ is the largest value in $\widetilde{\Gamma}(\omega)=\{\sum_{j\in I_i}\omega_{i,j}: i\in[n+1], I_i\subseteq [n+1]\}$ such that
    \begin{gather*}
        (n+1)^{-1}\sum_{i=1}^{n+1}\mathbb{1}\left(S_i^y\leq Q\left(1-\widetilde{\alpha}(y);\hat{F}_i^y\right)\right)\geq 1-\alpha\,.
    \end{gather*}
    Define \begin{gather*}
        \tilde{\beta}_i^{\mathrm{LCP}}(y)=\sum_{j=1}^{n+1}\omega_{i,j}\mathbb{1}(S_j^y<S_i^y)\,
    \end{gather*}
    for $i\in[n+1]$. Then $\widetilde{C}_\alpha^{\mathrm{LCP}}(X_{n+1})$ is equivalent to
    \begin{eqnarray}
        \notag \widetilde{C}_\alpha^{\mathrm{LCP}}(X_{n+1}) = \left\{y:  \tilde{\beta}_{n+1}^{\mathrm{LCP}}(y) \leq Q\left( 1-\alpha; (n+1)^{-1}\overset{n+1}{\underset{i=1}\sum}\delta_{\tilde{\beta}_i^{\mathrm{LCP}}(y)} \right) \right\}\,.
    \end{eqnarray}

    In contrast, $\widehat{C}_\alpha^{\mathrm{LCP}}(X_{n+1})$ is defined as 
    \begin{eqnarray}
        \notag \widehat{C}_\alpha^{\mathrm{LCP}}(X_{n+1}) = \left\{y:  \hat{\beta}_{n+1}^{\mathrm{LCP}}(y) \leq Q\left( 1-\alpha; (n+1)^{-1}\overset{n+1}{\underset{i=1}\sum}\delta_{\hat{\beta}_i^{\mathrm{LCP}}(y)} \right) \right\}\,,
    \end{eqnarray}
    with
    \begin{equation}
        \notag \hat{\beta}_i^{\mathrm{LCP}}(y) = \sum_{j=1}^{n+1}\omega_{i,j}\mathbb{1}(S_j^{y}\leq S_i^{y}), \ i\in[n+1].
    \end{equation}
    The only difference between $\widetilde{C}_\alpha^{\mathrm{LCP}}(X_{n+1})$ and $\widehat{C}_\alpha^{\mathrm{LCP}}(X_{n+1})$ lies in the definition of $\widetilde{\alpha}(y)$ and $\alpha(y)$, or, $\tilde{\beta}_{i}^{\mathrm{LCP}}(y)$ and $\hat{\beta}_{i}^{\mathrm{LCP}}(y)$.

    If there are no ties among the values $S_{1}^{y},\ldots,S_{n+1}^{y}$ for $y\in\mathcal{Y}$, we have $\sup_{y\in\mathcal{Y}}\tilde{\beta}_i^{\mathrm{LCP}}(y)-\inf_{y\in\mathcal{Y}}\tilde{\beta}_i^{\mathrm{LCP}}(y)\leq\omega_{i,n+1}$ for $i\in[n]$ and $\sup_{y\in\mathcal{Y}}\tilde{\beta}_{n+1}^{\mathrm{LCP}}(y)-\inf_{y\in\mathcal{Y}}\tilde{\beta}_{n+1}^{\mathrm{LCP}}(y)\leq 1-\omega_{n+1,n+1}$.
    Therefore, the variation in the $(1-\alpha)$-th quantile of $\tilde{\beta}_1^{\mathrm{LCP}}(y),\ldots,\tilde{\beta}_{n+1}^{\mathrm{LCP}}(y)$ over different $y\in\mathcal{Y}$ is bounded by $\max\{\omega_{1,n+1},\ldots,\omega_{n,n+1},1-\omega_{n+1,n+1}\}$.
    By the definition of $\widetilde{\alpha}(y)$,  $\widetilde{\alpha}(y)$ is the largest value in $\Gamma(\omega)$ satisfying $1-\widetilde{\alpha}(y)\geq Q\left( 1-\alpha; (n+1)^{-1}\overset{n+1}{\underset{i=1}\sum}\delta_{\tilde{\beta}_i^{\mathrm{LCP}}(y)} \right)$.
    This implies that $\sup_{y\in\mathcal{Y}}\widetilde{\alpha}(y)-\inf_{y\in\mathcal{Y}}\widetilde{\alpha}(y)\leq \max\{\omega_{1,n+1},\ldots,\omega_{n,n+1},1-\omega_{n+1,n+1}\}$.
    Consider the situation where $X_{n+1}$ is located far from all of  $X_1,\ldots,X_n$ such that the weight $\omega_{n+1,n+1}$ is sufficiently close to 1 and $\{\omega_{i,n+1}\}_{i=1}^{n}$ are sufficiently small.
    In this case, it follows that $\sup_{y\in\mathcal{Y}}\widetilde{\alpha}(y) < \omega_{n+1,n+1}$.
    This further implies $\widetilde{\beta}_{n+1}^{\mathrm{LCP}}(y)\leq 1-\omega_{n+1,n+1}<1 - \widetilde{\alpha}(y)$ for any $y \in \mathcal{Y}$.
    As a result, the localized conformal prediction set $\widetilde{C}_\alpha^{\mathrm{LCP}}(X_{n+1}) = \mathcal{Y}$, which is uninformative.
    Therefore, the original LCP method may produce an unnecessarily large and uninformative prediction set for some values of $X_{n+1}=x\in\mathcal{X}$.

    Now consider $\widehat{C}_{\alpha}^{\mathrm{LCP}}(X_{n+1})$ under the situation where $X_{n+1}$ is located far from all of  $X_1,\ldots,X_n$.
    Assume there exists $y^*\in\mathcal{Y}$ such that $S(X_{n+1},y^*)\geq S_i$ for $i\in[n]$.
    Then, for any $y\in\mathcal{Y}$ that satisfies $S(X_{n+1},y)\geq S(X_{n+1},y^*)$, we have $\hat{\beta}_{n+1}^{\mathrm{LCP}}(y)=1$.
    In this case, if $\alpha>2/(n+1)$ and there are two distinct values among $\hat{\beta}_{1}^{\mathrm{LCP}}(y),\ldots,\hat{\beta}_{n+1}^{\mathrm{LCP}}(y)$, then $Q\left( 1-\alpha; (n+1)^{-1}\overset{n+1}{\underset{i=1}\sum}\delta_{\hat{\beta}_i^{\mathrm{LCP}}(y)} \right)<1$.
    This ensures that the modified prediction set $\widehat{C}_\alpha^{\mathrm{LCP}}(X_{n+1})$ is non-trivial as $\mathcal{Y}$.

    Figure \ref{fig:motivation_example1} illustrates a comparison between the original LCP (LCP-O) and the modified LCP (LCP-M).
    \begin{figure}[htbp]
        \centering
        \begin{minipage}{0.29\linewidth}
            \centering
            \includegraphics[scale=0.3]{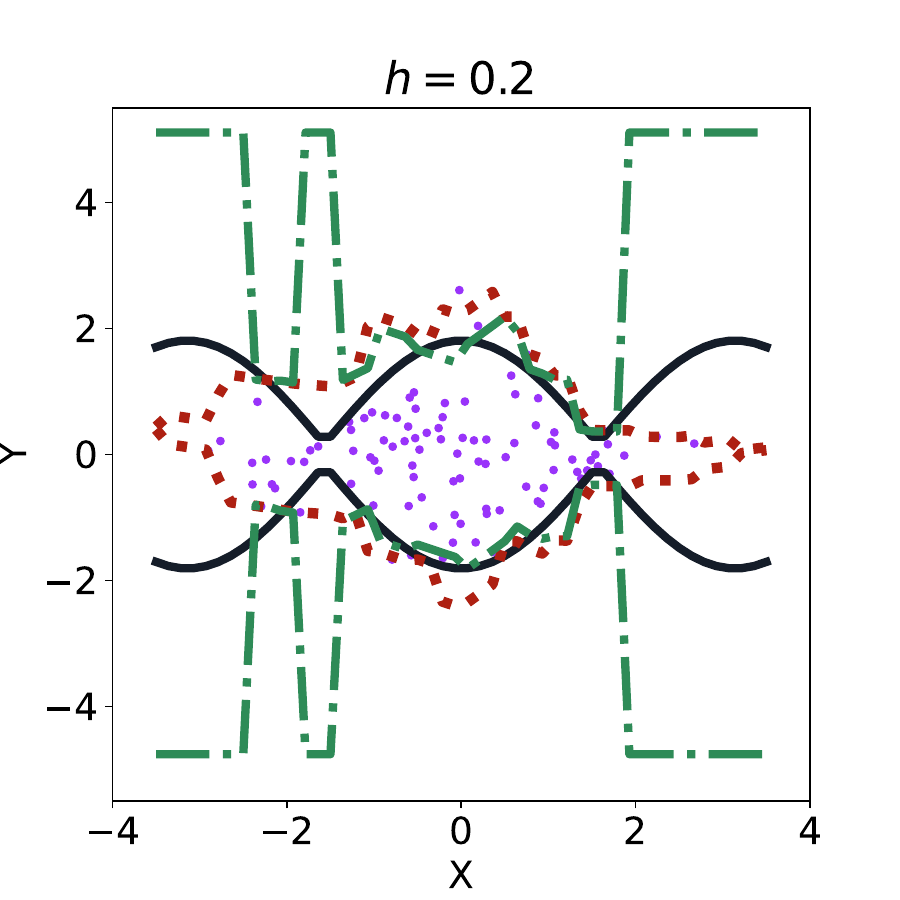}
        \end{minipage}
        \begin{minipage}{0.29\linewidth}
            \centering
            \includegraphics[scale=0.3]{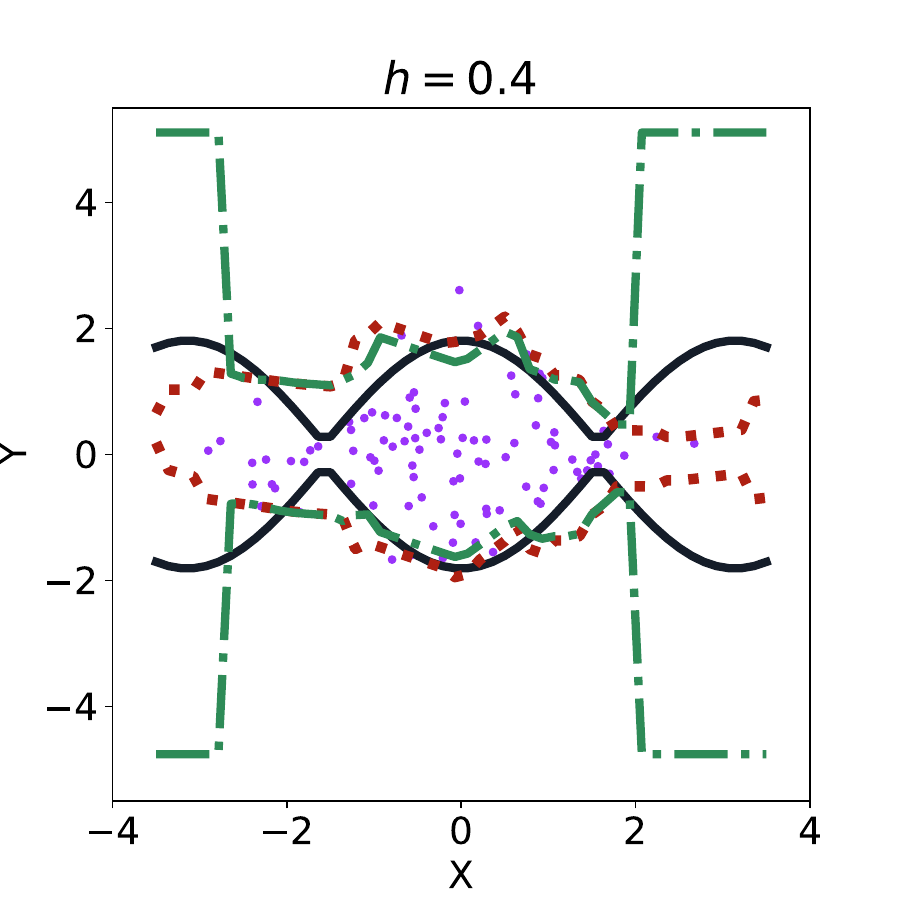}
        \end{minipage}
        \begin{minipage}{0.39\linewidth}
            \centering
            \includegraphics[scale=0.3]{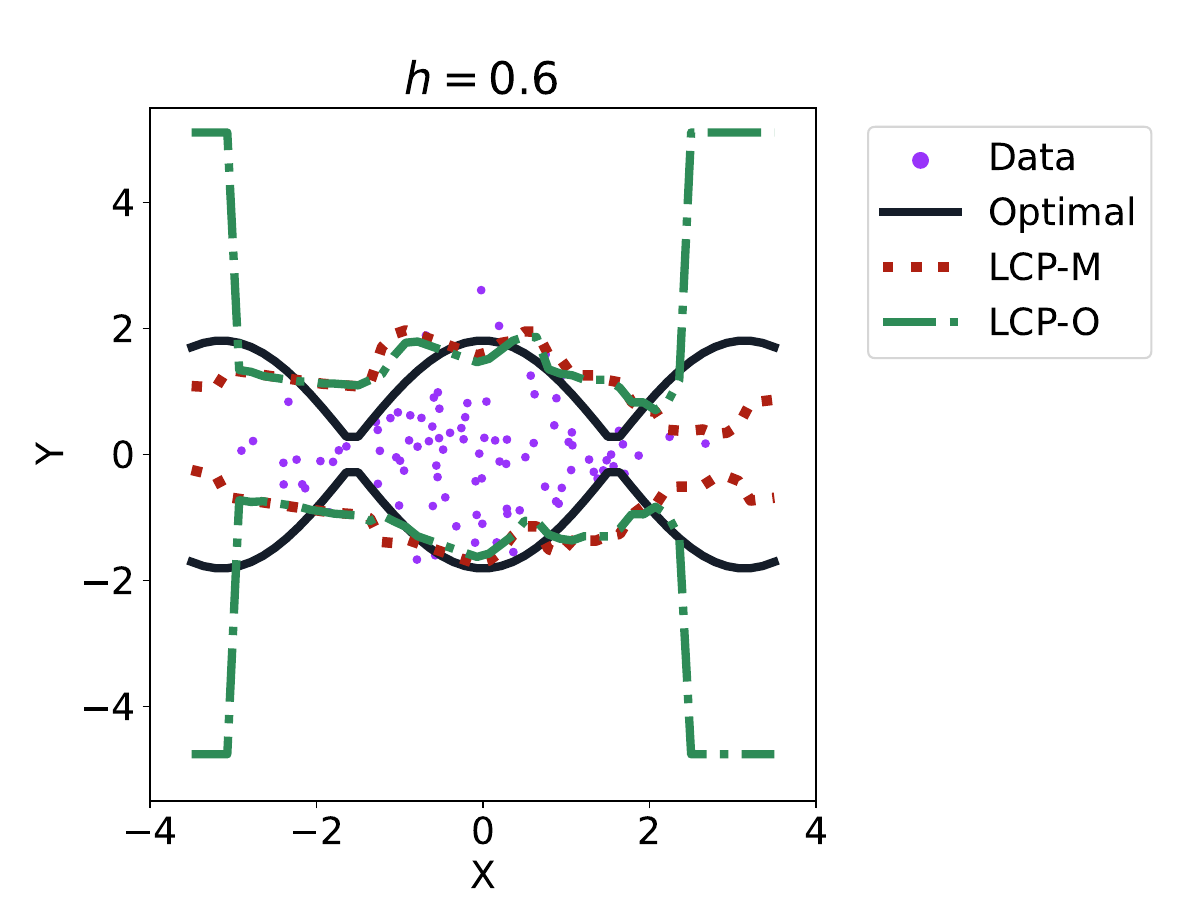}
        \end{minipage}
        \caption{Comparison of prediction bands by original LCP $\widetilde{C}_\alpha^{\mathrm{LCP}}(X_{n+1})$ (dash-dotted), modified LCP defined by $\widehat{C}_\alpha^{\mathrm{LCP}}(X_{n+1})$ (dotted) and optimal band (solid) with bandwidth $h=0.2,0.4$ and $0.6$. The calibration data of size $n=100$ is simulated from $X_i\sim N(0,1.5)$ and $Y_i=\{|\cos(X_i)|+0.1\}\varepsilon_i$, where $\varepsilon_i\sim N(0,1)$.
        }\label{fig:motivation_example1}
    \end{figure}

\subsection{Discussion of justification for Assumption \ref{ass3}}\label{sec:verify_ass3}
    The density ratio estimator $\hat{r}(x,s)$ can be viewed as a functional $T_{x,s}(F_{n+1},F_m^\prime)$ \citep{shao2012jackknife}, where $F_{n+1}$ and $F_m^\prime$ are the empirical distribution of $\mathcal{Z}_n\cup\{Z_{n+1}\}$ and $\mathcal{Z}_m^\prime$, respectively. Let $F_{m,-j}^\prime$ denote the empirical distribution of $\mathcal{Z}_m^\prime\setminus\{Z_j^\prime\}$ and define $\delta_j^\prime=m(F_{m}^\prime-F_{m,-j}^\prime)$, which satisfies $\int d\delta_j^\prime(z)=0$. 
    
    Applying the Von Mises expansion \citep{fernholz2012mises} gives
    \begin{gather*}
        T_{x,s}(F_{n+1},F_m^\prime)-T_{x,s}(F_{n+1},F_{m,-j}^\prime)=m^{-1}\dfrac{d}{dt}T_{x,s}(F_{n+1},F_{m}^\prime+t\delta_j^\prime)\Big|_{t=0}+R(m^{-1},\delta_j^\prime)\,,
    \end{gather*}
    where the first-order term involves the integral of the influence function. If the influence function is uniformly bounded and the domain of $(x,s)$ is bounded, the first-order term is $O(m^{-1})$. 
    Furthermore, if $T_{x,s}(\cdot,\cdot)$ is sufficiently smooth with uniformly bounded higher-order derivatives, the remainder term $R(m^{-1},\delta_j^\prime)$ is $O(m^{-2})$. Combining these results yields an overall expansion of order $O(m^{-1})$, with a constant independent of $(x,s)$, thereby verifying Assumption~\ref{ass3}.

\subsection{Extended results of \citet{filipovic2025kernel} for $D_k(r,\hat{r})$}\label{sec:ext_dk}

    We follow the formulation of \citet{filipovic2025kernel}. 
    Let $\mathcal{Z}$ be a countably generated measurable space with probability measures $P$ and $Q$ where $Q \ll P$, and let $$g_\star := \frac{dQ}{dP} \in L_P^k$$ be the density ratio, where $k\geq 2$ is given. Define $L_M^k$ norm of $h\in L_M^k$ by
    \begin{gather*}
        \|h\|_{L_M^k}=\left\{ \int |h(x)|^kdM(x) \right\}^{1/k}\,.
    \end{gather*}
    We consider a separable reproducing kernel Hilbert space (RKHS) $\mathcal{H}\subset L_M^k\subset L_M^2$ with bounded measurable kernel $\kappa(\cdot,\cdot)$ satisfying $\kappa_\infty := \sup_{z\in\mathcal{Z}} \kappa(z,z) < \infty$, where the canonical embeddings $J_M^k : \mathcal{H} \to L_M^k$, are Hilbert-Schmidt operators, and with adjoints $$J_M^* f = \int_\mathcal{Z} \kappa(\cdot,z)f(z)M(dz)\,,$$ for $M \in \{P,Q\}$. %For simplicity we assume $J_Mh=h$, which means $\mathcal{Z}$ is $L_M^k$ measurable. %When the support of $P,Q$ is bounded and tight
    The hypothesis density ratio is modeled as $g = p_\star + J_P^k h$ for $h \in \mathcal{H}$, where $p_\star : \mathcal{Z} \to \mathbb{R}$ is an exogenous prior function bounded by $\pi_\infty := \sup_{z\in\mathcal{Z}} |p_\star(z)| < \infty$. 
    The error functional $\mathcal{E}_k(h) := \|g_\star - p_\star - J_P^k h\|_{L_P^k}$ measures the worst-case expectation error, leading to the regularized convex problem 
    $$
    \min_{h\in\mathcal{H}} \{\mathcal{E}_2(h)^2 + \lambda\|h\|_\mathcal{H}^2\}\,,
    $$ 
    with unique solution $h_\lambda = (J_P^* J_P^k + \lambda)^{-1} J_P^* (g_\star - p_\star)$. Using Lemma 2.2 of \citet{filipovic2025kernel}, this solution can be equivalently expressed as 
    $$
    h_\lambda = (J_P^* J_P^k + \lambda)^{-1}(J_Q^* 1 - J_P^* p_\star).
    $$ 
    Note that for any $\hat{h}\in\mathcal{H}$ the corresponding density ratio estimator is $\hat{g}=p_\star+J_P^k\hat{h}$. Thus $\mathcal{E}_k(\hat{h})$ is the desired $D_k(g_\star,\hat{g})$. 

    % Using Mercer's theorem, any $h\in\mathcal{H}$ can be decomposed as $h=\sum_{i\in I}\langle h,u_i\rangle u_i$, where $\{u_i\}_{i\in I}$ is an standard orthogonal system in $\mathcal{H}$. Denote the subspace of $\mathcal{H}$ by $\mathcal{H}_1=\{\sum_{i\in I}\langle h,u_i\rangle_\mathcal{H} u_i:\sum_{i\in I}|\langle h,u_i\rangle_\mathcal{H}|<\infty\}$.

    Given i.i.d.~samples $\{z_{P,i}\}_{i=1}^n \sim P$ and $\{z_{Q,j}\}_{j=1}^n \sim Q$, we define the empirical operators $S_M : \mathcal{H} \to \mathbb{R}^n$ for $M \in \{P,Q\}$ as $(S_M h)_i := h(z_{M,i})$, with adjoints $S_M^* v := \sum_{i=1}^n \kappa(\cdot,z_{M,i})v_i$. The sample analogue of $J_M^*$ is $n^{-1}S_M^*$, yielding the empirical convex problem:
    \begin{gather*}
        \min_{h\in\mathcal{H}} \left\{ -2\langle S_Q^* \mathbf{1} - S_P^* \mathbf{p}_\star, h \rangle_\mathcal{H} + \langle (S_P^* S_P + n\lambda) h, h \rangle_\mathcal{H} \right\}\,,
    \end{gather*}
    where $\mathbf{1} := (1,...,1)^\top$ and $\mathbf{p}_\star := (p_\star(z_{P,1}),...,p_\star(z_{P,n}))^\top$. The unique empirical solution is:
    \begin{gather*}
        \hat{h}_\lambda := (S_P^* S_P + n\lambda)^{-1}(S_Q^* \mathbf{1} - S_P^* \mathbf{p}_\star)\,.
    \end{gather*}
    This estimator $\hat{h}_\lambda$ preserves the structure of the population solution $h_\lambda$ but replaces $J_M^*$ with $S_M^*$ and scales the regularization by $n$. Theorem 3.1 of \citet{filipovic2025kernel} establishes that $\hat{h}_\lambda \to h_\lambda$ almost surely as $n \to \infty$, with a functional CLT for $n^{1/2}(\hat{h}_\lambda - h_\lambda)$.

    Consider $\hat{g}_\lambda=p_\star+J_P^k\hat{h}_\lambda$ as the density ratio estimator. We bound $\mathcal{E}_k(\hat{h}_\lambda)$ by
    \begin{gather*}
        \mathcal{E}_k(\hat{h}_\lambda)\leq\mathcal{E}_k(h_\lambda)+\|J_P^k(\hat{h}_\lambda-h_\lambda)\|_{L_P^k}\,.
    \end{gather*}

    \begin{lemma}\label{lemma:dmem}
        Assume conditions of Lemma 2.3 (iii) in \citet{filipovic2025kernel} hold. For an orthogonal system $\{u_i\}_{i\in I}$ in $\mathcal{H}$, with eigenfunctions of $J_{P}^*J_{P}^k$ and eigenvalues $\mu_i>0$, assume $h_0=\sum_{i\in I}\langle h_0,u_i\rangle_\mathcal{H}u_i$ and there exists $\varepsilon\in[0,1]$ such that
        $$
        L_\varepsilon(h_0)\overset{\rm def.}{\equiv}\sum_{i\in I}\langle h_0,u_i\rangle_\mathcal{H}^2\mu_i^{2\varepsilon-2}<\infty\,.
        $$
        If the support of $P$ is a subset of $[0,1]^d$ and $g_\star\in L_P^k$, then
        \begin{equation*}
            \mathcal{E}_k(h_\lambda)\leq\kappa_\infty^{1/2}c_\varepsilon^{-1}L_{\varepsilon}(h_0)\lambda^{1-\varepsilon}\,.
        \end{equation*}
    \end{lemma}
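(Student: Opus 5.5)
The plan is to bound the approximation (bias) error $\mathcal{E}_k(h_\lambda)=\|g_\star-p_\star-J_P^k h_\lambda\|_{L_P^k}$ through the spectral decomposition of $J_P^*J_P^k$. Here $h_0$ is the element with $J_P^k h_0=g_\star-p_\star$, so that $J_P^*(g_\star-p_\star)=J_P^*J_P^k h_0$. First, writing $T:=J_P^*J_P^k$, I will use the closed form $h_\lambda=(T+\lambda)^{-1}Th_0$. This gives
\[
h_0-h_\lambda=\lambda(T+\lambda)^{-1}h_0=\sum_{i\in I}\frac{\lambda}{\mu_i+\lambda}\langle h_0,u_i\rangle_{\mathcal H}u_i ,
\]
and therefore $g_\star-p_\star-J_P^kh_\lambda=J_P^k(h_0-h_\lambda)$. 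The expansion of $h_0$ in $\{u_i\}$ is exactly the assumption $h_0=\sum_i\langle h_0,u_i\rangle u_i$.

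Second, I will pass from the $L_P^k$ norm to the RKHS norm. Since the support of $P$ lies in $[0,1]^d$ and the kernel is bounded, the reproducing property gives, for any $f\in\mathcal H$, $|f(z)|=|\langle f,\kappa(\cdot,z)\rangle_{\mathcal H}|\le\kappa_\infty^{1/2}\|f\|_{\mathcal H}$. Because $P$ is a probability measure, this yields $\|J_P^kf\|_{L_P^k}\le\|f\|_\infty\le\kappa_\infty^{1/2}\|f\|_{\mathcal H}$. It then suffices to bound $\|h_0-h_\lambda\|_{\mathcal H}$.

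Third, by orthogonality,
\[
\|h_0-h_\lambda\|_{\mathcal H}^2=\sum_i\Big(\frac{\lambda}{\mu_i+\lambda}\Big)^2\langle h_0,u_i\rangle^2 .
\]
I write $\frac{\lambda}{\mu_i+\lambda}=\lambda^{1-\varepsilon}\mu_i^{\varepsilon-1}\cdot\frac{\lambda^\varepsilon\mu_i^{1-\varepsilon}}{\mu_i+\lambda}$. The standard interpolation inequality $\lambda^\varepsilon\mu^{1-\varepsilon}\le\mu+\lambda$ (weighted AM–GM), sharpened with a constant $c_\varepsilon$ as in the source (e.g.\ $\sup_{t>0}t^{1-\varepsilon}/(1+t)\le c_\varepsilon^{-1}$), bounds the second factor by $c_\varepsilon^{-1}$. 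This gives
\[
\|h_0-h_\lambda\|_{\mathcal H}\le c_\varepsilon^{-1}\lambda^{1-\varepsilon}\Big(\sum_i\langle h_0,u_i\rangle^2\mu_i^{2\varepsilon-2}\Big)^{1/2}=c_\varepsilon^{-1}\lambda^{1-\varepsilon}L_\varepsilon(h_0)^{1/2}.
\]
The statement has $L_\varepsilon(h_0)$ rather than its square root. I will either match the paper's normalization of $L_\varepsilon$, or note that $L^{1/2}\le\max(1,L)$ and absorb the difference; I expect the intended reading follows Lemma 2.3(iii) of \citet{filipovic2025kernel}. Combining with the second step gives the claim.

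The main obstacle is the first step, namely justifying $h_\lambda=(T+\lambda)^{-1}Th_0$ and the diagonalization. This requires that $g_\star-p_\star$ lie in the range of $J_P^k$ (the content of Lemma 2.3(iii) in \citet{filipovic2025kernel}), that $T$ be self-adjoint positive on $\mathcal H$ with the $u_i$ as eigenvectors, and that the $u_i$ span the component of $h_0$. I will invoke those cited conditions rather than re-derive them. The sup-norm reduction in the second step is where boundedness of the kernel is used, and the bounded support of $P$ guarantees that $\kappa_\infty$ is finite on the relevant domain.
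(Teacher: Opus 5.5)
Your argument is the paper's argument step for step. It uses the spectral form $h_0-h_\lambda=\sum_i\lambda(\mu_i+\lambda)^{-1}\langle h_0,u_i\rangle_{\mathcal H}u_i$ taken from Lemma 2.3(iii) of \citet{filipovic2025kernel}, the bound $\|J_P^k f\|_{L_P^k}\le\kappa_\infty^{1/2}\|f\|_{\mathcal H}$ from the reproducing property, and the interpolation inequality $\lambda+\mu_i\ge c_\varepsilon\lambda^\varepsilon\mu_i^{1-\varepsilon}$. Your conclusion $\mathcal{E}_k(h_\lambda)\le\kappa_\infty^{1/2}c_\varepsilon^{-1}L_\varepsilon(h_0)^{1/2}\lambda^{1-\varepsilon}$ is exactly what that computation yields. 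The paper reaches $L_\varepsilon(h_0)$ only because its final equality drops the square root of $\sum_i\langle h_0,u_i\rangle_{\mathcal H}^2\mu_i^{2\varepsilon-2}$. There is no alternative normalization to match, since $L_\varepsilon$ is defined explicitly as the un-rooted sum.

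Where your proposal goes wrong is the attempted reconciliation. The bound $L^{1/2}\le\max(1,L)$ gives the stated inequality only when $L_\varepsilon(h_0)\ge 1$. The ratio $L^{1/2}/L$ is unbounded as $L\to 0$, so it cannot be absorbed into $c_\varepsilon$.

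In fact the statement as written cannot be proved. Its left side is homogeneous of degree one in $h_0$, since $h_0-h_\lambda=\lambda(J_P^*J_P^k+\lambda)^{-1}h_0$, while its right side is homogeneous of degree two. Take $h_0=tu_i$ for a unit eigenvector $u_i$ (e.g.\ $p_\star=g_\star-tJ_P^ku_i$ with $g_\star$ bounded). The left side is then at least $t\lambda(\mu_i+\lambda)^{-1}\mu_i^{1/2}$, because $\|u_i\|_{L_P^k}\ge\|u_i\|_{L_P^2}=\mu_i^{1/2}$. The right side is $\kappa_\infty^{1/2}c_\varepsilon^{-1}t^2\mu_i^{2\varepsilon-2}\lambda^{1-\varepsilon}$, so the inequality fails for small $t$.

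You should therefore state and prove the lemma with $L_\varepsilon(h_0)^{1/2}$ in place of $L_\varepsilon(h_0)$, rather than hedge between two fixes that do not deliver the literal claim. This change only affects the constant $C_1$ in Theorem~\ref{theo:dmem}, not the rate $\{n^{-1}\log(2/\gamma)\}^{(1-\varepsilon)/(4-2\varepsilon)}$.
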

    \begin{proof}
        We continue with the proof in Lemma 2.3 (iii) in \citet{filipovic2025kernel} that $h_\lambda=(J_{P}^*J_{P}^k+\lambda)^{-1}J_{P}^*J_{P}^kh_0=\sum_{i\in I}\mu_i(\mu_i+\lambda)^{-1}\langle h_0,u_i\rangle_\mathcal{H}u_i$, where $\{u_i\}_{i\in I}$ is an orthogonal system in $\mathcal{H}$ of eigenfunctions of $J_{P}^*J_{P}^k$ with eigenvalues $\mu_i>0$. %As $J_\mathbb{P}h=h$, the eigenvalues becomes $\mu_i=1$ for any $i$. % and $J_{\mathbb{P}}h_\lambda=\sum_{i\in I}\mu_i^{3/2}(\mu_i+\lambda)^{-1}\langle h_0,u_i\rangle_\mathcal{H}v_i$ with $v_i=\mu_i^{-1/2}J_{\mathbb{P}}u_i$. We have
        % \begin{equation*}
        %     J_{\mathbb{P}}h_0-J_{\mathbb{P}}h_\lambda=\sum_{i\in I}\dfrac{\mu_i^{1/2}\lambda}{\mu_i+\lambda}\langle h_0,u_i\rangle_\mathcal{H}u_i\,.
        % \end{equation*}
        For any $h\in\mathcal{H}$, as $h(x)=\langle h,\kappa(x,\cdot) \rangle_\mathcal{H}$, with Cauchy-Schwarz inequality we have $|h(x)|\leq\|h\|_\mathcal{H}\|\kappa(x,\cdot)\|_\mathcal{H}\leq\kappa_\infty^{1/2}\|h\|_\mathcal{H}$. 
        The operator norm of $J_P$ can be defined as
        \begin{gather*}
            \|J_{P}^k\|=\underset{\|h\|_\mathcal{H}\leq 1}{\sup}\|J_{P}^kh\|_{L_P^k}=\underset{\|h\|_\mathcal{H}\leq 1}{\sup}\left\{ \int|h(x)|^kdP(x) \right\}^{1/k}\leq\kappa_\infty^{1/2}\,.
        \end{gather*}
        Therefore we can derive:
        \begin{align*}
            \mathcal{E}_k(h_\lambda)&=\|J_{P}^k(h_0-h_\lambda)\|_{L_P^k}
            % &=\|\sum_{i\in I}\mu_i^{1/2}\lambda(\mu_i+\lambda)^{-1}\langle h_0,u_i\rangle_\mathcal{H}v_i\|_{L_P^k}\\
            % &=\|b_iv_i\|_{L_P^k}\leq M\sum_{i\in I}|b_i|
            \leq\|J_{P}^k\|\|h_0-h_\lambda\|_\mathcal{H} \leq \kappa_\infty^{1/2}\left\{ \sum_{i\in I}\dfrac{\lambda^2}{(\mu_i+\lambda)^2}\langle h_0,u_i\rangle_\mathcal{H}^2 \right\}^{1/2}\,.
            % &\leq \left( \dfrac{\kappa_\infty\lambda}{2} \right)^{1/2}\|h_0\|_\mathcal{H}.
        \end{align*}
        For $\lambda>0$ and $\mu_i>0$, there exists constant $c_\varepsilon>0$ such that $\lambda+\mu_i\geq c_\varepsilon\lambda^\varepsilon\mu_i^{1-\varepsilon}$ always holds. Therefore:
        \begin{align*}
            \mathcal{E}_k(h_\lambda)&\leq\kappa_\infty^{1/2}\left\{ \sum_{i\in I}\dfrac{\lambda^{2-2\varepsilon}}{c_\varepsilon^2\mu_i^{2-2\varepsilon}}\langle h_0,u_i\rangle_\mathcal{H}^2 \right\}^{1/2} =\kappa_\infty^{1/2}c_\varepsilon^{-1}L_{\varepsilon}(h_0)\lambda^{1-\varepsilon}\,,
        \end{align*}
        which completes the proof of this lemma.
    \end{proof}

    Combine Lemma \ref{lemma:dmem} with Theorem 3.1 (iii) of \citet{filipovic2025kernel}, we have a general error bound for $\mathcal{E}_k(\hat{h}_\lambda)$.
    \begin{theorem}\label{theo:dmem}
        Assume conditions of Lemma 2.3 (iii) and Theorem 3.1 (iii) in \citet{filipovic2025kernel} hold. If the support of $P$ is a subset of $[0,1]^d$ and $g_\star\in L_P^k$, and $L_\varepsilon(h_0)<\infty$ for $\varepsilon\in[0,1]$, then with probability over $1-\gamma$,
        \begin{equation*}
            \mathcal{E}_k(\hat{h}_\lambda)\leq C_3\{n^{-1}\log(2/\gamma)\}^{(1-\varepsilon)/(4-2\varepsilon)}\,
        \end{equation*}
        for some positive constant $C_3$.
    \end{theorem}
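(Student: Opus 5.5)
The argument splits the error into a bias term and a variance term,
\[
\mathcal{E}_k(\hat{h}_\lambda)\leq\mathcal{E}_k(h_\lambda)+\|J_P^k(\hat{h}_\lambda-h_\lambda)\|_{L_P^k},
\]
which follows from the triangle inequality in $L_P^k$ as displayed above. The first term is deterministic. Lemma~\ref{lemma:dmem} bounds it by $\kappa_\infty^{1/2}c_\varepsilon^{-1}L_\varepsilon(h_0)\lambda^{1-\varepsilon}$. Everything then reduces to a high-probability bound on the stochastic term, followed by choosing $\lambda$ to balance the two.

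For the stochastic term, I will use the operator-norm bound established in the proof of Lemma~\ref{lemma:dmem}, namely $\|J_P^k\|\leq\kappa_\infty^{1/2}$. This gives
\[
\|J_P^k(\hat{h}_\lambda-h_\lambda)\|_{L_P^k}\leq\kappa_\infty^{1/2}\|\hat{h}_\lambda-h_\lambda\|_{\mathcal{H}},
\]
so it is enough to control $\hat{h}_\lambda-h_\lambda$ in $\mathcal{H}$. Here I invoke Theorem 3.1(iii) of \citet{filipovic2025kernel}. I take it to give a concentration bound of the form: with probability at least $1-\gamma$,
\[
\|\hat{h}_\lambda-h_\lambda\|_{\mathcal{H}}\leq C\lambda^{-1}\{n^{-1}\log(2/\gamma)\}^{1/2}.
\]
The $\lambda^{-1}$ factor comes from $\|(S_P^*S_P/n+\lambda)^{-1}\|\leq\lambda^{-1}$. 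The $n^{-1/2}\log^{1/2}(2/\gamma)$ factor comes from a Hilbert-space Hoeffding/Bernstein inequality applied to $n^{-1}S_Q^*\mathbf{1}-J_Q^*1$, $n^{-1}S_P^*\mathbf{p}_\star-J_P^*p_\star$ and $n^{-1}S_P^*S_P-J_P^*J_P$. These are sums of i.i.d.\ Hilbert-space-valued terms bounded using $\kappa_\infty$ and $\pi_\infty$. The resolvent identity combines these pieces, and $\|h_\lambda\|_{\mathcal{H}}$ is bounded uniformly in $\lambda$ under the source condition $L_\varepsilon(h_0)<\infty$.

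This is the step I expect to be the main obstacle. The exact power of $\lambda$ matters, because it determines the final exponent. If the cited theorem yields $\lambda^{-1}$, balancing $\lambda^{1-\varepsilon}$ against $\lambda^{-1}\{n^{-1}\log(2/\gamma)\}^{1/2}$ gives exponent $(1-\varepsilon)/(4-2\varepsilon)$. To see this, set $a=n^{-1}\log(2/\gamma)$, so the two terms are $\lambda^{1-\varepsilon}$ and $\lambda^{-1}a^{1/2}$. Equating them gives $\lambda^{2-\varepsilon}=a^{1/2}$, that is, $\lambda=a^{1/(4-2\varepsilon)}$. The common value is then $a^{(1-\varepsilon)/(4-2\varepsilon)}$, which matches the target rate. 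So I will verify that the resolvent argument gives precisely $\lambda^{-1}$. If only $\|h_\lambda\|_{\mathcal{H}}$ entered, yielding an extra factor, I would absorb it using $\|h_\lambda\|_{\mathcal{H}}\leq\|h_0\|_{\mathcal{H}}$, which follows from the spectral representation $h_\lambda=\sum_i\mu_i(\mu_i+\lambda)^{-1}\langle h_0,u_i\rangle_{\mathcal{H}}u_i$.

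Finally, I substitute $\lambda=\{n^{-1}\log(2/\gamma)\}^{1/(4-2\varepsilon)}$ into the sum of the two bounds. Both terms become of order $\{n^{-1}\log(2/\gamma)\}^{(1-\varepsilon)/(4-2\varepsilon)}$. Collecting $\kappa_\infty$, $c_\varepsilon^{-1}$, $L_\varepsilon(h_0)$ and $C$ into $C_3$ gives the claim on the event of probability at least $1-\gamma$.
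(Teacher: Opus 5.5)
Your proposal is correct and follows essentially the same route as the paper. It uses the same bias--variance split, Lemma~\ref{lemma:dmem} for the bias, and $\|J_P^k\|\leq\kappa_\infty^{1/2}$ together with Theorem 3.1(iii) of \citet{filipovic2025kernel} for the stochastic term. That theorem gives exactly the factor $\lambda^{-1}n^{-1/2}\log^{1/2}(2/\gamma)$ times $1+\pi_\infty+\kappa_\infty^{1/2}\|h_\lambda\|_{\mathcal{H}}$, as you anticipated, and you then balance in $\lambda$ just as the paper does.

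The differences are cosmetic. You equate the two terms rather than minimizing exactly. You also bound $\|h_\lambda\|_{\mathcal{H}}\leq\|h_0\|_{\mathcal{H}}$ by spectral contraction, which is slightly cleaner than the paper's triangle-inequality bound $\|h_0\|_{\mathcal{H}}+C_1L_\varepsilon(h_0)$.
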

    \begin{proof}
        Use Lemma \ref{lemma:dmem}, and we bound $\mathcal{E}_m(\hat{h}_\lambda)$ by
        \begin{gather}
            \mathcal{E}_k(\hat{h}_\lambda)\leq\mathcal{E}_k(h_\lambda)+\|J_P^k(\hat{h}_\lambda-h_\lambda)\|_{L_P^k}\leq\kappa_\infty^{1/2}c_\varepsilon^{-1}L_{\varepsilon}(h_0)\lambda^{1-\varepsilon}+\kappa_\infty^{1/2}\|\hat{h}_\lambda-h_\lambda\|_\mathcal{H}\,.\label{eq:emupper}
        \end{gather}
        Theorem 3.1 (iii) in \citet{filipovic2025kernel} implies
        \begin{gather*}
            \|\hat{h}_\lambda-h_\lambda\|_\mathcal{H}\leq 2\{2\log(2/\gamma)\kappa_\infty\}^{1/2}(1+\pi_\infty+\|h_\lambda\|_\mathcal{H}\kappa_\infty^{1/2})\lambda^{-1}n^{-1/2}\,
        \end{gather*}
        with probability over $1-\gamma$. 
        In the proof of Lemma \ref{lemma:dmem}, we have 
        \begin{gather*}
            \|h_\lambda\|_\mathcal{H}\leq \|h_0\|_\mathcal{H}+\|h_0-h_\lambda\|_\mathcal{H}\leq \|h_0\|_\mathcal{H}+C_1L_\varepsilon(h_0)\,,
        \end{gather*}
        where $C_1$ is an upper bound for $\kappa_\infty^{1/2}c_\varepsilon^{-1}L_{\varepsilon}(h_0)$. Denote $C_2=2\sqrt{2\kappa_\infty}[1+\pi_\infty+\{\|h_0\|_\mathcal{H}+C_1L_\varepsilon(h_0)\}\sqrt{\kappa_\infty}]$ and 
        $
        C_3=\{(1-\varepsilon)^{1/(2-\varepsilon)}+(1-\varepsilon)^{(\varepsilon-1)/(2-\varepsilon)}\}C_1^{1/(2-\varepsilon)}C_2^{(1-\varepsilon)/(2-\varepsilon)}
        $.
        By simple algebra, the minimum of the right side of \eqref{eq:emupper} is
        $
        C_3\{n^{-1}\log(2/\gamma)\}^{(1-\varepsilon)/(4-2\varepsilon)}\,.
        $
    \end{proof}
    Theorem \ref{theo:dmem} indicates that if $g_\star\in L_P^k$ for some $k\geq 2$, we can reach $
    \epsilon_k(\gamma;r)=C_3\{n^{-1}\log(2/\gamma)\}^{(1-\varepsilon)/(4-2\varepsilon)}.
    $ 
    When the number of nonzero components $\langle h_0,u_i\rangle$ is finite, we can take $\varepsilon=0$, in which case the convergence rate of $\epsilon_k(\gamma;r)$ is $O(n^{-1/4})$. The $\varepsilon$ quantifies the information of $h_0$ along the rapidly contracting components ($u_i$ with smaller $\mu_i$). Small value of $\varepsilon$ indicates reduced information about $h_0$ in these fast decaying directions, making the estimation problem more tractable; therefore, results in a faster rate of $\epsilon_k(\gamma;r)$.
    With $\varepsilon\leq 1/2$, $$\epsilon_k(\gamma;r)=O(\log^{1/6}(2/\gamma)n^{-1/6}).$$

\subsection{Detailed construction of loss function $\mathcal{L}_2$}\label{sec:SM_detailed L2}

    For i.i.d.~data pairs $(X_1,W_1),\ldots,(X_n,W_n)$, let $F_{W\mid X}$ be the conditional distribution of $W_i$ given $X_i$. Then the conditional maximum mean discrepancy \citep[CMMD]{yan2022distance} between $F_{W\mid X}$ and $\mathrm{Uniform}[0,1]$ at $X_i=x$, using kernel $K_1(\cdot,\cdot)$ over the support of $W_i$, is defined as
    \begin{gather}
        \notag E\left\{ K_1(W_1,W_2)\mid X_1=x, X_2=x \right\}+E\left\{ K_1(U_1,U_2) \right\}-2E\left\{ K_1(W_1,U_1)\mid X_1=x \right\}\,,
    \end{gather}
    where $U_1,U_2\sim\mathrm{Uniform}[0,1]$. As $E\left\{ K_1(U_1,U_2) \right\}$ is a constant, we ignore this term in the following formulation.
    Then the integrated conditional maximum mean discrepancy (ICMMD) between $F_{W\mid X}$ and $\mathrm{Uniform}[0,1]$ with weighted function $\{f_{X}(x)\}^2$ is
    \begin{align*}
        & \int \left[E\left\{ K_1(W_1,W_2)\mid X_1=x, X_2=x \right\}-2E\left\{ K_1(W_1,U_1)\mid X_1=x \right\}\right]\{f_{X}(x)\}^2dx \\
        = & \int \left[E\left\{ K_1(W_1,W_2)\mid X_1=x, X_2=x \right\}-2E\left\{ \phi(W_1)\mid X_1=x \right\}\right]\{f_{X}(x)\}^2dx\,,
    \end{align*}
    where $\phi(u)=E\left\{ K_1(u,U_1) \right\}$.
    
    According to \citet{yan2022distance}, an estimator of the ICMMD is given by
    \begin{align*}
        &\dfrac{1}{n(n-1)}\sum_{1\leq i\neq j\leq n}\left\{K_1(W_i,W_j)-2\phi(W_i)\right\}K_2(X_i,X_j)\,,
    \end{align*}
    where $K_2(\cdot,\cdot)$ is a kernel function supported on $\mathcal{X}$.
    
    The loss function $\mathcal{L}_2$ is proposed to measure the discrepancy between the conditional distribution $\hat{\beta}_{\omega,h}(X_i,S_i)$ given $X_i$ and $\mathrm{Uniform}[0,1]$ based on the ICMMD, and the available data pairs are $(X_1,\hat{\beta}_{\omega,h}(X_1,S_1^y)),\ldots, (X_n,\hat{\beta}_{\omega,h}(X_n,S_n^y))$ and $(X_{n+1},\hat{\beta}_{\omega,h}(X_{n+1},S_{n+1}^y))$. Based on the discussion in the previous section, we defined the loss function as
    \begin{align*}
        & \mathcal{L}_{2}\left(\omega,h;\mathcal{Z}_n\cup\{(X_{n+1},y)\},\mathcal{Z}_m^\prime\right) \\
        = &
        \dfrac{1}{n(n+1)}\sum_{1\leq i\neq j\leq n+1}\Big\{K_1(\hat{\beta}_{\omega,h}(X_i,S_i^y),\hat{\beta}_{\omega,h}(X_j,S_j^y)) -2\phi(\hat{\beta}_{\omega,h}(X_i,S_i^y))\Big\}K_2(X_i,X_j)\,.
    \end{align*}

    Algorithm \ref{twoagentsProcedure-parasel} below presents an end-to-end algorithm that implements ELCP with parameter selection.
    \begin{algorithm}[H]
        \caption{Enhanced Localized Conformal Prediction with Parameter Selection}\label{twoagentsProcedure-parasel}
        {\bf Input:} Calibration and auxiliary data $\mathcal{Z}_n$, $\mathcal{Z}_m^\prime$, test point $X_{n+1}$, training data $\mathcal{D}_{\mathrm{tr}}$ and $\mathcal{D}_{\mathrm{tr}}^{\prime}$, score functions $S(\cdot,\cdot)$ and $S^{\prime}(\cdot,\cdot)$, function $K(\cdot,\cdot;\cdot)$, loss function $\mathcal{L}(\cdot,\cdot;\cdot,\cdot)$, parameter candidate set $\mathcal{G}$, level $1-\alpha$
        \begin{algorithmic}[1]
        \State Pretrain model $\hat{\mu}(\cdot)$ from $\mathcal{D}_{\mathrm{tr}}$ and $\hat{\mu}^{\prime}(\cdot)$ from $\mathcal{D}_{\mathrm{tr}}^{\prime}$
        \State Calculate $S_i^{y}=S(X_i,Y_i)$, $i\in[n]$ using $\hat{\mu}(\cdot)$, and $S_j^\prime=S^\prime(X_j^\prime,Y_j^\prime)$, $j\in[m]$ using $\hat{\mu}^{\prime}(\cdot)$
        \For{$y\in\mathcal{Y}$}
            \State\qquad Calculate $S_{n+1}^y=S(X_{n+1},y)$;
            \State\qquad Obtain density ratio estimator $\hat{r}(\cdot,\cdot)$ using $\mathcal{Z}_n\cup\{(X_{n+1}, y)\}$ and $\mathcal{Z}^\prime_m$;
            \State\qquad Solve $(\hat{\omega}(y),\hat{h}(y)) =\underset{(\omega,h)\in\mathcal{G}}{\arg\min}~\mathcal{L}\left(\omega,h;\mathcal{Z}_n\cup\{(X_{n+1},y)\},\mathcal{Z}_m^\prime\right)$
            \State\qquad Calculate $\hat{\beta}_{\hat{\omega}(y),\hat{h}(y)}^{y}(X_i,S_i^{y})$ for $i\in[n+1]$;
            \State\qquad Calculate $\hat{q}=Q\Big(1-\alpha; (n+1)^{-1}\sum_{i=1}^{n+1}\delta_{\hat{\beta}_{\hat{\omega}(y),\hat{h}(y)}^{y}(X_i,S_i^{y})}\Big)$;
            \State\qquad $y$ is included in set $\widehat{C}_\alpha^{\mathrm{ELCP}}(X_{n+1})$ as long as $\hat{\beta}_{\hat{\omega}(y),\hat{h}(y)}^{y}(X_{n+1},S_{n+1}^{y})\leq\hat{q}$.
        \EndFor
        \State \Return $\widehat{C}_\alpha^{\mathrm{ELCP}}(X_{n+1})$
        \end{algorithmic}
    \end{algorithm}

\subsection{Detailed analysis of computational efficient implementation of ELCP}\label{sec:sup computation}

    First, we present an end-to-end computationally efficient implementation of ELCP in Algorithm \ref{algo:computeeffimple ELCP}. 
    The algorithm is generic in the sense that it accommodates both the case of fixed parameters $(\omega,h)$ and the case of parameter selection.
    When $\omega$ and $h$ are fixed and given, then the candidate set $\mathcal{G}$ reduces to the singleton $(\omega,h)$.

    \begin{algorithm}[H]
        \caption{Computationally Efficient Implementation of ELCP}\label{algo:computeeffimple ELCP}
        {\bf Input:} Calibration and auxiliary data $\mathcal{Z}_n$, $\mathcal{Z}_m^\prime$, test point $X_{n+1}$, training data $\mathcal{D}_{\mathrm{tr}}$ and $\mathcal{D}_{\mathrm{tr}}^{\prime}$, score functions $S(\cdot,\cdot)$ and $S^{\prime}(\cdot,\cdot)$, function $K(\cdot,\cdot;\cdot)$, loss function $\mathcal{L}(\cdot,\cdot;\cdot,\cdot)$, parameter candidate set $\mathcal{G}$, level $1-\alpha$
        \begin{algorithmic}[1]
        \State Pretrain model $\hat{\mu}(\cdot)$ from $\mathcal{D}_{\mathrm{tr}}$ and $\hat{\mu}^{\prime}(\cdot)$ from $\mathcal{D}_{\mathrm{tr}}^{\prime}$
        \State Calculate $S_i^{y}=S(X_i,Y_i)$, $i\in[n]$ using $\hat{\mu}(\cdot)$, and $S_j^\prime=S^\prime(X_j^\prime,Y_j^\prime)$, $j\in[m]$ using $\hat{\mu}^{\prime}(\cdot)$
        \State Obtain density ratio estimator $\hat{r}(\cdot,\cdot)$ using $\mathcal{Z}_n$ and $\mathcal{Z}^\prime_m$
        \State Solve $(\hat{\omega},\hat{h}) =\underset{(\omega,h)\in\mathcal{G}}{\arg\min}~\mathcal{L}\left(\omega,h;\mathcal{Z}_n,\mathcal{Z}_m^\prime\right)$
        \For{$y\in\mathcal{Y}$}
            \State\qquad Calculate $S_{n+1}^y=S(X_{n+1},y)$;
            \State\qquad Calculate $\hat{\beta}_{\hat{\omega},\hat{h}}^{y}(X_i,S_i^{y})$ for $i\in[n+1]$;
            \State\qquad Calculate $\hat{q}=Q\Big(1-\alpha; (n+1)^{-1}\sum_{i=1}^{n+1}\delta_{\hat{\beta}_{\hat{\omega},\hat{h}}^{y}(X_i,S_i^{y})}\Big)$;
            \State\qquad $y$ is included in set $\widehat{C}_\alpha^{\mathrm{ELCP}}(X_{n+1})$ as long as $\hat{\beta}_{\hat{\omega},\hat{h}}^{y}(X_{n+1},S_{n+1}^{y})\leq\hat{q}$.
        \EndFor
        \State \Return $\widehat{C}_\alpha^{\mathrm{ELCP}}(X_{n+1})$
        \end{algorithmic}
    \end{algorithm}
    
    To highlight the impacts of the bandwidth $h$, the weighting parameter $\omega$, and the density ratio estimator $\hat{r}(\cdot,\cdot)$, we denote $\hat{\beta}_{\omega,h,\hat{r}}^y(x,s)$ as the counterpart of $\hat{\beta}_{\omega,\hat{r}}^y(x,s)$.
    Let $\tilde{h}$ and $\tilde{\omega}$ denote the parameters selected without using $(X_{n+1},y)$, i.e.,
    \begin{align*}
        (\tilde{h},\tilde{\omega})=\underset{(\omega,h)\in\mathcal{G}}{\arg\min}~\mathcal{L}\left(\omega,h;\mathcal{Z}_n,\mathcal{Z}_m^\prime\right)\,.
    \end{align*}
    
    Replacing $\hat{h}(y)$, $\hat{\omega}(y)$, and $\hat{r}(x,s)$ in $\{\hat{\beta}_{\hat{h}(y),\hat{\omega}(y),\hat{r}}^{y}(X_i,S_i^{y})\}_{i\in[n+1]}$ with $\tilde{h}$, $\tilde{\omega}$, and $\tilde{r}(x,s)$, respectively, yields $\{\hat{\beta}_{\tilde{h}(y),\tilde{\omega}(y),\tilde{r}}^{y}(X_i,S_i^{y})\}_{i\in[n+1]}$, and consequently, 
    \begin{align}
        \notag & \widetilde{C}_\alpha^{\mathrm{ELCP-PS}}(X_{n+1}) = \left\{y:\tilde{\beta}_{\tilde{\omega}(y),\tilde{h}(y)}^y(X_{n+1},S_{n+1}^y)\leq Q\left(1-\alpha;\dfrac{1}{n+1}\sum_{i=1}^{n+1}\delta_{\tilde{\beta}_{\tilde{\omega}(y),\tilde{h}(y)}^y(X_i,S_i^y)}\right)\right\}\,,
    \end{align}
    which is the computationally efficient version of ELCP with parameter selection.

%\appendixthree
\section{Additional numerical results}\label{sec:SM_numerical}

    Section \ref{sec:supp_simu_add_fixed_parameter} presents additional results for synthesized data under fixed $\omega$ and $h$, including: marginal coverage results under fixed $\omega$ and $h$ (Section \ref{sec:supp_marginal}); the effect of $\omega$ and $h$ on the test-conditional miscoverage error of ELCP (Section \ref{sec:supp_simu_parameter1}); results for different density ratio estimators (Section \ref{sec:supp_simu_dre}); the impact of auxiliary data size (Section \ref{sec:supp_simu_auxiliary_size}); experiments on different score functions (Section \ref{sec:supp_simu_score}); experiments with varying nominal coverage level $1-\alpha$ (Section \ref{sec:supp_simu_varying_alpha}); and experiments with extremely limited calibration size $n$ (Section \ref{sec:supp_simu_limited_n}).
    
    Additional results for synthesized data with data-driven selection of $\omega$ and $h$ are given in Section \ref{sec:supp_simu_add_selected_parameter}.
    Section \ref{sec:simu_semi} covers simulations under semi-supervised settings. Section \ref{sec:supp_housing} provides detailed implementations of the Moscow housing price prediction analysis. 
    Another real data analysis on medical insurance cost prediction is presented in Section \ref{sec:real_data}.

\FloatBarrier
\subsection{Additional results for synthesized data under fixed $\omega$ and $h$}\label{sec:supp_simu_add_fixed_parameter}

\subsubsection{Marginal coverage under fixed $\omega$ and $h$}\label{sec:supp_marginal}
    
    Tables~\ref{table:mardgp1}--\ref{table:mardgp3} report the marginal coverage of ELCP with $\omega=1$, LCP, RLCP, LCP-C and RLCP-C for DGP1--DGP3, across varying $h$, $n$ and $d$. 
    Results for ELCP with other values of $\omega$ are similar and therefore are omitted.
    Overall, ELCP, LCP, and RLCP achieve valid marginal coverage, whereas both LCP-C and RLCP-C exhibit substantially overcoverage.
    This suggests that when the auxiliary information is imperfect, directly combining calibration and auxiliary data can lead to unreliable prediction sets.
    \begin{table*}[htbp]
        \centering
        \caption{Marginal coverage rates for DGP1 with $\textbf{--}$ indicates cases with more than $30\%$ infinite prediction intervals.\label{table:mardgp1}}
        {\fontsize{8}{4.5}\selectfont{\setlength{\tabcolsep}{3.5pt}
        \begin{tabular}{cccccccccccccccc}
        \toprule
        $n$ & $d$ & $h$ & $0.4$ & $0.6$ & $0.8$ & $1.0$ & $1.2$ & $1.4$ & $1.6$ & $1.8$ & $2.0$ & $2.5$ & $3.0$ & $3.5$ & $4.0$ \vspace{2pt} \\ \midrule
        $100$ & $5$ & ELCP & $0.894$ & $0.895$ & $0.895$ & $0.896$ & $0.898$ & $0.899$ & $0.898$ & $0.898$ & $0.898$ & $0.898$ & $0.898$ & $0.898$ & $0.898$\\
        &   & LCP & $0.900$ & $0.900$ & $0.899$ & $0.900$ & $0.900$ & $0.900$ & $0.900$ & $0.900$ & $0.901$ & $0.901$ & $0.901$ & $0.901$ & $0.902$\\
        &   & RLCP & $\textbf{--}$ & $\textbf{--}$ & $\textbf{--}$ & $0.905$ & $0.905$ & $0.906$ & $0.906$ & $0.906$ & $0.906$ & $0.907$ & $0.907$ & $0.907$ & $0.907$\\
        &   & LCP-C & $\textbf{0.933}$ & $\textbf{0.937}$ & $\textbf{0.940}$ & $\textbf{0.943}$ & $\textbf{0.943}$ & $\textbf{0.943}$ & $\textbf{0.943}$ & $\textbf{0.943}$ & $\textbf{0.943}$ & $\textbf{0.943}$ & $\textbf{0.943}$ & $\textbf{0.943}$ & $\textbf{0.943}$\\
        &   & RLCP-C & $\textbf{--}$ & $\textbf{0.935}$ & $\textbf{0.940}$ & $\textbf{0.942}$ & $\textbf{0.943}$ & $\textbf{0.943}$ & $\textbf{0.944}$ & $\textbf{0.944}$ & $\textbf{0.944}$ & $\textbf{0.944}$ & $\textbf{0.944}$ & $\textbf{0.944}$ & $\textbf{0.944}$\\
        & $10$ & ELCP & $\textbf{--}$ & $0.900$ & $0.902$ & $0.901$ & $0.904$ & $0.906$ & $0.905$ & $0.905$ & $0.905$ & $0.904$ & $0.904$ & $0.903$ & $0.903$\\
        &   & LCP & $\textbf{--}$ & $\textbf{--}$ & $0.907$ & $0.906$ & $0.905$ & $0.905$ & $0.906$ & $0.905$ & $0.904$ & $0.905$ & $0.905$ & $0.905$ & $0.906$\\
        &   & RLCP & $\textbf{--}$ & $\textbf{--}$ & $\textbf{--}$ & $\textbf{--}$ & $\textbf{--}$ & $\textbf{--}$ & $0.909$ & $0.909$ & $0.910$ & $0.911$ & $0.912$ & $0.912$ & $0.912$\\
        &   & LCP-C & $\textbf{--}$ & $\textbf{0.929}$ & $\textbf{0.931}$ & $\textbf{0.933}$ & $\textbf{0.935}$ & $\textbf{0.936}$ & $\textbf{0.937}$ & $\textbf{0.937}$ & $\textbf{0.937}$ & $\textbf{0.937}$ & $\textbf{0.937}$ & $\textbf{0.937}$ & $\textbf{0.937}$\\
        &   & RLCP-C & $\textbf{--}$ & $\textbf{--}$ & $\textbf{--}$ & $\textbf{--}$ & $\textbf{0.932}$ & $\textbf{0.934}$ & $\textbf{0.936}$ & $\textbf{0.937}$ & $\textbf{0.937}$ & $\textbf{0.938}$ & $\textbf{0.938}$ & $\textbf{0.938}$ & $\textbf{0.938}$\\
        & $15$ & ELCP & $\textbf{--}$ & $\textbf{--}$ & $0.904$ & $0.905$ & $0.907$ & $0.907$ & $0.908$ & $0.909$ & $0.910$ & $0.909$ & $0.909$ & $0.908$ & $0.908$\\
        &   & LCP & $\textbf{--}$ & $\textbf{--}$ & $\textbf{--}$ & $0.909$ & $0.909$ & $0.908$ & $0.908$ & $0.908$ & $0.909$ & $0.909$ & $0.910$ & $0.911$ & $0.911$\\
        &   & RLCP & $\textbf{--}$ & $\textbf{--}$ & $\textbf{--}$ & $\textbf{--}$ & $\textbf{--}$ & $\textbf{--}$ & $\textbf{--}$ & $\textbf{--}$ & $0.912$ & $0.914$ & $0.915$ & $0.916$ & $0.916$\\
        &   & LCP-C & $\textbf{--}$ & $\textbf{--}$ & $\textbf{0.927}$ & $\textbf{0.929}$ & $\textbf{0.931}$ & $\textbf{0.933}$ & $\textbf{0.934}$ & $\textbf{0.934}$ & $\textbf{0.935}$ & $\textbf{0.935}$ & $\textbf{0.934}$ & $\textbf{0.934}$ & $\textbf{0.934}$\\
        &   & RLCP-C & $\textbf{--}$ & $\textbf{--}$ & $\textbf{--}$ & $\textbf{--}$ & $\textbf{--}$ & $\textbf{--}$ & $\textbf{0.930}$ & $\textbf{0.932}$ & $\textbf{0.933}$ & $\textbf{0.934}$ & $\textbf{0.935}$ & $\textbf{0.935}$ & $\textbf{0.935}$\\
        & $20$ & ELCP & $\textbf{--}$ & $\textbf{--}$ & $\textbf{--}$ & $0.899$ & $0.896$ & $0.896$ & $0.895$ & $0.897$ & $0.897$ & $0.897$ & $0.896$ & $0.896$ & $0.896$\\
        &   & LCP & $\textbf{--}$ & $\textbf{--}$ & $\textbf{--}$ & $\textbf{--}$ & $0.896$ & $0.895$ & $0.896$ & $0.897$ & $0.897$ & $0.897$ & $0.898$ & $0.898$ & $0.899$\\
        &   & RLCP & $\textbf{--}$ & $\textbf{--}$ & $\textbf{--}$ & $\textbf{--}$ & $\textbf{--}$ & $\textbf{--}$ & $\textbf{--}$ & $\textbf{--}$ & $\textbf{--}$ & $0.904$ & $0.905$ & $0.905$ & $0.905$\\
        &   & LCP-C & $\textbf{--}$ & $\textbf{--}$ & $\textbf{--}$ & $\textbf{0.920}$ & $\textbf{0.922}$ & $\textbf{0.924}$ & $\textbf{0.925}$ & $\textbf{0.926}$ & $\textbf{0.926}$ & $\textbf{0.926}$ & $\textbf{0.926}$ & $\textbf{0.927}$ & $\textbf{0.927}$\\
        &   & RLCP-C & $\textbf{--}$ & $\textbf{--}$ & $\textbf{--}$ & $\textbf{--}$ & $\textbf{--}$ & $\textbf{--}$ & $\textbf{--}$ & $\textbf{0.922}$ & $\textbf{0.924}$ & $\textbf{0.926}$ & $\textbf{0.927}$ & $\textbf{0.927}$ & $\textbf{0.927}$\\
        $150$ & $5$ & ELCP & $0.899$ & $0.898$ & $0.899$ & $0.900$ & $0.900$ & $0.901$ & $0.901$ & $0.901$ & $0.901$ & $0.901$ & $0.901$ & $0.901$ & $0.901$\\
        &   & LCP & $0.903$ & $0.904$ & $0.902$ & $0.903$ & $0.903$ & $0.902$ & $0.902$ & $0.902$ & $0.902$ & $0.902$ & $0.903$ & $0.903$ & $0.904$\\
        &   & RLCP & $\textbf{--}$ & $\textbf{--}$ & $\textbf{--}$ & $0.907$ & $0.907$ & $0.908$ & $0.908$ & $0.908$ & $0.908$ & $0.908$ & $0.908$ & $0.908$ & $0.908$\\
        &   & LCP-C & $\textbf{0.937}$ & $\textbf{0.941}$ & $\textbf{0.944}$ & $\textbf{0.945}$ & $\textbf{0.946}$ & $\textbf{0.946}$ & $\textbf{0.946}$ & $\textbf{0.946}$ & $\textbf{0.946}$ & $\textbf{0.946}$ & $\textbf{0.946}$ & $\textbf{0.946}$ & $\textbf{0.946}$\\
        &   & RLCP-C & $\textbf{--}$ & $\textbf{0.939}$ & $\textbf{0.943}$ & $\textbf{0.945}$ & $\textbf{0.946}$ & $\textbf{0.946}$ & $\textbf{0.946}$ & $\textbf{0.947}$ & $\textbf{0.947}$ & $\textbf{0.947}$ & $\textbf{0.947}$ & $\textbf{0.947}$ & $\textbf{0.947}$\\
        & $10$ & ELCP & $\textbf{--}$ & $0.899$ & $0.899$ & $0.896$ & $0.896$ & $0.898$ & $0.897$ & $0.897$ & $0.898$ & $0.899$ & $0.899$ & $0.899$ & $0.899$\\
        &   & LCP & $\textbf{--}$ & $0.898$ & $0.897$ & $0.899$ & $0.899$ & $0.899$ & $0.898$ & $0.900$ & $0.900$ & $0.900$ & $0.900$ & $0.900$ & $0.900$\\
        &   & RLCP & $\textbf{--}$ & $\textbf{--}$ & $\textbf{--}$ & $\textbf{--}$ & $\textbf{--}$ & $0.904$ & $0.904$ & $0.905$ & $0.905$ & $0.905$ & $0.905$ & $0.906$ & $0.906$\\
        &   & LCP-C & $\textbf{--}$ & $\textbf{0.935}$ & $\textbf{0.937}$ & $\textbf{0.939}$ & $\textbf{0.941}$ & $\textbf{0.942}$ & $\textbf{0.942}$ & $\textbf{0.942}$ & $\textbf{0.942}$ & $\textbf{0.941}$ & $\textbf{0.941}$ & $\textbf{0.941}$ & $\textbf{0.941}$\\
        &   & RLCP-C & $\textbf{--}$ & $\textbf{--}$ & $\textbf{--}$ & $\textbf{0.931}$ & $\textbf{0.936}$ & $\textbf{0.939}$ & $\textbf{0.940}$ & $\textbf{0.941}$ & $\textbf{0.941}$ & $\textbf{0.941}$ & $\textbf{0.942}$ & $\textbf{0.942}$ & $\textbf{0.942}$\\
        & $15$ & ELCP & $\textbf{--}$ & $\textbf{--}$ & $0.903$ & $0.902$ & $0.901$ & $0.901$ & $0.902$ & $0.901$ & $0.901$ & $0.900$ & $0.900$ & $0.900$ & $0.900$\\
        &   & LCP & $\textbf{--}$ & $\textbf{--}$ & $\textbf{--}$ & $0.901$ & $0.901$ & $0.901$ & $0.901$ & $0.901$ & $0.901$ & $0.901$ & $0.901$ & $0.902$ & $0.902$\\
        &   & RLCP & $\textbf{--}$ & $\textbf{--}$ & $\textbf{--}$ & $\textbf{--}$ & $\textbf{--}$ & $\textbf{--}$ & $\textbf{--}$ & $0.903$ & $0.904$ & $0.905$ & $0.906$ & $0.906$ & $0.906$\\
        &   & LCP-C & $\textbf{--}$ & $\textbf{--}$ & $\textbf{0.932}$ & $\textbf{0.934}$ & $\textbf{0.937}$ & $\textbf{0.940}$ & $\textbf{0.940}$ & $\textbf{0.941}$ & $\textbf{0.941}$ & $\textbf{0.941}$ & $\textbf{0.941}$ & $\textbf{0.941}$ & $\textbf{0.941}$\\
        &   & RLCP-C & $\textbf{--}$ & $\textbf{--}$ & $\textbf{--}$ & $\textbf{--}$ & $\textbf{--}$ & $\textbf{0.932}$ & $\textbf{0.936}$ & $\textbf{0.938}$ & $\textbf{0.939}$ & $\textbf{0.941}$ & $\textbf{0.941}$ & $\textbf{0.942}$ & $\textbf{0.942}$\\
        & $20$ & ELCP & $\textbf{--}$ & $\textbf{--}$ & $\textbf{--}$ & $0.896$ & $0.896$ & $0.897$ & $0.898$ & $0.898$ & $0.899$ & $0.898$ & $0.898$ & $0.897$ & $0.897$\\
        &   & LCP & $\textbf{--}$ & $\textbf{--}$ & $\textbf{--}$ & $\textbf{--}$ & $0.900$ & $0.900$ & $0.899$ & $0.898$ & $0.898$ & $0.898$ & $0.898$ & $0.899$ & $0.899$\\
        &   & RLCP & $\textbf{--}$ & $\textbf{--}$ & $\textbf{--}$ & $\textbf{--}$ & $\textbf{--}$ & $\textbf{--}$ & $\textbf{--}$ & $\textbf{--}$ & $\textbf{--}$ & $0.904$ & $0.904$ & $0.904$ & $0.904$\\
        &   & LCP-C & $\textbf{--}$ & $\textbf{--}$ & $\textbf{--}$ & $\textbf{0.928}$ & $\textbf{0.930}$ & $\textbf{0.933}$ & $\textbf{0.934}$ & $\textbf{0.935}$ & $\textbf{0.935}$ & $\textbf{0.935}$ & $\textbf{0.935}$ & $\textbf{0.935}$ & $\textbf{0.935}$\\
        &   & RLCP-C & $\textbf{--}$ & $\textbf{--}$ & $\textbf{--}$ & $\textbf{--}$ & $\textbf{--}$ & $\textbf{--}$ & $\textbf{0.926}$ & $\textbf{0.929}$ & $\textbf{0.932}$ & $\textbf{0.934}$ & $\textbf{0.935}$ & $\textbf{0.936}$ & $\textbf{0.936}$\\
        $200$ & $5$ & ELCP & $0.899$ & $0.900$ & $0.900$ & $0.901$ & $0.900$ & $0.901$ & $0.901$ & $0.901$ & $0.901$ & $0.901$ & $0.901$ & $0.901$ & $0.901$\\
        &   & LCP & $0.901$ & $0.901$ & $0.901$ & $0.900$ & $0.901$ & $0.902$ & $0.902$ & $0.903$ & $0.903$ & $0.903$ & $0.903$ & $0.903$ & $0.903$\\
        &   & RLCP & $\textbf{--}$ & $\textbf{--}$ & $0.905$ & $0.905$ & $0.906$ & $0.906$ & $0.906$ & $0.906$ & $0.906$ & $0.906$ & $0.906$ & $0.906$ & $0.906$\\
        &   & LCP-C & $\textbf{0.937}$ & $\textbf{0.942}$ & $\textbf{0.944}$ & $\textbf{0.946}$ & $\textbf{0.946}$ & $\textbf{0.946}$ & $\textbf{0.947}$ & $\textbf{0.946}$ & $\textbf{0.947}$ & $\textbf{0.947}$ & $\textbf{0.946}$ & $\textbf{0.947}$ & $\textbf{0.947}$\\
        &   & RLCP-C & $\textbf{--}$ & $\textbf{0.940}$ & $\textbf{0.944}$ & $\textbf{0.945}$ & $\textbf{0.946}$ & $\textbf{0.946}$ & $\textbf{0.946}$ & $\textbf{0.947}$ & $\textbf{0.947}$ & $\textbf{0.947}$ & $\textbf{0.947}$ & $\textbf{0.947}$ & $\textbf{0.947}$\\
        & $10$ & ELCP & $\textbf{--}$ & $0.898$ & $0.896$ & $0.895$ & $0.896$ & $0.897$ & $0.897$ & $0.896$ & $0.896$ & $0.896$ & $0.896$ & $0.896$ & $0.896$\\
        &   & LCP & $\textbf{--}$ & $0.896$ & $0.895$ & $0.895$ & $0.895$ & $0.895$ & $0.895$ & $0.896$ & $0.896$ & $0.898$ & $0.898$ & $0.898$ & $0.898$\\
        &   & RLCP & $\textbf{--}$ & $\textbf{--}$ & $\textbf{--}$ & $\textbf{--}$ & $\textbf{--}$ & $0.900$ & $0.900$ & $0.901$ & $0.901$ & $0.901$ & $0.901$ & $0.901$ & $0.901$\\
        &   & LCP-C & $\textbf{--}$ & $\textbf{0.936}$ & $\textbf{0.938}$ & $\textbf{0.941}$ & $\textbf{0.943}$ & $\textbf{0.944}$ & $\textbf{0.943}$ & $\textbf{0.944}$ & $\textbf{0.943}$ & $\textbf{0.943}$ & $\textbf{0.943}$ & $\textbf{0.943}$ & $\textbf{0.943}$\\
        &   & RLCP-C & $\textbf{--}$ & $\textbf{--}$ & $\textbf{--}$ & $\textbf{0.934}$ & $\textbf{0.939}$ & $\textbf{0.941}$ & $\textbf{0.942}$ & $\textbf{0.943}$ & $\textbf{0.943}$ & $\textbf{0.943}$ & $\textbf{0.943}$ & $\textbf{0.943}$ & $\textbf{0.943}$\\
        & $15$ & ELCP & $\textbf{--}$ & $\textbf{--}$ & $0.900$ & $0.899$ & $0.900$ & $0.900$ & $0.900$ & $0.901$ & $0.901$ & $0.901$ & $0.901$ & $0.901$ & $0.901$\\
        &   & LCP & $\textbf{--}$ & $\textbf{--}$ & $0.900$ & $0.899$ & $0.898$ & $0.900$ & $0.901$ & $0.902$ & $0.902$ & $0.902$ & $0.902$ & $0.902$ & $0.903$\\
        &   & RLCP & $\textbf{--}$ & $\textbf{--}$ & $\textbf{--}$ & $\textbf{--}$ & $\textbf{--}$ & $\textbf{--}$ & $\textbf{--}$ & $0.904$ & $0.904$ & $0.905$ & $0.905$ & $0.905$ & $0.905$\\
        &   & LCP-C & $\textbf{--}$ & $\textbf{--}$ & $\textbf{0.934}$ & $\textbf{0.937}$ & $\textbf{0.939}$ & $\textbf{0.942}$ & $\textbf{0.942}$ & $\textbf{0.943}$ & $\textbf{0.943}$ & $\textbf{0.943}$ & $\textbf{0.943}$ & $\textbf{0.943}$ & $\textbf{0.943}$\\
        &   & RLCP-C & $\textbf{--}$ & $\textbf{--}$ & $\textbf{--}$ & $\textbf{--}$ & $\textbf{--}$ & $\textbf{0.935}$ & $\textbf{0.938}$ & $\textbf{0.940}$ & $\textbf{0.941}$ & $\textbf{0.943}$ & $\textbf{0.943}$ & $\textbf{0.943}$ & $\textbf{0.943}$\\
        & $20$ & ELCP & $\textbf{--}$ & $\textbf{--}$ & $\textbf{--}$ & $0.898$ & $0.899$ & $0.900$ & $0.899$ & $0.900$ & $0.900$ & $0.901$ & $0.901$ & $0.901$ & $0.901$\\
        &   & LCP & $\textbf{--}$ & $\textbf{--}$ & $\textbf{--}$ & $0.901$ & $0.899$ & $0.899$ & $0.899$ & $0.900$ & $0.899$ & $0.901$ & $0.901$ & $0.901$ & $0.902$\\
        &   & RLCP & $\textbf{--}$ & $\textbf{--}$ & $\textbf{--}$ & $\textbf{--}$ & $\textbf{--}$ & $\textbf{--}$ & $\textbf{--}$ & $\textbf{--}$ & $\textbf{--}$ & $0.904$ & $0.905$ & $0.905$ & $0.905$\\
        &   & LCP-C & $\textbf{--}$ & $\textbf{--}$ & $\textbf{--}$ & $\textbf{0.931}$ & $\textbf{0.934}$ & $\textbf{0.936}$ & $\textbf{0.938}$ & $\textbf{0.938}$ & $\textbf{0.938}$ & $\textbf{0.938}$ & $\textbf{0.938}$ & $\textbf{0.938}$ & $\textbf{0.938}$\\
        &   & RLCP-C & $\textbf{--}$ & $\textbf{--}$ & $\textbf{--}$ & $\textbf{--}$ & $\textbf{--}$ & $\textbf{--}$ & $\textbf{0.930}$ & $\textbf{0.933}$ & $\textbf{0.935}$ & $\textbf{0.937}$ & $\textbf{0.938}$ & $\textbf{0.938}$ & $\textbf{0.938}$\\
    \bottomrule
    \end{tabular}}}
    \end{table*}
    \begin{table*}[htbp]
        \centering
        \caption{Marginal coverage rates for DGP2 with $\textbf{--}$ indicates cases with more than $30\%$ infinite prediction intervals.\label{table:mardgp2}}
        {\fontsize{8}{4.5}\selectfont{\setlength{\tabcolsep}{3.5pt}
        \begin{tabular}{cccccccccccccccc}
        \toprule
        $n$ & $d$ & $h$ & $0.4$ & $0.6$ & $0.8$ & $1.0$ & $1.2$ & $1.4$ & $1.6$ & $1.8$ & $2.0$ & $2.5$ & $3.0$ & $3.5$ & $4.0$ \vspace{2pt} \\ \midrule
        $100$ & $5$ & ELCP & $0.892$ & $0.890$ & $0.886$ & $0.886$ & $0.889$ & $0.891$ & $0.892$ & $0.892$ & $0.892$ & $0.893$ & $0.894$ & $0.894$ & $0.894$\\
        &   & LCP & $0.897$ & $0.896$ & $0.896$ & $0.894$ & $0.894$ & $0.894$ & $0.893$ & $0.893$ & $0.894$ & $0.894$ & $0.896$ & $0.897$ & $0.897$\\
        &   & RLCP & $\textbf{--}$ & $\textbf{--}$ & $\textbf{--}$ & $0.904$ & $0.904$ & $0.904$ & $0.904$ & $0.904$ & $0.904$ & $0.904$ & $0.904$ & $0.904$ & $0.904$\\
        &   & LCP-C & $\textbf{0.920}$ & $\textbf{0.923}$ & $\textbf{0.927}$ & $\textbf{0.928}$ & $\textbf{0.928}$ & $\textbf{0.928}$ & $\textbf{0.928}$ & $\textbf{0.927}$ & $\textbf{0.926}$ & $\textbf{0.925}$ & $\textbf{0.924}$ & $\textbf{0.923}$ & $\textbf{0.923}$\\
        &   & RLCP-C & $\textbf{--}$ & $\textbf{0.923}$ & $\textbf{0.925}$ & $\textbf{0.926}$ & $\textbf{0.926}$ & $\textbf{0.926}$ & $\textbf{0.926}$ & $\textbf{0.925}$ & $\textbf{0.925}$ & $\textbf{0.924}$ & $\textbf{0.924}$ & $\textbf{0.924}$ & $\textbf{0.924}$\\
        & $10$ & ELCP & $\textbf{--}$ & $0.900$ & $0.898$ & $0.900$ & $0.902$ & $0.901$ & $0.903$ & $0.905$ & $0.905$ & $0.907$ & $0.909$ & $0.909$ & $0.909$\\
        &   & LCP & $\textbf{--}$ & $\textbf{--}$ & $0.907$ & $0.908$ & $0.908$ & $0.908$ & $0.909$ & $0.909$ & $0.909$ & $0.909$ & $0.910$ & $0.911$ & $0.912$\\
        &   & RLCP & $\textbf{--}$ & $\textbf{--}$ & $\textbf{--}$ & $\textbf{--}$ & $\textbf{--}$ & $\textbf{--}$ & $0.912$ & $0.913$ & $0.914$ & $0.915$ & $0.916$ & $0.916$ & $0.916$\\
        &   & LCP-C & $\textbf{--}$ & $\textbf{0.912}$ & $\textbf{0.914}$ & $\textbf{0.917}$ & $\textbf{0.922}$ & $\textbf{0.925}$ & $\textbf{0.927}$ & $\textbf{0.928}$ & $\textbf{0.929}$ & $\textbf{0.930}$ & $\textbf{0.930}$ & $\textbf{0.930}$ & $\textbf{0.929}$\\
        &   & RLCP-C & $\textbf{--}$ & $\textbf{--}$ & $\textbf{--}$ & $\textbf{--}$ & $\textbf{0.920}$ & $\textbf{0.923}$ & $\textbf{0.925}$ & $\textbf{0.926}$ & $\textbf{0.927}$ & $\textbf{0.928}$ & $\textbf{0.929}$ & $\textbf{0.929}$ & $\textbf{0.930}$\\
        & $15$ & ELCP & $\textbf{--}$ & $\textbf{--}$ & $0.899$ & $0.900$ & $0.902$ & $0.905$ & $0.905$ & $0.905$ & $0.904$ & $0.904$ & $0.903$ & $0.904$ & $0.905$\\
        &   & LCP & $\textbf{--}$ & $\textbf{--}$ & $\textbf{--}$ & $0.907$ & $0.908$ & $0.908$ & $0.908$ & $0.908$ & $0.908$ & $0.908$ & $0.908$ & $0.907$ & $0.907$\\
        &   & RLCP & $\textbf{--}$ & $\textbf{--}$ & $\textbf{--}$ & $\textbf{--}$ & $\textbf{--}$ & $\textbf{--}$ & $\textbf{--}$ & $\textbf{--}$ & $0.911$ & $0.913$ & $0.914$ & $0.914$ & $0.914$\\
        &   & LCP-C & $\textbf{--}$ & $\textbf{--}$ & $\textbf{0.909}$ & $\textbf{0.912}$ & $\textbf{0.914}$ & $\textbf{0.917}$ & $\textbf{0.919}$ & $\textbf{0.920}$ & $\textbf{0.921}$ & $\textbf{0.922}$ & $\textbf{0.923}$ & $\textbf{0.923}$ & $\textbf{0.923}$\\
        &   & RLCP-C & $\textbf{--}$ & $\textbf{--}$ & $\textbf{--}$ & $\textbf{--}$ & $\textbf{--}$ & $\textbf{--}$ & $\textbf{0.915}$ & $\textbf{0.917}$ & $\textbf{0.919}$ & $\textbf{0.921}$ & $\textbf{0.922}$ & $\textbf{0.923}$ & $\textbf{0.923}$\\
        & $20$ & ELCP & $\textbf{--}$ & $\textbf{--}$ & $\textbf{--}$ & $0.899$ & $0.900$ & $0.898$ & $0.898$ & $0.900$ & $0.900$ & $0.899$ & $0.898$ & $0.898$ & $0.898$\\
        &   & LCP & $\textbf{--}$ & $\textbf{--}$ & $\textbf{--}$ & $\textbf{--}$ & $0.900$ & $0.900$ & $0.900$ & $0.900$ & $0.901$ & $0.901$ & $0.899$ & $0.899$ & $0.899$\\
        &   & RLCP & $\textbf{--}$ & $\textbf{--}$ & $\textbf{--}$ & $\textbf{--}$ & $\textbf{--}$ & $\textbf{--}$ & $\textbf{--}$ & $\textbf{--}$ & $\textbf{--}$ & $0.905$ & $0.906$ & $0.907$ & $0.908$\\
        &   & LCP-C & $\textbf{--}$ & $\textbf{--}$ & $\textbf{--}$ & $\textbf{0.902}$ & $\textbf{0.903}$ & $\textbf{0.906}$ & $\textbf{0.908}$ & $\textbf{0.910}$ & $\textbf{0.912}$ & $\textbf{0.915}$ & $\textbf{0.917}$ & $\textbf{0.917}$ & $\textbf{0.917}$\\
        &   & RLCP-C & $\textbf{--}$ & $\textbf{--}$ & $\textbf{--}$ & $\textbf{--}$ & $\textbf{--}$ & $\textbf{--}$ & $\textbf{--}$ & $\textbf{0.908}$ & $\textbf{0.910}$ & $\textbf{0.913}$ & $\textbf{0.915}$ & $\textbf{0.917}$ & $\textbf{0.917}$\\
        $150$ & $5$ & ELCP & $0.895$ & $0.897$ & $0.895$ & $0.894$ & $0.893$ & $0.893$ & $0.893$ & $0.892$ & $0.892$ & $0.891$ & $0.890$ & $0.890$ & $0.890$\\
        &   & LCP & $0.899$ & $0.900$ & $0.899$ & $0.897$ & $0.895$ & $0.893$ & $0.893$ & $0.893$ & $0.893$ & $0.892$ & $0.892$ & $0.892$ & $0.892$\\
        &   & RLCP & $\textbf{--}$ & $\textbf{--}$ & $\textbf{--}$ & $0.904$ & $0.903$ & $0.903$ & $0.902$ & $0.901$ & $0.901$ & $0.900$ & $0.899$ & $0.899$ & $0.898$\\
        &   & LCP-C & $\textbf{0.925}$ & $\textbf{0.929}$ & $\textbf{0.932}$ & $\textbf{0.932}$ & $\textbf{0.932}$ & $\textbf{0.931}$ & $\textbf{0.930}$ & $\textbf{0.929}$ & $\textbf{0.928}$ & $\textbf{0.926}$ & $\textbf{0.925}$ & $\textbf{0.924}$ & $\textbf{0.924}$\\
        &   & RLCP-C & $\textbf{--}$ & $\textbf{0.927}$ & $\textbf{0.929}$ & $\textbf{0.929}$ & $\textbf{0.928}$ & $\textbf{0.928}$ & $\textbf{0.927}$ & $\textbf{0.927}$ & $\textbf{0.926}$ & $\textbf{0.925}$ & $\textbf{0.925}$ & $\textbf{0.925}$ & $\textbf{0.924}$\\
        & $10$ & ELCP & $\textbf{--}$ & $0.900$ & $0.899$ & $0.898$ & $0.896$ & $0.898$ & $0.901$ & $0.902$ & $0.903$ & $0.904$ & $0.903$ & $0.903$ & $0.904$\\
        &   & LCP & $\textbf{--}$ & $0.903$ & $0.904$ & $0.905$ & $0.905$ & $0.905$ & $0.904$ & $0.903$ & $0.904$ & $0.905$ & $0.906$ & $0.907$ & $0.907$\\
        &   & RLCP & $\textbf{--}$ & $\textbf{--}$ & $\textbf{--}$ & $\textbf{--}$ & $\textbf{--}$ & $0.908$ & $0.909$ & $0.909$ & $0.910$ & $0.910$ & $0.910$ & $0.910$ & $0.910$\\
        &   & LCP-C & $\textbf{--}$ & $\textbf{0.920}$ & $\textbf{0.923}$ & $\textbf{0.926}$ & $\textbf{0.930}$ & $\textbf{0.932}$ & $\textbf{0.933}$ & $\textbf{0.934}$ & $\textbf{0.934}$ & $\textbf{0.934}$ & $\textbf{0.933}$ & $\textbf{0.933}$ & $\textbf{0.933}$\\
        &   & RLCP-C & $\textbf{--}$ & $\textbf{--}$ & $\textbf{--}$ & $\textbf{0.921}$ & $\textbf{0.926}$ & $\textbf{0.928}$ & $\textbf{0.930}$ & $\textbf{0.931}$ & $\textbf{0.931}$ & $\textbf{0.932}$ & $\textbf{0.933}$ & $\textbf{0.933}$ & $\textbf{0.933}$\\
        & $15$ & ELCP & $\textbf{--}$ & $\textbf{--}$ & $0.901$ & $0.900$ & $0.901$ & $0.903$ & $0.903$ & $0.903$ & $0.903$ & $0.902$ & $0.903$ & $0.903$ & $0.903$\\
        &   & LCP & $\textbf{--}$ & $\textbf{--}$ & $\textbf{--}$ & $0.902$ & $0.903$ & $0.902$ & $0.903$ & $0.905$ & $0.904$ & $0.904$ & $0.904$ & $0.904$ & $0.905$\\
        &   & RLCP & $\textbf{--}$ & $\textbf{--}$ & $\textbf{--}$ & $\textbf{--}$ & $\textbf{--}$ & $\textbf{--}$ & $\textbf{--}$ & $0.905$ & $0.906$ & $0.908$ & $0.908$ & $0.909$ & $0.909$\\
        &   & LCP-C & $\textbf{--}$ & $\textbf{--}$ & $\textbf{0.915}$ & $\textbf{0.918}$ & $\textbf{0.922}$ & $\textbf{0.925}$ & $\textbf{0.926}$ & $\textbf{0.927}$ & $\textbf{0.927}$ & $\textbf{0.927}$ & $\textbf{0.926}$ & $\textbf{0.926}$ & $\textbf{0.926}$\\
        &   & RLCP-C & $\textbf{--}$ & $\textbf{--}$ & $\textbf{--}$ & $\textbf{--}$ & $\textbf{--}$ & $\textbf{0.918}$ & $\textbf{0.921}$ & $\textbf{0.922}$ & $\textbf{0.923}$ & $\textbf{0.925}$ & $\textbf{0.926}$ & $\textbf{0.926}$ & $\textbf{0.926}$\\
        & $20$ & ELCP & $\textbf{--}$ & $\textbf{--}$ & $\textbf{--}$ & $0.899$ & $0.900$ & $0.901$ & $0.903$ & $0.901$ & $0.901$ & $0.902$ & $0.901$ & $0.902$ & $0.902$\\
        &   & LCP & $\textbf{--}$ & $\textbf{--}$ & $\textbf{--}$ & $\textbf{--}$ & $0.905$ & $0.905$ & $0.905$ & $0.905$ & $0.904$ & $0.904$ & $0.903$ & $0.904$ & $0.904$\\
        &   & RLCP & $\textbf{--}$ & $\textbf{--}$ & $\textbf{--}$ & $\textbf{--}$ & $\textbf{--}$ & $\textbf{--}$ & $\textbf{--}$ & $\textbf{--}$ & $\textbf{--}$ & $0.906$ & $0.907$ & $0.908$ & $0.908$\\
        &   & LCP-C & $\textbf{--}$ & $\textbf{--}$ & $\textbf{--}$ & $\textbf{0.913}$ & $\textbf{0.916}$ & $\textbf{0.919}$ & $\textbf{0.922}$ & $\textbf{0.924}$ & $\textbf{0.924}$ & $\textbf{0.925}$ & $\textbf{0.926}$ & $\textbf{0.925}$ & $\textbf{0.925}$\\
        &   & RLCP-C & $\textbf{--}$ & $\textbf{--}$ & $\textbf{--}$ & $\textbf{--}$ & $\textbf{--}$ & $\textbf{--}$ & $\textbf{0.914}$ & $\textbf{0.917}$ & $\textbf{0.919}$ & $\textbf{0.922}$ & $\textbf{0.924}$ & $\textbf{0.925}$ & $\textbf{0.925}$\\
        $200$ & $5$ & ELCP & $0.896$ & $0.897$ & $0.897$ & $0.896$ & $0.896$ & $0.895$ & $0.895$ & $0.896$ & $0.896$ & $0.895$ & $0.895$ & $0.895$ & $0.895$\\
        &   & LCP & $0.898$ & $0.898$ & $0.898$ & $0.898$ & $0.897$ & $0.898$ & $0.897$ & $0.897$ & $0.896$ & $0.896$ & $0.896$ & $0.896$ & $0.896$\\
        &   & RLCP & $\textbf{--}$ & $\textbf{--}$ & $0.904$ & $0.903$ & $0.903$ & $0.903$ & $0.902$ & $0.902$ & $0.901$ & $0.900$ & $0.900$ & $0.900$ & $0.900$\\
        &   & LCP-C & $\textbf{0.929}$ & $\textbf{0.933}$ & $\textbf{0.935}$ & $\textbf{0.936}$ & $\textbf{0.934}$ & $\textbf{0.933}$ & $\textbf{0.932}$ & $\textbf{0.930}$ & $\textbf{0.929}$ & $\textbf{0.927}$ & $\textbf{0.926}$ & $\textbf{0.925}$ & $\textbf{0.925}$\\
        &   & RLCP-C & $\textbf{--}$ & $\textbf{0.930}$ & $\textbf{0.931}$ & $\textbf{0.931}$ & $\textbf{0.930}$ & $\textbf{0.929}$ & $\textbf{0.928}$ & $\textbf{0.927}$ & $\textbf{0.927}$ & $\textbf{0.926}$ & $\textbf{0.925}$ & $\textbf{0.925}$ & $\textbf{0.925}$\\
        & $10$ & ELCP & $\textbf{--}$ & $0.899$ & $0.901$ & $0.901$ & $0.901$ & $0.903$ & $0.904$ & $0.905$ & $0.906$ & $0.906$ & $0.907$ & $0.907$ & $0.906$\\
        &   & LCP & $\textbf{--}$ & $0.903$ & $0.904$ & $0.903$ & $0.904$ & $0.904$ & $0.905$ & $0.906$ & $0.906$ & $0.907$ & $0.907$ & $0.907$ & $0.907$\\
        &   & RLCP & $\textbf{--}$ & $\textbf{--}$ & $\textbf{--}$ & $\textbf{--}$ & $\textbf{--}$ & $0.908$ & $0.909$ & $0.909$ & $0.910$ & $0.910$ & $0.910$ & $0.911$ & $0.911$\\
        &   & LCP-C & $\textbf{--}$ & $\textbf{0.924}$ & $\textbf{0.927}$ & $\textbf{0.931}$ & $\textbf{0.935}$ & $\textbf{0.936}$ & $\textbf{0.936}$ & $\textbf{0.936}$ & $\textbf{0.936}$ & $\textbf{0.935}$ & $\textbf{0.934}$ & $\textbf{0.934}$ & $\textbf{0.934}$\\
        &   & RLCP-C & $\textbf{--}$ & $\textbf{--}$ & $\textbf{--}$ & $\textbf{0.925}$ & $\textbf{0.929}$ & $\textbf{0.931}$ & $\textbf{0.932}$ & $\textbf{0.933}$ & $\textbf{0.933}$ & $\textbf{0.934}$ & $\textbf{0.934}$ & $\textbf{0.934}$ & $\textbf{0.934}$\\
        & $15$ & ELCP & $\textbf{--}$ & $\textbf{--}$ & $0.897$ & $0.897$ & $0.898$ & $0.898$ & $0.900$ & $0.901$ & $0.901$ & $0.902$ & $0.901$ & $0.902$ & $0.902$\\
        &   & LCP & $\textbf{--}$ & $\textbf{--}$ & $0.898$ & $0.899$ & $0.900$ & $0.900$ & $0.902$ & $0.901$ & $0.901$ & $0.901$ & $0.902$ & $0.903$ & $0.902$\\
        &   & RLCP & $\textbf{--}$ & $\textbf{--}$ & $\textbf{--}$ & $\textbf{--}$ & $\textbf{--}$ & $\textbf{--}$ & $\textbf{--}$ & $0.904$ & $0.904$ & $0.905$ & $0.906$ & $0.906$ & $0.906$\\
        &   & LCP-C & $\textbf{--}$ & $\textbf{--}$ & $\textbf{0.919}$ & $\textbf{0.922}$ & $\textbf{0.925}$ & $\textbf{0.928}$ & $\textbf{0.929}$ & $\textbf{0.930}$ & $\textbf{0.930}$ & $\textbf{0.930}$ & $\textbf{0.929}$ & $\textbf{0.929}$ & $\textbf{0.928}$\\
        &   & RLCP-C & $\textbf{--}$ & $\textbf{--}$ & $\textbf{--}$ & $\textbf{--}$ & $\textbf{--}$ & $\textbf{0.922}$ & $\textbf{0.924}$ & $\textbf{0.925}$ & $\textbf{0.926}$ & $\textbf{0.927}$ & $\textbf{0.928}$ & $\textbf{0.928}$ & $\textbf{0.928}$\\
        & $20$ & ELCP & $\textbf{--}$ & $\textbf{--}$ & $\textbf{--}$ & $0.900$ & $0.900$ & $0.900$ & $0.900$ & $0.901$ & $0.901$ & $0.900$ & $0.899$ & $0.899$ & $0.899$\\
        &   & LCP & $\textbf{--}$ & $\textbf{--}$ & $\textbf{--}$ & $0.900$ & $0.902$ & $0.901$ & $0.899$ & $0.899$ & $0.898$ & $0.901$ & $0.900$ & $0.900$ & $0.900$\\
        &   & RLCP & $\textbf{--}$ & $\textbf{--}$ & $\textbf{--}$ & $\textbf{--}$ & $\textbf{--}$ & $\textbf{--}$ & $\textbf{--}$ & $\textbf{--}$ & $\textbf{--}$ & $0.904$ & $0.905$ & $0.905$ & $0.905$\\
        &   & LCP-C & $\textbf{--}$ & $\textbf{--}$ & $\textbf{--}$ & $\textbf{0.918}$ & $\textbf{0.921}$ & $\textbf{0.924}$ & $\textbf{0.926}$ & $\textbf{0.927}$ & $\textbf{0.928}$ & $\textbf{0.928}$ & $\textbf{0.927}$ & $\textbf{0.927}$ & $\textbf{0.927}$\\
        &   & RLCP-C & $\textbf{--}$ & $\textbf{--}$ & $\textbf{--}$ & $\textbf{--}$ & $\textbf{--}$ & $\textbf{--}$ & $\textbf{0.918}$ & $\textbf{0.921}$ & $\textbf{0.923}$ & $\textbf{0.925}$ & $\textbf{0.926}$ & $\textbf{0.926}$ & $\textbf{0.927}$\\
    \bottomrule
    \end{tabular}}}
    \end{table*}
    \begin{table*}[htbp]
        \centering
        \caption{Marginal coverage rates for DGP3 with $\textbf{--}$ indicates cases with more than $30\%$ infinite prediction intervals.\label{table:mardgp3}}
        {\fontsize{8}{4.5}\selectfont{\setlength{\tabcolsep}{3.5pt}
        \begin{tabular}{cccccccccccccccc}
        \toprule
        $n$ & $d$ & $h$ & $0.4$ & $0.6$ & $0.8$ & $1.0$ & $1.2$ & $1.4$ & $1.6$ & $1.8$ & $2.0$ & $2.5$ & $3.0$ & $3.5$ & $4.0$ \vspace{2pt} \\ \midrule
        $100$ & $5$ & ELCP & $0.894$ & $0.891$ & $0.891$ & $0.892$ & $0.893$ & $0.895$ & $0.895$ & $0.896$ & $0.896$ & $0.897$ & $0.897$ & $0.897$ & $0.897$\\
        &   & LCP & $0.902$ & $0.901$ & $0.900$ & $0.897$ & $0.896$ & $0.895$ & $0.895$ & $0.897$ & $0.898$ & $0.899$ & $0.900$ & $0.900$ & $0.901$\\
        &   & RLCP & $\textbf{--}$ & $\textbf{--}$ & $\textbf{--}$ & $0.905$ & $0.905$ & $0.905$ & $0.905$ & $0.905$ & $0.905$ & $0.905$ & $0.905$ & $0.905$ & $0.905$\\
        &   & LCP-C & $\textbf{0.918}$ & $\textbf{0.923}$ & $\textbf{0.927}$ & $\textbf{0.929}$ & $\textbf{0.928}$ & $\textbf{0.927}$ & $\textbf{0.927}$ & $\textbf{0.926}$ & $\textbf{0.925}$ & $\textbf{0.924}$ & $\textbf{0.924}$ & $\textbf{0.924}$ & $\textbf{0.924}$\\
        &   & RLCP-C & $\textbf{--}$ & $\textbf{0.922}$ & $\textbf{0.925}$ & $\textbf{0.926}$ & $\textbf{0.926}$ & $\textbf{0.926}$ & $\textbf{0.926}$ & $\textbf{0.925}$ & $\textbf{0.925}$ & $\textbf{0.925}$ & $\textbf{0.925}$ & $\textbf{0.925}$ & $\textbf{0.925}$\\
        & $10$ & ELCP & $\textbf{--}$ & $0.901$ & $0.902$ & $0.900$ & $0.901$ & $0.901$ & $0.901$ & $0.901$ & $0.902$ & $0.902$ & $0.902$ & $0.902$ & $0.902$\\
        &   & LCP & $\textbf{--}$ & $\textbf{--}$ & $0.906$ & $0.906$ & $0.906$ & $0.906$ & $0.906$ & $0.906$ & $0.905$ & $0.904$ & $0.904$ & $0.904$ & $0.904$\\
        &   & RLCP & $\textbf{--}$ & $\textbf{--}$ & $\textbf{--}$ & $\textbf{--}$ & $\textbf{--}$ & $\textbf{--}$ & $0.909$ & $0.909$ & $0.910$ & $0.910$ & $0.910$ & $0.911$ & $0.910$\\
        &   & LCP-C & $\textbf{--}$ & $\textbf{0.915}$ & $\textbf{0.918}$ & $\textbf{0.921}$ & $\textbf{0.923}$ & $\textbf{0.924}$ & $\textbf{0.924}$ & $\textbf{0.923}$ & $\textbf{0.923}$ & $\textbf{0.922}$ & $\textbf{0.922}$ & $\textbf{0.922}$ & $\textbf{0.922}$\\
        &   & RLCP-C & $\textbf{--}$ & $\textbf{--}$ & $\textbf{--}$ & $\textbf{--}$ & $\textbf{0.919}$ & $\textbf{0.921}$ & $\textbf{0.922}$ & $\textbf{0.923}$ & $\textbf{0.923}$ & $\textbf{0.923}$ & $\textbf{0.923}$ & $\textbf{0.923}$ & $\textbf{0.923}$\\
        & $15$ & ELCP & $\textbf{--}$ & $\textbf{--}$ & $0.899$ & $0.899$ & $0.899$ & $0.901$ & $0.900$ & $0.900$ & $0.900$ & $0.900$ & $0.900$ & $0.900$ & $0.901$\\
        &   & LCP & $\textbf{--}$ & $\textbf{--}$ & $\textbf{--}$ & $0.904$ & $0.904$ & $0.904$ & $0.903$ & $0.901$ & $0.900$ & $0.900$ & $0.901$ & $0.902$ & $0.904$\\
        &   & RLCP & $\textbf{--}$ & $\textbf{--}$ & $\textbf{--}$ & $\textbf{--}$ & $\textbf{--}$ & $\textbf{--}$ & $\textbf{--}$ & $\textbf{--}$ & $0.907$ & $0.908$ & $0.908$ & $0.909$ & $0.909$\\
        &   & LCP-C & $\textbf{--}$ & $\textbf{--}$ & $\textbf{0.914}$ & $\textbf{0.917}$ & $\textbf{0.919}$ & $\textbf{0.920}$ & $\textbf{0.921}$ & $\textbf{0.922}$ & $\textbf{0.922}$ & $\textbf{0.922}$ & $\textbf{0.922}$ & $\textbf{0.922}$ & $\textbf{0.922}$\\
        &   & RLCP-C & $\textbf{--}$ & $\textbf{--}$ & $\textbf{--}$ & $\textbf{--}$ & $\textbf{--}$ & $\textbf{--}$ & $\textbf{0.916}$ & $\textbf{0.919}$ & $\textbf{0.920}$ & $\textbf{0.922}$ & $\textbf{0.923}$ & $\textbf{0.923}$ & $\textbf{0.923}$\\
        & $20$ & ELCP & $\textbf{--}$ & $\textbf{--}$ & $\textbf{--}$ & $0.899$ & $0.898$ & $0.896$ & $0.897$ & $0.896$ & $0.894$ & $0.895$ & $0.896$ & $0.896$ & $0.896$\\
        &   & LCP & $\textbf{--}$ & $\textbf{--}$ & $\textbf{--}$ & $\textbf{--}$ & $0.899$ & $0.898$ & $0.897$ & $0.897$ & $0.898$ & $0.898$ & $0.899$ & $0.899$ & $0.900$\\
        &   & RLCP & $\textbf{--}$ & $\textbf{--}$ & $\textbf{--}$ & $\textbf{--}$ & $\textbf{--}$ & $\textbf{--}$ & $\textbf{--}$ & $\textbf{--}$ & $\textbf{--}$ & $0.904$ & $0.905$ & $0.906$ & $0.906$\\
        &   & LCP-C & $\textbf{--}$ & $\textbf{--}$ & $\textbf{--}$ & $\textbf{0.905}$ & $\textbf{0.907}$ & $\textbf{0.910}$ & $\textbf{0.912}$ & $\textbf{0.913}$ & $\textbf{0.914}$ & $\textbf{0.915}$ & $\textbf{0.914}$ & $\textbf{0.915}$ & $\textbf{0.915}$\\
        &   & RLCP-C & $\textbf{--}$ & $\textbf{--}$ & $\textbf{--}$ & $\textbf{--}$ & $\textbf{--}$ & $\textbf{--}$ & $\textbf{--}$ & $\textbf{0.906}$ & $\textbf{0.909}$ & $\textbf{0.913}$ & $\textbf{0.915}$ & $\textbf{0.915}$ & $\textbf{0.916}$\\
        $150$ & $5$ & ELCP & $0.899$ & $0.898$ & $0.894$ & $0.893$ & $0.893$ & $0.893$ & $0.893$ & $0.894$ & $0.894$ & $0.893$ & $0.893$ & $0.893$ & $0.893$\\
        &   & LCP & $0.896$ & $0.896$ & $0.896$ & $0.895$ & $0.893$ & $0.892$ & $0.892$ & $0.893$ & $0.894$ & $0.894$ & $0.894$ & $0.895$ & $0.895$\\
        &   & RLCP & $\textbf{--}$ & $\textbf{--}$ & $\textbf{--}$ & $0.903$ & $0.902$ & $0.902$ & $0.901$ & $0.901$ & $0.901$ & $0.900$ & $0.900$ & $0.900$ & $0.899$\\
        &   & LCP-C & $\textbf{0.924}$ & $\textbf{0.929}$ & $\textbf{0.931}$ & $\textbf{0.932}$ & $\textbf{0.932}$ & $\textbf{0.930}$ & $\textbf{0.929}$ & $\textbf{0.928}$ & $\textbf{0.927}$ & $\textbf{0.926}$ & $\textbf{0.925}$ & $\textbf{0.925}$ & $\textbf{0.925}$\\
        &   & RLCP-C & $\textbf{--}$ & $\textbf{0.925}$ & $\textbf{0.928}$ & $\textbf{0.928}$ & $\textbf{0.928}$ & $\textbf{0.928}$ & $\textbf{0.927}$ & $\textbf{0.927}$ & $\textbf{0.927}$ & $\textbf{0.926}$ & $\textbf{0.926}$ & $\textbf{0.926}$ & $\textbf{0.926}$\\
        & $10$ & ELCP & $\textbf{--}$ & $0.898$ & $0.898$ & $0.898$ & $0.898$ & $0.897$ & $0.897$ & $0.896$ & $0.897$ & $0.898$ & $0.898$ & $0.898$ & $0.897$\\
        &   & LCP & $\textbf{--}$ & $0.903$ & $0.904$ & $0.904$ & $0.904$ & $0.902$ & $0.901$ & $0.900$ & $0.900$ & $0.900$ & $0.899$ & $0.899$ & $0.899$\\
        &   & RLCP & $\textbf{--}$ & $\textbf{--}$ & $\textbf{--}$ & $\textbf{--}$ & $\textbf{--}$ & $0.904$ & $0.904$ & $0.904$ & $0.904$ & $0.904$ & $0.904$ & $0.904$ & $0.904$\\
        &   & LCP-C & $\textbf{--}$ & $\textbf{0.921}$ & $\textbf{0.923}$ & $\textbf{0.926}$ & $\textbf{0.928}$ & $\textbf{0.929}$ & $\textbf{0.928}$ & $\textbf{0.928}$ & $\textbf{0.927}$ & $\textbf{0.926}$ & $\textbf{0.926}$ & $\textbf{0.926}$ & $\textbf{0.926}$\\
        &   & RLCP-C & $\textbf{--}$ & $\textbf{--}$ & $\textbf{--}$ & $\textbf{0.918}$ & $\textbf{0.922}$ & $\textbf{0.924}$ & $\textbf{0.925}$ & $\textbf{0.926}$ & $\textbf{0.926}$ & $\textbf{0.926}$ & $\textbf{0.926}$ & $\textbf{0.926}$ & $\textbf{0.926}$\\
        & $15$ & ELCP & $\textbf{--}$ & $\textbf{--}$ & $0.901$ & $0.901$ & $0.900$ & $0.900$ & $0.900$ & $0.900$ & $0.901$ & $0.901$ & $0.900$ & $0.900$ & $0.900$\\
        &   & LCP & $\textbf{--}$ & $\textbf{--}$ & $\textbf{--}$ & $0.899$ & $0.900$ & $0.900$ & $0.901$ & $0.901$ & $0.900$ & $0.900$ & $0.901$ & $0.901$ & $0.901$\\
        &   & RLCP & $\textbf{--}$ & $\textbf{--}$ & $\textbf{--}$ & $\textbf{--}$ & $\textbf{--}$ & $\textbf{--}$ & $\textbf{--}$ & $0.903$ & $0.904$ & $0.905$ & $0.905$ & $0.906$ & $0.906$\\
        &   & LCP-C & $\textbf{--}$ & $\textbf{--}$ & $\textbf{0.919}$ & $\textbf{0.922}$ & $\textbf{0.925}$ & $\textbf{0.927}$ & $\textbf{0.928}$ & $\textbf{0.928}$ & $\textbf{0.928}$ & $\textbf{0.927}$ & $\textbf{0.927}$ & $\textbf{0.927}$ & $\textbf{0.927}$\\
        &   & RLCP-C & $\textbf{--}$ & $\textbf{--}$ & $\textbf{--}$ & $\textbf{--}$ & $\textbf{--}$ & $\textbf{0.918}$ & $\textbf{0.922}$ & $\textbf{0.924}$ & $\textbf{0.925}$ & $\textbf{0.926}$ & $\textbf{0.927}$ & $\textbf{0.927}$ & $\textbf{0.927}$\\
        & $20$ & ELCP & $\textbf{--}$ & $\textbf{--}$ & $\textbf{--}$ & $0.899$ & $0.899$ & $0.899$ & $0.898$ & $0.899$ & $0.899$ & $0.898$ & $0.898$ & $0.899$ & $0.899$\\
        &   & LCP & $\textbf{--}$ & $\textbf{--}$ & $\textbf{--}$ & $\textbf{--}$ & $0.899$ & $0.899$ & $0.899$ & $0.899$ & $0.900$ & $0.899$ & $0.899$ & $0.900$ & $0.900$\\
        &   & RLCP & $\textbf{--}$ & $\textbf{--}$ & $\textbf{--}$ & $\textbf{--}$ & $\textbf{--}$ & $\textbf{--}$ & $\textbf{--}$ & $\textbf{--}$ & $\textbf{--}$ & $0.902$ & $0.903$ & $0.903$ & $0.904$\\
        &   & LCP-C & $\textbf{--}$ & $\textbf{--}$ & $\textbf{--}$ & $\textbf{0.914}$ & $\textbf{0.918}$ & $\textbf{0.920}$ & $\textbf{0.922}$ & $\textbf{0.923}$ & $\textbf{0.924}$ & $\textbf{0.923}$ & $\textbf{0.923}$ & $\textbf{0.923}$ & $\textbf{0.923}$\\
        &   & RLCP-C & $\textbf{--}$ & $\textbf{--}$ & $\textbf{--}$ & $\textbf{--}$ & $\textbf{--}$ & $\textbf{--}$ & $\textbf{0.911}$ & $\textbf{0.915}$ & $\textbf{0.918}$ & $\textbf{0.921}$ & $\textbf{0.922}$ & $\textbf{0.923}$ & $\textbf{0.923}$\\
        $200$ & $5$ & ELCP & $0.896$ & $0.896$ & $0.895$ & $0.895$ & $0.895$ & $0.895$ & $0.894$ & $0.894$ & $0.894$ & $0.894$ & $0.894$ & $0.894$ & $0.894$\\
        &   & LCP & $0.895$ & $0.895$ & $0.896$ & $0.895$ & $0.893$ & $0.894$ & $0.895$ & $0.895$ & $0.895$ & $0.895$ & $0.895$ & $0.896$ & $0.896$\\
        &   & RLCP & $\textbf{--}$ & $\textbf{--}$ & $0.903$ & $0.902$ & $0.901$ & $0.901$ & $0.900$ & $0.900$ & $0.900$ & $0.900$ & $0.899$ & $0.899$ & $0.899$\\
        &   & LCP-C & $\textbf{0.926}$ & $\textbf{0.930}$ & $\textbf{0.932}$ & $\textbf{0.932}$ & $\textbf{0.931}$ & $\textbf{0.930}$ & $\textbf{0.929}$ & $\textbf{0.928}$ & $\textbf{0.927}$ & $\textbf{0.926}$ & $\textbf{0.925}$ & $\textbf{0.925}$ & $\textbf{0.925}$\\
        &   & RLCP-C & $\textbf{--}$ & $\textbf{0.927}$ & $\textbf{0.928}$ & $\textbf{0.928}$ & $\textbf{0.928}$ & $\textbf{0.927}$ & $\textbf{0.927}$ & $\textbf{0.926}$ & $\textbf{0.926}$ & $\textbf{0.926}$ & $\textbf{0.926}$ & $\textbf{0.926}$ & $\textbf{0.926}$\\
        & $10$ & ELCP & $\textbf{--}$ & $0.897$ & $0.896$ & $0.895$ & $0.895$ & $0.894$ & $0.894$ & $0.894$ & $0.894$ & $0.895$ & $0.895$ & $0.895$ & $0.895$\\
        &   & LCP & $\textbf{--}$ & $0.897$ & $0.896$ & $0.896$ & $0.896$ & $0.894$ & $0.895$ & $0.896$ & $0.895$ & $0.895$ & $0.895$ & $0.896$ & $0.896$\\
        &   & RLCP & $\textbf{--}$ & $\textbf{--}$ & $\textbf{--}$ & $\textbf{--}$ & $\textbf{--}$ & $0.901$ & $0.901$ & $0.901$ & $0.901$ & $0.901$ & $0.900$ & $0.900$ & $0.900$\\
        &   & LCP-C & $\textbf{--}$ & $\textbf{0.922}$ & $\textbf{0.925}$ & $\textbf{0.928}$ & $\textbf{0.930}$ & $\textbf{0.930}$ & $\textbf{0.929}$ & $\textbf{0.929}$ & $\textbf{0.928}$ & $\textbf{0.928}$ & $\textbf{0.927}$ & $\textbf{0.927}$ & $\textbf{0.927}$\\
        &   & RLCP-C & $\textbf{--}$ & $\textbf{--}$ & $\textbf{--}$ & $\textbf{0.920}$ & $\textbf{0.924}$ & $\textbf{0.926}$ & $\textbf{0.926}$ & $\textbf{0.927}$ & $\textbf{0.927}$ & $\textbf{0.927}$ & $\textbf{0.927}$ & $\textbf{0.927}$ & $\textbf{0.927}$\\
        & $15$ & ELCP & $\textbf{--}$ & $\textbf{--}$ & $0.899$ & $0.899$ & $0.900$ & $0.899$ & $0.899$ & $0.899$ & $0.899$ & $0.899$ & $0.899$ & $0.899$ & $0.899$\\
        &   & LCP & $\textbf{--}$ & $\textbf{--}$ & $0.897$ & $0.899$ & $0.898$ & $0.898$ & $0.899$ & $0.899$ & $0.899$ & $0.898$ & $0.898$ & $0.898$ & $0.899$\\
        &   & RLCP & $\textbf{--}$ & $\textbf{--}$ & $\textbf{--}$ & $\textbf{--}$ & $\textbf{--}$ & $\textbf{--}$ & $\textbf{--}$ & $0.902$ & $0.902$ & $0.903$ & $0.903$ & $0.903$ & $0.903$\\
        &   & LCP-C & $\textbf{--}$ & $\textbf{--}$ & $\textbf{0.922}$ & $\textbf{0.925}$ & $\textbf{0.929}$ & $\textbf{0.930}$ & $\textbf{0.930}$ & $\textbf{0.929}$ & $\textbf{0.929}$ & $\textbf{0.928}$ & $\textbf{0.928}$ & $\textbf{0.927}$ & $\textbf{0.927}$\\
        &   & RLCP-C & $\textbf{--}$ & $\textbf{--}$ & $\textbf{--}$ & $\textbf{--}$ & $\textbf{--}$ & $\textbf{0.921}$ & $\textbf{0.924}$ & $\textbf{0.926}$ & $\textbf{0.927}$ & $\textbf{0.928}$ & $\textbf{0.928}$ & $\textbf{0.928}$ & $\textbf{0.928}$\\
        & $20$ & ELCP & $\textbf{--}$ & $\textbf{--}$ & $\textbf{--}$ & $0.896$ & $0.895$ & $0.897$ & $0.897$ & $0.897$ & $0.897$ & $0.897$ & $0.897$ & $0.896$ & $0.896$\\
        &   & LCP & $\textbf{--}$ & $\textbf{--}$ & $\textbf{--}$ & $0.897$ & $0.897$ & $0.897$ & $0.897$ & $0.896$ & $0.897$ & $0.897$ & $0.898$ & $0.898$ & $0.897$\\
        &   & RLCP & $\textbf{--}$ & $\textbf{--}$ & $\textbf{--}$ & $\textbf{--}$ & $\textbf{--}$ & $\textbf{--}$ & $\textbf{--}$ & $\textbf{--}$ & $\textbf{--}$ & $0.900$ & $0.901$ & $0.901$ & $0.901$\\
        &   & LCP-C & $\textbf{--}$ & $\textbf{--}$ & $\textbf{--}$ & $\textbf{0.915}$ & $\textbf{0.919}$ & $\textbf{0.921}$ & $\textbf{0.922}$ & $\textbf{0.923}$ & $\textbf{0.923}$ & $\textbf{0.923}$ & $\textbf{0.923}$ & $\textbf{0.923}$ & $\textbf{0.923}$\\
        &   & RLCP-C & $\textbf{--}$ & $\textbf{--}$ & $\textbf{--}$ & $\textbf{--}$ & $\textbf{--}$ & $\textbf{--}$ & $\textbf{0.914}$ & $\textbf{0.917}$ & $\textbf{0.919}$ & $\textbf{0.922}$ & $\textbf{0.922}$ & $\textbf{0.923}$ & $\textbf{0.923}$\\
    \bottomrule
    \end{tabular}}}
    \end{table*}

\FloatBarrier
\subsubsection{Effect of $\omega$ and $h$ on test-conditional miscoverage error of ELCP}\label{sec:supp_simu_parameter1}

    First, we examine the effect of $\omega$ and $h$ on the test-conditional miscoverage error of ELCP. 
    Figure~\ref{fig:Impact of varying w} shows how the error changes with $\omega$ when using the optimal bandwidth $h$ for DGP1--DGP3.
    The results indicate that a broad range of $\omega$ values yields similar best performance under these settings, suggesting that ELCP is relatively insensitive to the choice of $\omega$ as long as it falls within an appropriate range (typically larger values for DGP1--DGP3).
    Furthermore, Figure~\ref{fig:Impact of varying h} illustrates the effect of bandwidth $h$ on the test-conditional miscoverage error, considering only configurations where the proportion of infinite prediction intervals remains below 5\%. For DGP2 and DGP3, smaller $h$ values can reduce the test-conditional miscoverage but simultaneously increase the frequency of infinite intervals, leading to trivial prediction sets with limited practical utility.
    Under a fixed sample size, the optimal bandwidth $h$ typically increases as the dimension grows.
    
    \begin{figure}[h]
        \centering
        \includegraphics[width=.9\linewidth]{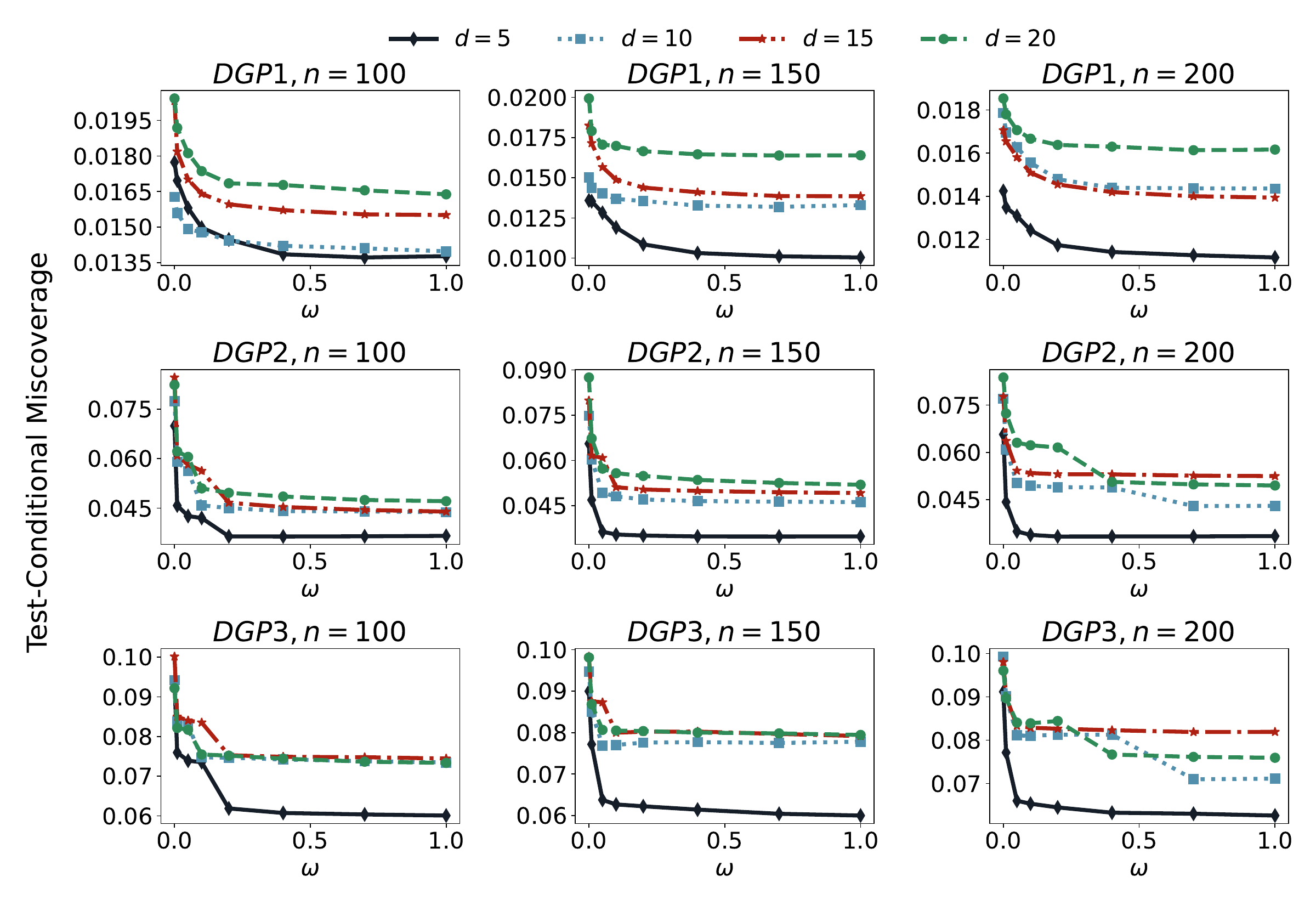}
        \caption{Test-conditional miscoverage of ELCP over $\omega$ under optimal $h$ for DGP1--DGP3.}\label{fig:Impact of varying w}
    \end{figure}
    \begin{figure}[H]
        \centering
        \includegraphics[width=.9\linewidth]{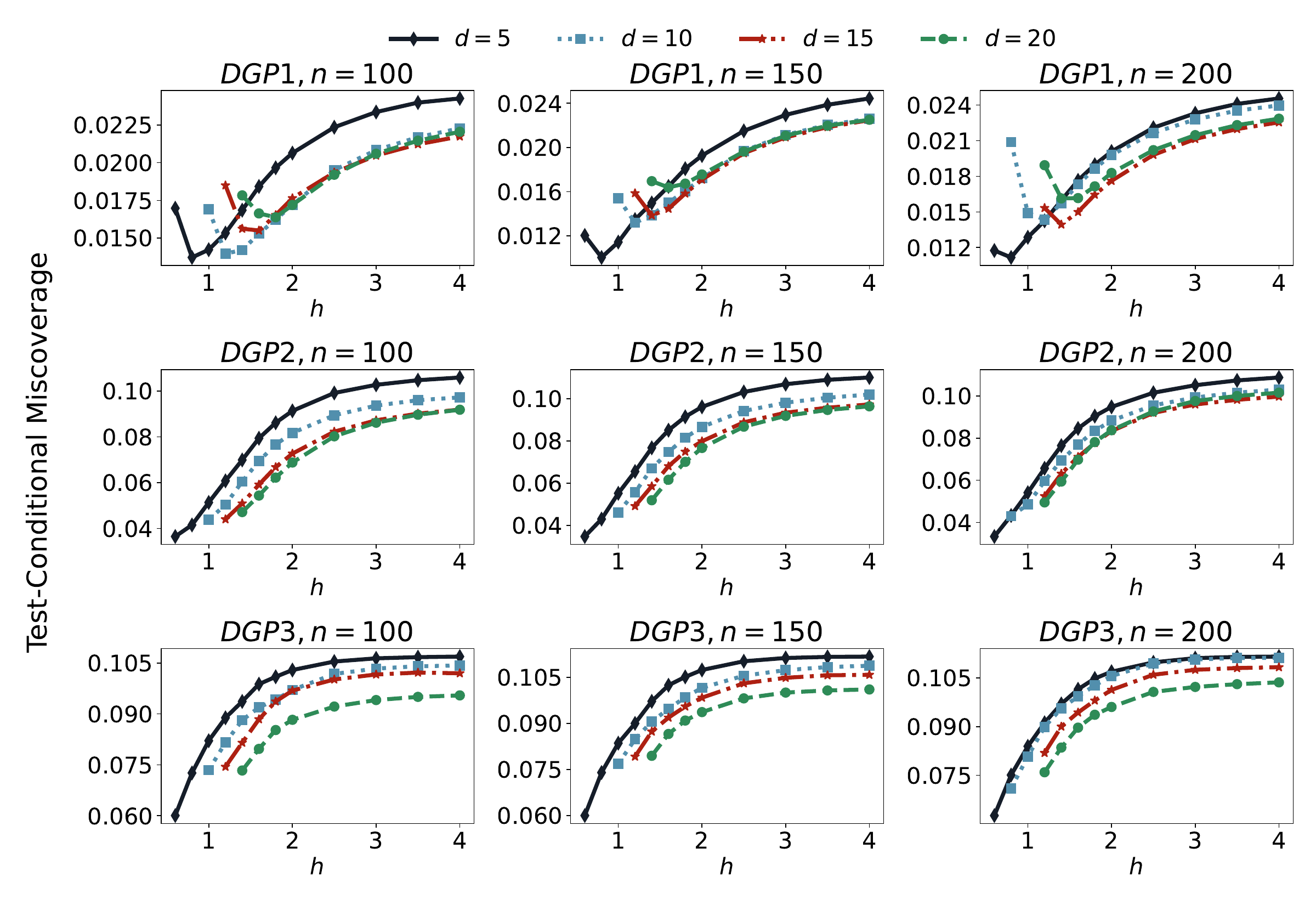}
        \caption{Test-conditional miscoverage of ELCP over $h$ under optimal $\omega$ for DGP1--DGP3.}\label{fig:Impact of varying h}
    \end{figure}

    Second, to assess how the quality of auxiliary data affects the optimal parameters for ELCP, we consider the following DGP: \\
    \noindent\textbf{DGP4:} $\epsilon(X)\sim N\left( 0, \exp(\sum_{i=1}^{5}X_i/2) \right)$, \\
    \indent\indent\indent$\epsilon^\prime(X^\prime)\sim \tau N\left( 0, 1.5\exp(\sum_{i=1}^{5}X_i^\prime/2) \right)+(1-\tau)N\left( 0, 1.5\sum_{i=1}^{5}|\arccos(X_i^\prime/2)| \right)$. \\
    The quality of auxiliary data improves as $\tau$ increases, with $\tau=1$ corresponding to DGP2.
    Figure~\ref{fig:Optimal parameter across different tau for DGP4} presents the optimal parameters and the corresponding test-conditional miscoverage error over $\tau$. 

    In Figure~\ref{fig:Optimal parameter across different tau for DGP4}, as $\tau$ decreases, the distributional discrepancy between auxiliary and target data increases, making the true density ratio more difficult to estimate accurately. Under these conditions, inaccuracies in estimating $\hat{r}$ become substantial, and the performance of ELCP degrades accordingly. 
    For $\omega$, we plot the ratio $\omega m/(n+\omega m)$ to represent the proportion of the effective auxiliary sample size $\omega m$ relative to the total effective size $n+\omega m$.
    The results show that the optimal $\omega$ decreases monotonically as the quality of the auxiliary data deteriorates.
    The optimal $h$ decreases slightly as $\tau$ increases, consistent with the fact that a larger optimal $\omega$ requires a smaller $h$ to balance $K_{0}^{-1}(h^{-d})h$ with $\{(n+\omega m)h^d\}^{1/2}$.
    Finally, the corresponding test-conditional miscoverage error decreases as $\tau$ increases.
    
    \begin{figure}[h]
        \centering
        \includegraphics[width=.8\linewidth]{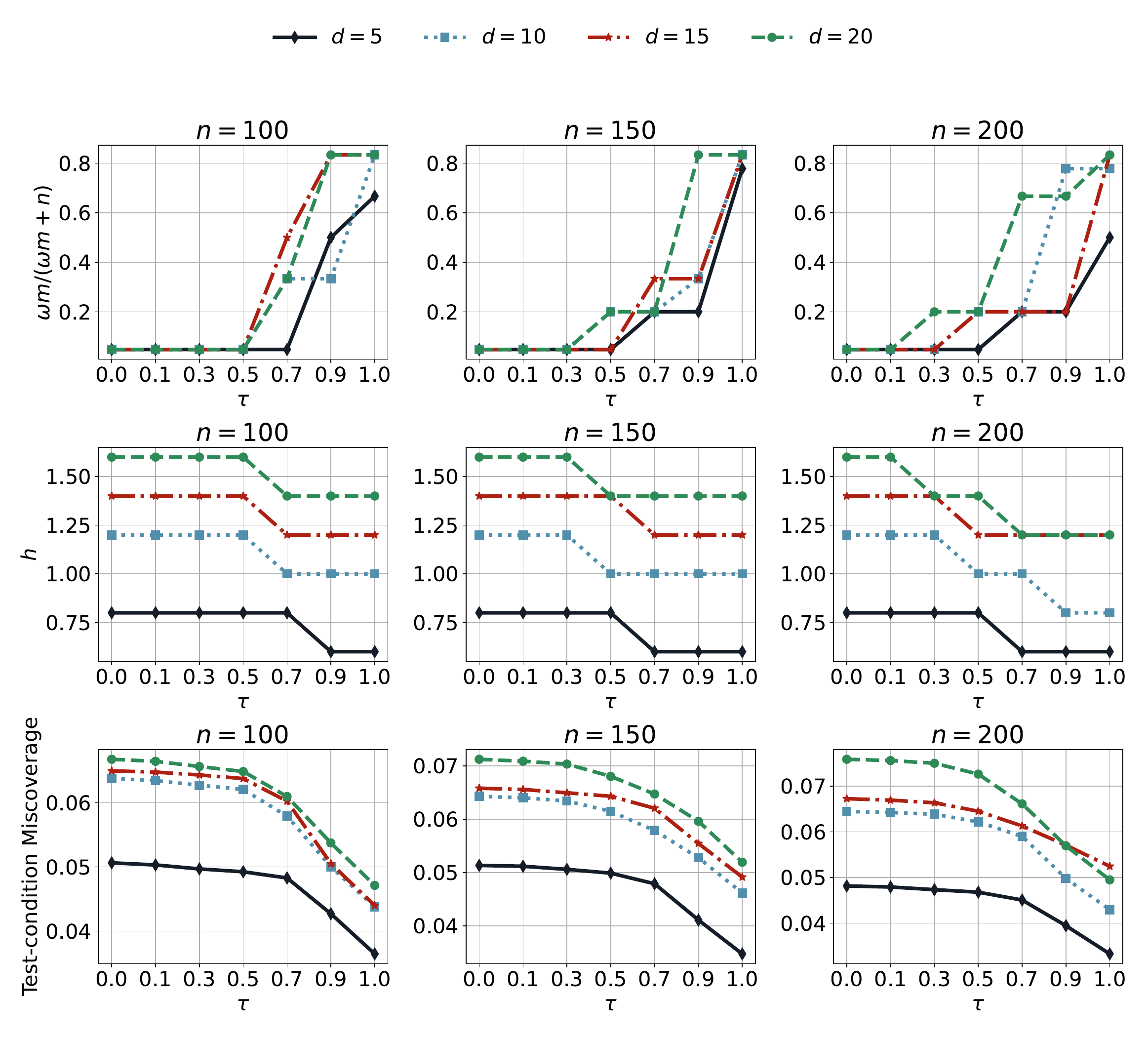}
        \caption{Optimal parameters and corresponding test-conditional miscoverage error over $\tau$ for DGP4.}\label{fig:Optimal parameter across different tau for DGP4}
    \end{figure}

\subsubsection{Results for different density ratio estimators}\label{sec:supp_simu_dre}

    In this section, we examine the performance of ELCP under different density ratio estimators.
    Figure~\ref{fig:Improvement across different DREs for DGP1-3} shows the percentage improvement of ELCP for DGP1--DGP3 over the better-performing method between LCP and RLCP in terms of the smallest test-conditional miscoverage error across all parameter values, with ELCP using three density ratio estimators: kernel-based least-squares importance fitting (KLIEP), random forest (RF) and quadratic discriminant analysis (QDA) with Platt scaling.
    First, ELCP consistently outperforms both LCP and RLCP under all three density ratio estimators.
    Second, the results reveal a clear performance hierarchy: RF delivers the largest improvements, followed by QDA, while KLIEP performs worse. This weaker performance of KLIEP largely reflects its high sensitivity to parameter specification, which often leads to suboptimal density ratio estimation and consequently smaller improvements compared with RF and QDA.
    \begin{figure}[h]
        \centering
        \includegraphics[width=.8\linewidth]{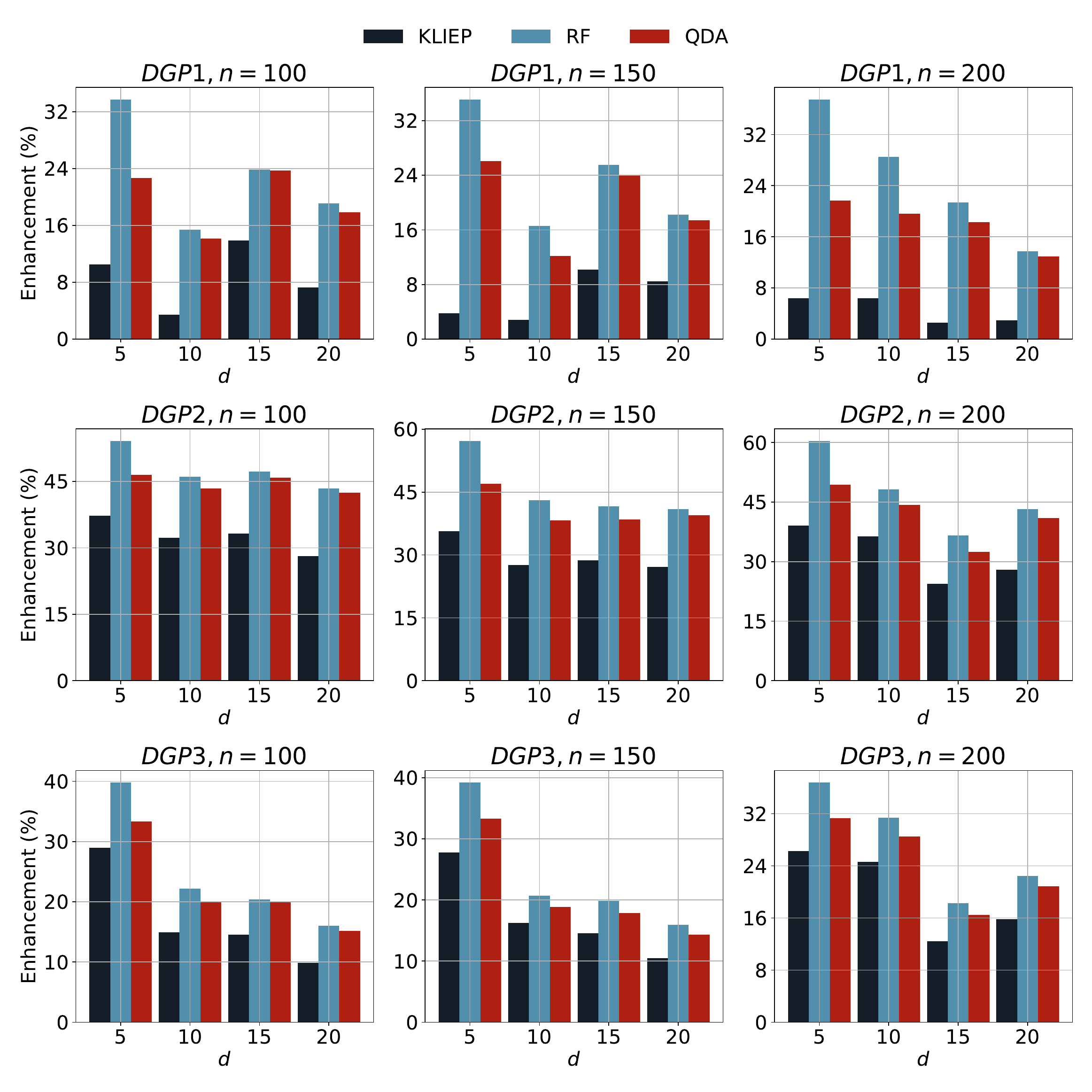}
        \caption{Enhancement ($\%$) of ELCP over LCP and RLCP in test-conditional miscoverage error under different density ratio estimators for DGP1--DGP3.}\label{fig:Improvement across different DREs for DGP1-3}
    \end{figure}

\subsubsection{Impact of auxiliary data size}\label{sec:supp_simu_auxiliary_size}
    We investigate the impact of the auxiliary data size on the performance of ELCP by considering different values of $m/n \in \{0.5, 1, 2, 5, 10\}$. Figure~\ref{fig:Test-conditional coverage for DGP13 d} shows the smallest test-conditional miscoverage error across all parameter values for ELCP (using QDA as the density ratio estimator), LCP, and RLCP. 
    \begin{figure}[h]
        \centering
        \includegraphics[width=.8\linewidth]{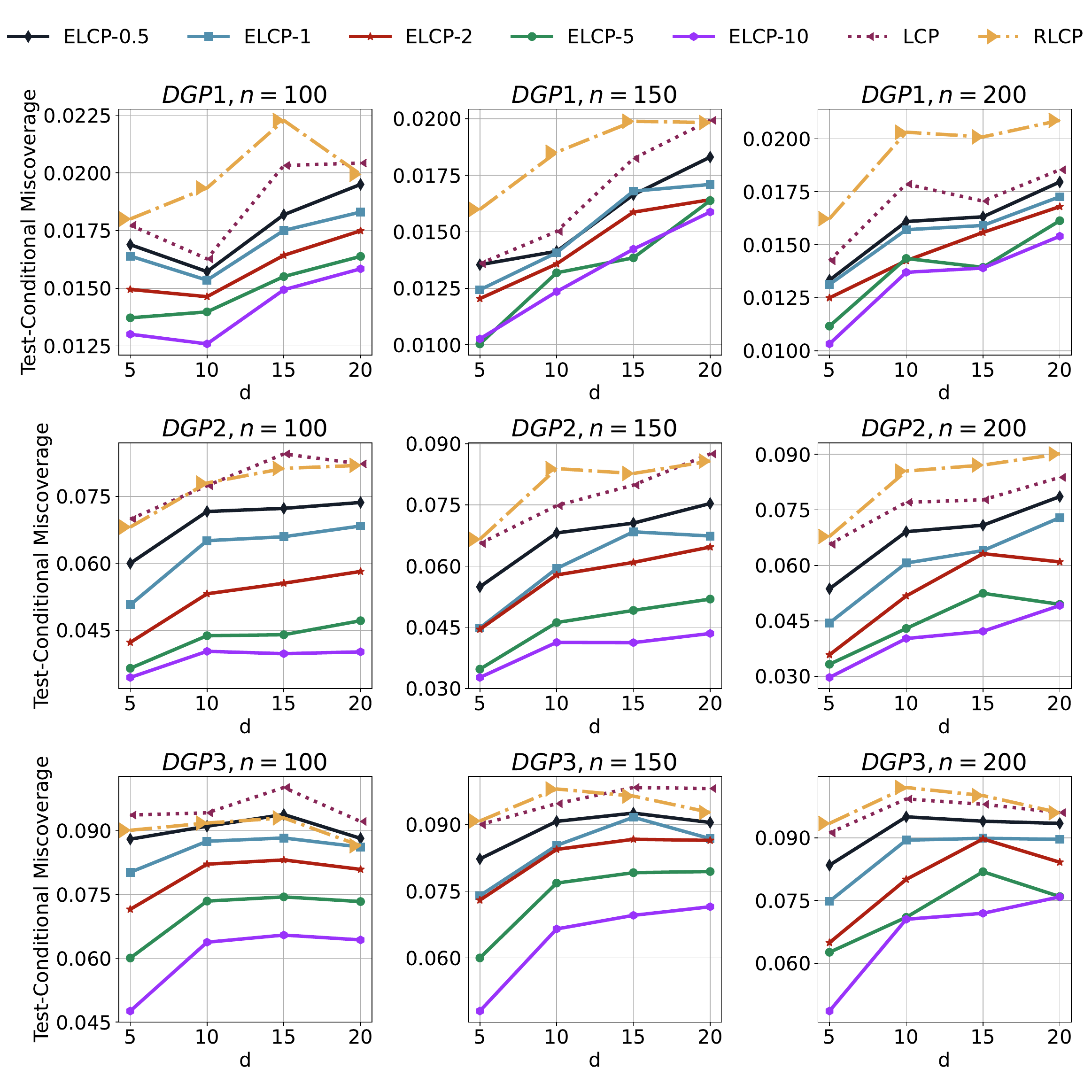}
        \caption{Test-conditional miscoverage error of ELCP with $m/n\in\{0.5, 1, 2, 5, 10\}$, LCP and RLCP for DGP1--DGP3.}\label{fig:Test-conditional coverage for DGP13 d}
    \end{figure}
    The number following ELCP indicates the corresponding $m/n$ ratio. The results show that even when $m/n = 0.5$, meaning the auxiliary data size is only half of the calibration data, ELCP still outperforms LCP and RLCP in most cases. Furthermore, the advantage of ELCP becomes significant for $m/n\geq 2$ and continues to increase as $m/n$ grows.
    
\subsubsection{Experiments on different score functions}\label{sec:supp_simu_score}
    In previous experiments, we employed the residual score $S(x,y) = |y - \hat{\mu}(x)|$, where $\hat{\mu}(\cdot)$ is a prediction model pretrained on the target training dataset. Many other conformal prediction methods that aim to improve test-conditional coverage adopt alternative score formulations, such as scores based on estimated conditional CDF, as in DCP \citep{chernozhukov2021distributional}, or on conditional density estimation (CDE), as in \citet{lei2014distribution, izbicki2019flexible}. 
    
    Moreover, when an auxiliary dataset is available, transfer learning techniques can be leveraged to potentially improve the pretrained model by incorporating additional information \citep{pan2009survey}. Motivated by this, for each type of pretrained model, we also consider a variant that incorporates information transferred from auxiliary data. 
    We begin by introducing a general principle for auxiliary information transfer in model pre-training.

    \noindent\textbf{Pre-training with auxiliary data:}
    Let $\mathcal{D}_{\rm tr} = \{(X_{{\rm tr},i}, Y_{{\rm tr},i})\}_{i=1}^n$ denote the target training dataset, where the samples are i.i.d.~from a distribution $P$ with joint density $f_{\rm raw}(x,y)$.
    Let $\mathcal{D}_{\rm tr}^\prime = \{(X_{{\rm tr},i}^\prime, Y_{{\rm tr},i}^\prime)\}_{i=1}^m$ be an auxiliary dataset with samples i.i.d.~from another distribution $P^\prime$ with joint density $f_{\rm raw}^\prime(x,y)$. 
    Consider a parametric model class $\{\mu_\theta(\cdot) : \theta \in \Theta\}$ for estimating a target function depending on $P$. From the perspective of empirical risk minimization (ERM), the population-level optimal parameter is defined as
    \begin{gather*}
        \theta^* = \underset{\theta \in \Theta}{\arg\min}\ E\left\{ l(X, Y, \theta) \right\},\ (X,Y)\sim P\,,
    \end{gather*}
    where $l(x, y, \theta)$ is a loss function. If $(X,Y)$ is from the auxiliary distribution $P^\prime$, the optimal parameter under importance weighting becomes
    \begin{gather*}
        \theta^* = \underset{\theta \in \Theta}{\arg\min}\ E\left\{ \dfrac{f_{\rm raw}(X,Y)}{f_{\rm raw}^\prime(X,Y)}l(X, Y, \theta) \right\},\ (X,Y)\sim P^\prime\,,
    \end{gather*}
    Given $\mathcal{D}_{\rm tr}$ and $\mathcal{D}_{\rm tr}^\prime$, along with an estimate $\hat{r}_{\rm raw}(x,y)$ of $f_{\rm raw}(x,y)/f_{\rm raw}^\prime(x,y)$, which can be learned from $\mathcal{D}_{\rm tr}$ and $\mathcal{D}_{\rm tr}^\prime$, we estimate $\theta$ by minimizing the empirical counterpart:
    \begin{gather}
        \hat{\theta} = \underset{\theta \in \Theta}{\arg\min} \dfrac{1}{n+m}\left\{ \sum_{i=1}^n l(X_{{\rm tr},i}, Y_{{\rm tr},i}, \theta)+\sum_{i=1}^m \hat{r}_{\rm raw}(X_{{\rm tr},i}^\prime, Y_{{\rm tr},i}^\prime)l(X_{{\rm tr},i}^\prime, Y_{{\rm tr},i}^\prime, \theta) \right\}\,.\label{eq: transfer minimize}
    \end{gather}
    For nonparametric models, the same transfer principle applies. Given any function $g(x,y)$, the transfer method defines its empirical counterpart of $E\{g(X, Y)\}$ as 
    \begin{gather}
        \left\{ n+\sum_{i=1}^m \hat{r}_{\rm raw}(X_{{\rm tr},i}^\prime, Y_{{\rm tr},i}^\prime) \right\}^{-1}\left\{ \sum_{i=1}^n g(X_{{\rm tr},i}, Y_{{\rm tr},i})+\sum_{i=1}^m \hat{r}_{\rm raw}(X_{{\rm tr},i}^\prime, Y_{{\rm tr},i}^\prime)g(X_{{\rm tr},i}^\prime, Y_{{\rm tr},i}^\prime) \right\}\,.\label{eq: transfer expectation}
    \end{gather}

    The following score functions and pre-training schemes are considered in the simulations of this section.
    %Next we describe each score function in detail.
    \begin{itemize}
        \item \noindent\textbf{CDF score}: Following \citet{chernozhukov2021distributional}, we consider scores based on conditional CDF estimation to capture localization properties. Specifically, we train a conditional distribution estimator $\widetilde{F}(y\mid x)$ using the training dataset $\mathcal{D}_{\rm tr}$ and define the score for $(X_i,Y_i), i=1,\ldots,n$ as
        \begin{gather*}
            S_i=|\widetilde{F}(Y_i\mid X_i)-1/2|\,.
        \end{gather*}
        Scores for test and auxiliary data are calculated analogously, and all other procedures remain unchanged. The conditional CDF estimator is implemented using conventional quantile regression (QR) as described in \citet{chernozhukov2021distributional}. We refer to this score as \textbf{CDF}, and to the corresponding method of \citet{chernozhukov2021distributional} as \textbf{DCP}. 
        When the score is based on a CDF estimator trained incorporating information transferred from the \textbf{A}uxiliary \textbf{T}raining dataset via \eqref{eq: transfer minimize}, we denote the resulting score and conformal prediction method as \textbf{CDF-AT} and \textbf{DCP-AT}, respectively. 
        When the transfer in pre-training incorporates \textbf{A}ll \textbf{A}uxiliary data, we denote the corresponding variants as \textbf{CDF-AA} and \textbf{DCP-AA}.

        \item \noindent\textbf{CDE score}: Following \citet{lei2014distribution, izbicki2019flexible}, we consider scores based on conditional density estimation. Specifically, we train a conditional density estimator $\widetilde{f}(y \mid x)$ using the training dataset and define the score for $(X_i,Y_i)$ as
        \begin{gather*}
            S_i = 1 / \widetilde{f}(Y_i \mid X_i)\,.
        \end{gather*}
        The conditional density estimator is obtained via kernel density estimation using a kernel $K(\cdot,\cdot; h)$, where the optimal bandwidth $h$ is selected using the method of \citet{bashtannyk2001bandwidth} implemented in SciPy \citep{2020SciPy-NMeth}. The resulting score and the corresponding conformal method are denoted as \textbf{CDE}. Specifically, the conditional density of $Y\mid X$ estimated using only target training data can be expressed as
        \begin{gather*}
            \widetilde{f}(y\mid x)=\dfrac{\sum_{i=1}^{n}K((x,y),(X_{{\rm tr},i},Y_{{\rm tr},i});h_1)}{\sum_{i=1}^{n}K(x,X_{{\rm tr},i};h_2)}\,,
        \end{gather*}
        where $h_1,h_2$ are the bandwidths chosen for joint density estimation and covariate density estimation, respectively. When the score incorporates information transferred from the auxiliary training dataset by \eqref{eq: transfer expectation}, the conditional density estimator $\widetilde{f}_{\rm AT}(y\mid x)$ can be written as
        \begin{gather*}
            \dfrac{\sum_{i=1}^{n}K((x,y),(X_{{\rm tr},i},Y_{{\rm tr},i});h_3)+\sum_{i=1}^m \hat{r}_{\rm raw}(X_{{\rm tr},i}^\prime, Y_{{\rm tr},i}^\prime)K((x,y),(X_{{\rm tr},i}^\prime,Y_{{\rm tr},i}^\prime);h_3)}{\sum_{i=1}^{n}K(x,X_{{\rm tr},i};h_4)+\sum_{i=1}^m \hat{r}_{\rm raw}(X_{{\rm tr},i}^\prime, Y_{{\rm tr},i}^\prime)K(x,X_{{\rm tr},i}^\prime;h_4)}\,.
        \end{gather*}        
        We denote this resulting variant as \textbf{CDE-AT}. When the transfer uses all auxiliary data, the corresponding score is denoted as \textbf{CDE-AA}.

        \item Additionally, we denote the residual score used in the main text as \textbf{RES}, and the score that incorporates information transferred from the auxiliary training dataset by \eqref{eq: transfer minimize} as \textbf{RES-AT} and from all auxiliary data as \textbf{RES-AA}.
    \end{itemize}

    We continue to consider DGP1--DGP3 with $n=150$ and $m/n=5$, and compare the performance of ELCP, LCP, RLCP, DCP and CDE under the RES, RES-AT, RES-AA, CDF, CDF-AT, CDF-AA, CDE, CDE-AT, and CDE-AA score functions (with DCP implemented only for the CDF-based scores: CDF, CDF-AT, and CDF-AA).
    Figure~\ref{fig:test for DGP13 scores} presents comparisons of CDF, CDE with RES.
    Figure~\ref{fig:test for DGP13 scores 1} compares RES-AT, RES-AA with RES.
    Figure~\ref{fig:test for DGP13 scores 2} presents comparisons of CDF-AT, CDF-AA with CDF.
    Figure~\ref{fig:test for DGP13 scores 3} compares CDE-AT, CDE-AA with CDE.
    In all figures, dashed lines represent the baseline score, while solid lines represent the score function under comparison.

    \begin{figure}[htp]
        \centering
        \begin{subfigure}[b]{0.95\textwidth}
            \includegraphics[width=\linewidth]{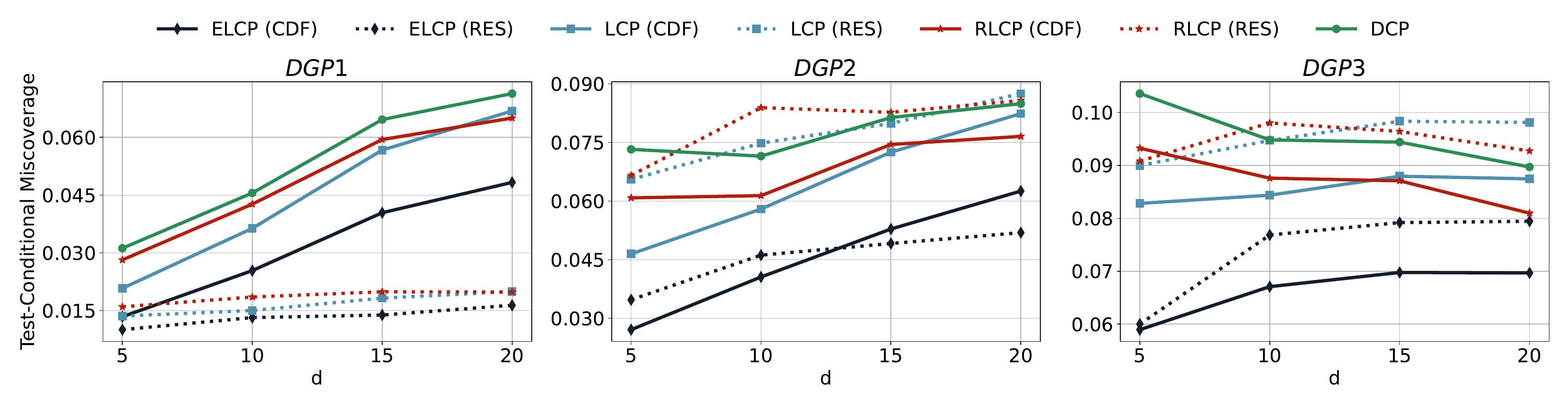}
        \end{subfigure}
        \begin{subfigure}[b]{0.95\textwidth}
            \includegraphics[width=\linewidth]{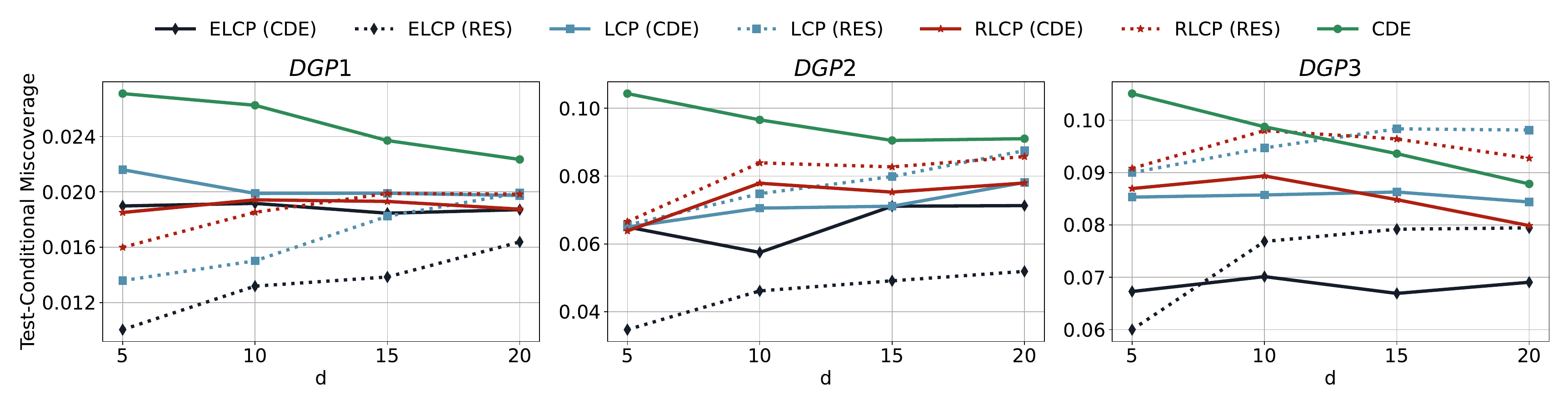}
        \end{subfigure}
        \caption{Test-conditional miscoverage error of CDF/CDE scores for DGP1--DGP3 compared with the RES score.}\label{fig:test for DGP13 scores}
    \end{figure}
    \begin{figure}[htp]
        \centering
        \begin{subfigure}[b]{0.95\textwidth}
            \includegraphics[width=\linewidth]{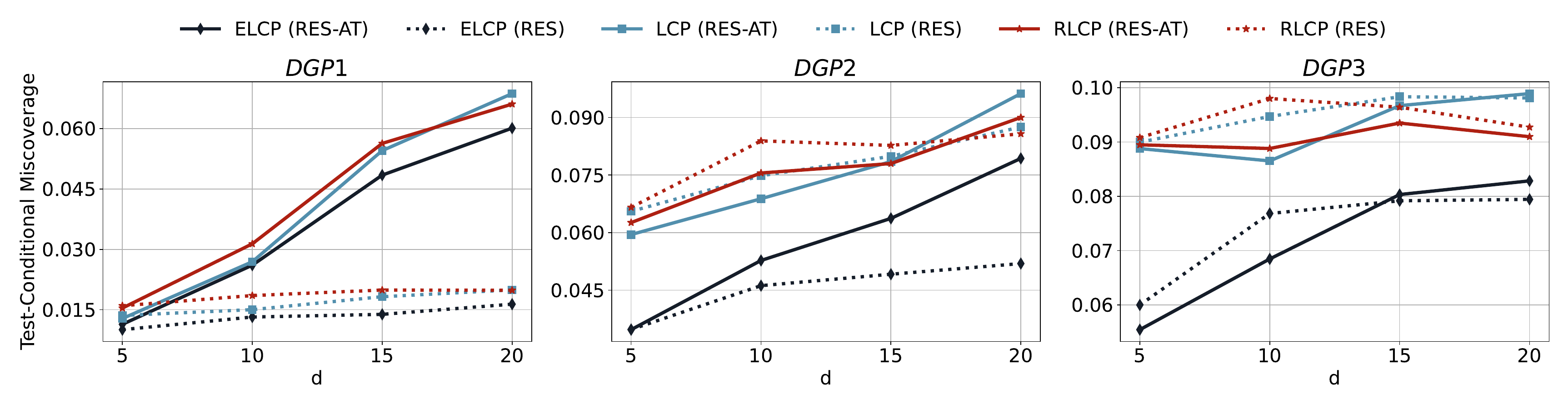}
        \end{subfigure}
        \begin{subfigure}[b]{0.95\textwidth}
            \includegraphics[width=\linewidth]{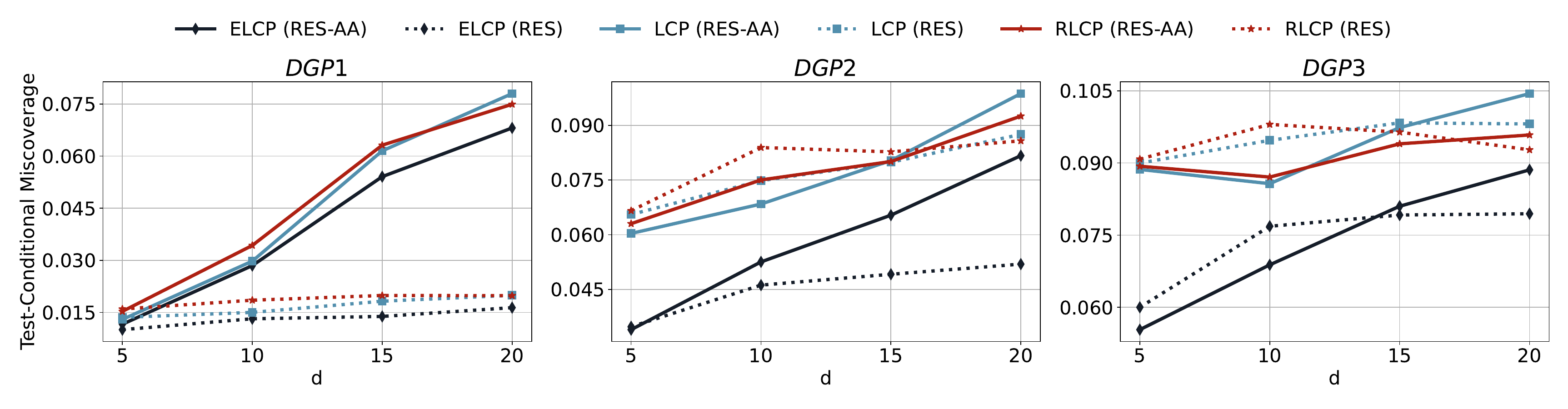}
        \end{subfigure}
        \caption{Test-conditional miscoverage error of RES-AT/RES-AA scores for DGP1--DGP3 compared with the RES score.}\label{fig:test for DGP13 scores 1}
    \end{figure}
    \begin{figure}[htp]
        \centering
        \begin{subfigure}[b]{0.95\textwidth}
            \includegraphics[width=\linewidth]{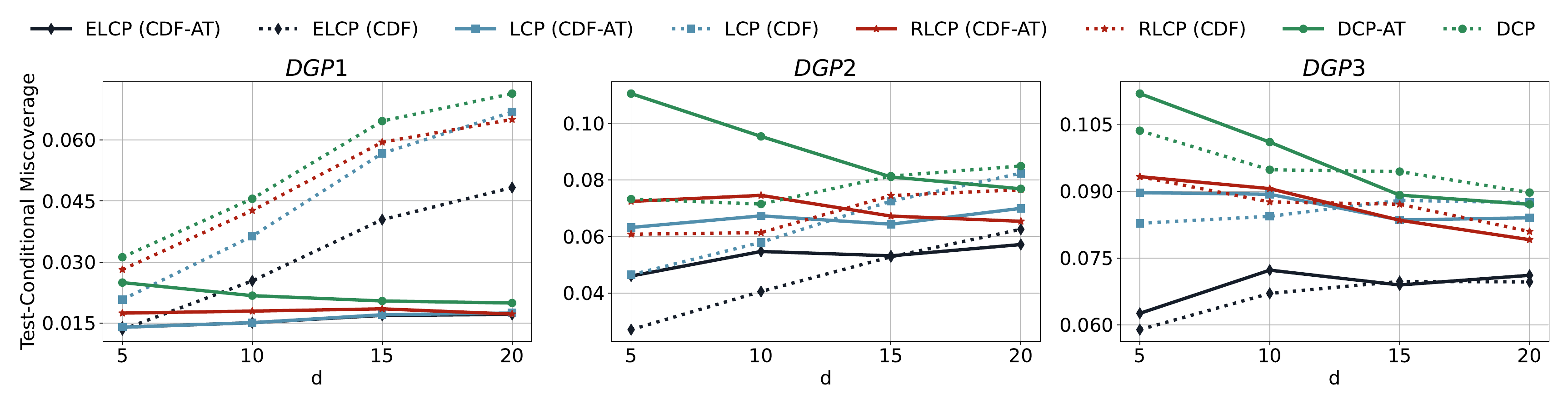}
        \end{subfigure}
        \begin{subfigure}[b]{0.95\textwidth}
            \includegraphics[width=\linewidth]{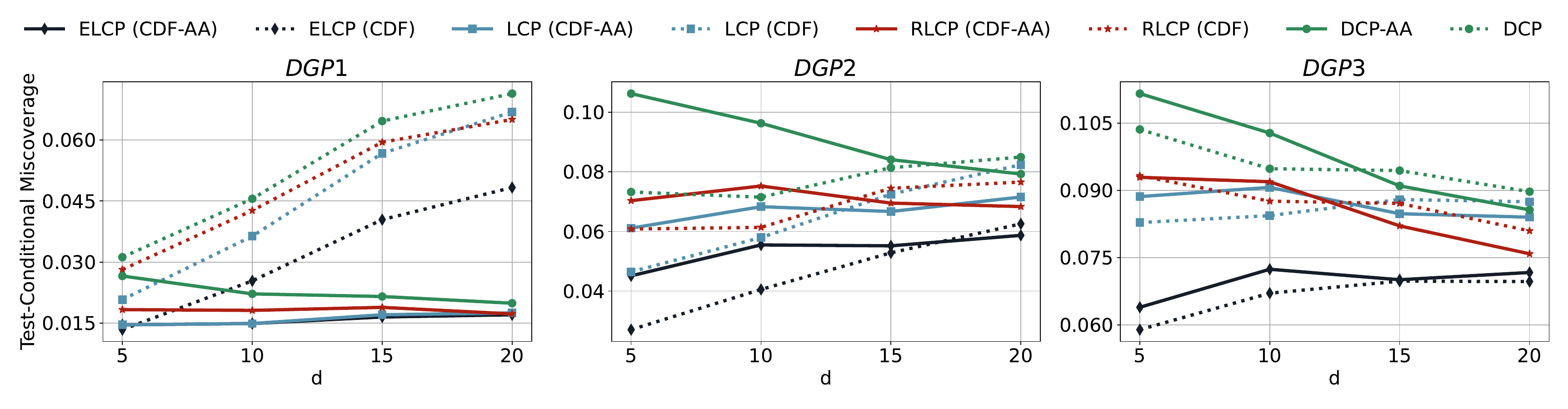}
        \end{subfigure}
        \caption{Test-conditional miscoverage error of CDF-AT/CDF-AA scores for DGP1--DGP3 compared with the CDF score.}\label{fig:test for DGP13 scores 2}
    \end{figure}
    \begin{figure}[htp]
        \centering
        \begin{subfigure}[b]{0.95\textwidth}
            \includegraphics[width=\linewidth]{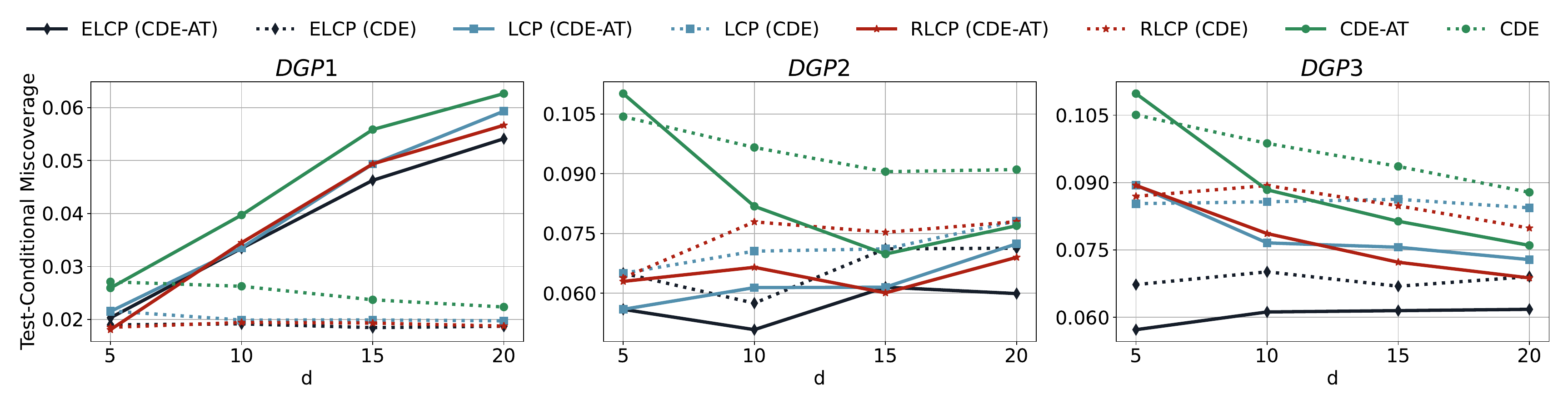}
        \end{subfigure}
        \begin{subfigure}[b]{0.95\textwidth}
            \includegraphics[width=\linewidth]{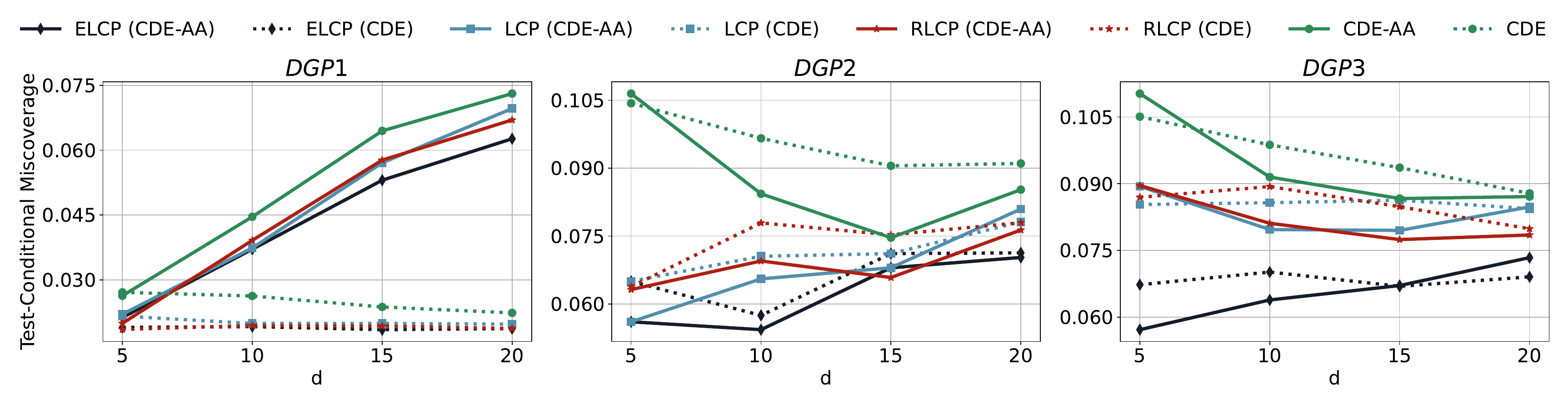}
        \end{subfigure}
        \caption{Test-conditional miscoverage error of CDE-AT/CDE-AA scores for DGP1--DGP3 compared with the CDE score.}\label{fig:test for DGP13 scores 3}
    \end{figure}

    First, we observe that across all DGPs and for any fixed choice of score function and pre-training scheme, ELCP consistently outperforms LCP, RLCP, and DCP.

    Focusing on score functions based on pretrained models that do not incorporate auxiliary data, namely RES, CDF, and CDE (Figure~\ref{fig:test for DGP13 scores}), we see distinct patterns across the three DGPs. Under DGP1, almost all methods perform better with the RES score than with CDF or CDE. Under DGP2, the performance under different scores is more mixed. Under DGP3, however, CDF and CDE often outperform RES, especially when the dimension $d$ is relatively large.

    Next, we examine the effect of incorporating auxiliary data into model pretraining in the score function on test-conditional coverage. Under DGP1, RES and CDE scores based on models pretrained using the AT and AA transfer schemes show degraded performance (Figures \ref{fig:test for DGP13 scores 1} and \ref{fig:test for DGP13 scores 3}), indicating that transferring auxiliary information for model pretraining for score construction does not always lead to improvements.
    In contrast, for CDF type scores under DGP1, transfer from auxiliary data in model pretraining brings clear benefits (Figure \ref{fig:test for DGP13 scores 2}).
    For DGP2 and DGP3, the results are more mixed.
    
    %The results indicate that for DGP1 and DGP2, the RES score generally outperforms the other score functions. For DGP3, however, both CDF and CDE scores yield better performance than RES. In DGP1, score functions using the AT and AA pre-training schemes tend to perform poorly, suggesting that transfer for pretrained score functions does not always help. For DGP2--3, transfer from auxiliary information does bring improvements. Specifically, for DGP2, transfer benefits non-ELCP methods considerably, while in DGP3, all methods see notable gains from transfer. Overall, in all scenarios the best performance is achieved by the ELCP-based results. 

    When auxiliary data are available, end-to-end approaches can be categorized into two classes. The first, exemplified by ELCP, leverages auxiliary information at the calibration stage and can be applied to any pretrained score function. The second incorporates auxiliary information at the pre-training stage of the score function via transfer learning. To highlight the advantages of ELCP, we compare it directly with other approaches using auxiliary data under CDF and CDE type scores in Figure~\ref{fig:test for DGP13 schemes}. 
    For instance, under the CDF score, we consider ELCP with CDF score, DCP with CDF score (DCP), ELCP with CDF-AT score, DCP with CDF-AT score (DCP-AT), ELCP with CDF-AA score, and DCP with CDF-AA score (DCP-AA).

    Figure~\ref{fig:test for DGP13 schemes} shows that in most cases, ELCP outperforms DCP-AT and DCP-AA. The only notable exception occurs under DGP1 with the CDF score, where ELCP (CDF) performs worse than DCP-AT and DCP-AA. Even in this setting, ELCP with the CDF-AT or CDF-AA scores still surpasses DCP-AT and DCP-AA. 
    Overall, these results indicate that while incorporating auxiliary information during pre-training via transfer learning in score construction is often beneficial, further leveraging auxiliary information through ELCP's calibration step can yield additional and consistent gains.

    \begin{figure}[H]
        \centering
        \begin{subfigure}[b]{0.95\textwidth}
            \includegraphics[width=\linewidth]{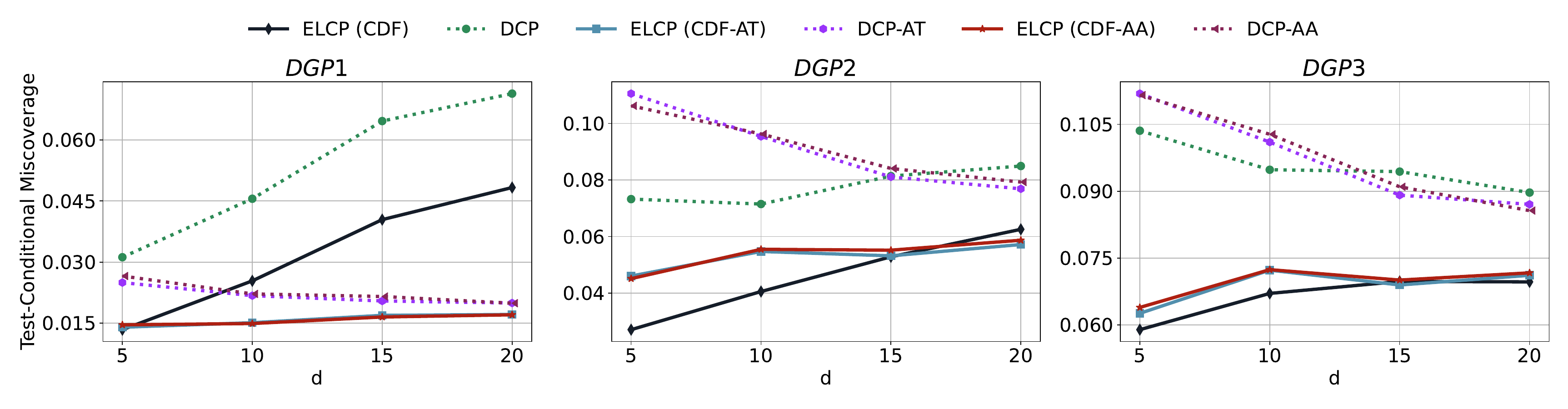}
        \end{subfigure}
        \begin{subfigure}[b]{0.95\textwidth}
            \includegraphics[width=\linewidth]{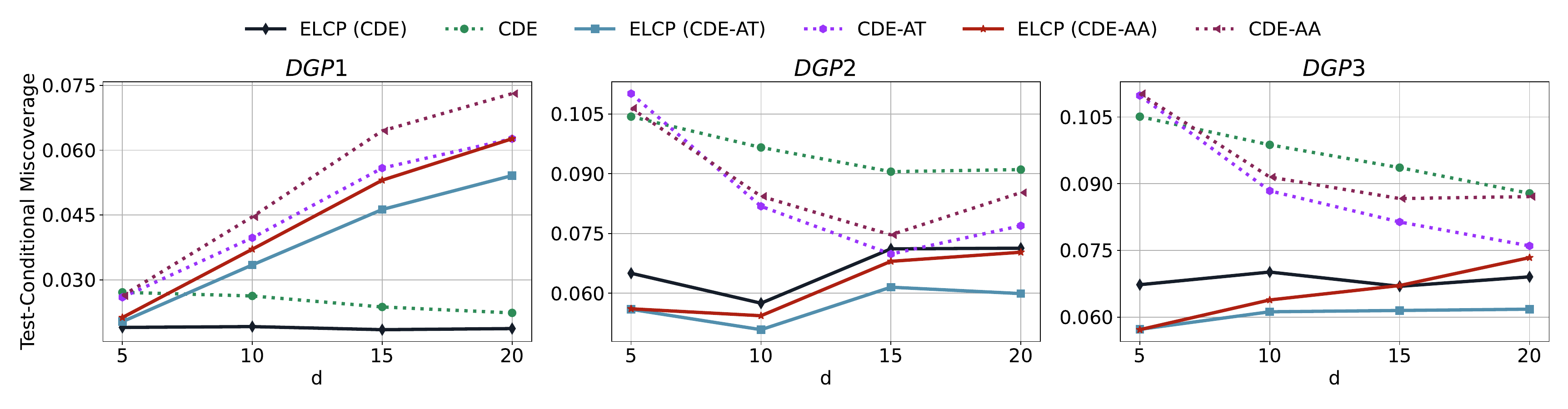}
        \end{subfigure}
        \caption{Test-conditional miscoverage error of different pre-training schemes for DGP1--DGP3.}\label{fig:test for DGP13 schemes}
    \end{figure}

\subsubsection{Experiments on varying nominal coverage level $1-\alpha$}\label{sec:supp_simu_varying_alpha}

    While previous experiments fixed the coverage level at $0.9$, we now consider a range of levels $1-\alpha \in \{0.1, 0.2, 0.3, 0.4, 0.5, 0.6, 0.7, 0.8, 0.9\}$ to evaluate performance across different coverage levels. 
    
    Figure~\ref{fig:test for DGP13 5 alpha} shows the test-conditional miscoverage error for DGP1--DGP3 under varying $1-\alpha$, with ELCP's miscoverage curve consistently lying below those of LCP and RLCP. This confirms that ELCP delivers consistent improvements across all coverage levels, supporting the conclusion of Theorem~\ref{theo:conditional_coverage_error_bound} that auxiliary data enhances the estimation of the entire conditional distribution, rather than just a single quantile.
    When $1-\alpha$ lies in the range $[0.5, 0.7]$, the advantage of ELCP becomes more pronounced.
    \begin{figure}[h]
        \centering
        \includegraphics[width=.8\linewidth]{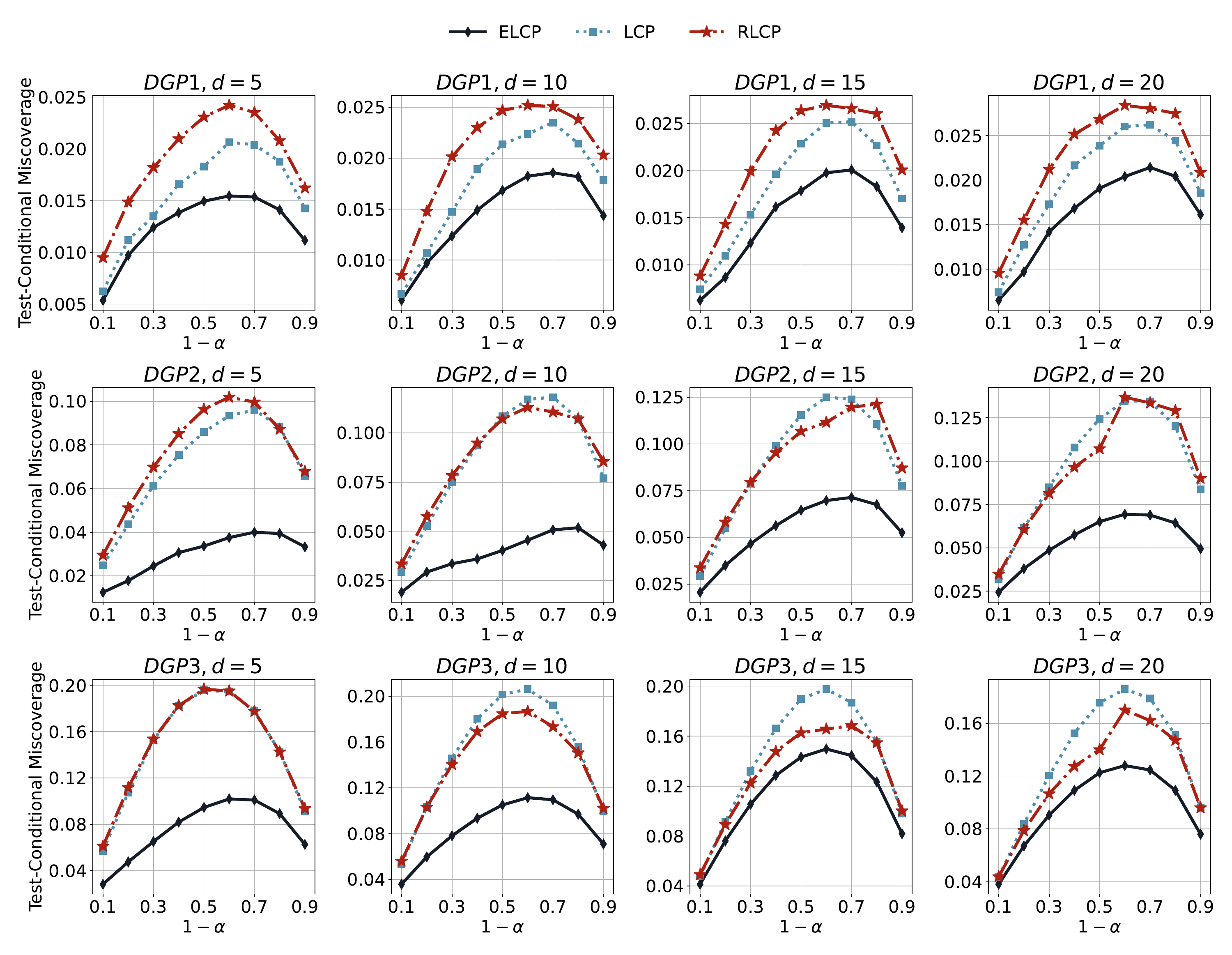}
        \caption{Test-conditional miscoverage error for DGP1--DGP3 with varying $1-\alpha$.}\label{fig:test for DGP13 5 alpha}
    \end{figure}

    Figure~\ref{fig:marginal for DGP13 5 alpha} reports the mean absolute differences between the achieved marginal coverage rate and the target $1-\alpha$ for ELCP, LCP, RLCP, LCP-C, and RLCP-C across all nominal levels. The results show that while LCP-C and RLCP-C fail to maintain marginal coverage guarantees for any $1-\alpha$ in DGP1--DGP3, all other methods successfully preserve the marginal coverage guarantees.
    \begin{figure}[h]
        \centering
        \includegraphics[width=.8\linewidth]{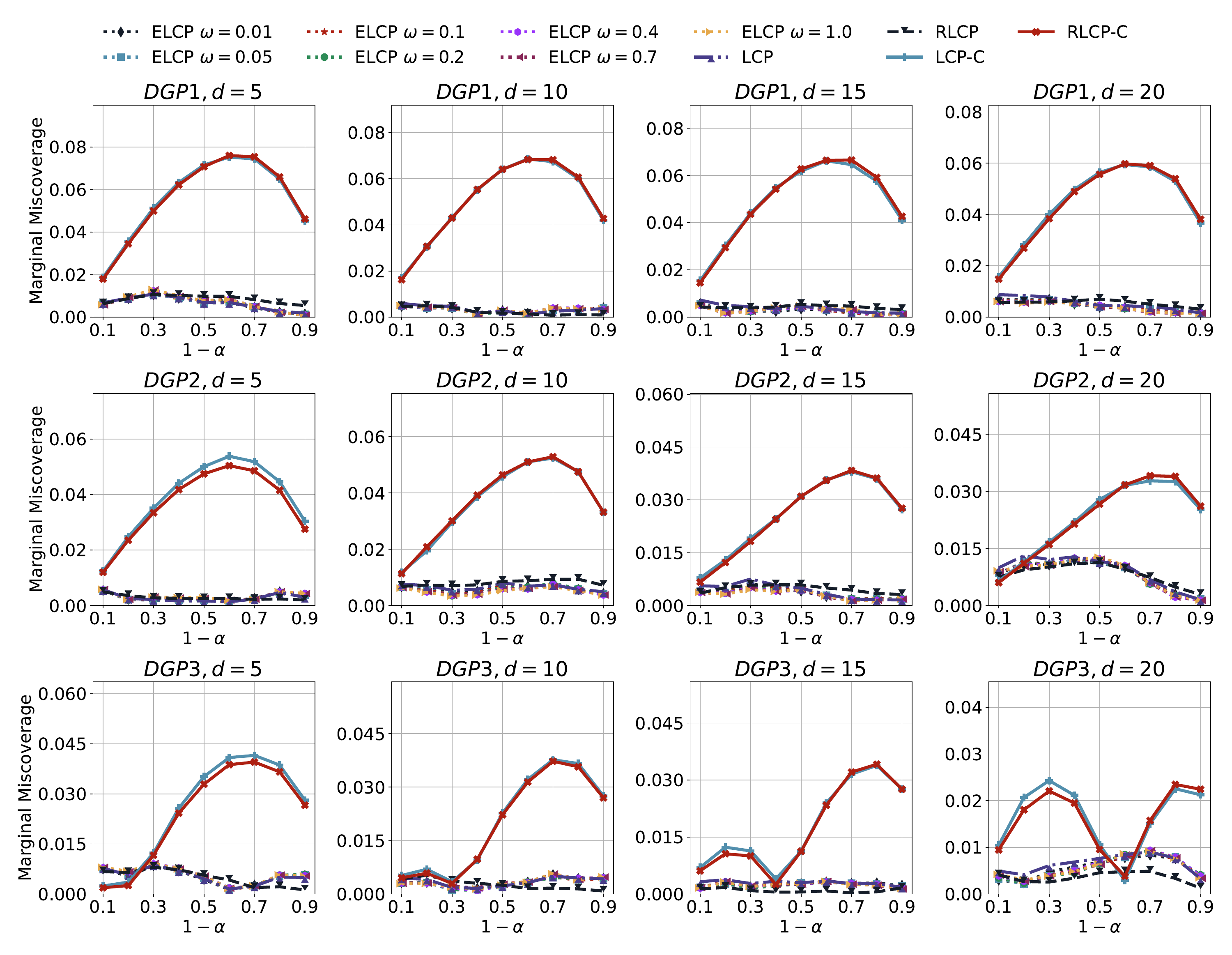}
        \caption{Absolute marginal miscoverage error for DGP1-DGP3 with varying $1-\alpha$.}\label{fig:marginal for DGP13 5 alpha}
    \end{figure}

\subsubsection{Experiments on extremely limited $n$}\label{sec:supp_simu_limited_n}

    While our previous simulations focused on dimensions $d=5,10,15,20$ with training and calibration data sizes $n=100,150,200$, we now investigate a more challenging setting where the training and calibration data are extremely limited. To prevent $1-\alpha+(n+1)^{-1}$ from reaching 1, we specifically consider $n=20$ with $1-\alpha=0.9$ for dimensions $d=3,6,9$.
    Under this scenario, where the scores may become highly inaccurate due to limited data, we retain DGP1–DGP3 and report the resulting test-conditional miscoverage rates in Table~\ref{table:testDGP123smalln}.
    
    In DGP1, where the conditional variance of $Y$ given $X$ is relatively small compared to its mean, the estimated conditional expectation $E(Y\mid X)$ remains reasonably accurate even when $n=20$. 
    Consequently, ELCP achieves substantial test-conditional miscoverage error reductions, exceeding 50\% across all dimensions.

    In contrast, DGP2 and DGP3 yield highly inaccurate estimates of $E(Y \mid X)$. 
    Consequently, the connection between auxiliary and target score distributions becomes extremely weak, making accurate density ratio estimation particularly challenging.
    As a result, ELCP yields only marginal or negligible gains over LCP and RLCP, with reduction ratios often near zero. Nevertheless, ELCP consistently performs at least as well as LCP in all settings.
    \begin{table*}[h]
        \caption{Test-conditional miscoverage error when $n=20$ (with ELCP reduction ratio in parentheses) for DGP1--DGP3.}
        \begin{center}
        {\fontsize{8}{8}\selectfont{\setlength{\tabcolsep}{3pt}\begin{tabular}{ccccccccccccccccc}
        \toprule
        & & \multicolumn{3}{c}{$d=3$} & \multicolumn{3}{c}{$d=6$} & \multicolumn{3}{c}{$d=9$} \vspace{-2pt}\\
         & $n$ & ELCP & LCP & RLCP & ELCP & LCP & RLCP & ELCP & LCP & RLCP \\
        \cmidrule(lr){3-5} \cmidrule(lr){6-8} \cmidrule(lr){9-11}
        DGP1 & $20$ & $\textbf{0.023(54.3\%)}$ & $0.061$ & $0.051$ & $\textbf{0.017(58.7\%)}$ & $0.042$ & $0.041$ & $\textbf{0.019(55.5\%)}$ & $0.046$ & $0.042$\\
        DGP2 & $20$ & $\textbf{0.061(0.0\%)}$ & $0.061$ & $0.079$ & $\textbf{0.059(1.3\%)}$ & $0.060$ & $0.080$ & $\textbf{0.044(26.6\%)}$ & $0.060$ & $0.072$\\
        DGP3 & $20$ & $\textbf{0.071(0.0\%)}$ & $0.071$ & $0.078$ & $\textbf{0.067(0.0\%)}$ & $0.067$ & $0.078$ & $\textbf{0.062(0.0\%)}$ & $0.062$ & $0.073$\\
        \bottomrule
        \end{tabular}}}
        \end{center}
        \label{table:testDGP123smalln}
    \end{table*}

\FloatBarrier
\subsection{Additional results for synthesized data under data-driven selected $\omega$ and $h$}\label{sec:supp_simu_add_selected_parameter}

\subsubsection{Marginal coverage and mean prediction set size under data-driven selected $\omega$ and $h$}\label{sec:supp_marginal_selected}

    We present additional experimental results of marginal coverage under parameter selection in Table~\ref{table:mar for parasel}. We observe that ELCP, LCP, and RLCP all maintain marginal coverage within $n^{-1}$ of the required level when parameters are selected to optimize test-conditional coverage. %For the parameter selection method that optimizes the mean prediction set size, because our implementation excludes the $n+1$ test point, the marginal coverage may be slightly compromised. However, the overall decrease in coverage is bounded within an order of $n^{-1}$. 
    \begin{table*}[htb]
        \caption{Marginal coverage of ELCP, LCP and RLCP under data-driven selected parameters for DGP1--DGP3.}
        \begin{center}
        {\fontsize{10}{8}\selectfont{\setlength{\tabcolsep}{4pt}\begin{tabular}{ccccccccccccccccc}
        \toprule
        & & \multicolumn{4}{c}{$n=100$} & \multicolumn{4}{c}{$n=150$} & \multicolumn{4}{c}{$n=200$} \vspace{-2pt}\\
         & $d$ & 5 & 10 & 15 & 20 & 5 & 10 & 15 & 20 & 5 & 10 & 15 & 20 \\
        \cmidrule(lr){3-6} \cmidrule(lr){7-10} \cmidrule(lr){11-14}
        DGP1 & ELCP & 0.899 & 0.899 & 0.902 & 0.902 & 0.903 & 0.896 & 0.903 & 0.898 & 0.898 & 0.895 & 0.897 & 0.899\\
        & LCP & 0.901 & 0.902 & 0.901 & 0.901 & 0.903 & 0.898 & 0.901 & 0.897 & 0.899 & 0.896 & 0.898 & 0.900\\
        & RLCP & 0.901 & 0.902 & 0.901 & 0.901 & 0.903 & 0.898 & 0.901 & 0.897 & 0.899 & 0.896 & 0.898 & 0.900\\
        DGP2 & ELCP & 0.891 & 0.901 & 0.905 & 0.900 & 0.898 & 0.896 & 0.901 & 0.903 & 0.894 & 0.897 & 0.900 & 0.900\\
        & LCP & 0.895 & 0.906 & 0.906 & 0.902 & 0.900 & 0.904 & 0.901 & 0.905 & 0.895 & 0.901 & 0.900 & 0.900\\
        & RLCP & 0.895 & 0.906 & 0.906 & 0.902 & 0.900 & 0.904 & 0.901 & 0.905 & 0.895 & 0.901 & 0.900 & 0.900\\
        DGP3 & ELCP & 0.891 & 0.902 & 0.901 & 0.902 & 0.897 & 0.896 & 0.900 & 0.895 & 0.894 & 0.895 & 0.897 & 0.893\\
        & LCP & 0.895 & 0.902 & 0.900 & 0.901 & 0.896 & 0.900 & 0.898 & 0.895 & 0.895 & 0.892 & 0.898 & 0.894\\
        & RLCP & 0.895 & 0.902 & 0.900 & 0.901 & 0.896 & 0.900 & 0.898 & 0.895 & 0.895 & 0.892 & 0.898 & 0.894\\
        \bottomrule
        \end{tabular}}}
        \end{center}
        \label{table:mar for parasel}
    \end{table*}
    In Section~\ref{sec: exp selected wh} of the main text, we select the parameters in a data-driven manner to optimize test-conditional coverage. This leads to a significant improvement in test-conditional coverage for ELCP. 
    
    Furthermore, we report the corresponding mean prediction set sizes. As shown in Table~\ref{table:sizeDGP123para}, ELCP attains smaller prediction set sizes in most settings. In other cases, its set sizes are slightly larger; however, they remain very close to the best-performing method among LCP and RLCP, with differences almost always within 1\%.
    We emphasize that the data-driven parameter selection in our framework is guided by a loss function targeting test-conditional coverage rather than directly optimizing prediction set size. Nonetheless, the results indicate that ELCP maintains competitive set efficiency while achieving its primary goal of improved test-conditional coverage.
    In practice, if the prediction set size is the priority, the construction of $\mathcal{L}_1$ in Section~\ref{sec:construct loss fun} can be employed to select parameters with the explicit goal of minimizing the prediction set size.
    %In particular, if one wishes to use auxiliary data specifically to enhance efficiency, the construction of $\mathcal{L}_1$ in Section~\ref{sec:construct loss fun} can be employed to select parameters with the explicit goal of minimizing the prediction set size.
    \begin{table*}[htb]
        \caption{Mean prediction set size under data-driven selected $\omega$ and $h$ (with ELCP reduction ratio in parentheses) for DGP1--DGP3.}
        \begin{center}
        {\fontsize{7}{7}\selectfont{\setlength{\tabcolsep}{1.5pt}\begin{tabular}{ccccccccccccccccc}
        \toprule
        & & \multicolumn{3}{c}{$d=5$} & \multicolumn{3}{c}{$d=10$} & \multicolumn{3}{c}{$d=15$} & \multicolumn{3}{c}{$d=20$}\vspace{-2pt}\\
         & $n$ & ELCP & LCP & RLCP & ELCP & LCP & RLCP & ELCP & LCP & RLCP & ELCP & LCP & RLCP \\
        \cmidrule(lr){3-5} \cmidrule(lr){6-8} \cmidrule(lr){9-11} \cmidrule(lr){12-14}
        DGP1 & $100$ & $9.959(-0.6\%)$ & $9.898$ & $10.148$ & $\textbf{10.016(1.1\%)}$ & $10.132$ & $10.602$ & $10.351(-0.1\%)$ & $10.338$ & $10.784$ & $\textbf{10.656(0.5\%)}$ & $10.706$ & $11.130$\\
        & $150$ & $\textbf{9.938(0.5\%)}$ & $9.991$ & $10.060$ & $\textbf{9.678(1.0\%)}$ & $9.779$ & $10.153$ & $10.164(-0.6\%)$ & $10.106$ & $10.406$ & $\textbf{10.129(0.4\%)}$ & $10.167$ & $10.490$\\
        & $200$ & $\textbf{9.739(1.0\%)}$ & $9.835$ & $9.868$ & $\textbf{9.562(1.1\%)}$ & $9.665$ & $9.887$ & $\textbf{9.740(0.9\%)}$ & $9.825$ & $10.035$ & $\textbf{9.976(0.6\%)}$ & $10.040$ & $10.253$\\
        DGP2 & $100$ & $\textbf{4.300(0.4\%)}$ & $4.319$ & $5.340$ & $\textbf{4.540(0.6\%)}$ & $4.570$ & $5.557$ & $4.951(-0.1\%)$ & $4.947$ & $5.808$ & $\textbf{5.154(0.8\%)}$ & $5.193$ & $6.025$\\
        & $150$ & $4.282(-0.3\%)$ & $4.271$ & $5.098$ & $\textbf{4.292(1.1\%)}$ & $4.341$ & $5.201$ & $4.701(-0.9\%)$ & $4.657$ & $5.471$ & $\textbf{4.811(0.2\%)}$ & $4.821$ & $5.481$\\
        & $200$ & $4.211(-0.8\%)$ & $4.179$ & $5.020$ & $\textbf{4.192(1.0\%)}$ & $4.236$ & $5.037$ & $\textbf{4.487(0.3\%)}$ & $4.499$ & $5.224$ & $4.618(-0.2\%)$ & $4.610$ & $5.298$\\
        DGP3 & $100$ & $\textbf{4.054(3.1\%)}$ & $4.186$ & $4.881$ & $\textbf{4.515(2.2\%)}$ & $4.616$ & $5.137$ & $4.710(-0.9\%)$ & $4.666$ & $5.141$ & $5.014(-1.8\%)$ & $4.925$ & $5.415$\\
        & $150$ & $\textbf{4.063(2.5\%)}$ & $4.167$ & $4.723$ & $\textbf{4.281(3.0\%)}$ & $4.412$ & $4.852$ & $\textbf{4.513(0.9\%)}$ & $4.553$ & $4.947$ & $4.611(-0.2\%)$ & $4.603$ & $4.972$\\
        & $200$ & $\textbf{3.969(1.9\%)}$ & $4.044$ & $4.593$ & $4.184(-0.4\%)$ & $4.165$ & $4.618$ & $\textbf{4.381(1.6\%)}$ & $4.450$ & $4.754$ & $4.509(-0.9\%)$ & $4.470$ & $4.802$\\
        \bottomrule
        \end{tabular}}}
        \end{center}
        \label{table:sizeDGP123para}
    \end{table*}

\subsubsection{Additional results under data-driven selected $\omega$ and $h$}

    We present additional experimental results on parameter selection in this section, including: (i) the absolute difference between the selected bandwidth $\hat{h}$ and the optimal bandwidth $h^*$ for ELCP in Figure~\ref{fig:paraabsmis for DGP13 5}, and (ii) the difference in test-conditional miscoverage error of ELCP between the selected parameters and the optimal parameters in Figure~\ref{fig:paratestmis for DGP13 5}.
    Both metrics are evaluated against the calibration size $n$ and averaged over 100 trials.

    Figure~\ref{fig:paraabsmis for DGP13 5} shows that the absolute difference between the selected bandwidth $\hat{h}$ and the optimal bandwidth $h^*$ tends to decrease as the calibration size $n$ increases. Similarly, Figure~\ref{fig:paratestmis for DGP13 5} indicates that the difference in test-conditional miscoverage error between the selected and optimal parameters is small and further diminishes with larger $n$.
    \begin{figure}[htbp]
        \centering
        \includegraphics[width=.8\linewidth]{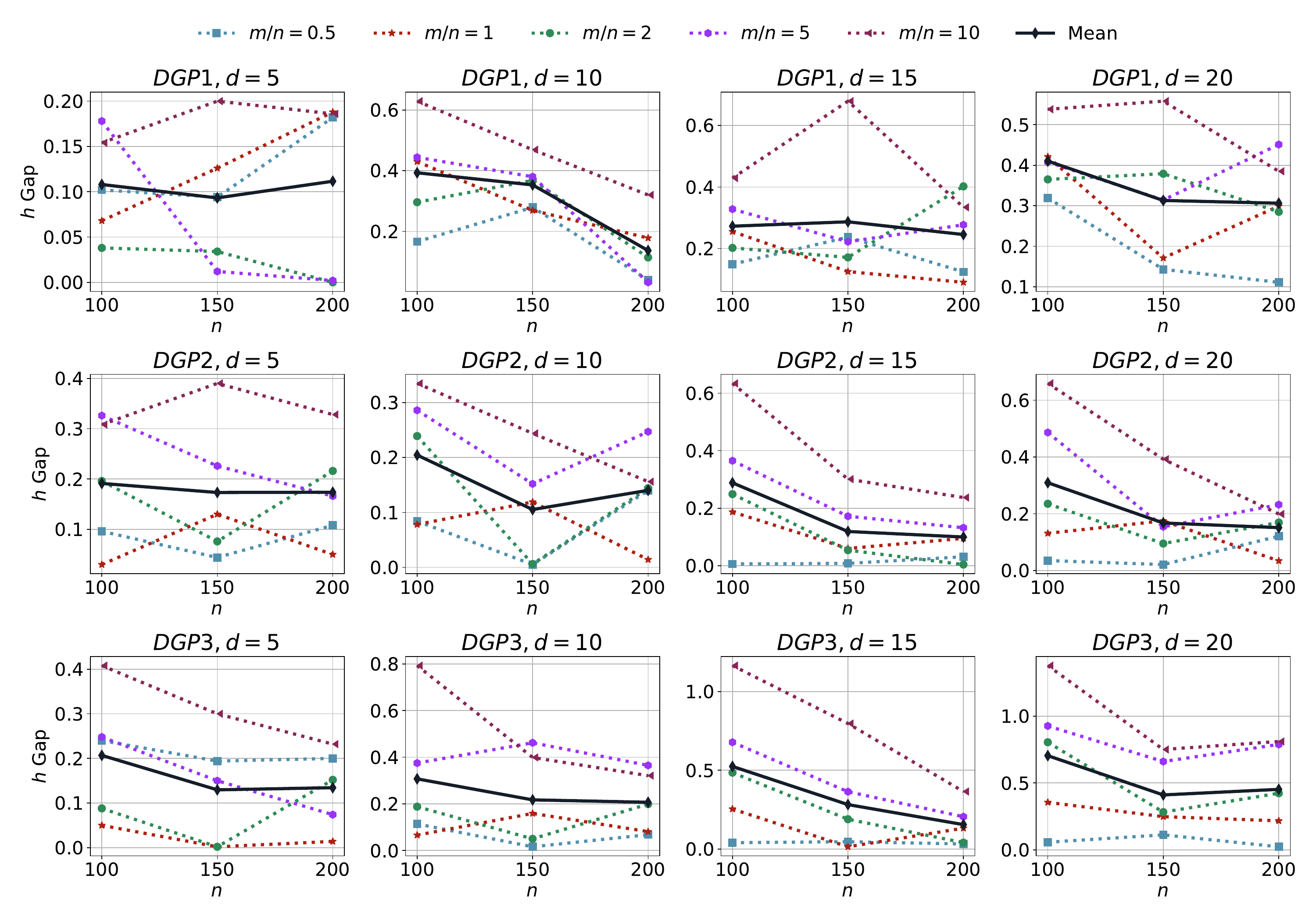}
        \caption{Absolute difference between selected bandwidth $\hat{h}$ and the optimal bandwidth $h^*$ for ELCP for DGP1--DGP3.}\label{fig:paraabsmis for DGP13 5}
    \end{figure}
    \begin{figure}[htbp]
        \centering
        \includegraphics[width=.8\linewidth]{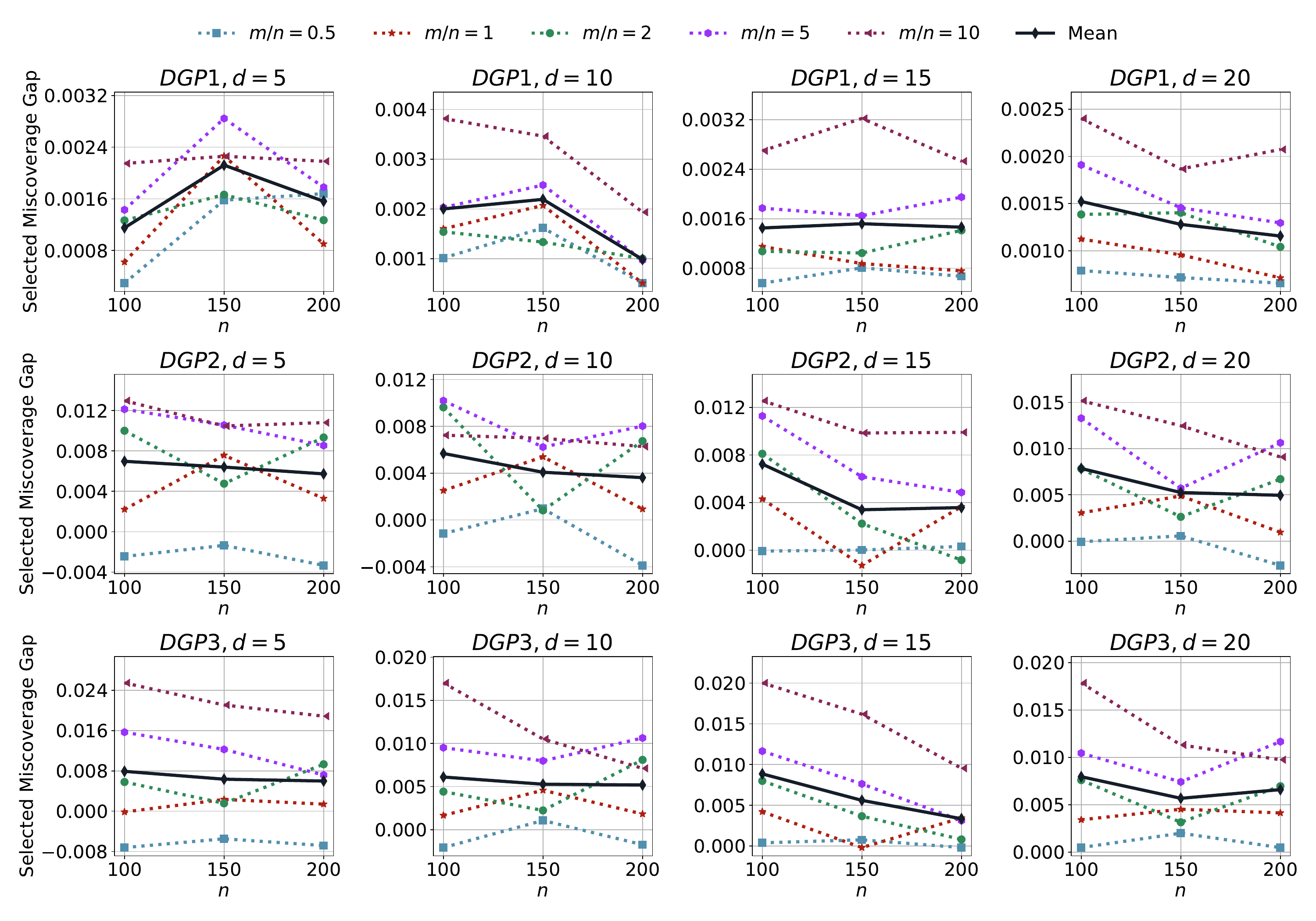}
        \caption{Difference in test-conditional miscoverage error of ELCP between selected bandwidth $\hat{h}$ and the optimal bandwidth $h^*$ for DGP1--DGP3.}\label{fig:paratestmis for DGP13 5}
    \end{figure}

\FloatBarrier
\subsection{Semi-Supervised Setting}\label{sec:simu_semi}
    
    In this section, we conduct simulation studies under the semi-supervised setting, where only the covariates are available, and the response variable is missing in the auxiliary data.
    As discussed in Section \ref{sec:semi}, the covariate $X^{\prime}$ in the auxiliary data is assumed to follow the same distribution as $X$ in the calibration data.
    Specifically, we consider the following DGPs:

    \noindent\textbf{DGP5:} $X\sim \mathrm{Uniform}[-2,2]^d,\ \mu(X)=\overset{d}{\underset{i=1}\sum}X_i,\ \epsilon(X)=\left\{ \sum_{i=1}^{5}|\arccos(X_i/2)| \right\}^{1/2}*\mathcal{M}$,\\
    \noindent\textbf{DGP6:} $X\sim \mathrm{Uniform}[-2,2]^d,\ \mu(X)=\overset{d}{\underset{i=1}\sum}X_i,\ \epsilon(X)=\exp(\sum_{i=1}^{5}X_i/4)*\mathcal{M}$,\\
    \noindent\textbf{DGP7:} $X\sim \mathrm{Uniform}[-2,2]^d,\ \mu(X)=\overset{d}{\underset{i=1}\sum}X_i,\ \epsilon(X)=|\sum_{i=1}^{5}X_i/2|*\mathcal{M}$,\\
    where $\mathcal{M}$ is the Gaussian mixture distribution $0.15N(-5,1)+0.7N(0,1)+0.15N(5,1)$.
    The operator $*$ represents a scaling operation applied to the mixture distributions $\mathcal{M}$. 
    In both cases, the auxiliary data consist only of the covariates $X^{\prime}$ without any corresponding response values.

    We consider calibration and auxiliary sizes $n\in\{100,150,200\}$ and $m/n=5$ for DGP5--DGP7. Two experimental settings are examined:
    \begin{enumerate}
        \item \textbf{With additional information:} Pre-estimated mean and variance are assumed to be available (e.g., when extra samples exist for estimating moments but cannot be used for calibration due to privacy constraints).

        \item \textbf{Without additional information:} Mean and variance must be estimated solely from the training data, as no external moment information is provided.
    \end{enumerate}

    \subsubsection{With additional information}
    
    To predict $E\left( Y\mid X \right)$ and $\Var\left( Y\mid X \right)$, we train models $\nu(X)$ and $\sigma^2(X)$, respectively, using additional samples of size $n_a/n\in\{1,2,5,10\}$ from the calibration distribution. For the auxiliary covariate $X^\prime$, we generate corresponding response $Y^\prime$ by sampling from $N\left( \nu(X^\prime), \sigma^2(X^\prime) \right)$. 

    Figures~\ref{fig:test for DGP5 5 alpha}--\ref{fig:test for DGP7 5 alpha} show the test-conditional miscoverage error of ELCP, LCP, and RLCP for $d \in {5,10,15,20}$. ELCP consistently outperforms both LCP and RLCP, with only minor sensitivity to the additional sample size $n_a$ used for mean and variance estimation. The distribution $\mathcal{M}$ is constructed so that even perfect moment estimates cannot match the calibration distribution.
    
    \begin{figure}[htp]
        \centering
        \includegraphics[width=.85\linewidth]{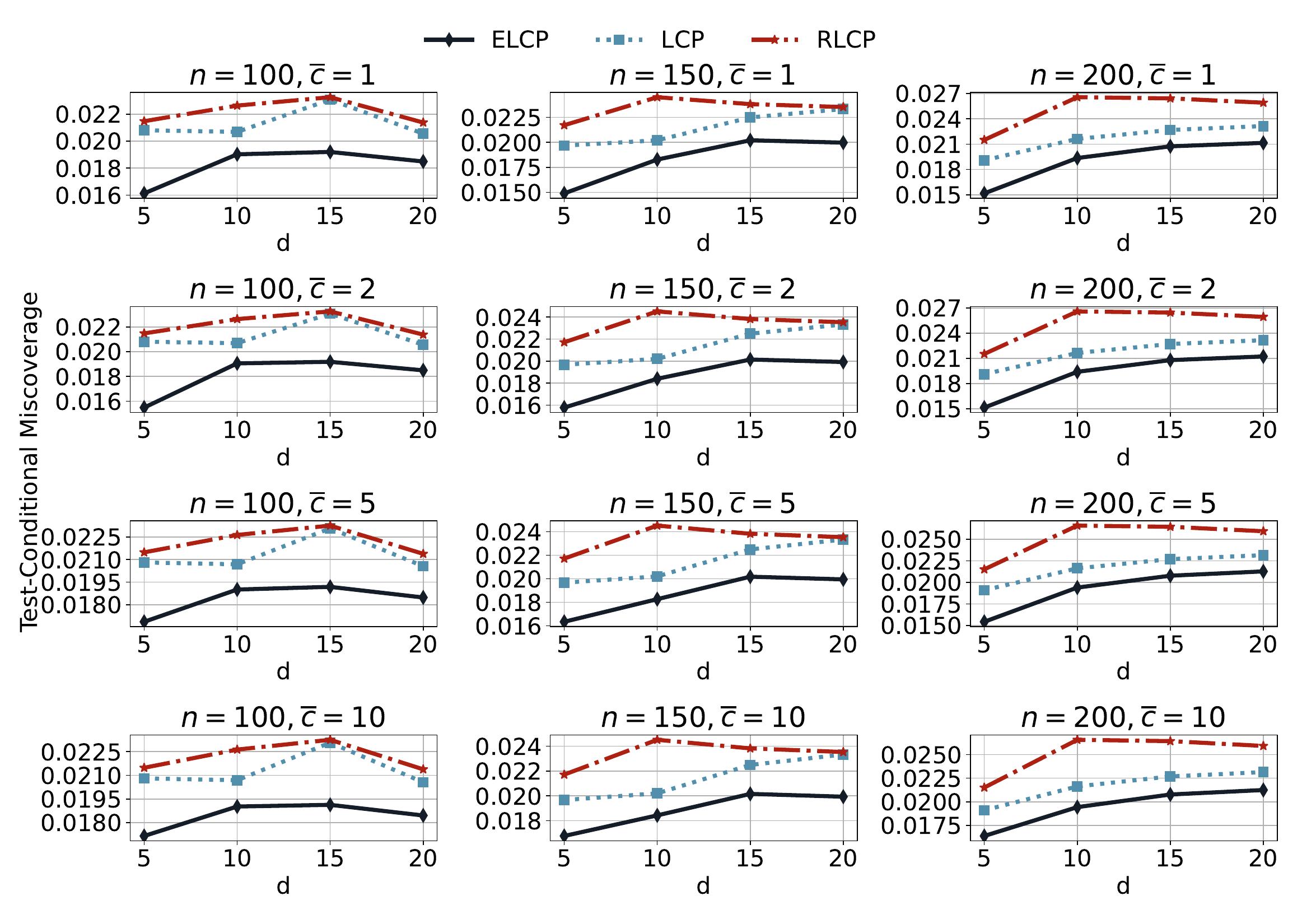}
        \caption{Test-conditional miscoverage error for DGP5 with additional training set.}\label{fig:test for DGP5 5 alpha}
    \end{figure}
    \begin{figure}[htp]
        \centering
        \includegraphics[width=.85\linewidth]{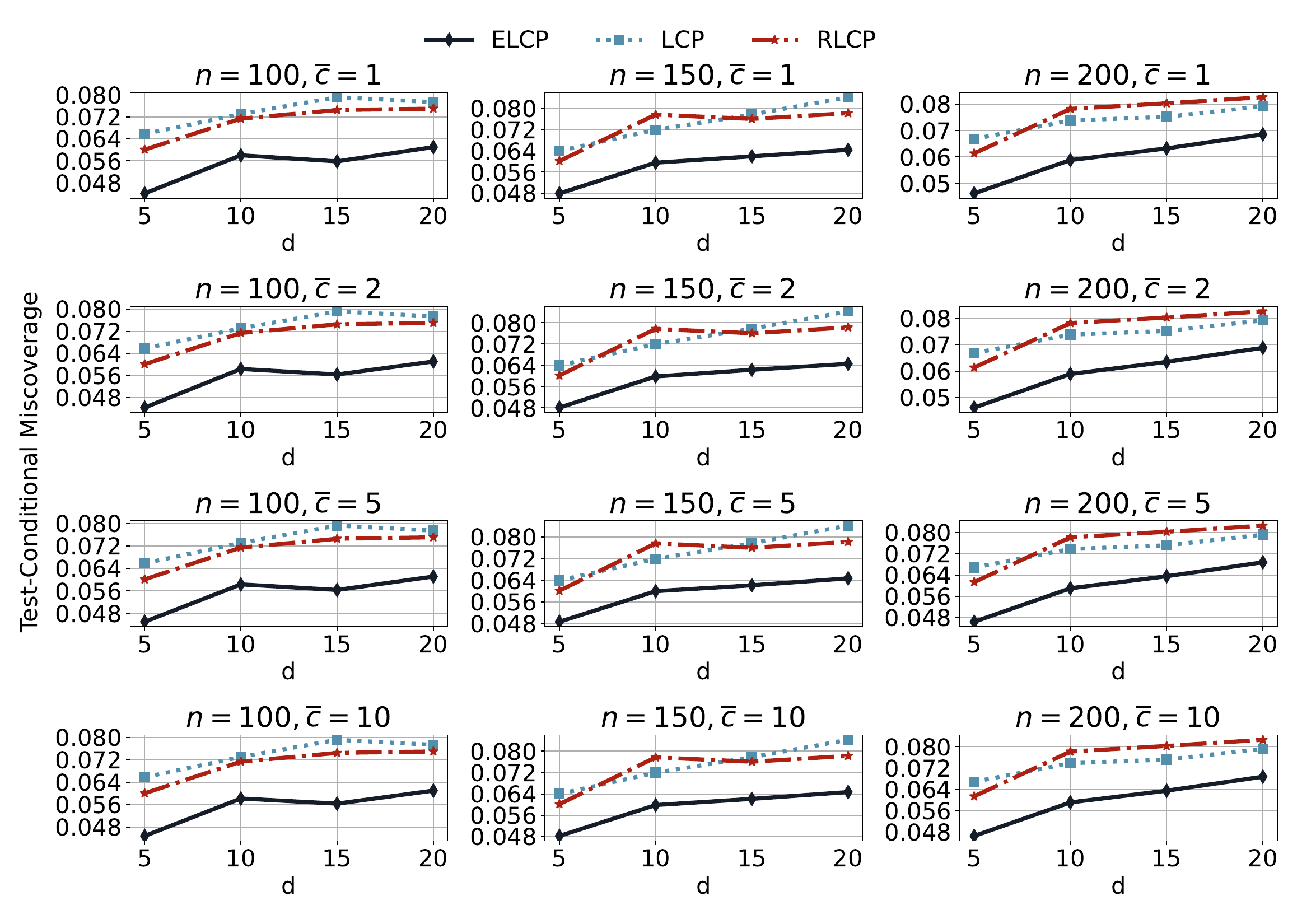}
        \caption{Test-conditional miscoverage error for DGP6 with additional training set.}\label{fig:test for DGP6 5 alpha}
    \end{figure}
    \begin{figure}[htp]
        \centering
        \includegraphics[width=.85\linewidth]{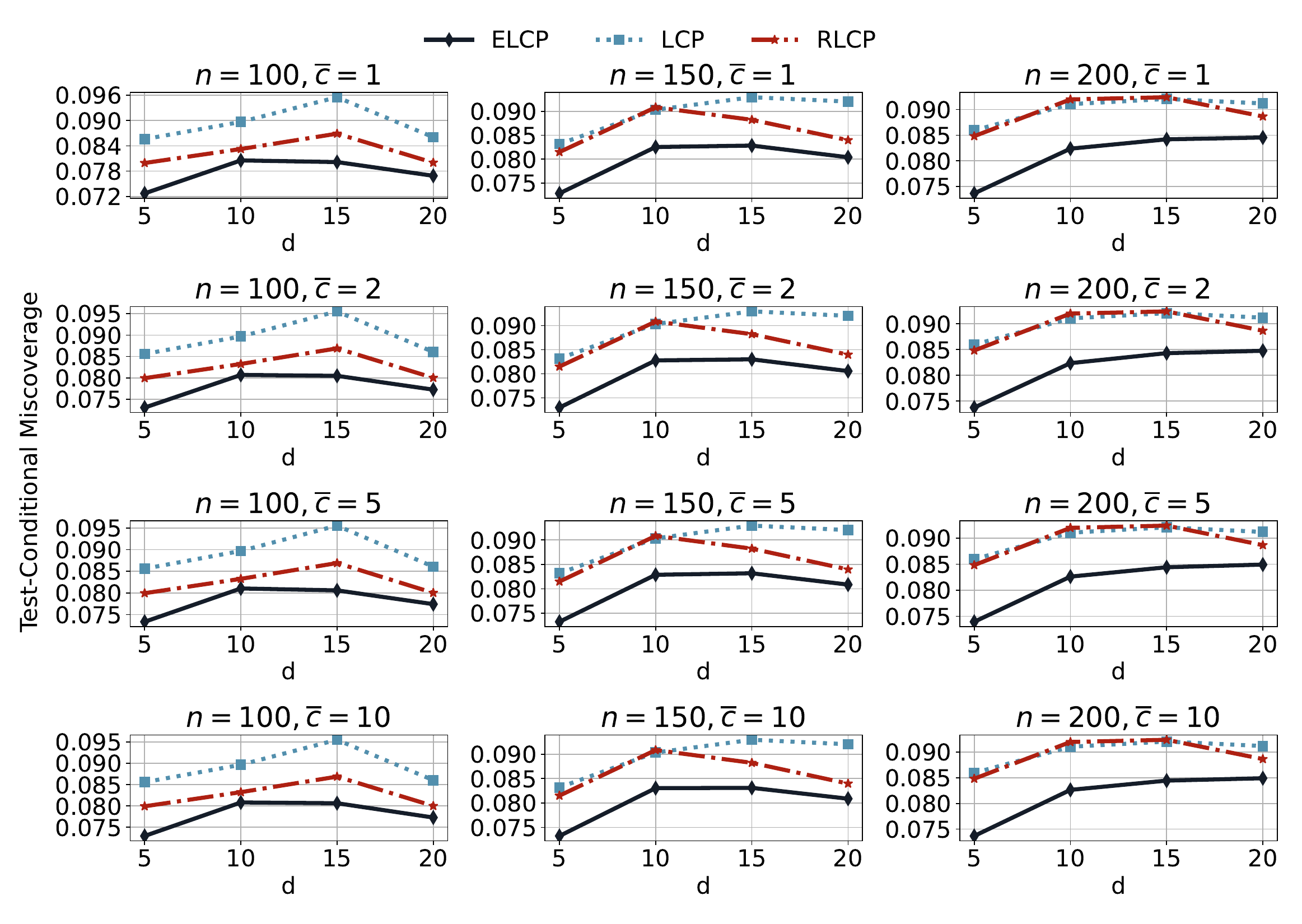}
        \caption{Test-conditional miscoverage error for DGP7 with additional training set.}\label{fig:test for DGP7 5 alpha}
    \end{figure}

    Figures~\ref{fig:marginal for DGP5 5 alpha}--\ref{fig:marginal for DGP7 5 alpha} show the gap between each method’s marginal coverage and the target level $1-\alpha$. For example, with ELCP using $\omega=0.01$, we select bandwidths $h_i \in \{0.4,0.6,0.8,1.0,1.2,1.4,1.6,1.8,2.0,2.5,3.0,3.5,4.0\}$, compute the marginal coverage ${\rm cvrg}_i$ for each $h$, and report the average gap $\sum_{i=1}^{13} |{\rm cvrg}_i - (1-\alpha)|/13$. Results show that LCP-C and RLCP-C have noticeable marginal coverage gaps, though they remain modest because there is no covariate shift and auxiliary labels are generated from the calibration distribution’s estimated moments.
    
    %Figure~\ref{fig:marginal for DGP5 5 alpha}-\ref{fig:marginal for DGP7 5 alpha} presents the gap between each method's marginal coverage and the required level $1-\alpha$. Using ELCP with $\omega=0.01$ as an example, the calculation proceeds as follows: for a given configuration, select bandwidths $h_i \in \{0.4,0.6,0.8,1.0,1.2,1.4,1.6,1.8,2.0,2.5,3.0,3.5,4.0\}$, obtain the corresponding marginal coverage ${\rm cov}_i$, then compute the final gap as $\sum_{i=1}^{13} |{\rm cov}_i - (1-\alpha)|/13$. The results show that both LCP-C and RLCP-C exhibit gaps in their achieved marginal coverage compared to the required level. Since there is no covariate shift and the auxiliary data labels are generated based on estimated means and variances from the calibration distribution, the distributional discrepancy between auxiliary and calibration data remains relatively small. Consequently, while marginal coverage guarantees of LCP-C and RLCP-C cannot be ensured, the observed coverage gaps are not particularly large.
    \begin{figure}[htp]
        \centering
        \includegraphics[width=.95\linewidth]{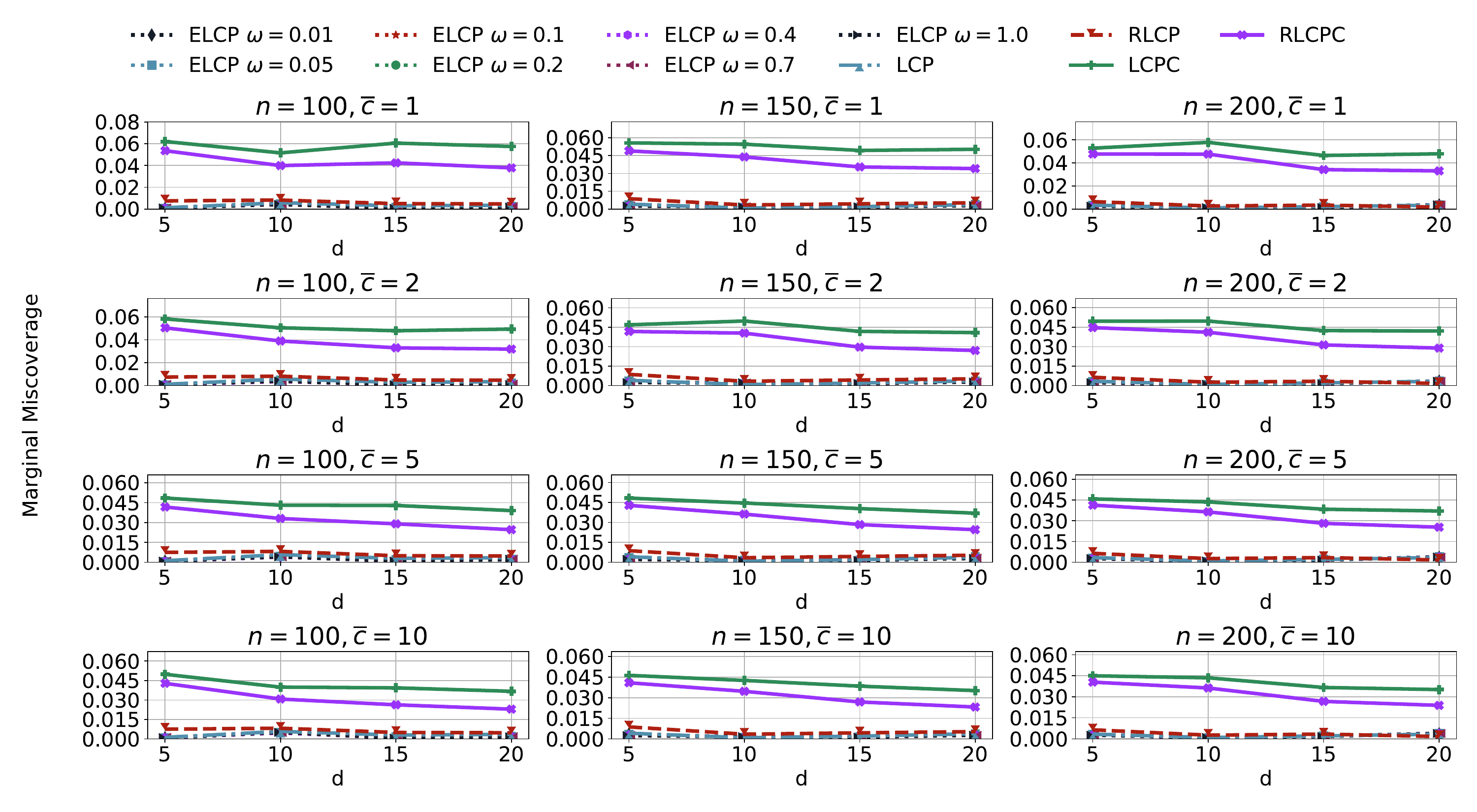}
        \caption{Marginal miscoverage for DGP5 with additional training set.}\label{fig:marginal for DGP5 5 alpha}
    \end{figure}
    \begin{figure}[htp]
        \centering
        \includegraphics[width=.95\linewidth]{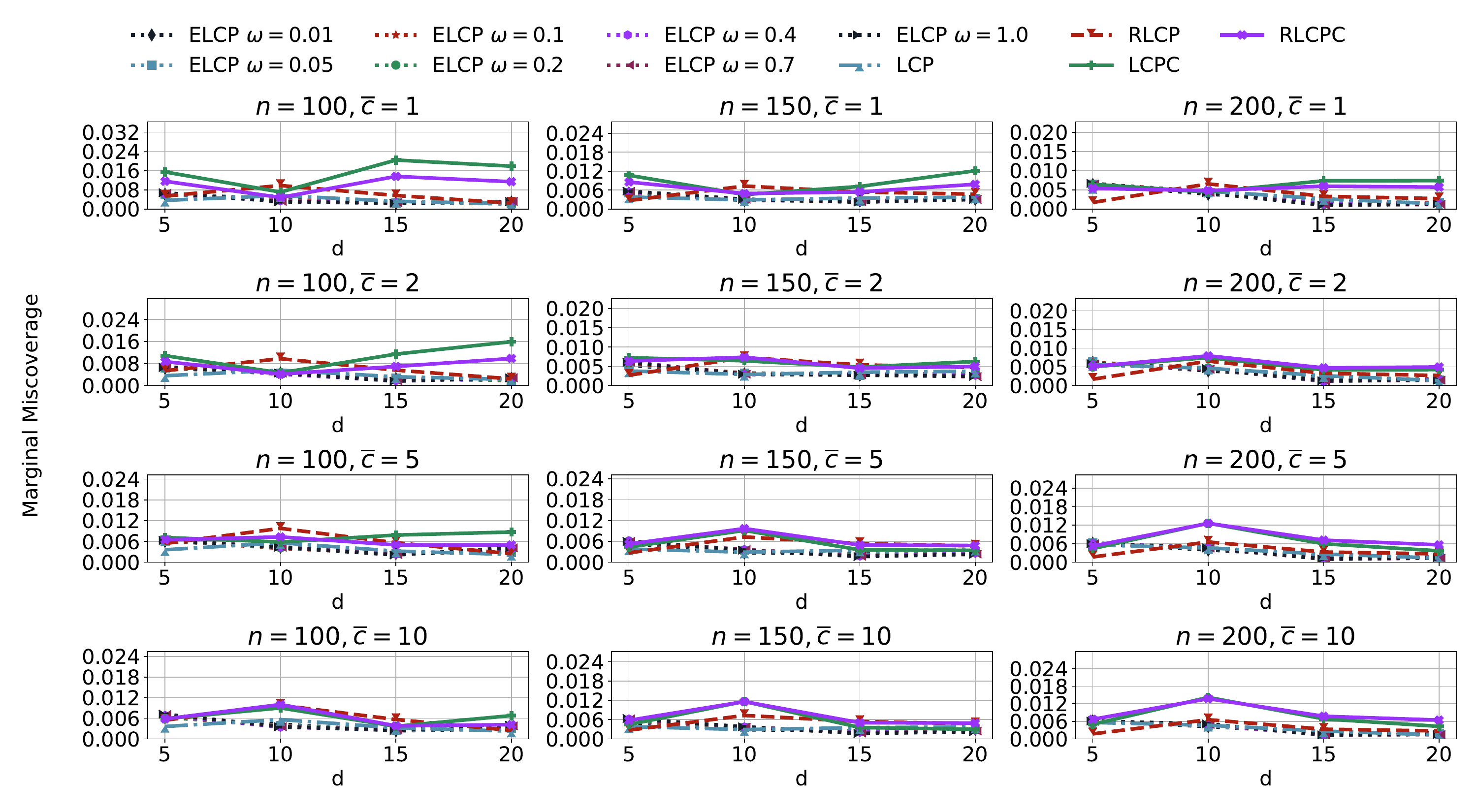}
        \caption{Marginal miscoverage for DGP6 with additional training set.}\label{fig:marginal for DGP6 5 alpha}
    \end{figure}
    \begin{figure}[htp]
        \centering
        \includegraphics[width=.95\linewidth]{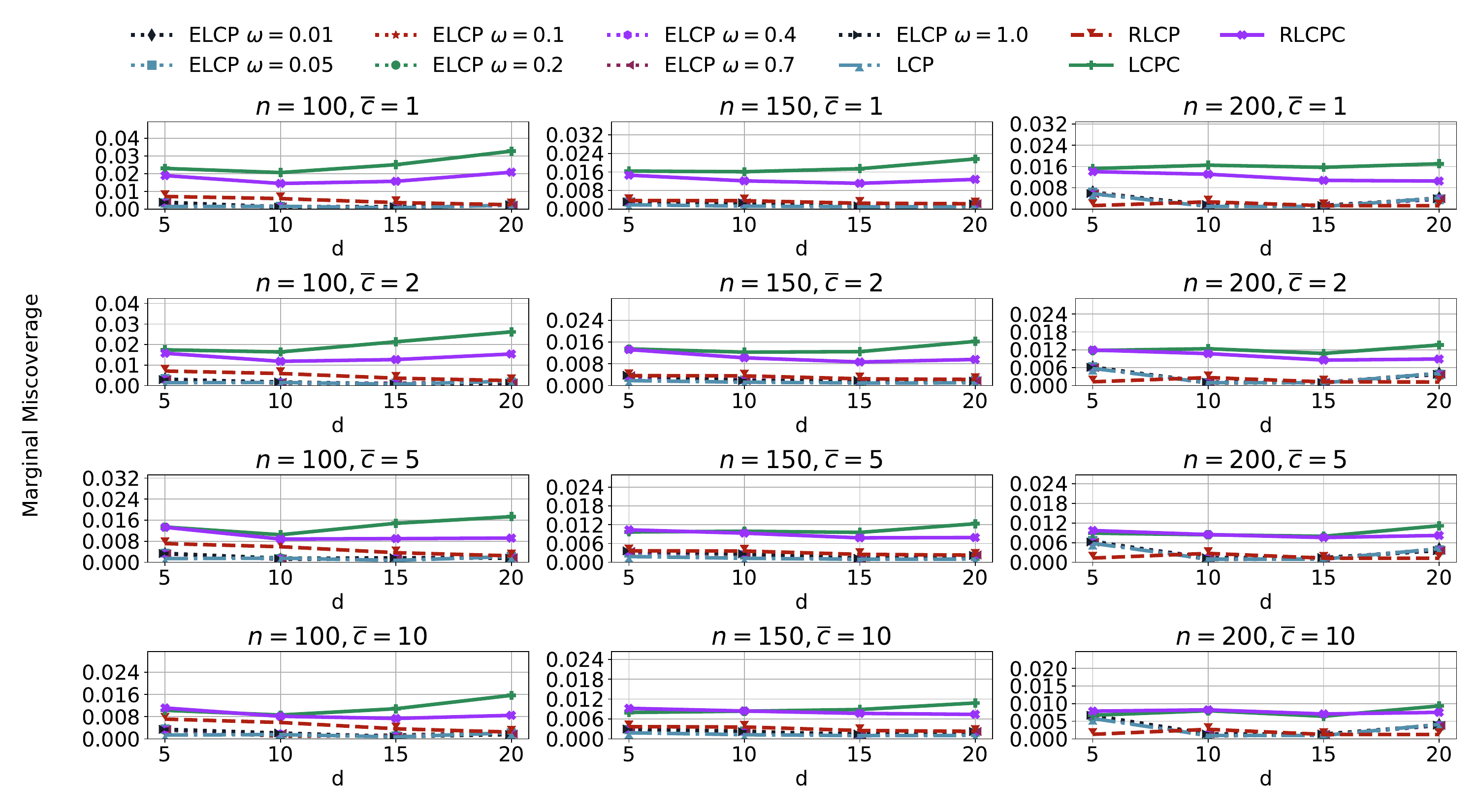}
        \caption{Marginal miscoverage for DGP7 with additional training set.}\label{fig:marginal for DGP7 5 alpha}
    \end{figure}

\subsubsection{Without additional information}

    In more general settings without external mean and variance estimates, we estimate these moments directly from the training dataset $\mathcal{D}_{\rm tr}$. Figure~\ref{fig:test for DGP57 5 alpha without} reports the test-conditional miscoverage error for DGP5--DGP7 under this setup, showing that ELCP consistently outperforms both LCP and RLCP. Figure~\ref{fig:marginal for DGP57 5 alpha without} compares the achieved marginal coverage across all methods with the target level $1-\alpha$. The results indicate that while ELCP, LCP, and RLCP achieve coverage close to the nominal level, LCP-C and RLCP-C exhibit significant deviations from the desired coverage.
    \begin{figure}[h]
        \centering
        \includegraphics[width=.8\linewidth]{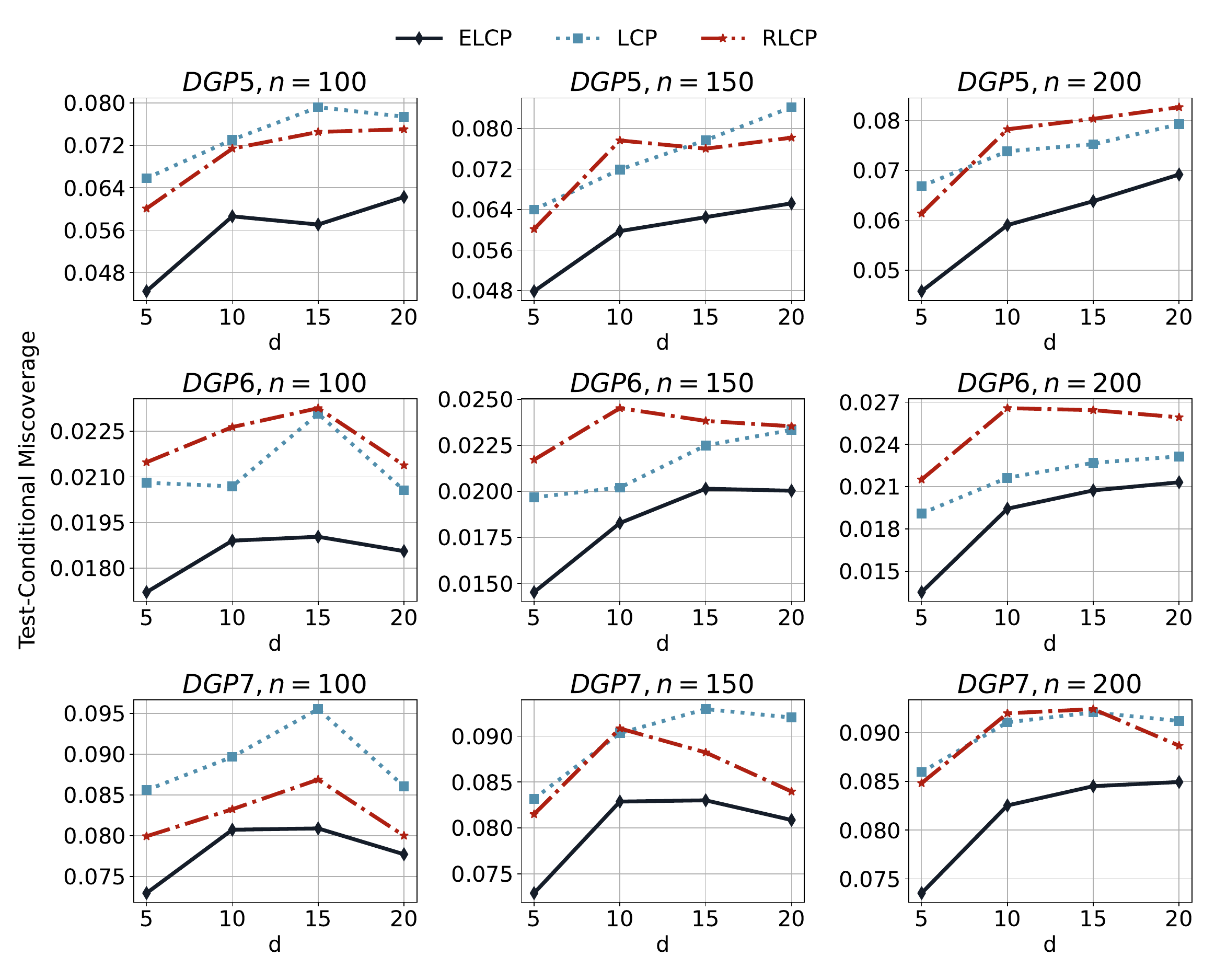}
        \caption{Test-conditional miscoverage error for DGP5--DGP7 without additional training set.}\label{fig:test for DGP57 5 alpha without}
    \end{figure}
    \begin{figure}[H]
        \centering
        \includegraphics[width=.8\linewidth]{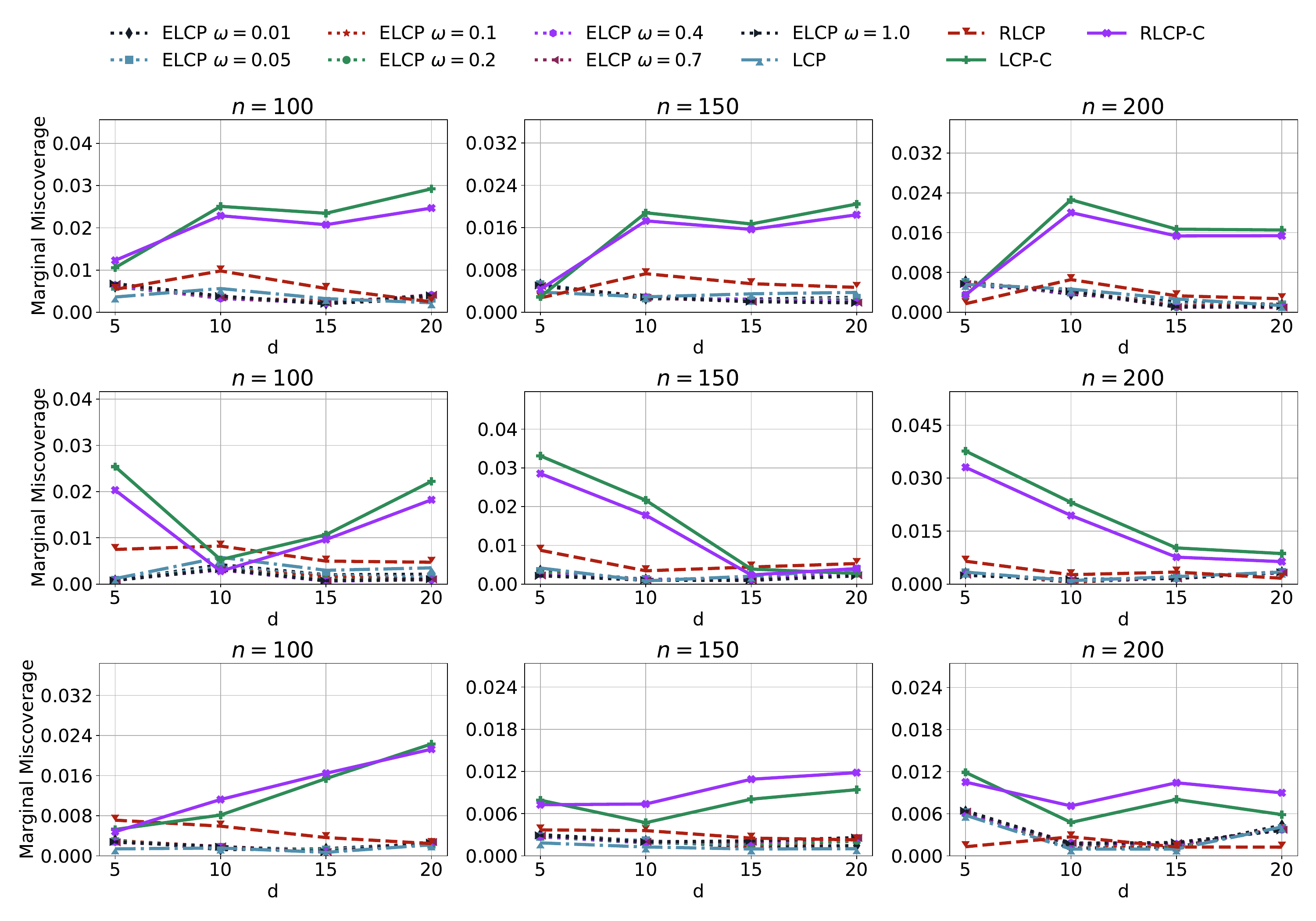}
        \caption{Marginal miscoverage for DGP5--DGP7 without additional training set.}\label{fig:marginal for DGP57 5 alpha without}
    \end{figure}

    \iffalse
    Figure \ref{fig:TestCTADGP45Full} reports the smallest test-conditional miscoverage error across different bandwidths for each method, revealing the advantage of ELCP compared to LCP and RLCP across different scenarios.
    The complete results for all bandwidth values are provided in Table~\ref{table:semi_Test_45}.
    Similar phenomena are observed for calibration-test-conditional miscoverage errors.
    These results highlight that, even in the semi-supervised setting where the auxiliary data contain only covariates, our proposed ELCP method improves local coverage, which is achieved by leveraging predictive models $\nu(X)$ and $\sigma(X)$ to generate synthetic labels $Y^{\prime}$.
    \fi

\FloatBarrier
\subsection{Real data analysis: predicting Moscow housing price -- detailed implementation}\label{sec:supp_housing}

We apply K-means clustering ($k=12$) to subway station coordinates to partition the city into regions. Each property is then assigned to the cluster of its nearest station, ensuring that properties are grouped according to geographic proximity to the transit network.
Figure~\ref{fig:moscow metro} displays the geographical distribution of subway stations and their corresponding cluster assignments based on longitude and latitude coordinates.
\begin{figure}[htbp]
    \centering
    \includegraphics[scale=0.45]{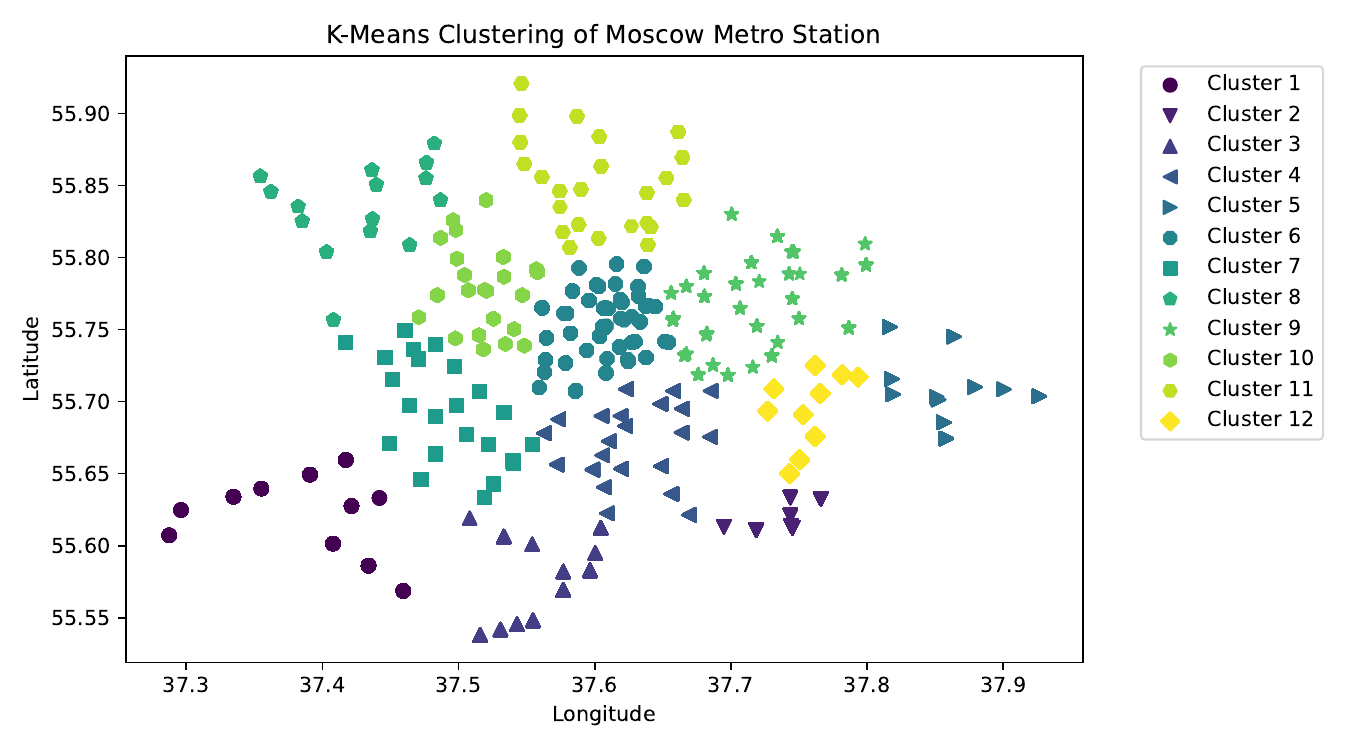}
    \caption{Moscow metro clustering.}\label{fig:moscow metro}
\end{figure}
Cluster 1, representing outlying urban areas with 581 samples, serves as our target distribution. We select adjacent suburban areas from clusters 2 and 3, totaling 1,813 samples, to form the auxiliary dataset.

Next, we supplement the results under various score functions and pre-training schemes. As discussed in Section~\ref{sec:supp_simu_score}, we consider scores based on residuals (RES), conditional distribution functions (CDF), and conditional densities (CDE). Moreover, because auxiliary information can already be transferred during the pre-training stage in constructing the scores, we follow the notation of Section~\ref{sec:supp_simu_score} and denote methods that transfer only the auxiliary training data with the suffix `-AT', and those that use all auxiliary data with the suffix `-AA'. The test-conditional miscoverage errors and mean prediction set size of the different methods are presented in Table~\ref{table:houseT} and Table~\ref{table:houseS}, respectively. The results of ELCP are highlighted in bold. A single asterisk (*) marks the best result among the three pre-training schemes for a given score function, and a double asterisk (**) indicates the best overall result across all settings. 

\begin{table*}[htbp]
    \caption{Weak test-conditional miscoverage errors in predicting Moscow housing price.}\label{table:houseT}
    \begin{center}
    {\fontsize{9}{6.4}\selectfont{\setlength{\tabcolsep}{3.pt}\begin{tabular}{cccccccccccc}
    \toprule
    Interval Index & 1 & 2 & 3 & 4 & 5 & 6 & 7 & 8 & 9 & 10 & Agg\vspace{4pt}\\ \midrule
    Prop. & $0.155$ & $0.120$ & $0.117$ & $0.115$ & $0.107$ & $0.106$ & $0.106$ & $0.091$ & $0.065$ & $0.018$ & $1.000$\\
    ELCP (RES) & $\textbf{0.030}$ & $\textbf{0.001}$ & $\textbf{0.029}$ & $\textbf{0.053}$ & $\textbf{0.027}$ & $\textbf{0.018}$ & $\textbf{0.066}$ & $\textbf{0.070}$ & $\textbf{0.023}$ & $\textbf{0.122}$ & $\textbf{0.0360}$\\
    LCP (RES) & $0.042$ & $0.007$ & $0.034$ & $0.065$ & $0.039$ & $0.021$ & $0.057$ & $0.048$ & $0.068$ & $0.122$ & $0.0422$\\
    RLCP (RES) & $0.028$ & $0.006$ & $0.036$ & $0.070$ & $0.043$ & $0.035$ & $0.071$ & $0.089$ & $0.085$ & $0.012$ & $0.0471$\\
    ELCP (RES-AT) & $\textbf{0.013}$ & $\textbf{0.008}$ & $\textbf{0.031}$ & $\textbf{0.037}$ & $\textbf{0.029}$ & $\textbf{0.008}$ & $\textbf{0.070}$ & $\textbf{0.047}$ & $\textbf{0.022}$ & $\textbf{0.096}$ & $\textbf{0.0296}$\\
    LCP (RES-AT) & $0.024$ & $0.005$ & $0.025$ & $0.055$ & $0.038$ & $0.004$ & $0.058$ & $0.054$ & $0.054$ & $0.088$ & $0.0341$\\
    RLCP (RES-AT) & $0.016$ & $0.004$ & $0.036$ & $0.063$ & $0.037$ & $0.031$ & $0.074$ & $0.093$ & $0.087$ & $0.040$ & $0.0443$\\
    ELCP (RES-AA) & $\textbf{0.009}$ & $\textbf{0.005}$ & $\textbf{0.034}$ & $\textbf{0.021}$ & $\textbf{0.011}$ & $\textbf{0.003}$ & $\textbf{0.064}$ & $\textbf{0.025}$ & $\textbf{0.020}$ & $\textbf{0.107}$ & $\textbf{0.0220}^*$\\
    LCP (RES-AA) & $0.023$ & $0.001$ & $0.030$ & $0.043$ & $0.025$ & $0.003$ & $0.049$ & $0.048$ & $0.047$ & $0.107$ & $0.0296$\\
    RLCP (RES-AA) & $0.002$ & $0.001$ & $0.048$ & $0.064$ & $0.025$ & $0.027$ & $0.067$ & $0.110$ & $0.087$ & $0.005$ & $0.0419$\\
    \midrule
    ELCP (CDF) & $\textbf{0.017}$ & $\textbf{0.006}$ & $\textbf{0.011}$ & $\textbf{0.013}$ & $\textbf{0.023}$ & $\textbf{0.002}$ & $\textbf{0.015}$ & $\textbf{0.017}$ & $\textbf{0.012}$ & $\textbf{0.065}$ & $\textbf{0.0138}^{**}$\\
    LCP (CDF) & $0.012$ & $0.007$ & $0.019$ & $0.012$ & $0.016$ & $0.023$ & $0.023$ & $0.029$ & $0.008$ & $0.031$ & $0.0166$\\
    RLCP (CDF) & $0.010$ & $0.007$ & $0.015$ & $0.021$ & $0.011$ & $0.049$ & $0.002$ & $0.035$ & $0.012$ & $0.083$ & $0.0188$\\
    DCP & $0.007$ & $0.004$ & $0.022$ & $0.028$ & $0.009$ & $0.048$ & $0.027$ & $0.045$ & $0.034$ & $0.100$ & $0.0243$\\
    ELCP (CDF-AT) & $\textbf{0.000}$ & $\textbf{0.008}$ & $\textbf{0.021}$ & $\textbf{0.009}$ & $\textbf{0.026}$ & $\textbf{0.014}$ & $\textbf{0.001}$ & $\textbf{0.048}$ & $\textbf{0.076}$ & $\textbf{0.164}$ & $\textbf{0.0210}$\\
    LCP (CDF-AT) & $0.030$ & $0.004$ & $0.019$ & $0.013$ & $0.024$ & $0.026$ & $0.005$ & $0.040$ & $0.009$ & $0.127$ & $0.0211$\\
    RLCP (CDF-AT) & $0.029$ & $0.015$ & $0.021$ & $0.018$ & $0.023$ & $0.048$ & $0.005$ & $0.046$ & $0.062$ & $0.066$ & $0.0282$\\
    DCP-AT & $0.052$ & $0.007$ & $0.024$ & $0.023$ & $0.019$ & $0.051$ & $0.002$ & $0.053$ & $0.150$ & $0.100$ & $0.0382$\\
    ELCP (CDF-AA) & $\textbf{0.005}$ & $\textbf{0.005}$ & $\textbf{0.020}$ & $\textbf{0.000}$ & $\textbf{0.024}$ & $\textbf{0.012}$ & $\textbf{0.014}$ & $\textbf{0.050}$ & $\textbf{0.102}$ & $\textbf{0.096}$ & $\textbf{0.0222}$\\
    LCP (CDF-AA) & $0.050$ & $0.006$ & $0.017$ & $0.004$ & $0.014$ & $0.027$ & $0.013$ & $0.045$ & $0.040$ & $0.036$ & $0.0241$\\
    RLCP (CDF-AA) & $0.039$ & $0.011$ & $0.019$ & $0.007$ & $0.030$ & $0.052$ & $0.014$ & $0.041$ & $0.053$ & $0.080$ & $0.0294$\\
    DCP-AA & $0.078$ & $0.003$ & $0.020$ & $0.009$ & $0.008$ & $0.051$ & $0.007$ & $0.052$ & $0.088$ & $0.100$ & $0.0351$\\
    \midrule
    ELCP (CDE) & $\textbf{0.003}$ & $\textbf{0.039}$ & $\textbf{0.016}$ & $\textbf{0.029}$ & $\textbf{0.037}$ & $\textbf{0.001}$ & $\textbf{0.002}$ & $\textbf{0.043}$ & $\textbf{0.025}$ & $\textbf{0.099}$ & $\textbf{0.0218}^*$\\
    LCP (CDE) & $0.015$ & $0.034$ & $0.026$ & $0.044$ & $0.034$ & $0.003$ & $0.004$ & $0.045$ & $0.019$ & $0.059$ & $0.0253$\\
    RLCP (CDE) & $0.021$ & $0.032$ & $0.027$ & $0.040$ & $0.034$ & $0.007$ & $0.009$ & $0.061$ & $0.025$ & $0.002$ & $0.0273$\\
    CDE & $0.031$ & $0.047$ & $0.037$ & $0.072$ & $0.055$ & $0.002$ & $0.008$ & $0.092$ & $0.014$ & $0.176$ & $0.0425$\\
    ELCP (CDE-AT) & $\textbf{0.008}$ & $\textbf{0.022}$ & $\textbf{0.004}$ & $\textbf{0.032}$ & $\textbf{0.044}$ & $\textbf{0.013}$ & $\textbf{0.052}$ & $\textbf{0.067}$ & $\textbf{0.038}$ & $\textbf{0.102}$ & $\textbf{0.0300}$\\
    LCP (CDE-AT) & $0.008$ & $0.027$ & $0.006$ & $0.060$ & $0.050$ & $0.011$ & $0.054$ & $0.103$ & $0.089$ & $0.022$ & $0.0399$\\
    RLCP (CDE-AT) & $0.002$ & $0.025$ & $0.002$ & $0.059$ & $0.059$ & $0.021$ & $0.058$ & $0.110$ & $0.090$ & $0.012$ & $0.0410$\\
    CDE-AT & $0.002$ & $0.041$ & $0.010$ & $0.092$ & $0.072$ & $0.020$ & $0.079$ & $0.165$ & $0.097$ & $0.048$ & $0.0574$\\
    ELCP (CDE-AA) & $\textbf{0.003}$ & $\textbf{0.031}$ & $\textbf{0.012}$ & $\textbf{0.044}$ & $\textbf{0.049}$ & $\textbf{0.007}$ & $\textbf{0.049}$ & $\textbf{0.058}$ & $\textbf{0.054}$ & $\textbf{0.105}$ & $\textbf{0.0324}$\\
    LCP (CDE-AA) & $0.009$ & $0.028$ & $0.009$ & $0.068$ & $0.054$ & $0.021$ & $0.051$ & $0.085$ & $0.087$ & $0.025$ & $0.0408$\\
    RLCP (CDE-AA) & $0.001$ & $0.022$ & $0.000$ & $0.060$ & $0.057$ & $0.022$ & $0.054$ & $0.096$ & $0.088$ & $0.029$ & $0.0387$\\
    CDE-AA & $0.004$ & $0.048$ & $0.001$ & $0.101$ & $0.074$ & $0.026$ & $0.082$ & $0.150$ & $0.096$ & $0.130$ & $0.0597$\\
    \bottomrule
    \end{tabular}}}
    \end{center}
\end{table*}
\begin{table*}[htbp]
    \caption{Average prediction set sizes in predicting Moscow housing price.}\label{table:houseS}
    \begin{center}
    {\fontsize{9}{6.4}\selectfont{\setlength{\tabcolsep}{3pt}\begin{tabular}{cccccccccccc}
    \toprule
    Interval Index & 1 & 2 & 3 & 4 & 5 & 6 & 7 & 8 & 9 & 10 & Agg\vspace{4pt}\\ \midrule
    Prop. & $0.155$ & $0.120$ & $0.117$ & $0.115$ & $0.107$ & $0.106$ & $0.106$ & $0.091$ & $0.065$ & $0.018$ & $1.000$\\
    ELCP (RES) & $\textbf{7.93}$ & $\textbf{7.21}$ & $\textbf{7.20}$ & $\textbf{10.75}$ & $\textbf{5.92}$ & $\textbf{7.33}$ & $\textbf{6.20}$ & $\textbf{11.50}$ & $\textbf{4.19}$ & $\textbf{9.15}$ & $\textbf{7.72}^*$\\
    LCP (RES) & $7.68$ & $7.31$ & $7.07$ & $10.48$ & $6.16$ & $7.27$ & $5.77$ & $12.12$ & $5.20$ & $8.71$ & $7.74$\\
    RLCP (RES) & $9.45$ & $8.52$ & $9.52$ & $13.84$ & $7.47$ & $12.49$ & $9.63$ & $13.80$ & $13.14$ & $28.94$ & $10.97$\\
    ELCP (RES-AT) & $\textbf{8.60}$ & $\textbf{7.34}$ & $\textbf{7.36}$ & $\textbf{11.32}$ & $\textbf{6.27}$ & $\textbf{7.79}$ & $\textbf{6.53}$ & $\textbf{12.32}$ & $\textbf{4.57}$ & $\textbf{9.97}$ & $\textbf{8.16}$\\
    LCP (RES-AT) & $8.15$ & $7.25$ & $7.06$ & $10.70$ & $6.45$ & $7.51$ & $5.72$ & $12.11$ & $5.93$ & $9.37$ & $7.94$\\
    RLCP (RES-AT) & $11.05$ & $8.69$ & $10.28$ & $13.95$ & $7.94$ & $12.55$ & $10.43$ & $14.62$ & $13.49$ & $29.70$ & $11.59$\\
    ELCP (RES-AA) & $\textbf{8.87}$ & $\textbf{7.14}$ & $\textbf{7.13}$ & $\textbf{11.38}$ & $\textbf{6.31}$ & $\textbf{7.90}$ & $\textbf{6.69}$ & $\textbf{12.45}$ & $\textbf{4.99}$ & $\textbf{10.83}$ & $\textbf{8.25}$\\
    LCP (RES-AA) & $8.07$ & $7.05$ & $6.82$ & $10.64$ & $6.52$ & $7.49$ & $5.84$ & $11.74$ & $6.35$ & $10.18$ & $7.89$\\
    RLCP (RES-AA) & $11.71$ & $8.77$ & $10.55$ & $14.18$ & $7.97$ & $13.24$ & $10.86$ & $14.76$ & $13.96$ & $31.87$ & $11.96$\\
    \midrule
    ELCP (CDF) & $\textbf{10.88}$ & $\textbf{7.25}$ & $\textbf{6.65}$ & $\textbf{13.68}$ & $\textbf{6.84}$ & $\textbf{6.79}$ & $\textbf{8.84}$ & $\textbf{17.74}$ & $\textbf{12.26}$ & $\textbf{8.70}$ & $\textbf{9.86}$\\
    LCP (CDF) & $8.63$ & $6.59$ & $6.58$ & $13.11$ & $6.24$ & $7.31$ & $9.41$ & $16.10$ & $9.26$ & $10.71$ & $9.10$\\
    RLCP (CDF) & $10.21$ & $7.80$ & $8.13$ & $15.28$ & $7.33$ & $11.58$ & $10.90$ & $17.09$ & $12.58$ & $30.74$ & $11.32$\\
    DCP & $8.38$ & $6.87$ & $6.70$ & $13.82$ & $5.83$ & $8.22$ & $5.60$ & $14.02$ & $7.11$ & $21.00$ & $8.70^*$\\
    ELCP (CDF-AT) & $\textbf{12.72}$ & $\textbf{9.21}$ & $\textbf{8.84}$ & $\textbf{14.71}$ & $\textbf{6.88}$ & $\textbf{10.30}$ & $\textbf{7.71}$ & $\textbf{16.30}$ & $\textbf{12.59}$ & $\textbf{10.68}$ & $\textbf{10.94}$\\
    LCP (CDF-AT) & $11.24$ & $8.86$ & $8.85$ & $14.77$ & $6.68$ & $10.47$ & $7.17$ & $15.71$ & $18.86$ & $11.65$ & $10.99$\\
    RLCP (CDF-AT) & $12.98$ & $9.36$ & $11.18$ & $17.00$ & $7.79$ & $15.58$ & $10.14$ & $17.20$ & $16.46$ & $33.09$ & $13.19$\\
    DCP-AT & $10.52$ & $7.81$ & $8.15$ & $15.01$ & $5.95$ & $9.58$ & $5.04$ & $14.83$ & $3.01$ & $24.24$ & $9.42$\\
    ELCP (CDF-AA) & $\textbf{13.90}$ & $\textbf{9.51}$ & $\textbf{9.70}$ & $\textbf{14.91}$ & $\textbf{6.14}$ & $\textbf{9.63}$ & $\textbf{6.37}$ & $\textbf{16.31}$ & $\textbf{11.64}$ & $\textbf{12.81}$ & $\textbf{10.97}$\\
    LCP (CDF-AA) & $11.99$ & $8.91$ & $9.36$ & $14.89$ & $6.23$ & $9.66$ & $5.82$ & $15.35$ & $13.78$ & $15.63$ & $10.62$\\
    RLCP (CDF-AA) & $13.68$ & $9.55$ & $11.85$ & $17.05$ & $7.34$ & $14.39$ & $9.23$ & $17.53$ & $15.93$ & $33.11$ & $13.13$\\
    DCP-AA & $10.85$ & $8.22$ & $8.36$ & $14.99$ & $6.45$ & $10.33$ & $5.12$ & $14.71$ & $3.13$ & $22.99$ & $9.65$\\
    \midrule
    ELCP (CDE) & $\textbf{13.59}$ & $\textbf{13.67}$ & $\textbf{12.97}$ & $\textbf{17.77}$ & $\textbf{12.52}$ & $\textbf{14.83}$ & $\textbf{11.80}$ & $\textbf{18.10}$ & $\textbf{10.51}$ & $\textbf{12.35}$ & $\textbf{14.02}$\\
    LCP (CDE) & $13.89$ & $13.71$ & $13.19$ & $16.95$ & $12.59$ & $14.46$ & $12.05$ & $17.65$ & $10.62$ & $12.79$ & $13.98$\\
    RLCP (CDE) & $14.50$ & $14.30$ & $14.33$ & $18.40$ & $13.23$ & $16.29$ & $13.04$ & $18.51$ & $13.18$ & $20.81$ & $15.20$\\
    CDE & $14.74$ & $14.84$ & $14.35$ & $16.32$ & $13.66$ & $15.06$ & $12.60$ & $16.61$ & $10.85$ & $12.30$ & $14.45$\\
    ELCP (CDE-AT) & $\textbf{8.24}$ & $\textbf{6.90}$ & $\textbf{7.00}$ & $\textbf{9.71}$ & $\textbf{6.07}$ & $\textbf{6.69}$ & $\textbf{6.08}$ & $\textbf{11.37}$ & $\textbf{5.08}$ & $\textbf{7.09}$ & $\textbf{7.54}$\\
    LCP (CDE-AT) & $6.86$ & $6.75$ & $6.85$ & $8.40$ & $6.13$ & $6.87$ & $5.63$ & $9.12$ & $6.07$ & $7.64$ & $6.98^{**}$\\
    RLCP (CDE-AT) & $8.26$ & $7.45$ & $8.36$ & $10.30$ & $6.84$ & $9.47$ & $7.24$ & $10.73$ & $8.40$ & $16.47$ & $8.66$\\
    CDE-AT & $7.13$ & $7.32$ & $7.22$ & $7.68$ & $6.96$ & $7.17$ & $6.32$ & $7.81$ & $6.33$ & $6.28$ & $7.12$\\
    ELCP (CDE-AA) & $\textbf{8.27}$ & $\textbf{7.13}$ & $\textbf{7.10}$ & $\textbf{10.43}$ & $\textbf{6.14}$ & $\textbf{7.01}$ & $\textbf{6.14}$ & $\textbf{12.27}$ & $\textbf{5.24}$ & $\textbf{10.19}$ & $\textbf{7.86}$\\
    LCP (CDE-AA) & $7.27$ & $6.96$ & $7.09$ & $8.83$ & $6.25$ & $6.99$ & $5.62$ & $9.54$ & $6.01$ & $11.12$ & $7.27$\\
    RLCP (CDE-AA) & $8.95$ & $7.66$ & $8.75$ & $10.85$ & $7.02$ & $9.68$ & $7.67$ & $11.50$ & $8.55$ & $20.95$ & $9.15$\\
    CDE-AA & $7.49$ & $7.61$ & $7.52$ & $8.12$ & $7.20$ & $7.36$ & $6.44$ & $8.22$ & $6.44$ & $6.35$ & $7.40$\\
    \bottomrule
    \end{tabular}}}
    \end{center}
\end{table*}
\FloatBarrier

As can be seen from Table~\ref{table:houseT}, the overall best performance is achieved by ELCP using the CDF score function trained solely on the target data, which is significantly better than all other methods. For the residual score function, transferring auxiliary information in score construction substantially improves test-conditional coverage, whereas for the CDF and CDE scores, no such improvement is observed. Nevertheless, under any fixed choice of score function and pre-training scheme, ELCP yields a significant gain compared to LCP and RLCP.

Finally, we report the marginal coverage of each method. Table~\ref{table:realMar} presents the marginal coverage of ELCP, LCP, RLCP, LCP-C, and RLCP-C under different score functions for two real-data experiments: `House' corresponds to the Moscow housing-price prediction results in this chapter, and `Medical' corresponds to the experimental results in Section~\ref{sec:real_data}. 
The table shows that the marginal coverage of LCP-C and RLCP-C deviates substantially from the nominal level 0.9 in most cases. Moreover, in the Moscow housing-price prediction experiment, RLCP-C exhibits a smaller deviation than LCP-C.

\begin{table*}[htp]
    \caption{Marginal coverage in predicting Moscow housing price and medical insurance cost.}\label{table:realMar}
    \begin{center}
    {\fontsize{10}{10}\selectfont{\setlength{\tabcolsep}{3pt}\begin{tabular}{cccccccccccc}
    \toprule
    & \multicolumn{5}{c}{House} && \multicolumn{5}{c}{Medical}\vspace{-2pt}\\
    \cmidrule(lr){2-6} \cmidrule(lr){8-12}
    & ELCP & LCP & RLCP & LCP-C & RLCP-C && ELCP & LCP & RLCP & LCP-C & RLCP-C\vspace{4pt}\\ \midrule
    RES & $0.898$ & $0.901$ & $0.905$ & $\textbf{0.849}$ & $\textbf{0.882}$ && $0.906$ & $0.904$ & $0.911$ & $\textbf{0.928}$ & $\textbf{0.932}$ \\
    RES-AT & $0.899$ & $0.900$ & $0.907$ & $\textbf{0.846}$ & $\textbf{0.876}$ && $0.908$ & $0.906$ & $0.913$ & $\textbf{0.925}$ & $\textbf{0.929}$ \\
    RES-AA & $0.902$ & $0.900$ & $0.906$ & $\textbf{0.846}$ & $\textbf{0.871}$ && $0.908$ & $0.907$ & $0.909$ & $\textbf{0.923}$ & $\textbf{0.928}$ \\
    CDF & $0.900$ & $0.900$ & $0.906$ & $\textbf{0.872}$ & $\textbf{0.887}$ && $0.905$ & $0.905$ & $0.907$ & $\textbf{0.901}$ & $\textbf{0.903}$ \\
    CDF-AT & $0.898$ & $0.899$ & $0.903$ & $\textbf{0.851}$ & $\textbf{0.884}$ && $0.905$ & $0.907$ & $0.908$ & $\textbf{0.918}$ & $\textbf{0.921}$ \\
    CDF-AA & $0.899$ & $0.899$ & $0.901$ & $\textbf{0.863}$ & $\textbf{0.886}$ && $0.904$ & $0.905$ & $0.908$ & $\textbf{0.917}$ & $\textbf{0.920}$ \\
    CDE & $0.903$ & $0.904$ & $0.907$ & $\textbf{0.818}$ & $\textbf{0.822}$ && $0.907$ & $0.907$ & $0.912$ & $\textbf{0.889}$ & $\textbf{0.893}$ \\
    CDE-AT & $0.904$ & $0.903$ & $0.907$ & $\textbf{0.915}$ & $\textbf{0.925}$ && $0.905$ & $0.906$ & $0.908$ & $\textbf{0.931}$ & $\textbf{0.934}$ \\
    CDE-AA & $0.904$ & $0.904$ & $0.908$ & $\textbf{0.903}$ & $\textbf{0.911}$ && $0.906$ & $0.906$ & $0.908$ & $\textbf{0.931}$ & $\textbf{0.933}$ \\
    \bottomrule
    \end{tabular}}}
    \end{center}
\end{table*}

\subsection{Real Data Analysis: Predicting Medical Insurance Cost}\label{sec:real_data}
In this section, we provide another real data analysis case. We analyze the medical insurance cost dataset (available at \texttt{Kaggle.com}), which contains information on the medical expenses of policyholders.
The primary objective is to predict the medical expenses (\textit{Charges}) incurred by individual policyholders based on a set of features.
Our experiments focus on a subset of the data consisting of 678 female individuals from the western region. 
The remaining 2094 observations, comprising individuals of other genders or from different regions, are utilized as auxiliary data.
For the prediction model, we use the following four features: \textit{Age}, \textit{Sex}, \textit{BMI} (body mass index), \textit{\# Children} (number of dependents), \textit{Smoker} (a Boolean variable indicating smoking status), \textit{Region} (categorical variable of 4 locations). All categorical features (sex, smoker, region) were converted to one-hot encoded representations. The predictor was trained using all 9 resulting variables, while only the continuous variables were utilized for calibration purposes.

During preprocessing, the response variable \textit{Charges} is log-transformed. The numerical variables are standardized to have zero mean and unit variance.

For each configuration, 100 repeated experiments are conducted. 
The calibration, training, and test datasets each consist of $226$ samples, respectively, while the auxiliary and auxiliary training sets include $1047$ samples each.
The point predictor is trained using a random forest regression algorithm. We evaluate the performance of LCP, RLCP, LCP-C, RLCP-C, and ELCP across various bandwidths, specifically $h$ is selected from $\{0.1, 0.15, 0.2, 0.4, 0.6, 0.8, 1.0, 1.4, 1.8, 2.2, 2.6, 3.0\}$ and $\omega$ from $\{0.0, 0.1, 0.4, 0.7, 1.0\}$. 
% In particular, LCP, RLCP, LCP-C, and RLCP-C employ three different conformity scores: (i) residual score; (ii) score constructed from its conditional distribution used by DCP; (iii) and score based on conditional density estimate used by CDE. Since the pretrained score function can also leverage auxiliary training data, we test three pretraining schemes: (i) use only the target training data; (ii) transfer knowledge from auxiliary training (AT) data; (iii) transfer knowledge from all auxiliary (AA) data. These pretraining schemes follow the details described in Section \ref{sec:supp_simu_score}.
%The evaluation metrics include marginal coverage, mean set size, and weak test-conditional coverage, consistent with the approach described in Section \ref{sec:real_data_housing}.

After 100 repeated experiments, the computed marginal coverage rates were 0.906 (ELCP), 0.904 (LCP) and 0.911 (RLCP), while the directly combined version LCP-C and RLCP-C achieved 0.928 and 0.931 respectively. The failure of LCP-C and RLCP-C to maintain nominal coverage demonstrates the distributional heterogeneity between calibration and auxiliary data.

Similar to Section~\ref{sec:real_data_housing}, we split the data space of $678$ female individuals into 10 non-overlapping subsets. The results for weak test-conditional miscoverage errors and average prediction set sizes are presented in Table~\ref{table:medTS}. Compared to LCP and RLCP, ELCP achieves the smallest weak test-conditional miscoverage error in more than half of the 10 subsets. Furthermore, ELCP provides the shortest prediction set size in the majority of these 10 subsets. The aggregated results further emphasize that, on average, ELCP outperforms both LCP and RLCP.
\begin{table*}[h]
    \caption{Weak test-conditional miscoverage errors and average set sizes (in parentheses) in predicting medical expenses.}\label{table:medTS}
    \begin{center}
    {\fontsize{10}{10}\selectfont{\setlength{\tabcolsep}{3pt}\begin{tabular}{cccccccccccc}
    \toprule
    Subset Index & 1 & 2 & 3 & 4 & 5 & 6 & 7 & 8 & 9 & 10 & Agg\vspace{4pt}\\ \midrule
    Prop. & $0.065$ & $0.092$ & $0.023$ & $0.112$ & $0.059$ & $0.099$ & $0.142$ & $0.172$ & $0.129$ & $0.107$ & $1.000$\\
    ELCP & $\textbf{0.047}$ & $\textbf{0.030}$ & $\textbf{0.028}$ & $\textbf{0.031}$ & $\textbf{0.089}$ & $\textbf{0.019}$ & $\textbf{0.060}$ & $\textbf{0.013}$ & $\textbf{0.022}$ & $\textbf{0.018}$ & $\textbf{0.0325}$\\
    & $(1.01)$ & $(1.13)$ & $(0.65)$ & $(0.70)$ & $(1.09)$ & $(0.76)$ & $(1.26)$ & $(0.74)$ & $(0.88)$ & $(1.36)$ & $(0.97)$\\
    LCP & $\textbf{0.050}$ & $\textbf{0.045}$ & $\textbf{0.005}$ & $\textbf{0.027}$ & $\textbf{0.092}$ & $\textbf{0.015}$ & $\textbf{0.078}$ & $\textbf{0.007}$ & $\textbf{0.026}$ & $\textbf{0.014}$ & $\textbf{0.0346}$\\
    & $(1.05)$ & $(1.27)$ & $(0.91)$ & $(0.73)$ & $(1.20)$ & $(0.94)$ & $(1.29)$ & $(0.78)$ & $(1.03)$ & $(1.40)$ & $(1.05)$\\
    RLCP & $\textbf{0.071}$ & $\textbf{0.022}$ & $\textbf{0.040}$ & $\textbf{0.046}$ & $\textbf{0.097}$ & $\textbf{0.029}$ & $\textbf{0.115}$ & $\textbf{0.025}$ & $\textbf{0.039}$ & $\textbf{0.034}$ & $\textbf{0.0506}$\\
    & $(0.97)$ & $(1.18)$ & $(1.06)$ & $(0.82)$ & $(1.07)$ & $(1.02)$ & $(1.20)$ & $(0.88)$ & $(1.11)$ & $(1.21)$ & $(1.04)$\\
    \bottomrule
    \end{tabular}}}
    \end{center}
\end{table*}

Next, we also consider different score functions and pre-training schemes. The corresponding test-conditional miscoverage and mean set size results are provided in Tables~\ref{table:medT}--\ref{table:medS}, and the marginal coverage is provided in Table~\ref{table:realMar}.
\begin{table*}[htbp]
    \caption{Weak test-conditional miscoverage errors in predicting medical expenses.}\label{table:medT}
    \begin{center}
    {\fontsize{9}{6.4}\selectfont{\setlength{\tabcolsep}{3.pt}\begin{tabular}{cccccccccccc}
    \toprule
    Interval Index & 1 & 2 & 3 & 4 & 5 & 6 & 7 & 8 & 9 & 10 & Agg\vspace{4pt}\\ \midrule
    Prop. & $0.172$ & $0.142$ & $0.129$ & $0.112$ & $0.107$ & $0.099$ & $0.092$ & $0.065$ & $0.059$ & $0.023$ & $1.000$\\
    ELCP (RES) & $\textbf{0.013}$ & $\textbf{0.060}$ & $\textbf{0.022}$ & $\textbf{0.031}$ & $\textbf{0.018}$ & $\textbf{0.019}$ & $\textbf{0.030}$ & $\textbf{0.047}$ & $\textbf{0.089}$ & $\textbf{0.028}$ & $\textbf{0.0325}$\\
    LCP (RES) & $0.007$ & $0.078$ & $0.026$ & $0.027$ & $0.014$ & $0.015$ & $0.045$ & $0.050$ & $0.092$ & $0.005$ & $0.0346$\\
    RLCP (RES) & $0.025$ & $0.115$ & $0.039$ & $0.046$ & $0.034$ & $0.029$ & $0.022$ & $0.071$ & $0.097$ & $0.040$ & $0.0506$\\
    ELCP (RES-AT) & $\textbf{0.005}$ & $\textbf{0.046}$ & $\textbf{0.029}$ & $\textbf{0.012}$ & $\textbf{0.001}$ & $\textbf{0.031}$ & $\textbf{0.029}$ & $\textbf{0.052}$ & $\textbf{0.099}$ & $\textbf{0.022}$ & $\textbf{0.0281}$\\
    LCP (RES-AT) & $0.001$ & $0.071$ & $0.027$ & $0.013$ & $0.000$ & $0.017$ & $0.036$ & $0.058$ & $0.096$ & $0.004$ & $0.0298$\\
    RLCP (RES-AT) & $0.025$ & $0.100$ & $0.050$ & $0.038$ & $0.020$ & $0.011$ & $0.028$ & $0.073$ & $0.099$ & $0.021$ & $0.0458$\\
    ELCP (RES-AA) & $\textbf{0.003}$ & $\textbf{0.028}$ & $\textbf{0.017}$ & $\textbf{0.009}$ & $\textbf{0.002}$ & $\textbf{0.015}$ & $\textbf{0.009}$ & $\textbf{0.051}$ & $\textbf{0.099}$ & $\textbf{0.019}$ & $\textbf{0.0198}^{**}$\\
    LCP (RES-AA) & $0.006$ & $0.062$ & $0.021$ & $0.011$ & $0.003$ & $0.009$ & $0.025$ & $0.047$ & $0.098$ & $0.004$ & $0.0260$\\
    RLCP (RES-AA) & $0.022$ & $0.093$ & $0.044$ & $0.028$ & $0.023$ & $0.013$ & $0.017$ & $0.053$ & $0.099$ & $0.021$ & $0.0408$\\
    \midrule
    ELCP (CDF) & $\textbf{0.046}$ & $\textbf{0.101}$ & $\textbf{0.016}$ & $\textbf{0.031}$ & $\textbf{0.007}$ & $\textbf{0.011}$ & $\textbf{0.056}$ & $\textbf{0.034}$ & $\textbf{0.046}$ & $\textbf{0.011}$ & $\textbf{0.0401}$\\
    LCP (CDF) & $0.039$ & $0.112$ & $0.014$ & $0.030$ & $0.016$ & $0.003$ & $0.070$ & $0.054$ & $0.059$ & $0.044$ & $0.0442$\\
    RLCP (CDF) & $0.050$ & $0.159$ & $0.037$ & $0.044$ & $0.012$ & $0.006$ & $0.069$ & $0.039$ & $0.058$ & $0.085$ & $0.0570$\\
    DCP & $0.052$ & $0.236$ & $0.069$ & $0.049$ & $0.010$ & $0.040$ & $0.067$ & $0.003$ & $0.069$ & $0.096$ & $0.0746$\\
    ELCP (CDF-AT) & $\textbf{0.038}$ & $\textbf{0.038}$ & $\textbf{0.006}$ & $\textbf{0.033}$ & $\textbf{0.004}$ & $\textbf{0.019}$ & $\textbf{0.046}$ & $\textbf{0.109}$ & $\textbf{0.041}$ & $\textbf{0.032}$ & $\textbf{0.0331}^*$\\
    LCP (CDF-AT) & $0.037$ & $0.067$ & $0.003$ & $0.042$ & $0.003$ & $0.029$ & $0.066$ & $0.123$ & $0.066$ & $0.025$ & $0.0426$\\
    RLCP (CDF-AT) & $0.040$ & $0.087$ & $0.016$ & $0.044$ & $0.002$ & $0.029$ & $0.062$ & $0.114$ & $0.065$ & $0.064$ & $0.0476$\\
    DCP-AT & $0.044$ & $0.127$ & $0.034$ & $0.048$ & $0.011$ & $0.046$ & $0.055$ & $0.115$ & $0.051$ & $0.079$ & $0.0583$\\
    ELCP (CDF-AA) & $\textbf{0.035}$ & $\textbf{0.030}$ & $\textbf{0.004}$ & $\textbf{0.039}$ & $\textbf{0.001}$ & $\textbf{0.022}$ & $\textbf{0.044}$ & $\textbf{0.154}$ & $\textbf{0.041}$ & $\textbf{0.047}$ & $\textbf{0.0351}$\\
    LCP (CDF-AA) & $0.033$ & $0.050$ & $0.004$ & $0.042$ & $0.003$ & $0.022$ & $0.061$ & $0.142$ & $0.057$ & $0.017$ & $0.0391$\\
    RLCP (CDF-AA) & $0.039$ & $0.076$ & $0.024$ & $0.047$ & $0.001$ & $0.031$ & $0.061$ & $0.152$ & $0.057$ & $0.064$ & $0.0494$\\
    DCP-AA & $0.040$ & $0.103$ & $0.036$ & $0.050$ & $0.013$ & $0.038$ & $0.054$ & $0.146$ & $0.041$ & $0.079$ & $0.0556$\\
    \midrule
    ELCP (CDE) & $\textbf{0.046}$ & $\textbf{0.021}$ & $\textbf{0.016}$ & $\textbf{0.029}$ & $\textbf{0.032}$ & $\textbf{0.004}$ & $\textbf{0.054}$ & $\textbf{0.119}$ & $\textbf{0.101}$ & $\textbf{0.184}$ & $\textbf{0.0428}^*$\\
    LCP (CDE) & $0.046$ & $0.020$ & $0.019$ & $0.028$ & $0.031$ & $0.006$ & $0.054$ & $0.117$ & $0.103$ & $0.175$ & $0.0429$\\
    RLCP (CDE) & $0.047$ & $0.018$ & $0.043$ & $0.036$ & $0.029$ & $0.029$ & $0.052$ & $0.139$ & $0.118$ & $0.150$ & $0.0506$\\
    CDE & $0.051$ & $0.014$ & $0.037$ & $0.040$ & $0.026$ & $0.029$ & $0.048$ & $0.152$ & $0.129$ & $0.194$ & $0.0522$\\
    ELCP (CDE-AT) & $\textbf{0.055}$ & $\textbf{0.011}$ & $\textbf{0.024}$ & $\textbf{0.041}$ & $\textbf{0.015}$ & $\textbf{0.014}$ & $\textbf{0.042}$ & $\textbf{0.160}$ & $\textbf{0.079}$ & $\textbf{0.233}$ & $\textbf{0.0459}$\\
    LCP (CDE-AT) & $0.053$ & $0.010$ & $0.026$ & $0.042$ & $0.017$ & $0.028$ & $0.051$ & $0.176$ & $0.084$ & $0.239$ & $0.0500$\\
    RLCP (CDE-AT) & $0.048$ & $0.010$ & $0.042$ & $0.046$ & $0.012$ & $0.028$ & $0.044$ & $0.189$ & $0.080$ & $0.143$ & $0.0486$\\
    CDE-AT & $0.051$ & $0.011$ & $0.047$ & $0.050$ & $0.000$ & $0.026$ & $0.036$ & $0.211$ & $0.107$ & $0.164$ & $0.0518$\\
    ELCP (CDE-AA) & $\textbf{0.057}$ & $\textbf{0.005}$ & $\textbf{0.020}$ & $\textbf{0.043}$ & $\textbf{0.023}$ & $\textbf{0.005}$ & $\textbf{0.046}$ & $\textbf{0.141}$ & $\textbf{0.078}$ & $\textbf{0.237}$ & $\textbf{0.0443}$\\
    LCP (CDE-AA) & $0.055$ & $0.007$ & $0.029$ & $0.040$ & $0.020$ & $0.025$ & $0.053$ & $0.171$ & $0.092$ & $0.263$ & $0.0509$\\
    RLCP (CDE-AA) & $0.048$ & $0.008$ & $0.042$ & $0.042$ & $0.015$ & $0.025$ & $0.049$ & $0.187$ & $0.081$ & $0.164$ & $0.0490$\\
    CDE-AA & $0.051$ & $0.011$ & $0.045$ & $0.050$ & $0.001$ & $0.026$ & $0.036$ & $0.200$ & $0.108$ & $0.171$ & $0.0511$\\
    \bottomrule
    \end{tabular}}}
    \end{center}
\end{table*}
\begin{table*}[htbp]
    \caption{Average prediction set sizes in predicting medical expenses.}\label{table:medS}
    \begin{center}
    {\fontsize{9}{6.4}\selectfont{\setlength{\tabcolsep}{3pt}\begin{tabular}{cccccccccccc}
    \toprule
    Interval Index & 1 & 2 & 3 & 4 & 5 & 6 & 7 & 8 & 9 & 10 & Agg\vspace{4pt}\\ \midrule
    Prop. & $0.172$ & $0.142$ & $0.129$ & $0.112$ & $0.107$ & $0.099$ & $0.092$ & $0.065$ & $0.059$ & $0.023$ & $1.000$\\
    ELCP (RES) & $\textbf{0.74}$ & $\textbf{1.26}$ & $\textbf{0.88}$ & $\textbf{0.70}$ & $\textbf{1.36}$ & $\textbf{0.76}$ & $\textbf{1.13}$ & $\textbf{1.01}$ & $\textbf{1.09}$ & $\textbf{0.65}$ & $\textbf{0.97}^*$\\
    LCP (RES) & $0.78$ & $1.29$ & $1.03$ & $0.73$ & $1.40$ & $0.94$ & $1.27$ & $1.05$ & $1.20$ & $0.91$ & $1.05$\\
    RLCP (RES) & $0.88$ & $1.20$ & $1.11$ & $0.82$ & $1.21$ & $1.02$ & $1.18$ & $0.97$ & $1.07$ & $1.06$ & $1.04$\\
    ELCP (RES-AT) & $\textbf{0.75}$ & $\textbf{1.24}$ & $\textbf{0.93}$ & $\textbf{0.73}$ & $\textbf{1.37}$ & $\textbf{0.81}$ & $\textbf{1.10}$ & $\textbf{1.01}$ & $\textbf{1.13}$ & $\textbf{0.67}$ & $\textbf{0.98}$\\
    LCP (RES-AT) & $0.81$ & $1.29$ & $1.09$ & $0.79$ & $1.40$ & $1.03$ & $1.26$ & $1.06$ & $1.24$ & $0.95$ & $1.08$\\
    RLCP (RES-AT) & $0.86$ & $1.15$ & $1.12$ & $0.81$ & $1.17$ & $1.03$ & $1.15$ & $0.94$ & $1.05$ & $1.05$ & $1.03$\\
    ELCP (RES-AA) & $\textbf{0.76}$ & $\textbf{1.25}$ & $\textbf{0.93}$ & $\textbf{0.74}$ & $\textbf{1.38}$ & $\textbf{0.85}$ & $\textbf{1.10}$ & $\textbf{1.03}$ & $\textbf{1.13}$ & $\textbf{0.68}$ & $\textbf{0.99}$\\
    LCP (RES-AA) & $0.80$ & $1.24$ & $1.10$ & $0.79$ & $1.37$ & $1.05$ & $1.24$ & $1.03$ & $1.22$ & $0.99$ & $1.07$\\
    RLCP (RES-AA) & $0.85$ & $1.11$ & $1.13$ & $0.79$ & $1.13$ & $1.04$ & $1.13$ & $0.90$ & $1.02$ & $1.05$ & $1.01$\\
    \midrule
    ELCP (CDF) & $\textbf{0.61}$ & $\textbf{2.35}$ & $\textbf{0.84}$ & $\textbf{0.43}$ & $\textbf{2.09}$ & $\textbf{0.66}$ & $\textbf{1.75}$ & $\textbf{2.25}$ & $\textbf{2.38}$ & $\textbf{1.55}$ & $\textbf{1.37}$\\
    LCP (CDF) & $0.66$ & $2.43$ & $0.93$ & $0.44$ & $2.16$ & $0.71$ & $1.92$ & $2.30$ & $2.41$ & $1.71$ & $1.43$\\
    RLCP (CDF) & $0.84$ & $2.28$ & $1.11$ & $0.62$ & $2.10$ & $0.83$ & $1.93$ & $2.31$ & $2.43$ & $2.03$ & $1.50$\\
    DCP & $0.93$ & $2.07$ & $1.19$ & $0.72$ & $1.91$ & $0.94$ & $1.82$ & $2.42$ & $2.53$ & $2.26$ & $1.51$\\
    ELCP (CDF-AT) & $\textbf{0.65}$ & $\textbf{2.04}$ & $\textbf{0.84}$ & $\textbf{0.51}$ & $\textbf{1.91}$ & $\textbf{0.70}$ & $\textbf{1.53}$ & $\textbf{1.84}$ & $\textbf{1.97}$ & $\textbf{1.22}$ & $\textbf{1.24}$\\
    LCP (CDF-AT) & $0.77$ & $2.05$ & $0.97$ & $0.60$ & $1.93$ & $0.78$ & $1.68$ & $1.89$ & $2.00$ & $1.37$ & $1.33$\\
    RLCP (CDF-AT) & $0.80$ & $2.01$ & $1.04$ & $0.61$ & $1.92$ & $0.81$ & $1.71$ & $1.85$ & $1.95$ & $1.52$ & $1.34$\\
    DCP-AT & $0.80$ & $1.93$ & $0.98$ & $0.63$ & $1.83$ & $0.80$ & $1.69$ & $1.79$ & $1.85$ & $1.56$ & $1.30$\\
    ELCP (CDF-AA) & $\textbf{0.67}$ & $\textbf{1.94}$ & $\textbf{0.85}$ & $\textbf{0.51}$ & $\textbf{1.86}$ & $\textbf{0.70}$ & $\textbf{1.48}$ & $\textbf{1.75}$ & $\textbf{1.86}$ & $\textbf{1.17}$ & $\textbf{1.21}^*$\\
    LCP (CDF-AA) & $0.78$ & $1.97$ & $0.97$ & $0.60$ & $1.92$ & $0.75$ & $1.59$ & $1.81$ & $1.95$ & $1.28$ & $1.29$\\
    RLCP (CDF-AA) & $0.82$ & $1.96$ & $1.06$ & $0.63$ & $1.92$ & $0.84$ & $1.66$ & $1.78$ & $1.88$ & $1.45$ & $1.33$\\
    DCP-AA & $0.81$ & $1.90$ & $0.98$ & $0.64$ & $1.83$ & $0.80$ & $1.66$ & $1.72$ & $1.77$ & $1.48$ & $1.28$\\
    \midrule
    ELCP (CDE) & $\textbf{1.17}$ & $\textbf{1.38}$ & $\textbf{1.15}$ & $\textbf{1.10}$ & $\textbf{1.35}$ & $\textbf{1.16}$ & $\textbf{1.32}$ & $\textbf{1.15}$ & $\textbf{1.06}$ & $\textbf{0.97}$ & $\textbf{1.21}$\\
    LCP (CDE) & $1.16$ & $1.37$ & $1.14$ & $1.09$ & $1.33$ & $1.16$ & $1.33$ & $1.14$ & $1.06$ & $0.99$ & $1.20$\\
    RLCP (CDE) & $1.14$ & $1.24$ & $1.20$ & $1.08$ & $1.22$ & $1.20$ & $1.23$ & $1.05$ & $1.00$ & $1.03$ & $1.16$\\
    CDE & $1.17$ & $1.17$ & $1.14$ & $1.13$ & $1.15$ & $1.16$ & $1.16$ & $1.05$ & $0.96$ & $1.00$ & $1.13$\\
    ELCP (CDE-AT) & $\textbf{0.85}$ & $\textbf{0.95}$ & $\textbf{0.83}$ & $\textbf{0.78}$ & $\textbf{0.99}$ & $\textbf{0.81}$ & $\textbf{0.92}$ & $\textbf{0.90}$ & $\textbf{0.84}$ & $\textbf{0.62}$ & $\textbf{0.87}$\\
    LCP (CDE-AT) & $0.77$ & $0.95$ & $0.77$ & $0.71$ & $0.96$ & $0.79$ & $0.95$ & $0.87$ & $0.83$ & $0.65$ & $0.84$\\
    RLCP (CDE-AT) & $0.70$ & $0.80$ & $0.83$ & $0.66$ & $0.84$ & $0.79$ & $0.84$ & $0.73$ & $0.75$ & $0.74$ & $0.77$\\
    CDE-AT & $0.73$ & $0.68$ & $0.75$ & $0.72$ & $0.71$ & $0.72$ & $0.73$ & $0.69$ & $0.69$ & $0.69$ & $0.71$\\
    ELCP (CDE-AA) & $\textbf{0.91}$ & $\textbf{1.03}$ & $\textbf{0.85}$ & $\textbf{0.84}$ & $\textbf{1.04}$ & $\textbf{0.84}$ & $\textbf{0.96}$ & $\textbf{0.96}$ & $\textbf{0.87}$ & $\textbf{0.63}$ & $\textbf{0.92}$\\
    LCP (CDE-AA) & $0.80$ & $1.00$ & $0.79$ & $0.75$ & $0.98$ & $0.82$ & $0.97$ & $0.90$ & $0.84$ & $0.67$ & $0.86$\\
    RLCP (CDE-AA) & $0.70$ & $0.80$ & $0.83$ & $0.66$ & $0.85$ & $0.79$ & $0.85$ & $0.73$ & $0.75$ & $0.74$ & $0.77$\\
    CDE-AA & $0.72$ & $0.67$ & $0.75$ & $0.71$ & $0.71$ & $0.72$ & $0.72$ & $0.70$ & $0.69$ & $0.70$ & $0.71^{**}$\\
    \bottomrule
    \end{tabular}}}
    \end{center}
\end{table*}

Table~\ref{table:medT} shows that the best performance is achieved when ELCP uses the RES-AA score function.
Moreover, ELCP significantly outperforms all other compared methods under the same score function and pre-training scheme.
For RES and CDF scores, performing transfer during the pre-training stage in score construction can substantially improve test-conditional coverage, while this is not the case for the CDE score.

\FloatBarrier

\end{appendices}
\end{document}